\newtheorem{theorem}{Theorem}[section]
\newtheorem{prop}[theorem]{Proposition}
\newtheorem{cor}[theorem]{Corollary}
\newtheorem{lemma}[theorem]{Lemma}
\newtheorem{remark}[theorem]{Remark}
\newtheorem{remarks}[theorem]{Remarks}
\newtheorem{define}[theorem]{Definition}
\newtheorem{example}[theorem]{Example}
\newtheorem{notation}[theorem]{Notation}
\newcommand{\ord}{\mbox{\rm ord}}
\definecolor{highlight}{rgb}{.5,0,.5}
\newcommand{\ie}{{\it i.e.}}
\newcommand{\id} {{\bf 1}}
\newcommand{\mat}{{\mbox{\rm GL}}}
\newcommand{\gal}{{\mbox{\rm Gal}}}
\newcommand{\sd}{$\sigma\delta$-}
\newcommand{\supp}{{\mbox{\rm supp}}}
\newcommand{\divs}{{\mbox{\rm div}}}
\newcommand{\pdisp}{{\mbox{\rm pdisp}}}
\newcommand{\disp}{\mbox{\rm {disp}}}
\newcommand{\rank}{\mbox{\rm {rank}}}
\def\F{\hbox{\bf F}}
\def\CX{{\mathbb C}}
\def\diag{\hbox{\rm diag}}
\def\gl{\mathfrak{g}l}
\def\frakg{\mathfrak{g}}
\def\R{{\mathcal R}}
\def\bZ {{\mathbb{Z}}}
\def\bZnn{{\mathbb{Z}_{\geq 0}}}
\def\bZp {{\mathbb{Z}_{>0}}}
\def\D{{\mathcal D}}
\def\F{{\mathcal F}}
\def\calL{{\mathcal L}}
\def\calS{{\cal S}}
\def\bfa{{\mathbf a}}
\def\bfb{{\mathbf b}}
\def\bfu{{\mathbf u}}
\def\bfv{{\mathbf v}}
\def\bfy{{\mathbf y}}
\begin{document}

\title{Liouvillian Solutions of Linear Difference-Differential Equations}
\author{Ruyong Feng\footnote{Key Laboratory of Mathematics Mechanization,
Institute of Systems Science, AMSS, CAS, Beijing 100190, China, rfeng2@ncsu.edu.
The author is supported by NKBRPC 2004CB318000 and NSFC10671200. The work
was done during a stay of the first author at
Department of Mathematics, North Carolina State University (NCSU).
The hospitality at NCSU is gratefully acknowledged.} \
\ Michael F. Singer\footnote{North Carolina State University,
Department of Mathematics, Box 8205, Raleigh, North Carolina
27695-8205, USA, singer@math.ncsu.edu. This material is based upon
work supported by the National Science Foundation under Grant No.
CCF-0634123.} \ \ Min Wu\footnote{Shanghai Key Laboratory of
Trustworthy Computing, East China Normal University, Shanghai
200062, China, mwu@sei.ecnu.edu.cn, The author is supported in part
by NSFC-44012140 and NSFC-90718041.}} \maketitle

\begin{abstract}
{For  a field $k$ with an automorphism $\sigma$ and a
derivation $\delta$, 
we introduce the notion of liouvillian solutions of linear
difference-differential systems $\{\sigma(Y) = AY, \delta(Y) = BY\}$
over~$k$  
and characterize the existence of  liouvillian solutions in terms of
the Galois group of the systems.  We will give an algorithm to
decide whether such a system has liouvillian solutions when~$k =
\mathbb{C}(x,t), \sigma(x) = x+1, \delta = \frac{d}{dt}$ and the
size of the system is a prime.}
\end{abstract}

\tableofcontents

\section{Introduction}
\label{introduction} One of the initial and key applications of the
Galois theory of linear differential     equations {is}
to characterize the solvability of such equations in terms  of
liouvillian functions, \ie, functions built up iteratively from
rational functions using exponentiation, integration and algebraic
functions (\cite[Appendix 6]{gray}, \cite[Chapters 1.5 and
4]{putsinger1}). {From the differential Galois theory, a
linear differential equation
$$
L(y) =y^{(n)} + a_{n-1}y^{(n-1)} + \cdots +a_0y = 0,
$$
%with~$a_i \in \CX(x)$ with~$\CX$ the field of complex numbers,
over~$\CX(x)$ can be solved in these terms if and only if its Galois
group has a solvable subgroup of finite index. This allows us to
conclude that if a linear differential equation $L(y) = 0$ has a
liouvillian solution then it has a solution of the form $e^{\int
f}$} where~$f$ is an algebraic function. This characterization is
the foundation of many algorithms that allow one to decide if an
equation has such solutions and find them if they exist. This theory
and these algorithms  have been developed for {systems of
matrix form~$Y' = AY$
as well as for} more general coefficients.  \\[0.1in]
{For the case of difference equations, the situation is in many ways
not as well developed. A Galois theory of linear difference
equations  is developed in \cite{putsinger2}. Later on
in~\cite{hendrikssinger}, a notion of solving in liouvillian terms
is introduced for linear difference equations of the form $$ L(y) =
y(x+n) + a_{n-1}y(x+n-1) + \cdots +a_0y(x)=0 \quad\mbox{with }\quad
a_i \in \CX(x)$$ and for difference equations in matrix form~$Y(x+1)
= A Y(x)$ where~$A$ is a matrix over~$\CX(x)$.
In~\cite{hendrikssinger}, a characterization of solving in
liouvillian terms is presented in terms of Galois groups and an
algorithm is given to decide whether a linear difference equation
can be solved in liouvillian terms. Remark that
in~\cite{hendrikssinger}, solutions are considered as equivalence
classes of sequences of complex numbers $\bfy = (y(0), y(1),
\ldots)$ where two sequences are equivalent if they agree from some
point onward.}   {A rational function $f$ is identified with the
sequence of the rational values $(f(0), f(1), \ldots)$. The addition
and multiplication on sequences are defined elementwise. Liouvillian
sequences are built up from rational sequences by successively
adjoining solutions of the equations of the form~$\bfy(x+1) =
\bfa(x) \bfy(x)$ or~$ \bfy(x+1) -\bfy(x) = \bfb(x)$ and using
addition, multiplication and {\it interlacing} to define new
sequences (the interlacing of~$\bfu = (u_0, u_1, \ldots ), \bfv =
(v_0, v_1,\ldots) $ is $(u_0,v_0, u_1,v_1, \ldots)$). Similar to the
differential case,
%It is shown in~\cite{hendrikssinger} that
a linear difference equation $L(y) = 0$ can be solved in terms of
liouvillian sequences if and only if its Galois group has a solvable
subgroup of finite index. When this is the case, $L(y) =0$ has a
solution that is the interlacing of {{\it hypergeometric} sequences,
%$\bfy_i(x), i = 1, \ldots , m, m \leq n$ such that~$\bfy_i(x+1) = \bfa_i(x)\bfy_i(x)$
%with~$\bfa_i(x) \in \CX(x)$,
and~\cite{hendrikssinger} shows how to decide if this is
the case.}
% {\highlight \bf (This statement is almost the same as the
%one at the beginning of this paragraph, so can be removed?)}
%{\it
%Solutions of equations of the form $\bfy(x+1) = \bfa(x)\bfy(x)$ are
%called {\it hypergeometric sequences}.
The paper \cite{hendrikssinger} also gives examples of equations
which have no hypergeometric solutions but do have solutions that
are interlacings of hypergeometric solutions. Similar results also
apply to  difference
equations in matrix form.\\[0.1in]
We now consider  systems of linear difference-differential equations
of the~form
\begin{equation*}
 Y(x+1,t) = A Y(x,t),\quad \frac{dY(x,t)}{dt} = B Y(x,t)
\end{equation*}
where $A$ and~$B$ are square matrices over~$\CX(x,t)$ and~$A$ is
invertible.
%{\highlight \bf (Remove the definition of hyperexponential solutions.)}
%{A hyperexponential solution
%of~\eqref{EQ:mixed} is a solution of the form $Y = (f_1u, \ldots,
%f_nu)^T$ where $f_i \in \CX(x,t)$ and $u$ satisfies
%$$
%u(x+1,t) = a(x,t)u(x,t),  \quad \frac{du(x,t)}{dt} = b(x,t)u(x,t)
%$$
%with~$a(x,t), b(x,t) \in \CX(x,t).$}
{In~\cite{blw, labahn-li, ziming-etal, min}, some
theories and algorithms have been developed on determining
reducibility and existence of {\it hyperexponential solutions} of
such systems}.  However, {as in the pure difference case,
there are systems which have no
hyperexponential %hypergeometric
solutions but have solutions that are interlacings of
hyperexponential %hypergeometric
solutions.    In this paper, we shall use a Galois theory that
appears as a special case of the Galois theory developed
in~\cite[Appendix]{hardouin-singer} to characterize {\it liouvillian
solutions of linear difference-differential systems}, and then
devise an algorithm to determine whether such a system has
liouvillian solutions when the order of the system is prime.}\\[0.1in]
Throughout the paper, we use~$(\cdot)^T$ to denote the transpose of
a vector or matrix and~${\rm det}(\cdot)$ to denote the determinant
of a square matrix. The symbols $\bZnn$ and~$\bZp$ represent the set
of  nonnegative integers and the set of positive integers,
respectively. Denote by~$\id$ the identity map on the sets in
discussion. For a field $k$, denote by~$\mathfrak{g}l_n(k)$ the set
of all~$n\times n$ matrices over~$k$ and by~$\mat_n(k)$ the set of
all~$n\times n$ invertible matrices over $k$. All
difference-differential systems of the form~$\{\sigma(Y)=AY,
\,\delta(Y)=BY\}$  with~$A\in \mat_n(k)$ and~$B\in
\mathfrak{g}l_n(k)$  that are in discussion in the paper are assumed to be integrable. \\[0.1in]
The paper is organized as follows.  In Section~\ref{galoistheory},
we will {first review some Galois theoretic results
in~\cite{hardouin-singer}} and
% which we will need in the rest of the paper.
then show that
%the Galois group of a differential/difference system
%has a  solvable subgroup of finite index, that is,
the Galois group of a linear difference-differential system is
solvable by finite group if and only if a certain associated system
has solutions in a tower built up using generalizations of
liouvillian extensions.
%We also discuss the interpretation of this in terms of sequences of liouvillian functions.
In Section~\ref{sec4}, we show that {\it irreducible} systems with
liouvillian solutions must be equivalent to systems of particular
form (Theorem~\ref{thm4}) and refine this result for systems of
prime order. We propose an algorithm for deciding if linear
difference-differential systems of prime order have liouvillian
solutions. At last, some examples are illustrated.\\[0.1in]
 We would
like to thank Reinhart Shaefke for supplying a simple proof of
Lemma~\ref{lem62}.
\section{Galois Theory}\label{galoistheory}
\subsection{Picard-Vessiot extensions and Galois groups}\label{sec1}
{In~\cite{hardouin-singer}, a general Galois theory  is
presented for linear integrable systems of difference-differential
equations involving parameters. When there exists no parameters
%{\bf
%and only one derivation $\delta$ and one automorphism $\sigma$ (why
%needed?)},
this theory yields a Galois theory of difference-differential
systems as above.}
{Let us  recall some notation and results in
\cite{hardouin-singer}.}\\[0.1in]
{A {\it \sd-ring}~$R$ is a commutative ring with unit
endowed with an automorphism~$\sigma$ and a derivation~$\delta$
satisfying $\sigma\delta=\delta\sigma$. $R$ is called a {\it
\sd-field} when~$R$ is a field. An element~$c$ of~$R$ is called a
{\it constant} if $\sigma(c)=c$ and $\delta(c)=0$, \ie, it is a
constant with respect both $\sigma$ and $\delta$. The set of
constants of~$R$, denoted by~$R^{\sigma\delta}$, is a subring, and
it is a subfield if $R$ is a field.}
%Let $R$ be a \sd ring which is a commutative ring with unit together
%with an automorphism $\sigma$ and a derivation $\delta$ satisfying
%$\sigma \delta=\delta \sigma$.
%A \sd
% field $k$ is defined in the same way.
%Let $k$ be  a \sd field $k$. Clearly, the set $\{c \in k | \sigma(c)=c,
%\delta(c)=0\}$  is a field and is  called the {\it constant field} of $k$
%and  denoted by $k^{\sigma\delta}$.
%{\highlight We can define all other concepts (e.g. ??)}
%in a \sd ring as in the difference and differential cases.
\\[0.1in]
In this section, unless specified otherwise, we always let $k$ be a
\sd-field of characteristic zero and with an algebraically closed
field of
constants.\\[0.1in]
Consider {a system of the form}
\begin{equation}\label{mixed-eqn}
\sigma(Y)=AY, \quad \delta (Y)=BY
\end{equation}
where $A \in \mat_n(k)$, $B \in \mathfrak{g}l_n(k)$ and~$Y $ is a
vector of unknowns of size~$n$. The integer~$n$ is {called} the
order of the system~\eqref{mixed-eqn}. A \sd ring $R$ is called a
{\it \sd Picard-Vessiot extension}, {or a \sd PV extension for
short, of $k$ for the system~\eqref{mixed-eqn}} if it satisfies the
following conditions
\begin{itemize}
\item [$(i)$] $R$ is a simple \sd ring;
\item [$(ii)$] there exists  $Z \in \mat_n(R)$ such that
$\sigma(Z)=AZ$ and $\delta(Z)=BZ$;
\item [$(iii)$] $R=k[Z, \frac{1}{\det(Z)}]$, that is, $R$ is
generated by entries of~$Z$ and the inverse of the determinant of
$Z$.
\end{itemize}
{Note that if the system (\ref{mixed-eqn}) has a \sd PV
extension,  the commutativity of $\sigma$ and $\delta$
implies
\begin{equation*}%\label{integrability}
\sigma(B) = \delta(A) A^{-1} + ABA^{-1},
\end{equation*}
which is called the {\it integrability conditions} for the
system~\eqref{mixed-eqn}.} Conversely, if the system
(\ref{mixed-eqn}) satisfies the above integrability conditions
%(\ref{integrability})
and  the constants of~$k$ are algebraically closed, it is shown in
\cite{blw} and \cite[Appendix]{hardouin-singer} that~\sd PV
extensions for~\eqref{mixed-eqn} exist and are unique up to \sd-$k$ isomorphisms.\\[0.1in]
The following notation will be used throughout the paper.
\begin{notation}
Let $A$ be a square matrix over a \sd ring. For a positive
integer~$m$, denote~$ A_m=\sigma^{m-1}(A)\cdots\sigma(A)A.$ For a
linear algebraic group~$G$,~$G^0$ represents the identity component
of $G$.
\end{notation}
%The following lemma is proved in \cite{hardouin-singer} (Lemma 6.8).
\begin{lemma} \label{lem1}
{\rm [Lemma 6.8 in \cite{hardouin-singer}]}  Let $k$ be a \sd field
and $R$ a simple \sd ring,  finitely generated over $k$ as a \sd
ring. Then there are idempotents~$e_0,\dots,e_{s-1}$ {in
$R$} such that
\begin{itemize}
\item [$(i)$] $R=e_0R\oplus \cdots \oplus e_{s-1}R$;
\item [$(ii)$] $\sigma$ permutes the set $\{e_0R,\cdots,e_{s-1}R\}$.
Moreover, $\sigma^s$ leaves each $e_iR$ invariant;
\item [$(iii)$] each $e_iR$ is a domain and  a simple
      $\sigma^s\delta$-ring.
  \end{itemize}
\end{lemma}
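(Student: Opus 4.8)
The plan is to exploit the decomposition of the simple $\SD$-ring $R$ induced by its minimal idempotents and then track how $\sigma$ and $\delta$ act on the resulting summands. First I would observe that since $R$ is finitely generated over $k$ as a $\SD$-ring, it is Noetherian, so it has only finitely many minimal idempotents; let $e_0, \dots, e_{s-1}$ be a complete set of orthogonal idempotents with $e_0 + \cdots + e_{s-1} = 1$ and each $e_i R$ containing no nontrivial idempotents, which gives the ring direct sum decomposition in part $(i)$. Because $\delta$ is a derivation, $\delta(e_i) = \delta(e_i^2) = 2 e_i \delta(e_i)$, and multiplying by $e_i$ twice forces $\delta(e_i) = 0$; hence each $e_i R$ is a $\delta$-stable subring, and more importantly $\delta$ cannot be used to move between summands. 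The automorphism $\sigma$, on the other hand, permutes the minimal idempotents (it sends a minimal idempotent to a minimal idempotent), so it induces a permutation $\pi$ of the summands $\{e_0 R, \dots, e_{s-1} R\}$, which is part $(ii)$.

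Next I would use simplicity of $R$ as a $\SD$-ring to show that the permutation $\pi$ consists of a single $s$-cycle (after relabeling). The key point is that any union of $\sigma$-orbits among the $e_i R$ generates a $\SD$-ideal (it is automatically $\delta$-stable since $\delta$ kills the idempotents and preserves each summand, and it is $\sigma$-stable by construction); simplicity then forces there to be exactly one orbit, and since $\delta$ acts within each summand, the orbit must have size $s$. Renumbering so that $\sigma(e_i R) = e_{i+1 \bmod s} R$, we get that $\sigma^s$ fixes each $e_i R$; this completes $(ii)$.

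For part $(iii)$, I would argue that each $e_i R$, equipped with the commuting operators $\sigma^s$ and $\delta$, is a simple $\sigma^s\delta$-ring. Given a nonzero $\sigma^s\delta$-ideal $I$ of $e_i R$, form $J = \sum_{j=0}^{s-1} \sigma^j(I)$ inside $R$ (using that the $\sigma^j$ carry $e_i R$ onto the various $e_{i+j} R$); this $J$ is $\sigma$-stable by construction and $\delta$-stable because $\delta$ commutes with $\sigma$ and preserves each summand, so $J$ is a nonzero $\SD$-ideal of $R$, whence $J = R$ by simplicity, and projecting back to the $e_i R$ component gives $I = e_i R$. Finally, each $e_i R$ is a domain: a minimal idempotent summand of a commutative ring that is $\sigma^s\delta$-simple has no nontrivial idempotents, and a reduced (indeed $\sigma^s\delta$-simple, hence reduced in characteristic zero) commutative ring with no nontrivial idempotents, being additionally a finitely generated algebra, must be an integral domain — one sees this by noting that a minimal prime would otherwise split off an idempotent after localization, or more directly by the standard structure theory of simple difference rings. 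The main obstacle I anticipate is the last point — rigorously passing from "$\sigma^s\delta$-simple with no nontrivial idempotents" to "integral domain" — which presumably is where the finite generation over $k$ and the characteristic zero hypothesis are genuinely used, via a Noetherian/reducedness argument on the associated primes.
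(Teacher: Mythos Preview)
The paper does not supply its own proof of this lemma; it is quoted as Lemma~6.8 of \cite{hardouin-singer}. So there is nothing in the present paper to compare against, but your sketch deserves comment on its own merits.

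Your overall architecture is sound, and the arguments for $\delta(e_i)=0$, for the single $\sigma$-orbit on the idempotents, and for the $\sigma^s\delta$-simplicity of each $e_iR$ are all correct. Two points need repair, however. First, ``finitely generated over $k$ as a $\sigma\delta$-ring'' does \emph{not} by itself imply Noetherian (consider $k[x_0,x_1,\ldots]$ with $\sigma(x_i)=x_{i+1}$, which is $\sigma$-generated by $x_0$); what is really used in the PV setting is that $R=k[Z,1/\det Z]$ is finitely generated as a $k$-\emph{algebra}. Second---and this is the gap you yourself flag---a reduced Noetherian ring with no nontrivial idempotents need not be a domain (take $k[x,y]/(xy)$), so your passage from ``$e_iR$ has no nontrivial idempotents'' to ``$e_iR$ is a domain'' is genuinely incomplete. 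The clean fix is to run the argument through minimal primes rather than idempotents. Since $R$ is $\sigma\delta$-simple it is reduced (the nilradical is a $\sigma\delta$-ideal), so $(0)=P_0\cap\cdots\cap P_{s-1}$ for finitely many minimal primes; $\sigma$ permutes them and $\delta$ preserves each (this is exactly Lemma~\ref{lem21} applied to the radical ideal $(0)$), and simplicity forces a single $\sigma$-orbit. The total ring of fractions is then $\prod_i \mathrm{Frac}(R/P_i)$ with orthogonal idempotents $e_i$, and the conductor $J=\{r\in R: re_i\in R\ \text{for all}\ i\}$ is a nonzero $\sigma\delta$-ideal of $R$, hence $J=R$, so each $e_i$ already lies in $R$ and $e_iR\cong R/P_i$ is automatically a domain. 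This produces the idempotent decomposition and part~(iii) in one stroke, closing the obstacle you anticipated.
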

{The following lemma is an analogue to Lemma 1.26 in
\cite{putsinger1}.}% and their proofs are  similar.
\begin{lemma} \label{lem11}
Let $k$ be  \sd field, $R$ be a \sd-PV extension for the system
$$\sigma(Y) = AY, \quad \delta{Y} = BY \quad\mbox{with~$A \in \mat_n(k)$
and~$B \in \gl_n(k)$} $$ and $e_0, e_1, \dots, e_{s-1}$ be as in
Lemma \ref{lem1}. Then each~$e_iR$ is a $\sigma^s\delta$-PV
extension of $k$ for the system $%\begin{equation*}
\{\sigma^s(Y)=A_s Y, \delta(Y)=BY\}$.
%\end{equation*}
\end{lemma}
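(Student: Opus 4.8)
The plan is to verify conditions $(i)$, $(ii)$, $(iii)$ of the definition of a $\sigma^s\delta$-PV extension for $e_iR$ over $k$ with respect to the system $\{\sigma^s(Y)=A_sY,\ \delta(Y)=BY\}$, using the structure supplied by Lemma~\ref{lem1}. First I would record that by Lemma~\ref{lem1}$(iii)$ each $e_iR$ is a simple $\sigma^s\delta$-ring and a domain, which immediately gives condition $(i)$; note also that $\delta(e_i)=0$ since $\delta$ is a derivation and $e_i$ is idempotent (from $e_i^2=e_i$ we get $2e_i\delta(e_i)=\delta(e_i)$, and multiplying by $e_i$ forces $e_i\delta(e_i)=0$, hence $\delta(e_i)=\delta(e_i^2)=2e_i\delta(e_i)=0$), so $\delta$ restricts to a derivation on $e_iR$ and $\sigma^s$ restricts to an automorphism of $e_iR$ by Lemma~\ref{lem1}$(ii)$; they commute because $\sigma$ and $\delta$ do.

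Next I would produce the fundamental matrix. Since $R$ is a $\sigma\delta$-PV extension there is $Z\in\mat_n(R)$ with $\sigma(Z)=AZ$ and $\delta(Z)=BZ$. Set $Z_i := e_iZ \in \gl_n(e_iR)$ (where $e_iR$ has unit $e_i$). Iterating $\sigma(Z)=AZ$ gives $\sigma^s(Z)=A_sZ$ with $A_s=\sigma^{s-1}(A)\cdots\sigma(A)A$ as in the Notation; applying $e_i$ and using that $e_i$ is fixed by $\sigma^s$ yields $\sigma^s(Z_i)=A_sZ_i$ (the entries of $A_s$ lie in $k$, so multiplication by $e_i$ passes through). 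Similarly $\delta(Z_i)=e_i\delta(Z)=e_iBZ=BZ_i$. One must check $Z_i$ is invertible over $e_iR$: since $Z$ is invertible over $R$ and $e_iR$ is a direct summand ring, $e_iZ$ has inverse $e_iZ^{-1}$ in $\mat_n(e_iR)$. This gives condition $(ii)$.

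For condition $(iii)$ I would argue that $e_iR$ is generated over $k$ as a $\sigma^s\delta$-ring by the entries of $Z_i$ and $1/\det(Z_i)$. The subtlety is that the generation statement for $R$ uses the full operator $\sigma$, not $\sigma^s$, so knowing $R=k[Z,\det(Z)^{-1}]$ does not directly restrict to $e_iR$. The cleanest route: let $S_i\subseteq e_iR$ be the $k$-subalgebra generated by the entries of $Z_i$ and $\det(Z_i)^{-1}$; it is closed under $\delta$ (since $\delta Z_i=BZ_i$ with $B$ over $k$) and under $\sigma^s$ (since $\sigma^sZ_i=A_sZ_i$ with $A_s$ over $k$, and $\sigma^s(\det(Z_i)^{-1})=\det(A_s)^{-1}\det(Z_i)^{-1}$); hence $S_i$ is a $\sigma^s\delta$-subring of $e_iR$. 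Since $e_iR$ is $\sigma^s\delta$-simple, the inclusion $S_i\hookrightarrow e_iR$ together with $S_i\ni e_i$ forces, via the usual argument, that $e_iR$ is integral over $S_i$ or equal to it; more directly, one observes $e_0Z\oplus\cdots\oplus e_{s-1}Z$ assembles (after noting $\sigma$ cyclically permutes the $e_jR$) into $Z$, so $R$ is generated over $k$ by the $\sigma$-orbit of the entries of $Z_0$, i.e.\ by $\bigcup_j \sigma^j(S_0)$-type data; pushing this through the idempotent decomposition shows each $e_iR$ is already generated by $Z_i$ over $k$ as a $\sigma^s\delta$-ring. I expect this last point — transferring the generation hypothesis across the change of operator from $\sigma$ to $\sigma^s$ and the idempotent splitting — to be the only real obstacle; everything else is a routine check that the defining data restrict. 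Once $(i)$–$(iii)$ are established, $e_iR$ is by definition a $\sigma^s\delta$-PV extension of $k$ for $\{\sigma^s(Y)=A_sY,\ \delta(Y)=BY\}$. $\Box$
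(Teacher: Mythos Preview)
Your approach is essentially the same as the paper's: verify the three PV conditions for $e_iR$ using the idempotent decomposition and the projected fundamental matrix $Z_i=e_iZ$. Two remarks. First, your invertibility argument for $Z_i$ (namely that $e_iZ^{-1}$ is an inverse in $\mat_n(e_iR)$ because $e_iR$ is a direct summand) is cleaner than the paper's; the paper instead argues by contradiction that $e_i\det(F)\neq 0$ by cycling with $\sigma^j$ and summing. Second, your worry about condition $(iii)$ is misplaced: the requirement $R=k[Z,\det(Z)^{-1}]$ is a statement about \emph{$k$-algebra} generation, not $\sigma\delta$-ring generation, so no shift operators enter at all. Since every $r\in R$ is a $k$-polynomial in the entries of $Z$ and $\det(Z)^{-1}$, multiplying by the idempotent $e_i$ (and using $e_i^2=e_i$, $\det(e_iZ)=e_i\det(Z)$, and $(e_i\det Z)^{-1}=e_i\det(Z)^{-1}$ in $e_iR$) immediately gives $e_iR=k[e_iZ,\det(e_iZ)^{-1}]$. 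The paper states this in one line without further comment; your $S_i$ argument and the appeal to $\sigma^s\delta$-simplicity are unnecessary detours.
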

\begin{proof}
{Let~$R$ be a \sd PV extension for~$\{ \sigma(Y){=}AY,
\delta(Y){=}BY\}$ and~$F$ be a fundamental matrix over~$R$ for the
system. By Lemma~\ref{lem1}, each~$e_iR$ is a
simple~$\sigma^s\delta$-ring. Clearly,~$e_iF$ are the solutions
of~$\{\sigma^s(Y){=}A_sY, \delta(Y){=}BY\}$
since~$\sigma^s(e_i)=e_i$ and~$\delta(e_i)=0$  for $i=0, \dots,
s-1$. Assume that $e_i\det(F)=0$ for some $i$. Then
$$
\sigma^j(e_i\det(F))=e_{i+j\, \mbox{ mod } s}\det(A_j)\det(F)=0
$$
and thus $e_{i+j \mod s}\det(F)=0$ for $j=1,\cdots,s-1$. Therefore
$$
\det(F)=(e_0+\cdots+e_{s-1})\det(F)=0,
$$
a contradiction. So $e_iF$ is a fundamental matrix
for~$\{\sigma^s(Y){=}A_sY, \delta(Y){=}BY\}$ for each~$i$.
Moreover,~$e_iR=k[e_iF, \frac{1}{e_i\det(F)}]$ for each~$i$. This
completes the proof.}
\end{proof}

\begin{cor} \label{cor11}
Let $d \geq 1$ be a divisor of $s$. Suppose that there exist
idempotents~$e_0, \dots, e_{s-1}$ in $R$  such that~$R=e_0R\oplus
e_1R\oplus \cdots \oplus e_{s-1}R$.  Then  for~$i=0,\dots, s-1$, the
subring $\oplus_{j=0}^{\frac{s}{d}-1} e_{i+jd} R$ of $R$ is a
$\sigma^d\delta$-PV extension of~$k$ for the system
\begin{equation*}%\label{eqn21}
\sigma^d(Y)=A_dY,\quad \delta(Y)=BY.
\end{equation*}
Here we use a cyclic notation for the indices $\{0,\cdots,s-1\}$.
\end{cor}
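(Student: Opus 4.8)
The plan is to reduce Corollary~\ref{cor11} to the already-established Lemma~\ref{lem11} by an iteration argument. The key observation is that the decomposition in Lemma~\ref{lem1} is essentially canonical, so applying Lemma~\ref{lem11} repeatedly — or applying it with a different power of $\sigma$ — regroups the idempotents in a controlled way. First I would note that since $R$ is a \sd PV extension for $\{\sigma(Y)=AY,\,\delta(Y)=BY\}$, Lemma~\ref{lem11} tells us each $e_iR$ is a $\sigma^s\delta$-PV extension of $k$ for $\{\sigma^s(Y)=A_sY,\,\delta(Y)=BY\}$, and moreover that $e_iF$ is a fundamental matrix and $e_i\det(F)\neq 0$ for every $i$. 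The real content to extract is: for any $d\mid s$, $\sigma^d$ permutes the blocks $\{e_0R,\dots,e_{s-1}R\}$ cyclically with orbits of size $s/d$ (this follows because $\sigma$ itself permutes them cyclically — one should check the cyclicity, which comes from $R$ being a \emph{simple} \sd ring so the permutation $\sigma$ induces on the blocks has a single orbit), so the orbit of $e_iR$ under $\sigma^d$ is exactly $\{e_iR, e_{i+d}R, \dots, e_{i+(s/d-1)d}R\}$ in the cyclic indexing.

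Next I would set $S_i := \bigoplus_{j=0}^{s/d-1} e_{i+jd}R$ and verify the three defining conditions of a $\sigma^d\delta$-PV extension for $\{\sigma^d(Y)=A_dY,\,\delta(Y)=BY\}$. Condition $(i)$, simplicity as a $\sigma^d\delta$-ring: a $\sigma^d\delta$-ideal of $S_i$ would, by the block structure, have to be a direct sum of $\sigma^d$-stable pieces, and since $\sigma^d$ acts transitively on the $s/d$ blocks comprising $S_i$ while each $e_\ell R$ is $\sigma^s\delta$-simple (hence has no proper $\sigma^d\delta$-stable ideal either, as $\sigma^d$ generates $\sigma^s$ on that block after $s/d$ applications), one concludes $S_i$ has no proper nonzero $\sigma^d\delta$-ideal. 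Condition $(ii)$: the matrix $Z_i := \big(\sum_{j=0}^{s/d-1} e_{i+jd}\big)F$ lies in $\mat_n(S_i)$ — it is invertible because $\det(Z_i) = \big(\sum_j e_{i+jd}\big)\det(F)$ and each $e_{i+jd}\det(F)\neq 0$ by Lemma~\ref{lem11}, so $\det(Z_i)$ is a unit in $S_i$ — and it satisfies $\sigma^d(Z_i)=A_d Z_i$, $\delta(Z_i)=B Z_i$, using $\sigma^d(A_d\cdots) $-telescoping (here $\sigma^d$ permutes the idempotents in the sum among themselves). Condition $(iii)$: $S_i = k[Z_i, 1/\det(Z_i)]$, which follows since $e_\ell R = k[e_\ell F, 1/(e_\ell\det F)]$ for each $\ell$ from Lemma~\ref{lem11}, and the idempotents $e_{i+jd}$ are themselves polynomial expressions — in fact, one can recover each $e_{i+jd}$ inside $k[Z_i,1/\det(Z_i)]$ because it is the identity of the corresponding block summand, and the block decomposition of $S_i$ is detectable from the ring structure.

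The step I expect to be the main obstacle is the simplicity argument in condition $(i)$ together with cleanly justifying that $\sigma$ (and hence $\sigma^d$) acts \emph{cyclically} on the blocks — Lemma~\ref{lem1}$(ii)$ as quoted only says $\sigma$ permutes the set and $\sigma^s$ fixes each block, not a priori that the permutation is a single $s$-cycle. One needs the simplicity of $R$ as a \sd ring to rule out $\sigma$ decomposing into several orbits: if there were two orbits, the sum of idempotents over one orbit would be a $\sigma$-invariant, $\delta$-invariant idempotent giving a proper nonzero \sd-ideal, contradicting simplicity. Once cyclicity is in hand, the orbit structure of $\sigma^d$ is forced and the regrouping $S_i$ is exactly one orbit's worth of blocks; everything else is a matter of transporting the verifications already done in the proof of Lemma~\ref{lem11}. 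A secondary (but routine) point is bookkeeping with the cyclic index notation to confirm that $\sigma^d$ maps $e_{i+jd}R$ to $e_{i+(j+1)d}R$ with indices read mod $s$, so that the telescoping product defining $A_d$ matches $\sigma^{d}(Z_i)=A_dZ_i$; this is the same computation as in Lemma~\ref{lem11} with $s$ replaced by $d$ and restricted to the relevant orbit.
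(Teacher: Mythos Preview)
Your proposal is correct and follows essentially the same approach as the paper, which simply states that the proof is similar to that of Lemma~\ref{lem11}. Your detailed verification of conditions $(i)$--$(iii)$ is precisely what ``similar to Lemma~\ref{lem11}'' unpacks to, and your attention to the cyclicity of $\sigma$'s action on the blocks (via the simplicity of $R$) fills in a point the paper treats as implicit (cf.\ the assumption $\sigma(e_i)=e_{i+1\bmod s}$ made just before Lemma~\ref{lem4}).
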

\begin{proof}
 The proof is similar to that of Lemma \ref{lem11}.
\end{proof}

\begin{define}
{Let  $A \in \mat_n(k)$,  $B \in \mathfrak{g}l_n(k)$ and
$Y$ be a vector of unknowns of size $n$. Let $R$ be a \sd PV
extension for % the system
$\{\sigma(Y)=AY, \,  \delta (Y)=BY\}$. The
group consisting of all \sd $k$-automorphisms of $R$
%$$
% \gal(R/k)=\{\phi \,| \,\phi \,\,\mbox{is a \sd $k$-automorphism of $R$}\}.
%$$
is called the {\it \sd Galois group} for the system
 and denoted $\gal(R/k)$.}
\end{define}
Denote by~$\gal(e_0R/k)$ the $\sigma^s\delta$-Galois group
for~$\{\sigma^s(Y)=AY, \delta(Y)=BY\}$.  Without loss of generality,
we assume that $\sigma(e_i)=e_{i+1 \,\mbox{ mod }s}$. {Construct a map $\Gamma$ from~$\gal(e_0R/k)$ to $ \gal(R/k) $ as
follows. Let $ \varphi \in \gal(e_0R/k)$.  For any
$r=r_0+r_1+\cdots+r_{s-1}\in
R$ with~$r_j \in e_jR$ for~$j=0,\dots, s-1$, define} %the action of $\Gamma(\varphi)$ to be
$$
\Gamma(\varphi)(r)=\sum_{j=0}^{s-1}\sigma^j\varphi\sigma^{-j}(r_j).
$$
Let $\phi \in \gal(R/k)$. {Clearly, $\phi$ permutes the $e_i$'s by
the proof of Lemma 6.8 in \cite{hardouin-singer}. Define a map
$\Delta: \gal(R/k) \rightarrow \mathbb{Z}/s\mathbb{Z}$ to be
$\Delta(\phi)=i$ if $\phi(e_0)=e_i$.} We then have the following

\begin{lemma}\label{lem4}
{Let  $R$ be a \sd PV
extension for % the system
$\{\sigma(Y)=AY, \,  \delta (Y)=BY\}$ where~$A \in \mat_n(k)$ and~$B
\in \mathfrak{g}l_n(k)$. Let $\Gamma$ and $\Delta$ be stated above.
Then the map~$\Gamma$ is well-defined, \ie, $\varphi \in
\gal(e_0R/k)$ implies $\Gamma(\varphi)\in \gal(R/k)$. Moreover, the
sequence of group homomorphisms}
\begin{equation*}\label{seq}
%\begin{CD}
%   0@>>>\gal(e_0R/k)@>\Gamma>>\gal(R/k)@>\Delta>>\mathbb{Z}/s\mathbb{Z}@>>>0
%\end{CD}
   0 \longrightarrow \gal(e_0R/k) \overset{\Gamma}{\longrightarrow} \gal(R/k)
   \overset{\Delta}{\longrightarrow}\mathbb{Z}/s\mathbb{Z}\longrightarrow 0
\end{equation*}
is exact.
\end{lemma}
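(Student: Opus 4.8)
The plan is to check, in order: that each $\Gamma(\varphi)$ is a \sd $k$-automorphism of $R$; that $\Gamma$ and $\Delta$ are group homomorphisms; that $\Gamma$ is injective with image equal to the kernel of $\Delta$; and finally that $\Delta$ is surjective, which is where the real content sits. Fix $\varphi\in\gal(e_0R/k)$. The fact to record first is that $\sigma^j\varphi\sigma^{-j}$ depends only on $j$ modulo $s$: this is precisely because $\varphi$, being a $\sigma^s\delta$-automorphism of $e_0R$, commutes with $\sigma^s$, and it makes the defining formula for $\Gamma(\varphi)$ unambiguous. Granting this, $\Gamma(\varphi)$ is a well-defined additive map of $R$ into itself, because the decomposition $r=\sum_j r_j$ with $r_j\in e_jR$ is unique and $\sigma^{-j}$ carries $e_jR$ onto $e_0R$ so that $\varphi$ applies. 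It is multiplicative because $(rr')_j=r_jr'_j$ by orthogonality of the $e_j$ and $\varphi$ is multiplicative (the cross terms $e_jR\cdot e_{j'}R$ with $j\ne j'$ vanish); it fixes $k$ because $a=\sum_j e_j a$ and $\varphi$ fixes $e_0k$, so that $\sigma^j\varphi\sigma^{-j}(e_ja)=e_ja$ for each $j$; and it commutes with $\delta$ because $\delta(e_j)=0$ gives $(\delta r)_j=\delta(r_j)$ while $\delta$ commutes with each $\sigma^j$ and with $\varphi$. The one delicate verification is commutation with $\sigma$: since $\sigma$ cyclically permutes the $e_jR$ one has $(\sigma r)_j=\sigma(r_{j-1})$ (indices mod $s$), and reindexing the sum yields $\Gamma(\varphi)(\sigma r)=\sigma(\Gamma(\varphi)(r))$ --- the ``wrap-around'' term is handled exactly by the consistency fact noted above. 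Finally, a direct computation gives $\Gamma(\varphi_1\varphi_2)=\Gamma(\varphi_1)\Gamma(\varphi_2)$ on $R$; taking $\varphi_2=\varphi_1^{-1}$ shows each $\Gamma(\varphi)$ is bijective, hence $\Gamma(\varphi)\in\gal(R/k)$ and $\Gamma$ is a homomorphism.

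For $\Delta$: every $\phi\in\gal(R/k)$ permutes the primitive idempotents (as recalled before the statement) and commutes with $\sigma$, so from $e_i=\sigma^i(e_0)$ one gets $\phi(e_i)=e_{i+a}$ once $\phi(e_0)=e_a$; thus $\phi$ acts as translation by $a$, which shows $\Delta$ is well defined and a homomorphism into $\mathbb{Z}/s\mathbb{Z}$. Injectivity of $\Gamma$ is clear since $\Gamma(\varphi)$ restricts to $\varphi$ on $e_0R$. The image of $\Gamma$ lies in the kernel of $\Delta$ because $\Gamma(\varphi)(e_0)=\varphi(e_0)=e_0$; conversely, if $\Delta(\phi)=0$ then $\phi$ fixes each $e_i$ and hence preserves each $e_iR$, so $\varphi:=\phi|_{e_0R}$ lies in $\gal(e_0R/k)$ (it is bijective, commutes with $\sigma^s$ and $\delta$, and fixes $k$), and comparing values on each $e_jR$ using $\phi\sigma=\sigma\phi$ shows $\Gamma(\varphi)=\phi$. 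This establishes exactness at $\gal(e_0R/k)$ and at $\gal(R/k)$.

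It remains to show that $\Delta$ is onto, for which it suffices to produce $\phi\in\gal(R/k)$ with $\phi(e_0)=e_1$. By Lemma~\ref{lem11}, each $e_iR$ is a $\sigma^s\delta$-PV extension of $k$ for the \emph{same} system $\{\sigma^s(Y)=A_sY,\ \delta(Y)=BY\}$. Moreover, the field of $\sigma^s\delta$-constants of $k$ coincides with $k^{\sigma\delta}$: the automorphism $\sigma$ acts on $k^{\sigma^s\delta}$ with fixed field $k^{\sigma\delta}$, so $k^{\sigma^s\delta}/k^{\sigma\delta}$ is a cyclic Galois extension of degree dividing $s$, hence trivial since $k^{\sigma\delta}$ is algebraically closed. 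Therefore the uniqueness of $\sigma^s\delta$-PV extensions applies and gives a $\sigma^s\delta$-$k$-isomorphism $\alpha\colon e_0R\to e_1R$. Because $\alpha$ commutes with $\sigma^s$, setting $\alpha_i:=\sigma^i\alpha\sigma^{-i}\colon e_iR\to e_{i+1}R$ (indices mod $s$) is unambiguous, and $\alpha_{i+1}\sigma=\sigma\alpha_i$ holds by construction. The map $\phi\colon R\to R$ defined by $\phi|_{e_iR}=\alpha_i$ is then a $k$-algebra automorphism that commutes with $\delta$ (each $\alpha_i$ does) and with $\sigma$ (by those compatibility relations), so $\phi\in\gal(R/k)$; and $\phi(e_0)=\alpha(e_0)=e_1$, so $\Delta(\phi)=1$. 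Since $\Delta$ is a homomorphism whose image contains the generator $1$, it is surjective, and the sequence is exact.

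The step I expect to be the main obstacle is this surjectivity of $\Delta$, and within it the point requiring care is that the isomorphisms $\alpha_i$ must be chosen \emph{coherently}: arbitrary $\sigma^s\delta$-$k$-isomorphisms $e_iR\to e_{i+1}R$ need not assemble into a map of $R$ commuting with $\sigma$, which is why one propagates a single $\alpha$ around the cycle by conjugation with powers of $\sigma$ (this also guarantees compatibility with $\sigma^s$). Alternatively, one could invoke the Galois correspondence of \cite{hardouin-singer}: granting that the ring of $\gal(R/k)$-invariants in $R$ equals $k$, if the image of $\Delta$ were $d\mathbb{Z}/s\mathbb{Z}$ with $d>1$ then $e_0+e_d+\cdots+e_{s-d}$ would be a nonzero idempotent fixed by $\gal(R/k)$ and different from $1$, which cannot lie in the field $k$.
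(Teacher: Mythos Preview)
Your proof is correct. The paper itself does not give an argument and simply refers the reader to Corollary~1.17 of \cite{putsinger1}; what you have written is essentially a full $\sigma\delta$-adaptation of that proof, so your approach coincides with the one the paper is pointing to. Your alternative surjectivity argument via the Galois correspondence (a nontrivial $\gal(R/k)$-fixed idempotent would land in the field $k$) is a clean shortcut and is in fact closer in spirit to how \cite{putsinger1} handles this step than the explicit construction of $\phi$ from a PV-uniqueness isomorphism $\alpha$; either route is fine here.
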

\begin{proof}
The proof is similar to that of Corollary 1.17 in \cite{putsinger1}.
\end{proof}

\begin{lemma}\label{lem2}
{Suppose that $k$ has no proper algebraic \sd field extension
and~$R$ is a \sd PV extension for $\{\sigma(Y)=AY, \,  \delta
(Y)=BY\}$ where~$A \in \mat_n(k)$ and~$B \in \mathfrak{g}l_n(k)$}.
Then
 $\gal(R/k)^0=\gal(e_0R/k)$.
\end{lemma}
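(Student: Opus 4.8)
The plan is to use Lemma~\ref{lem4} to locate $\gal(e_0R/k)$ inside $\gal(R/k)$ and then argue that, under the hypothesis that $k$ has no proper algebraic $\sigma\delta$-field extension, the quotient $\mathbb{Z}/s\mathbb{Z}$ must in fact be the component group of $\gal(R/k)$. Concretely, Lemma~\ref{lem4} gives the exact sequence $0 \to \gal(e_0R/k) \xrightarrow{\Gamma} \gal(R/k) \xrightarrow{\Delta} \mathbb{Z}/s\mathbb{Z} \to 0$, so it suffices to prove two things: first, that the image of $\Gamma$ is a connected subgroup (indeed the identity component), and second, that the index $s = [\gal(R/k):\Gamma(\gal(e_0R/k))]$ really does match the number of connected components, i.e. that $\gal(R/k)$ is not ``more connected'' than the sequence suggests. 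Since $\gal(R/k)/\gal(R/k)^0$ is finite and $\gal(R/k)^0$ is connected, the containment $\gal(R/k)^0 \subseteq \Gamma(\gal(e_0R/k))$ will follow once I know $\Gamma(\gal(e_0R/k))$ contains the identity component, and the reverse containment will follow from connectedness of $\gal(e_0R/k)$ together with a dimension/index count.

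First I would establish that $\gal(e_0R/k)$ is connected. By Lemma~\ref{lem11}, $e_0R$ is a $\sigma^s\delta$-PV extension of $k$ for the system $\{\sigma^s(Y)=A_sY,\ \delta(Y)=BY\}$, and crucially $e_0R$ is a \emph{domain} by Lemma~\ref{lem1}(iii). In the Galois theory of \cite{hardouin-singer}, the fixed field (or fixed ring, here an integral domain whose fraction field one passes to) of the full Galois group is $k$, and the connected components of the Galois group correspond to the idempotents of the PV ring — equivalently, to the intermediate algebraic $\sigma^s\delta$-extensions of $k$ inside $e_0R$. Since $k$ has no proper algebraic $\sigma\delta$-field extension, and (one must check) no proper algebraic $\sigma^s\delta$-field extension either — this is where passing from $\sigma$ to $\sigma^s$ needs a small argument, e.g. an algebraic $\sigma^s\delta$-extension of $k$ generates, together with its $\sigma$-conjugates, an algebraic $\sigma\delta$-extension, forcing triviality — the ring $e_0R$ contains no nontrivial idempotents beyond $0,1$, so $\gal(e_0R/k)$ is connected. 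Hence $\Gamma(\gal(e_0R/k))$ is a connected subgroup of $\gal(R/k)$, and therefore $\Gamma(\gal(e_0R/k)) \subseteq \gal(R/k)^0$.

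For the reverse inclusion $\gal(R/k)^0 \subseteq \Gamma(\gal(e_0R/k))$, I would use the exactness of the sequence: $\gal(R/k)/\Gamma(\gal(e_0R/k)) \cong \mathbb{Z}/s\mathbb{Z}$ is finite, so $\Gamma(\gal(e_0R/k))$ is a closed subgroup of finite index $s$, hence contains $\gal(R/k)^0$. Combining the two inclusions with the fact that $\Gamma$ is injective, we get $\Gamma(\gal(e_0R/k)) = \gal(R/k)^0$, and identifying $\gal(e_0R/k)$ with its image under the injection $\Gamma$ gives $\gal(R/k)^0 = \gal(e_0R/k)$, as claimed. (One should note the small abuse: the statement identifies $\gal(e_0R/k)$ with $\Gamma(\gal(e_0R/k))$; this is the standard identification used when writing such exact sequences.)

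The main obstacle I anticipate is the connectedness of $\gal(e_0R/k)$, and more precisely the step that ``$k$ has no proper algebraic $\sigma\delta$-extension'' implies ``$e_0R$ has no nontrivial idempotents'' in the $\sigma^s\delta$-setting. This requires knowing that in the Galois correspondence of \cite{hardouin-singer} the component group of the Galois group of a PV \emph{ring} is detected by its idempotent structure (equivalently by the splitting in Lemma~\ref{lem1} applied to $e_0R$ with respect to $\sigma^s$), and that any such splitting would produce an algebraic extension of $k$ — then using that an algebraic $\sigma^s\delta$-extension of $k$ can be enlarged (by adjoining $\sigma$-translates) to an algebraic $\sigma\delta$-extension, which by hypothesis is trivial. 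Everything else is a formal consequence of Lemma~\ref{lem4}, Lemma~\ref{lem11}, and elementary linear-algebraic-group facts about finite-index closed subgroups containing the identity component.
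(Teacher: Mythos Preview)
Your overall strategy matches the paper's exactly: show $\gal(e_0R/k)$ is connected, then use Lemma~\ref{lem4} to see it is a closed subgroup of finite index in $\gal(R/k)$, whence it equals $\gal(R/k)^0$ (by the standard fact in \cite[p.~53]{humphreys} that a closed connected subgroup of finite index is the identity component).

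There is, however, a genuine confusion in your connectedness argument. You write that ``the connected components of the Galois group correspond to the idempotents of the PV ring.'' But $e_0R$ is already a \emph{domain} by Lemma~\ref{lem1}(iii), so it has no nontrivial idempotents regardless of the hypothesis on $k$; if your claim were correct, connectedness would be automatic and the hypothesis would be superfluous. The idempotent count detects the component group only in settings (such as pure difference theory over an algebraically closed field of $\sigma$-constants) where the base admits no proper algebraic extensions; in general, and in particular for a $\sigma^s\delta$-PV extension that is a domain, the component group is detected instead by the algebraic closure $\hat{k}$ of $k$ inside the quotient field of $e_0R$, via the Galois correspondence $\gal(e_0R/k)^0=\gal(e_0R/\hat{k})$. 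This is precisely how the paper argues: it sets $\hat{k}$ to be that algebraic closure, invokes the hypothesis to get $\hat{k}=k$, and concludes $\gal(e_0R/k)=\gal(e_0R/k)^0$.

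Your worry about $\sigma$ versus $\sigma^s$ is legitimate but easily resolved (and the paper passes over it): since we are in characteristic zero, the derivation $\delta$ extends uniquely to any algebraic extension of $k$, and any extension of the automorphism $\sigma$ to $\hat{k}$ automatically commutes with that unique $\delta$ (because $\tilde{\sigma}^{-1}\tilde{\delta}\tilde{\sigma}$ is another derivation extending $\delta$, hence equals $\tilde{\delta}$). Thus $\hat{k}$ can be made into a $\sigma\delta$-field extension of $k$, and the hypothesis forces $\hat{k}=k$. So drop the idempotent language, argue directly with $\hat{k}$, and your proof is the paper's.
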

\begin{proof}
Let $\hat{k}$ be the algebraic closure of $k$ in the quotient field
of $e_0R$. Then $\gal(e_0R/k)^0=\gal(e_0R/\hat{k})$. Since $k$ has
no proper algebraic \sd field extension, we have $\hat{k}=k$ and
therefore~$\gal(e_0R/k)=\gal(e_0R/k)^0$.  {From Lemma
\ref{lem4}, it follows that~$\gal(e_0R/k)$ is a closed subgroup of
$\gal(R/k)$ of finite index. The proposition  in
\cite[p.53]{humphreys} then implies the lemma.}
\end{proof}
From \cite{hardouin-singer}, we know that a \sd PV extension $R$
over $k$ is the coordinate ring of a~$\gal(R/k)$-torsor over $k$.
{Let~$E$ be an algebraically closed differential field with a
derivation $\delta$. Clearly, $E(x)$ becomes a \sd field endowed
with the extended derivation~$\delta$ such that $\delta(x) = 0$ and
with an automorphism~$\sigma$ on~$E(x)$ given by~$\sigma|_E = \id$
and~$\sigma(x) = x+1$. For such a field~$E(x)$, we will get an
analogue of Proposition 1.20 in \cite{putsinger1}.} Before we state
the result, let us look at the following

\begin{lemma}\label{lem21}
Let $k$ be a differential field with a derivation~$\delta$,  $S$  a
differential ring extension of~$k$ and $I$ a differential radical
ideal of $S$. Suppose that $S$ is Noetherian as an algebraic ring
and that $I$ has the minimal prime ideal decomposition $\cap_{i=1}^t
P_i$ as an algebraic ideal. Then $P_i$ are differential ideals for
$i=1,\dots, t$.
\end{lemma}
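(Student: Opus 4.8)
The plan is to show that each minimal prime $P_i$ is stable under $\delta$, using the standard Ritt–Kolchin trick for radical differential ideals. First I would recall that since $S$ is Noetherian, the radical ideal $I$ has finitely many minimal primes $P_1,\dots,P_t$ and $I=\bigcap_{i=1}^t P_i$ is irredundant. For a fixed $i$, choose an element $a\in\bigcap_{j\neq i}P_j$ with $a\notin P_i$ (possible by irredundancy). The key point is to prove: for any $p\in P_i$, one has $a\,\delta(p)\in I$. Granting this, since $I\subseteq P_i$ and $P_i$ is prime while $a\notin P_i$, we conclude $\delta(p)\in P_i$, so $P_i$ is a differential ideal.

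To establish $a\,\delta(p)\in I$, I would argue that it suffices to show $a\,\delta(p)\in P_j$ for every $j$. For $j\neq i$: since $a\in P_j$ already, $a\,\delta(p)\in P_j$ trivially. For $j=i$: here one uses the fact that $I$ is a differential ideal. The cleanest route is the classical lemma that if $f^m\in I$ for a differential radical ideal $I$, then $f\cdot\delta(f)\in I$ — more precisely, one shows by induction on the exponent that $\delta(f)\in\sqrt{(I,f^{m-1}\text{-type terms})}$, leading to $f^{2m-1}\delta(f)\in I$ and hence, using radicality, $f\delta(f)\in I$. Applying a relative version of this with the product $ap$ (which lies in $I$, since $ap\in P_j$ for all $j$): from $ap\in I$ and $I$ radical differential, one deduces $(ap)\delta(ap)\in I$, i.e. $a\delta(a)p^2 + a^2 p\,\delta(p)\in I$. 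This is where the bookkeeping lives; iterating/combining these relations and using that $a^N\notin P_i$ for all $N$ lets one peel off the factor and land $a\,\delta(p)\in P_i$.

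Concretely, the mechanism I expect to use is: for $p\in P_i$ and $a\in\bigcap_{j\ne i}P_j\setminus P_i$, the product $q:=ap$ satisfies $q\in I$; since $I$ is a radical differential ideal, a short induction gives $q^{2\ell-1}\delta(q)\in I$ when $q^{\ell}\in I$ (and $\ell=1$ suffices here), whence $q\,\delta(q)\in I$. Expanding $q\,\delta(q)=a^2 p\,\delta(p)+a\,\delta(a)\,p^2\in I\subseteq P_i$; since $p^2,$ $p\,\delta(p)$ — wait, $p\in P_i$ gives $a\,\delta(a)p^2\in P_i$, so $a^2 p\,\delta(p)\in P_i$, and as $a^2\notin P_i$ we get $p\,\delta(p)\in P_i$. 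Iterating this argument (replacing $p$ by suitable elements, or bounding the nilpotency exponent in $S/I$) upgrades $p\,\delta(p)\in P_i$ to $\delta(p)\in P_i$ after finitely many steps, using that $S/I$ has bounded index of nilpotency because $S$ is Noetherian.

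The main obstacle is the last upgrade step: passing from $a^2 p\,\delta(p)\in P_i$ (or from $q\,\delta(q)\in I$) to the clean conclusion $\delta(p)\in P_i$ without circularity. The resolution is to localize the bookkeeping at the prime $P_i$: work in the Noetherian local ring $S_{P_i}$, where $a$ becomes a unit, so $ap\in I S_{P_i}$ forces $p\in IS_{P_i}$ to have $\delta$ computed modulo a nilpotent, and a finite induction on the nilpotency exponent of $IS_{P_i}$ (finite by Noetherianity) kills the error terms. I would present this as the technical heart of the proof and keep the combinatorial identity manipulations brief.
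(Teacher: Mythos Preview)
Your overall strategy is the same as the paper's: pick $a\in\bigcap_{j\ne i}P_j\setminus P_i$, note that $ap\in I$ for any $p\in P_i$, differentiate, and use primality of $P_i$ together with $a\notin P_i$. But you take an unnecessary detour that creates a genuine gap. You invoke the Ritt--Kolchin identity $q\,\delta(q)\in I$ for $q=ap$, then expand to get $a^2p\,\delta(p)+a\,\delta(a)\,p^2\in P_i$ and conclude $p\,\delta(p)\in P_i$. That conclusion is vacuous: since $p\in P_i$ already, $p\,\delta(p)\in P_i$ holds trivially and carries no information about $\delta(p)$. Your proposed ``iteration'' and the remark about bounded nilpotency in $S/I$ cannot recover from this, because you have simply discarded the useful part of the identity.

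The fix is immediate and is exactly what the paper does: $I$ is a \emph{differential} ideal, so from $ap\in I$ you get $\delta(ap)\in I$ directly, not merely $(ap)\,\delta(ap)\in I$. Then $\delta(ap)=\delta(a)\,p+a\,\delta(p)\in I\subseteq P_i$; since $\delta(a)\,p\in P_i$ (as $p\in P_i$) you obtain $a\,\delta(p)\in P_i$, and primality with $a\notin P_i$ gives $\delta(p)\in P_i$. The paper phrases this with $a=f_2\cdots f_t$ where $f_j\in P_j\setminus P_1$, but it is the same one-step argument. No induction, no localization, and no nilpotency bounds are needed; the Ritt--Kolchin lemma you quote is for the harder situation where one only knows $f^m\in I$ with $m>1$, which does not arise here.
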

\begin{proof}
%We only show this for $P_1$. The proofs for the other $P_i$ are similar.
Let $f_1 \in P_1$ and select $f_i \in P_i\setminus P_1$ for
$i=2,\cdots,t$. Then~$f=f_1f_2\cdots f_t \in I$. {By
taking a derivation on both sides, we have
$$
\delta(f)=\delta(f_1)f_2\cdots f_t+f_1\delta(f_2)\cdots
f_t+\cdots+f_1f_2\cdots \delta(f_t) \in I,
$$ which implies
$\delta(f_1)f_2\cdots f_t \in P_1$. So $\delta(f_1) \in P_1$ and
$P_1$ is a differential ideal. The proofs for other $P_i$'s are
similar.}
\end{proof}

\noindent {Now let $S$ be a finitely generated \sd ring over $k$ and
$I$ a radical \sd ideal of~$S$. Suppose that $S$ is Noetherian as an
algebraic ring and $I = \cap_{i=1}^s P_i$ is the minimal prime
decomposition of $I$ as an algebraic ideal. Since $S$ is Noetherian,
we have $\sigma(I)=I$,  which implies that $\sigma$ permutes the
$P_i$'s. From Lemma~\ref{lem21}, each $P_i$ is a differential ideal.
Therefore if~$\{P_i\}_{i \in J}$ with~$J$ a subset of~$\{1,
\dots,s\}$  is left invariant under the action of $\sigma$, then
$\cap_{i\in J}P_i$ is a~\sd ideal. We then have the following
result.  We will use the following notation: if $V$ is a variety
defiend over a ring $k_0$ and $k_1$ is a ring containing~$k_0$, we
denote by $V(k_1)$ the points of $V$ with coordinates in $k_1$.

\begin{prop}\label{lem3}
{Let $\tilde{k} = E(x)$ be as in the paragraph preceding
Lemma~\ref{lem21}, $R$  a~\sd PV extension of $\tilde{k}$ for the
system $\{\sigma(Y)=AY, \, \delta(Y)=BY\}$ where~$A \in
\mat_n(\tilde{k})$ and~$B \in \mathfrak{g}l_n(\tilde{k})$,
and~$G=\gal(R/\tilde{k})$. Then the corresponding~$G$-torsor~$Z$ has
a point which is rational over $\tilde{k}$ and  $Z(\tilde{k})$
and~$G(\tilde{k})$
%{\highlight \bf (what is $G(k)$ ?)}%
  are
isomorphic. Moreover, $G/G^0$ is cyclic.}
\end{prop}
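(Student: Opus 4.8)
The plan is to mirror the strategy of Proposition 1.20 in \cite{putsinger1}, adapting it to the mixed $\sigma\delta$-setting and exploiting the special structure of $\tilde k = E(x)$. First I would show that the $G$-torsor $Z$ has a $\tilde k$-rational point. Recall that $R$ is the coordinate ring of $Z$, a torsor under $G = \gal(R/\tilde k)$ defined over $\tilde k$. As in the pure difference case, the obstruction to the existence of a rational point lies in a suitable cohomology set, and the key observation is that $E$ is algebraically closed, hence has no nontrivial finite (or, more generally, no nontrivial connected linear algebraic group) torsors, while the ``$x$-direction'' behaves like a difference field over which torsors are trivial because one can solve first-order difference equations $\sigma(u) = au$ formally. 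Concretely, I would argue that the minimal prime decomposition $I = \cap_{i=1}^s P_i$ of the defining ideal of $Z$ (viewed inside a polynomial ring over $\tilde k$) has components that are permuted cyclically by $\sigma$, using the remarks preceding the proposition and Lemma~\ref{lem21}; picking one component $P_1$ gives an irreducible variety over $\tilde k$, and since $\delta$ acts on it and $E$ is algebraically closed, $\tilde k$ has no proper algebraic $\sigma\delta$-extension, so $Z$ is in fact a trivial $G^0$-torsor over each component. Glueing via the cyclic $\sigma$-action then produces a $\tilde k$-point of $Z$.

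Once $Z$ has a rational point $z_0 \in Z(\tilde k)$, the standard torsor argument gives the isomorphism: the map $g \mapsto g\cdot z_0$ is an isomorphism of varieties $G \to Z$ defined over $\tilde k$ (because $z_0$ is $\tilde k$-rational and the torsor action is defined over $\tilde k$), and in particular it restricts to a bijection $G(\tilde k) \to Z(\tilde k)$. This is the second assertion. I expect this step to be essentially formal once the rational point is in hand.

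For the final claim that $G/G^0$ is cyclic, I would use Lemma~\ref{lem2} together with Lemma~\ref{lem4}. Since $E$ is algebraically closed, $\tilde k = E(x)$ has no proper algebraic $\sigma\delta$-field extension (any such would force an algebraic extension of $E$ in the constants, which is impossible), so Lemma~\ref{lem2} applies and gives $G^0 = \gal(R/\tilde k)^0 = \gal(e_0R/\tilde k)$. Then the exact sequence of Lemma~\ref{lem4},
\[
0 \longrightarrow \gal(e_0R/\tilde k) \overset{\Gamma}{\longrightarrow} \gal(R/\tilde k) \overset{\Delta}{\longrightarrow} \mathbb{Z}/s\mathbb{Z} \longrightarrow 0,
\]
identifies $G/G^0 = G/\gal(e_0R/\tilde k)$ with a subgroup of $\mathbb{Z}/s\mathbb{Z}$, which is cyclic. (One must check that $\Gamma$ indeed has image exactly $G^0$ and not just a subgroup of finite index, but that is precisely what Lemma~\ref{lem2} provides.)

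The main obstacle, and the step I would spend the most care on, is the first one: proving that the $G$-torsor $Z$ actually has a $\tilde k$-rational point. The subtlety is that $\tilde k = E(x)$ is neither algebraically closed nor a $C_1$-field in any naive sense, so one cannot simply invoke a general vanishing of torsor cohomology. The right approach is to reduce to the connected case via the idempotent decomposition (Lemma~\ref{lem1}, Corollary~\ref{cor11}), so that one is looking at a torsor under the connected group $G^0$ over an integral $\sigma^s\delta$-domain, and then use that over $\tilde k$ every first-order linear difference equation has a nonzero solution in a suitable extension together with the fact that $E$ is algebraically closed to trivialize the $\delta$-structure — this is exactly where the hypothesis ``$E$ algebraically closed'' and the special form $\sigma(x) = x+1$, $\sigma|_E = \id$ get used. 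Care is needed to track that the rational point over each component can be chosen compatibly with the $\sigma$-permutation of components so that it descends to a genuine point of $Z$ over $\tilde k$.
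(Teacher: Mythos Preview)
The main gap is in obtaining the $\tilde k$-rational point on $Z$. Your stated reason---``$\tilde k$ has no proper algebraic $\sigma\delta$-extension, so $Z$ is in fact a trivial $G^0$-torsor over each component''---is not a valid inference: the absence of algebraic $\sigma\delta$-extensions controls the finite quotient $G/G^0$, not the triviality of torsors under the connected group $G^0$. What is actually used (and this is what Proposition~1.20 of \cite{putsinger1} invokes) is that $\tilde k = E(x)$ with $E$ algebraically closed is a $C_1$-field by Tsen's theorem, so torsors under connected linear algebraic groups over $\tilde k$ are trivial (Steinberg). The paper simply imports this step from \cite{putsinger1} verbatim, after using Lemma~\ref{lem21} to note that the defining ideals $P_i$ of the $\tilde k$-components $Z_i$ are automatically $\delta$-ideals. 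Your references to the idempotent decomposition and to ``solving first-order difference equations'' do not substitute for this cohomological input.

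Your route to the cyclicity of $G/G^0$ via Lemma~\ref{lem4} and Lemma~\ref{lem2} does work and is genuinely different from the paper's. The paper instead argues directly on the torsor: having fixed a rational point $B \in Z_0(\tilde k)$, it observes that $B^{-1}A^{-1}\sigma(B) \in G(\tilde k)$ lies in some coset $G^0(\tilde k)N$; the subgroup $H = \langle G^0, N\rangle$ then has $\tau(BH_{\tilde k}) = BH_{\tilde k}$, so its defining ideal is $\sigma$-invariant, and---since each $P_i$ is already a $\delta$-ideal by Lemma~\ref{lem21}---is a $\sigma\delta$-ideal, forcing $H = G$ by maximality of the defining ideal of $Z$. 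This buys more than your argument: an explicit generator $N$ of $G/G^0$ together with the relation $B^{-1}A^{-1}\sigma(B) \in G^0(\tilde k)N$. That relation is used immediately afterward (see the remark following the proposition and the proof of Proposition~\ref{PVseq}), so your shortcut, while adequate for the bare statement, would leave a hole downstream.
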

\begin{proof}
{The notation and  proof will follow that of Proposition 1.20 in
\cite{putsinger1}}. Let $Z_0,\cdots,Z_{t-1}$ be the
$\tilde{k}$-components of $Z$. By Lemma \ref{lem21}, the defining
ideals $P_i$ of $Z_i$ are differential ideals. As in the proof of
Proposition 1.20 in~\cite{putsinger1}, there exists~$B\in
Z_0(\tilde{k})$ such that~$Z_0=BG^0_{\tilde{k}}$ and
$Z=BG_{\tilde{k}}$ where~$G_{\tilde{k}}$ denotes the variety $G$
over $\tilde{k}$. Since $Z(\tilde{k})$ is $\tau$-invariant, we have
$$
BG_{\tilde{k}}=\tau(BG_{\tilde{k}})=A^{-1}\sigma(B)G_{\tilde{k}},
$$
which implies  $B^{-1}A^{-1}\sigma(B) \in G(\tilde{k})$.
% where $A$ is as in (\ref{mixed-eqn}).
There exists $N \in G(\tilde{k}^{\sigma\delta})$ such that
$$
B^{-1}A^{-1}\sigma(B) \in G^0(\tilde{k})N.
$$
Let $H$ be the group generated by $G^0$ and $N$. One sees
that~$\tau(BH_{\tilde{k}})=BH_{\tilde{k}}$ and therefore the
defining ideal $\tilde{I}$ of $BH_{\tilde{k}}$
is~$\sigma$-invariant. Since the set~$BH_{\tilde{k}}$ is the union
of some of the $Z_i$, $\tilde{I}$ is of the form~$\cap _{i \in
J}P_i$ with  $J$ a subset of~$\{0,1,\cdots,t-1\}.$ Hence $\tilde{I}$
is a \sd ideal because each $P_i$ is a differential ideal. Since the
defining ideal $I$ of $Z$ is a maximal~\sd ideal, it follows
that~$\tilde{I}=I$ and so $H=G$.
\end{proof}
%\begin{remark}
 From~$B^{-1}A^{-1}\sigma(B) \in G^0(\tilde{k})N$ in the proof of
Proposition \ref{lem3}, we conclude that~$N$ is a generator of the
cyclic group $G/G^0$.
%\end{remark}

%We will need the following concept in later sections.
\begin{define}
{Let $k$ be a \sd field and $R$ be a \sd PV extension of $k$ for the
system
$$
%\{
\sigma(Y)=AY,
\quad \delta(Y)= BY%\}
$$ with $A\in \mat_n(k)$ and $B\in
\mathfrak{gl}_n(k)$, and $V \subset R^n$ be the solution space of the system.
%{\highlight \bf (over which extension ? )}. %
The system is said to be
{\it irreducible} over~$k$ if~$V$ has no
nontrivial~$\gal(R/k)$-invariant subspaces.}
\end{define}
In a manner similar to the purely differential case
\cite[p.~56]{putsinger2}, one can show that a
system~$\{\sigma(Y)=AY,\, \delta(Y)=BY\}$ with $A\in \mat_n(k)$ and
$B\in \mathfrak{gl}_n(k)$ is reducible  over $k$} if an only if
there exists $U \in \mat_n(k)$ such that { a change of variables $Z
= UY$ yields} $\sigma(Z)= \tilde{A}Z, \delta(Z) = \tilde{B}Z$ with
\[
\tilde{A} = \left(\begin{array}{cc} A_1 & 0\\ A_2 & A_3\end{array}
\right), \quad\quad   \tilde{B} = \left(\begin{array}{cc} B_1 & 0\\
B_2 & B_3\end{array} \right).
\]
Note that ~$\tilde A$ and~$\tilde B$ are again~$n\times n$
matrices over~$k$.\\[.1in]
We end this section with a concrete way of realizing a
\sd Picard-Vessiot extensions for linear difference-differential
systems over fields of particular form.
We proceed in a manner similar to that of \cite[Chapter 3]{putsinger2}.\\[0.1in]
Let $K$ be a differential field with a derivation $\delta$. Denote
by~$S_K$ the set of all sequences of the form ${\bf a} = (a_0, a_1,
\ldots )$ with ${\bf a}(i) = a_i \in K$. {Define an
equivalence relation on~$S_K$ as follows:
any two sequences ${\bf a}$ and ${\bf b}$ are {\it equivalent} if
there exists $N\in \bZp$ such that ${\bf a}(n) = {\bf b}(n)$ for all
$n>N$. Denote by~${\cal S}_K$ the set of equivalence classes
of~$S_K$ modulo the equivalence relation. One sees that~${\cal S}_K$
forms a differential ring with the addition, multiplication and a
derivation $\delta$ defined on~${\cal S}_K$ coordinatewise.}
Clearly, the map~$\sigma$ given by $\sigma((a_0, a_1, \ldots )) =
(a_1, a_2, \ldots )$ is  an automorphism of ${\cal S}_K$ that
commutes with the derivation~$\delta$.  In addition, any
element~$e\in K$ is
identified with $(e,e,\ldots)$. So we can regard $K$ as a (differential) subfield of ${\cal S}_K$.  \\[0.1in]
{Let~$E$ be an algebraically closed differential field with a
derivation $\delta$. Construct an automorphism~$\sigma$ on~$E(x)$
given by~$\sigma|_E =\id$ and~$\sigma(x) = x+1$ and extend~$\delta$
to be a derivation~$\delta$ on~$E(x)$ such that $\delta(x) = 0$.
Assume that~$K$ is a differential field extension of~$E$ with an
extended derivation~$\delta$.
%Assume that~$E\subset K$.
The map~$E(x)\rightarrow S_K$ given by $f \mapsto (0, \ldots , 0,
f(N), f(N+1), \ldots )$, where $N$ is a non-negative integer such
that $f$ has no poles at integers $\geq N$, induces} a \sd embedding
of $E(x)$ into ${\cal S}_K$.
 Consequently, we may identify any matrix~$M$ over~$E(x)$ with a sequence of matrices~$(0, \ldots, 0, M(N),M(N+1), \cdots)$ where~$M(i)$ means the evaluation of the entries of~$M$
at~$x=i$. So we have the following
%\textbf{ (Some more explanation needed, especially for clarifying the proof to the proposition ?)}
\begin{prop}\label{PVseq}
Let~$E\subset K$,~$E(x)$,~${\cal S}_K$  be  as above and
let~$\tilde{k}=E(x)$. Assume that~$E$ and~$K$ have the same
algebraically closed field of constants as differential fields. Let
$R$ be a \sd PV extension for the system
$$%\begin{equation} \label{seqeqn}
\sigma(Y) = AY, \quad \delta(Y) = BY
$$%\end{equation}
where $A \in \mat_n(\tilde{k}) $ and~$ B \in \gl_n(\tilde{k})$.  Let
$N\in \bZp$ such that $A(m)$ and~$B(m)$ are defined and $\det(A(m))
\neq 0$ for all $m \geq N$ and assume that~$\delta(Y) = B(N)Y$ has a
fundamental matrix $\overline{Z} \in \mat_n(K)$. Then there exists a
\sd$\tilde{k}$-monomorphism of $R$ into $\calS_K$. Moreover, {the
entries of any solution of~$\{\sigma(Y)=AY, \, \delta(Y)=BY\}$
 in $\calS_K^n$} lies in the image of $R$ in ${\cal S}_K$.
\end{prop}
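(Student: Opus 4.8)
\emph{Proof strategy.} The plan is to realize a \sd PV extension of $\tilde k = E(x)$ for the system explicitly inside $\calS_K$, by building there a fundamental solution matrix. Suppose we have produced $\mathcal{F}\in\mat_n(\calS_K)$ with $\sigma(\mathcal{F})=A\mathcal{F}$ and $\delta(\mathcal{F})=B\mathcal{F}$, and consider the \sd subring $R_0:=\tilde k[\mathcal{F},\frac1{\det(\mathcal{F})}]$ of $\calS_K$ (this is a \sd subring since $A,B\in\gl_n(\tilde k)$, with $\tilde k$ sitting in $\calS_K$ via the \sd embedding recalled above). Then $R_0$ contains the fundamental matrix $\mathcal{F}$ and is generated by its entries together with $1/\det(\mathcal{F})$; so if we show that $R_0$ is a simple \sd ring, then $R_0$ is a \sd PV extension of $\tilde k$ for the system. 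Since the constants of $\tilde k$ equal $C$ (a $\sigma$-invariant element of $E(x)$ lies in $E$, and if it is also $\delta$-invariant it lies in the field of constants $C$), which is algebraically closed, the uniqueness of \sd PV extensions (\cite{blw}, \cite[Appendix]{hardouin-singer}) then furnishes a $\sigma\delta$-$\tilde k$-isomorphism $R\cong R_0$; composing with the inclusion $R_0\hookrightarrow\calS_K$ gives the desired monomorphism, which is automatically injective because any $\sigma\delta$-$\tilde k$-homomorphism out of the simple \sd ring $R$ has kernel $0$ or $R$ and sends $1$ to $1$. The ``moreover'' statement will then follow from a short computation.

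To build $\mathcal{F}$, I would proceed as in \cite[Chapter 3]{putsinger2}. Since the system has a \sd PV extension it is integrable, so $\sigma(B)=\delta(A)A^{-1}+ABA^{-1}$ over $\tilde k$; evaluating at $x=m$ for $m\ge N$ (where $A$, $B$ and $\delta(A)$ have no poles and $A(m)$ is invertible) gives $\delta(A(m))+A(m)B(m)=B(m+1)A(m)$. Define a sequence of matrices over $K$ by $\mathcal{F}(m)=0$ for $m<N$, $\mathcal{F}(N)=\overline{Z}$ and $\mathcal{F}(m+1)=A(m)\mathcal{F}(m)$ for $m\ge N$, and let $\mathcal{F}$ be its class in $\calS_K$. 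Then $\det(\mathcal{F}(m))=\det(\overline{Z})\prod_{i=N}^{m-1}\det(A(i))\neq0$ for $m\ge N$, so $\det(\mathcal{F})$ is a unit of $\calS_K$ and $\mathcal{F}\in\mat_n(\calS_K)$; and $\sigma(\mathcal{F})=A\mathcal{F}$ by construction. That $\delta(\mathcal{F})=B\mathcal{F}$, i.e.\ $\delta(\mathcal{F}(m))=B(m)\mathcal{F}(m)$ for all large $m$, follows by induction on $m\ge N$: the base case $m=N$ is the hypothesis $\delta(\overline{Z})=B(N)\overline{Z}$, and the inductive step is
\[
\delta(\mathcal{F}(m+1))=\delta(A(m))\mathcal{F}(m)+A(m)B(m)\mathcal{F}(m)=\bigl(\delta(A(m))+A(m)B(m)\bigr)\mathcal{F}(m)=B(m+1)\mathcal{F}(m+1),
\]
where the first equality uses the inductive hypothesis and the last uses the evaluated integrability relation.

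The remaining and, I expect, main point is that $R_0$ is a simple \sd ring. As a subring of $\calS_K$ it is reduced: if the class of $(a_m)$ squares to $0$ then $a_m^2=0$, hence $a_m=0$, for all large $m$, since $K$ is a domain. It is finitely generated over the field $\tilde k$, hence Noetherian, and it has no new constants, as $R_0^{\sigma\delta}\subseteq\calS_K^{\sigma\delta}=C$ (a $\sigma$-invariant class in $\calS_K$ is an eventually constant sequence, i.e.\ an element of $K$, and a $\delta$-invariant such element lies in $C$). From these properties simplicity should follow via the difference-differential analogue of the standard characterization (in the purely differential and purely difference settings) of Picard-Vessiot rings among finitely generated reduced rings generated by a fundamental solution matrix with no new constants; alternatively, one may take a maximal \sd ideal $J\subsetneq R_0$, observe that $R_0/J$ is a \sd PV extension of $\tilde k$ and hence, by uniqueness and the absence of new constants in PV extensions, is $\sigma\delta$-$\tilde k$-isomorphic to $R$, and then use the reducedness of $R_0$ together with a careful analysis of its idempotents (via Lemma~\ref{lem1} applied to $R_0/J$) to conclude that $J=0$.

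Finally, granting that $R_0$ is a \sd PV extension, uniqueness yields a $\sigma\delta$-$\tilde k$-isomorphism $\theta\colon R\to R_0\subseteq\calS_K$, the required monomorphism, with image $R_0$. For the ``moreover'' part, let $v\in\calS_K^n$ satisfy $\sigma(v)=Av$ and $\delta(v)=Bv$, and set $w:=\mathcal{F}^{-1}v$. Then $\sigma(w)=(A\mathcal{F})^{-1}(Av)=\mathcal{F}^{-1}v=w$, and $\delta(w)=\delta(\mathcal{F}^{-1})v+\mathcal{F}^{-1}\delta(v)=-\mathcal{F}^{-1}(B\mathcal{F})\mathcal{F}^{-1}v+\mathcal{F}^{-1}Bv=0$, so $w\in(\calS_K^{\sigma\delta})^n=C^n$; hence $v=\mathcal{F}w$ has all its entries in $R_0=\theta(R)$.
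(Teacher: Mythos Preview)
Your construction of $\mathcal{F}$ and the verification that it is a fundamental matrix in $\mat_n(\calS_K)$ are fine, and your argument for the ``moreover'' clause is correct once the embedding is in hand. The gap is exactly where you hedge: showing that $R_0=\tilde k[\mathcal{F},1/\det\mathcal{F}]$ is a \emph{simple} \sd ring. The paper's authors explicitly flag this as the crux (``the key to proving the proposition is to show that $Z$ generates a \sd Picard-Vessiot extension. Unfortunately, we do not see a direct way to show this''), and neither of your two sketches closes it. The first---``simplicity should follow'' from reduced, Noetherian, no new constants---is not a theorem in the mixed or difference setting; unlike the differential case, PV rings here need not be domains, and the presence of many idempotents in $\calS_K$ means reducedness plus trivial constants does not force simplicity. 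The second sketch (pass to $R_0/J$ for a maximal \sd ideal $J$, identify it with $R$ by uniqueness, then somehow deduce $J=0$) does not work either: you would need to rule out $R_0$ being, say, a nontrivial product of copies of $R_0/J$, and ``careful analysis of idempotents'' is a promise, not an argument.

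The paper circumvents this entirely. Using Proposition~\ref{lem3}, the $G$-torsor underlying $R$ has a $\tilde k$-rational point $P$, so after the change of variable $Y=PX$ one has $R=\tilde k[X,1/\det X]/J$ where $J$ is the defining ideal of $G$ itself, and the transformed matrices satisfy $\tilde A\in G(\tilde k)$, $\tilde B\in\frakg(\tilde k)$. Now the differential equation $\delta(U)=\tilde B(N')U$ has, by Proposition~1.31 of \cite{putsinger2}, a fundamental solution $\overline V$ lying in $G(K)$; propagating by $V(m+1)=\tilde A(m)V(m)$ keeps $V$ inside $G(\calS_K)$ because $\tilde A(m)\in G$. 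Since $V\in G(\calS_K)$, the substitution $X\mapsto V$ kills $J$ and hence defines a \sd$\tilde k$-homomorphism from the already-simple ring $R$ into $\calS_K$; injectivity is then automatic. Only \emph{after} this is established does one observe, as a corollary, that your $\mathcal{F}$ (their $Z$) differs from $PV$ by a constant matrix and therefore also generates a PV extension. In short, the paper does not prove simplicity of $R_0$ directly; it manufactures a fundamental matrix that visibly satisfies the relations defining $R$, which is precisely the step your proposal is missing.
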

\begin{proof} Let $R= \tilde{k}[Y,\frac{1}{\det Y}]/I $ be the \sd PV ring extension
for the system~$\{\sigma(Y)=AY,  \delta(Y)=BY\}$ and let $G$ be its
 Galois group. From Proposition~\ref{lem3}, the
corresponding torsor has a point  $P$ with coordinates
in~$\tilde{k}$. This implies that if we introduce a new matrix of
variables~$X$ and let $Y = PX$, then $R = \tilde{k}[X,\frac{1}{\det
X}]/J$ where $J$ is the defining ideal of $G$. Furthermore,
$\sigma(X) = \tilde{A} X$ and $\delta(X) = \tilde{B}X$
where~$\tilde{A} = \sigma(P)^{-1}A P \in G(\tilde{k})$
and~$\tilde{B} = P^{-1}BP-P^{-1}\delta(P) \in \frakg(\tilde{k})$
where $\frakg$ is the lie
algebra of~$G$.\\[0.1in]
{Define recursively a sequence of matrices~$Z_m \in
\mat_n(K)$ for $m \geq N$:
$$ Z_N = \overline{Z}\quad\mbox{and}\quad  Z_{m+1}=A(m)Z_m \quad\mbox{for any $m \geq
N$}.
$$
The integrability condition on $A$ and $B$ implies that~$Z_m$
satisfies~$\delta(Y) = B(m)Y$ for any~$m\ge N$ and so $Z = ( \ldots
, Z_N, Z_{N+1}, \ldots)$ is a fundamental matrix in
$\mat_n(\calS_K)$ of $\{\sigma(Y)=AY, \, \delta(Y)=BY\}$.}\\[0.1in]
{Remark that the key to proving the proposition is to show that $Z$
generates a \sd Picard-Vessiot extension. Unfortunately, we do not
see a direct way to show this and our proof is a little circuitous.}
Clearly,~$U := P^{-1}Z$ satifies that~$\sigma(U) = \tilde{A} U$
and~$\delta(U) = \tilde{B}U$. Then~$\delta(U(N')) = \tilde{B}(N')
U(N')$  for a sufficiently large $N'$ and therefore $K$ contains a
(differential) Picard-Vessiot extension of $E$  for the
equation~${\delta(U)=\tilde B U }.$ Since $\tilde{B}(N') \in
\frakg(\tilde{k})$, Proposition~1.31 (and its proof) in
\cite{putsinger2} together with the uniqueness of Picard-Vessiot
extensions imply that there exists~$\overline{V} \in G(K)$ such
that~$\delta(\overline{V}) = \tilde{B}(N')\overline{V}$. Define $V
\in \calS_K$ by $V(N') = \overline{V}$ and  $V(m+1) = \tilde{A}(m)
V(m)$ for~$m \geq N'$. Then~$\sigma(V) = \tilde{A} V$ and $\delta(V)
=  \tilde{B}V$ and  $V \in G(\calS_K)$. This implies that the map
from $R= \tilde{k}[X,\frac{1}{\det X}]/J$ to $\calS_K$ given by~$X
\mapsto V$ {is a~\sd $\tilde{k}$-homomorphism}. Since $I$ is a
maximal \sd ideal, this map must be injective, and so is the desired
embedding from $R$ into $\calS_K$.
 \\[0.1in]
{Let $W \in \calS_K^n$ be a solution of~$\{\sigma(Y)=AY, \,
\delta(Y)=BY\}$. For a sufficiently large $M$, $W(M)$ is defined and
is a solution of } $\delta(Y) = B(M)Y$ and~$W(m+1) = A(m) W(m)$ for
$m \geq M$.  Therefore~$W(M) = V(M)D$ for some constant vector $D$
and thus~$W = VD \in R^n$. It follows that~$Z = PU$ also generates a
\sd Picard-Vessiot extension, as claimed in the above.
\end{proof}

\begin{remarks}\label{remark2.13}
If $K$ is a maximal Picard-Vessiot extension of $E$ with the same
constants (Zorn's lemma guarantees that such fields exist), then the
hypothesis on the existence of $\overline{Z}$ in
Proposition~\ref{PVseq} is always satisfied. Therefore for such a
field~$K$, $\calS_K$ contains a \sd Picard-Vessiot ring for any
system~$\{\sigma(Y)=AY, \, \delta(Y)=BY\}$.
\end{remarks}

\subsection{Liouvillian solutions}%
The Galois theory for linear differential equations  is stated in
terms of differential integral domains and fields \cite[Chapter
1]{putsinger2} and  both theory and algorithms for finding
liouvillian solutions are well developed~\cite[Chapters 1.5, 4.1 -
4.4]{putsinger2}.  The main result is that the associated
Picard-Vessiot extension lies in a tower of fields built up by
successively adjoining, exponentials, integrals and algebraics if
and only if the associated Galois group has a solvable identity
component. For linear difference equations, the Galois theory is
stated in terms of reduced rings and total rings of fractions. A
general theory of liouvillian solutions has not been developed in
the difference case. However, a case has been investigated
in~\cite{hendrikssinger} where the coefficient field is of the
form~$C(x)$ with a shift operator~$\sigma:x\mapsto x+1$ and~$\sigma
|_C = \id$. In this situation, solutions of linear difference
equations are identified with sequences whose entries are in $C$.
One says that a linear difference equation is solvable in terms of
liouvillian sequences if it has a full set of solutions in a ring of
sequences built up by successively adjoining to $C$ sequences
representing indefinite sums, indefinite products and interlacings
of previously defined sequences.  The main result is that a linear
difference equation is solvable in terms of liouvillian sequences if
and only if its Galois group has a solvable identity
component.\\[0.1in]
{In this paper we will combine the approaches for differential and
difference cases to investigate the solvability of} systems of mixed
linear difference-differential equations over $E(x)$  where~$ E$
will always be an algebraically closed differential field unless
specified otherwise, $ \sigma(x) = x+1$
and~$\sigma|_E =\id$. \\[0.1in]
In this section, we will give a characterization of Galois groups
for mixed difference-differential systems
%$$
%\{\sigma(Y)=AY, \,\delta(Y)=BY\}
%$$
to have solvable identity component in terms of liouvillian towers
over an arbitrary \sd-field~$k$ with algebraically closed constants.
Then  we will define a notion of liouvillian sequences and show that
having a full set of solutions of this type implies that the Galois
group has solvable identity component (Proposition~\ref{solvseq}).
In a later result (Proposition~\ref{solnliouvseq}),
we will  show  the converse is true as well. \\[0.2in]
 Liouvillian extensions for
{\it \sd fields}  are defined in the usual way.
\begin{define}
Let $k$ be a \sd field. A \sd field extension $K$ of $k$ is said to
be {\rm liouvillian} if there is a chain of \sd field extensions
$$
k=K_0\subset K_1\subset \cdots \subset K_m{= K}
$$
such that~$k^{\sigma\delta}=K^{\sigma\delta}$, {\ie, $K$
shares the same set of constants with $k$,} and~$K_{i+1}=K_i(t_i)$
for~$i=0,\dots,m-1$ where
\begin{enumerate}
   \item [$(1)$]
        $t_i$ is algebraic over $K_i$, or
    \item [$(2)$]
        %$t_i$ is transcendental over $K_i$,
         $\sigma(t_i)=r_1t_i$ and~$\delta(t_i)=r_2t_i$ with~$r_1,r_2 \in K_i$ \ie, $t_i$ is
         hyperexponential over $K_i$,   or
    \item [$(3)$]
         $\sigma(t_i)-t_i \in K_i$ and~$\delta(t_i) \in
        K_i$.
\end{enumerate}
\end{define}
We now define liouvillian solutions of mixed difference-differential
systems.  In the sequel, let $k$ be a \sd-field with algebraically
closed constants, $R$ be a~\sd~PV extension of $k$ for the system  $
\{\sigma(Y)=AY, \,\delta(Y)=BY\} $ with~$A \in \mat_n(k)$ and $B \in
\mathfrak{g}l_n(k)$. Suppose that~$R$ has a decomposition
$$
R=e_0R\oplus e_1R \oplus \cdots\oplus e_{s-1} R
$$
and $\mathcal{F}_0$ is the quotient field of $e_0R$. Then
$\mathcal{F}_0$ is a~$\sigma^s\delta$-field.

\begin{define}\label{DEF:liouv}
Let~$v=\sum_{i=0}^{s-1}v_i$ be a solution  of~$ \{\sigma(Y){=}AY,
\,\delta(Y){=}BY\} $ in $R^n$ where $v_i:=e_iv \in {e_i R^n}$. We
say that $v$ is {\rm liouvillian} if {the entries of $v_0$ lie in
a~$\sigma^s\delta$-liouvillian extension of~$k$ containing
$\mathcal{F}_0$}. We say that the original system is {\rm solvable
in liouvillian terms} if each
 solution is liouvillian.
\end{define}
{Suppose that $v$ is a solution of~$ \{\sigma(Y)=AY,
\,\delta(Y)=BY\} $ in $R^n$.}
%where~$A \in \mat_n(k)$ and $B \in \mathfrak{g}l_n(k)$.
From
$$
\sigma (v) =Av\quad\mbox{and}\quad \delta (v) =B v,
$$
it follows that
\begin{eqnarray*}
\sigma(v_{s-1})\oplus\sigma(v_0)\oplus \cdots \oplus
\sigma(v_{s-2})&=&Av_0\oplus Av_1\oplus \cdots \oplus Av_{s-1}\\
\delta(v_0)\oplus \delta(v_1)\oplus \cdots \oplus \delta(v_{s-1})&=&
Bv_0\oplus Bv_1\oplus\cdots \oplus Bv_{s-1},
\end{eqnarray*}
which implies that~$ \sigma(v_i)=Av_{i+1 \,\mbox{mod} \, s}$
and~$\delta (v_i)=B v_i $ for~$i=0,\dots, s{-}1$. Hence
$\sigma^{s}(v_0)=A_sv_0$ and $\delta(v_0)=Bv_0$, \ie, $v_0$ is a
solution of the system~$\{\sigma^s(Y)=A_s Y,\, \delta(Y)=BY\}.$
Conversely, assume that $v_0$ is a solution of $\{\sigma^s(Y)=A_s
Y,\, \delta(Y)=BY\}$ in $e_0R^n$. Let $v_i=A^{-1}\sigma(v_{i-1})$
  for $i=1,\cdots,s-1$ and $v=v_0\oplus \cdots \oplus v_{s-1}$. We then have
  $\sigma(v)=Av$. From the fact that $\sigma(B)-\delta(A)A^{-1}=ABA^{-1}$, one can easily check that
   $\delta(v)=Bv$. Hence $v$ is a solution of $ \{\sigma(Y)=AY,
\,\delta(Y)=BY\} $ in $R^n$.
 Moreover we have the following
\begin{prop}
\label{prop1} The system~$ \{\sigma(Y)=AY, \,\delta(Y)=BY\} $ is
solvable in liouvillian terms if and only if the
system~$\{\sigma^s(Y)=A_s Y,\, \delta(Y)=BY\}$ is solvable in
liouvillian terms.
\end{prop}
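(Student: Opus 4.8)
The plan is to reduce both assertions to a single statement about the field $\mathcal{F}_0$. First I would assemble the structural facts already available. By Lemma~\ref{lem11}, $e_0R$ is a $\sigma^s\delta$-PV extension of $k$ for $\{\sigma^s(Y)=A_sY,\,\delta(Y)=BY\}$, and by Lemma~\ref{lem1}(iii) it is a domain; a domain contains no idempotents besides $0$ and its unit $e_0$, so the decomposition of $e_0R$ furnished by Lemma~\ref{lem1} (now with $\sigma^s$ in the role of $\sigma$) is the trivial one, with the single summand $e_0R$. Hence, using $e_0R$ as the reference PV extension when applying Definition~\ref{DEF:liouv} to $\{\sigma^s(Y)=A_sY,\,\delta(Y)=BY\}$, the field attached by that definition is exactly $\mathcal{F}_0$, the quotient field of $e_0R$ that is already attached to the original system.

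Next I would invoke the correspondence worked out in the paragraphs preceding the proposition: $v\mapsto v_0:=e_0v$ is a bijection from the solutions of $\{\sigma(Y)=AY,\,\delta(Y)=BY\}$ in $R^n$ onto the solutions of $\{\sigma^s(Y)=A_sY,\,\delta(Y)=BY\}$ in $(e_0R)^n$, with inverse $v_0\mapsto\sum_{i=0}^{s-1}v_i$ where $v_i=A^{-1}\sigma(v_{i-1})$. Then I would simply unwind Definition~\ref{DEF:liouv} on each side. For the original system, $v$ is liouvillian exactly when the entries of $v_0=e_0v$ lie in a $\sigma^s\delta$-liouvillian extension of $k$ containing $\mathcal{F}_0$; for $\{\sigma^s(Y)=A_sY,\,\delta(Y)=BY\}$ with reference extension $e_0R$ (trivial decomposition), a solution $w\in(e_0R)^n$ is liouvillian exactly when the entries of $w$ lie in a $\sigma^s\delta$-liouvillian extension of $k$ containing $\mathcal{F}_0$ --- the same requirement, read for $w=v_0$. (Since the relevant entries always lie in $e_0R\subseteq\mathcal{F}_0$, both requirements reduce to the single condition that $\mathcal{F}_0$ embed into some $\sigma^s\delta$-liouvillian extension of $k$.) Therefore $v$ is liouvillian for the first system iff $v_0$ is liouvillian for the second, and pushing this through the bijection $v\leftrightarrow v_0$ yields the equivalence claimed.

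The only point that warrants care --- the step I would treat as the main, though minor, obstacle --- is the well-definedness tacit in the statement: Definition~\ref{DEF:liouv} refers to a fixed PV extension and a fixed labeling of the idempotents, so one must check that ``solvable in liouvillian terms'' does not depend on these choices. This I would settle with Lemma~\ref{lem11} together with the uniqueness of PV extensions recalled after their definition: each component $e_iR$ is a $\sigma^s\delta$-PV extension of $k$ for $\{\sigma^s(Y)=A_sY,\,\delta(Y)=BY\}$ (the integrability conditions follow by iterating those of the original system, and $A_s$ is invertible since $A$ is), so all of them --- and any such extension chosen from the outset --- are mutually isomorphic, and an isomorphism carries a $\sigma^s\delta$-liouvillian extension containing one copy of $\mathcal{F}_0$ to one containing the other. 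Granting this, the argument above is complete and requires no computation beyond what the excerpt already contains.
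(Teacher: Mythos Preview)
Your argument is correct and takes a cleaner route than the paper's. The paper proceeds by choosing a basis $V_1,\dots,V_n$ of the solution space of $\{\sigma(Y)=AY,\,\delta(Y)=BY\}$, passing to $V_{i0}=e_0V_i$, and then verifying explicitly that $V_{10},\dots,V_{n0}$ remain linearly independent over $k^{\sigma\delta}$ (using $k^{\sigma\delta}=k^{\sigma^s\delta}$); the converse direction is handled by the reverse construction $V_{ik}=A^{-1}\sigma(V_{i,k-1})$ and another independence check. Your approach sidesteps all of this by observing that, as written, Definition~\ref{DEF:liouv} is really a condition on $\mathcal{F}_0$ alone: since $v_0\in(e_0R)^n\subseteq\mathcal{F}_0^n$, the phrase ``the entries of $v_0$ lie in a $\sigma^s\delta$-liouvillian extension of $k$ containing $\mathcal{F}_0$'' is equivalent to ``$\mathcal{F}_0$ embeds into some $\sigma^s\delta$-liouvillian extension of $k$'', independently of $v$. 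Once you note (via Lemma~\ref{lem11} and the fact that a domain has no nontrivial idempotents) that applying Definition~\ref{DEF:liouv} to the associated system over $e_0R$ yields the \emph{same} condition on the \emph{same} field $\mathcal{F}_0$, the equivalence is immediate and no basis or linear-independence argument is needed. The paper's approach has the virtue of making the solution-level correspondence $v\leftrightarrow v_0$ fully explicit, which is used again later; yours isolates the logical content of the definition and shows the proposition is essentially tautological.
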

\begin{proof}
%($\Rightarrow$)
Let $V_1,\cdots, V_n$ be a basis of the solution space for the
system $ \{\sigma(Y)=AY, \,\delta(Y)=BY\}$. Since $V_i$ is
liouvillian, $V_{i0}=e_0V_i$ is liouvillian for each~$i$. It then
suffices to show that $V_{10},\cdots, V_{n0}$ are linearly
independent over $k^{\sigma\delta}$. Assume that there exist
$c_1,\cdots,c_n \in k^{\sigma\delta}$, not all  zero, such that
$c_1V_{10}+\cdots+c_nV_{n0}=0$. Letting $V_{i1}=e_1V_i$ we
have~$V_{i1}=A^{-1}\sigma(V_{i0})$ for~$i=1,\dots, n$. Remark that
$k^{\sigma\delta}=k^{\sigma^{s}\delta}$ since $k^{\sigma\delta}$ is
algebraically closed.~Therefore
$$c_1V_{11}+\cdots+c_nV_{n1}=A^{-1}\sigma(c_1V_{10}+\cdots+c_nV_{n0})=0.$$
Similarly,  $c_1e_iV_1+\cdots+c_ne_iV_n=0$ for each~$i$. Hence
$c_1V_1+\cdots+c_nV_n=0$, a contradiction.\\[0.1in]
Conversely, suppose that $V_{10},\cdots,V_{n0}$ is a basis of the
solution space for the system~$\{\sigma^s(Y)=A_s Y,\,
\delta(Y)=BY\}$ and that all the $V_{i0}$'s are liouvillian.
For~$i=1,\cdots,n$ and~$k=1,\cdots,s-1$, let
$$
V_{ik}=A^{-1}\sigma(V_{ik-1})\quad\mbox{and}\quad V_i=V_{i0}\oplus
V_{i1}\oplus \cdots \oplus V_{i,s-1}.
$$
Then each $V_i$ is a solution of~$ \{\sigma(Y)=AY, \,\delta(Y)=BY\}
$. Clearly, $V_1,\cdots,V_n$ are linearly independent over
$k^{\sigma\delta}$. This concludes the proposition.
\end{proof}

\begin{theorem}
\label{thm1}The system~$ \{\sigma(Y)=AY, \,\delta(Y)=BY\}$ is
solvable in liouvillian terms if and only if $\gal(R/k)^0$ is
solvable.
\end{theorem}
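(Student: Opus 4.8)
The plan is to strip from $R$ the summand carrying the connected part of the Galois group, reduce to a single $\sigma^s\delta$-Picard--Vessiot field, and then adapt the classical differential argument (see, e.g., \cite[Chapter~1.5]{putsinger2}) to the $\sigma^s\delta$-setting. First I would pass, via Proposition~\ref{prop1}, to the system $\{\sigma^s(Y)=A_sY,\,\delta(Y)=BY\}$, whose $\sigma^s\delta$-PV ring is $e_0R$ by Lemma~\ref{lem11}; as $e_0R$ is a domain (Lemma~\ref{lem1}$(iii)$), its quotient field $\mathcal{F}_0$ is a $\sigma^s\delta$-PV field over $k$, and, unwinding Definition~\ref{DEF:liouv}, the system is solvable in liouvillian terms iff $\mathcal{F}_0$ is contained in some $\sigma^s\delta$-liouvillian extension of $k$ (one direction is immediate since every $v_0$ has its entries in $\mathcal{F}_0$; the other comes from applying the definition to one nonzero solution). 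On the group side, the exact sequence of Lemma~\ref{lem4} shows $\gal(e_0R/k)$ has finite index $s$ in $\gal(R/k)$, so $\gal(R/k)^0=\gal(e_0R/k)^0=\gal(\mathcal{F}_0/k)^0$; recall also that $k^{\sigma^s\delta}=k^{\sigma\delta}$ is algebraically closed, by the remark in the proof of Proposition~\ref{prop1}. Thus, writing $\tau=\sigma^s$ and $F=\mathcal{F}_0$, the theorem reduces to: for a $\tau\delta$-PV field $F$ over $k$ with group $G$, $F$ lies in a $\tau\delta$-liouvillian extension of $k$ if and only if $G^0$ is solvable.

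For ``$G^0$ solvable $\Rightarrow$ liouvillian'', I would first adjoin to $k$ the algebraic closure $\bar k$ of $k$ inside $F$: this is a $\tau\delta$-field, the extension $\bar k/k$ is algebraic with the same constants (an allowed liouvillian step), and $\gal(F/\bar k)=G^0$ is now connected and solvable. In characteristic zero a connected solvable linear algebraic group has a chain of closed subgroups $G^0=G_0\triangleright G_1\triangleright\cdots\triangleright G_\ell=\{e\}$ with every quotient $G_i/G_{i+1}$ isomorphic to $\mathbb{G}_a$ or $\mathbb{G}_m$ (split off the unipotent radical and a maximal torus and refine). Via the $\tau\delta$-Galois correspondence of \cite{hardouin-singer} this produces a tower of $\tau\delta$-PV extensions $\bar k=F^{G_0}\subseteq F^{G_1}\subseteq\cdots\subseteq F^{G_\ell}=F$, each step having Galois group $\mathbb{G}_a$ or $\mathbb{G}_m$. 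It remains to recall the structure of such one-step extensions: being the coordinate ring of a $\mathbb{G}_a$- or $\mathbb{G}_m$-torsor over its base field $k'$, which is trivial in both cases, the PV ring has the form $k'[t]$ with $\tau(t)-t\in k'$ and $\delta(t)\in k'$, respectively $k'[t,t^{-1}]$ with $\tau(t)/t\in k'$ and $\delta(t)/t\in k'$---precisely cases $(3)$ and $(2)$ of the definition of a liouvillian extension, with $\tau$ in place of $\sigma$. Since no step enlarges the constants, $F$ is a $\tau\delta$-liouvillian extension of $k$.

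For the converse, assume $F\subseteq L$ with $k=K_0\subset K_1\subset\cdots\subset K_m=L$ a liouvillian tower, and induct on $m$; for $m=0$ we get $F=k$ and $G=\{e\}$. Put $k_1=K_1=k(t_1)$. If $t_1$ is algebraic over $k$ we may assume, after replacing $k_1$ and $L$ by their composita with a normal closure of $k(t_1)/k$, that $k_1/k$ is a finite $\tau\delta$-PV extension; otherwise $t_1$ is transcendental over $k$, the ring $k[t_1]$ (case $(3)$) or $k[t_1,t_1^{-1}]$ (case $(2)$) is $\tau\delta$-simple because $L$ has no new constants, and hence $k_1=k(t_1)$ is a $\tau\delta$-PV extension of $k$ with abelian Galois group (a subgroup of $\mathbb{G}_a$ or $\mathbb{G}_m$). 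In either case $Fk_1$ is a $\tau\delta$-PV extension of $k_1$ inside $L$, and $L$ is liouvillian over $k_1$ through a tower of length $m-1$, so $\gal(Fk_1/k_1)^0$ is solvable by induction. Moreover $Fk_1$ is a $\tau\delta$-PV extension of $k$ (compositum of PV extensions sharing the constants of $k$), giving an exact sequence $1\to\gal(Fk_1/k_1)\to\gal(Fk_1/k)\to\gal(k_1/k)\to1$ with finite or abelian last term. Since $\gal(Fk_1/k_1)^0$ is normal in $\gal(Fk_1/k)$, the group $\gal(Fk_1/k)^0/\gal(Fk_1/k_1)^0$ is connected and possesses a finite---hence central---normal subgroup whose quotient is abelian or finite, so it is solvable; with the inductive hypothesis this makes $\gal(Fk_1/k)^0$ solvable. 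Finally restriction $\gal(Fk_1/k)\twoheadrightarrow\gal(F/k)$ is onto because $F/k$ is a $\tau\delta$-PV subextension, so $G^0=\gal(F/k)^0$ is the image of a connected solvable group, hence solvable.

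The step I expect to be the main obstacle is the structure result invoked at the end of the second paragraph (together with the small companion fact used in the converse, that $k'[t]$ and $k'[t,t^{-1}]$ are $\tau\delta$-simple): that a $\tau\delta$-PV extension over a general base with Galois group $\mathbb{G}_m$ (resp. $\mathbb{G}_a$) is generated by a single element $t$ with $\tau(t)/t,\ \delta(t)/t$ (resp. $\tau(t)-t,\ \delta(t)$) in the base. This rests on the triviality of $\mathbb{G}_m$- and $\mathbb{G}_a$-torsors over a field and on the possibility of choosing a trivialization compatibly with both $\tau$ and $\delta$; for base fields of the form $E(x)$ one could instead lean on Proposition~\ref{lem3}, but over an arbitrary $k$ this is the one point that is not purely formal. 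By contrast, the group-theoretic bookkeeping in the converse is routine once the $\tau\delta$-Galois correspondence and surjectivity of restriction onto $\gal(F/k)$ are available.
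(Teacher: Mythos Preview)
Your reduction is exactly the paper's: use Proposition~\ref{prop1} to pass to $\{\sigma^s(Y)=A_sY,\,\delta(Y)=BY\}$, use Lemma~\ref{lem4} (finite index) to get $\gal(R/k)^0=\gal(e_0R/k)^0$, and then work with the domain $e_0R$ and its quotient field $\mathcal F_0$. The paper stops there with ``the proof is similar to that in the differential case''; you go on to actually write out that adaptation, which is fine and correct.

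One remark on the point you flag as the main obstacle. The structure of a $\tau\delta$-PV extension with Galois group $\mathbb G_m$ or $\mathbb G_a$ over an \emph{arbitrary} base field $k'$ does not require anything like Proposition~\ref{lem3}. Triviality of the torsor is Hilbert~90 (multiplicative, resp.\ additive), valid over any field; once the torsor is trivial you have a fundamental solution $t$ on which $G$ acts by $t\mapsto ct$ (resp.\ $t\mapsto t+c$) with $c$ a constant. Since the Galois action commutes with $\tau$ and $\delta$, the elements $\tau(t)/t$ and $\delta(t)/t$ (resp.\ $\tau(t)-t$ and $\delta(t)$) are $G$-fixed, hence lie in $k'$ by the Galois correspondence. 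So this step is routine, not an obstacle. The companion claim that $k'[t]$ (resp.\ $k'[t,t^{-1}]$) is $\tau\delta$-simple when $t$ is transcendental and there are no new constants is proved exactly as in the differential case, by looking at a monic generator of minimal degree of a hypothetical nonzero $\tau\delta$-ideal.
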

\begin{proof}
By Proposition \ref{prop1},~$ \{\sigma(Y)=AY, \,\delta(Y)=BY\}$  is
solvable in liouvillian terms if and only if the associated
system~$\{\sigma^s(Y)=A_s Y,\, \delta(Y){=}BY\}$  is solvable in
liouvillian terms. By Lemma \ref{lem4}, $\gal(e_0R/k)$ is a subgroup
of~$\gal(R/k)$ of finite index. Then $\gal(R/k)^0=\gal(e_0R/k)^0$.
Hence it suffices to show that the associated system  is solvable in
liouvillian terms if and only if $\gal(e_0R/k)^0$ is solvable. Note
that the PV extension of $k$ for the associated system  is a domain.
Thus the proof is  similar to that in differential  case.
\end{proof}
 Let~$E$ be a differential field with algebraically closed constants and $L$
be a field containing~$E$ which satisfies the following conditions:
\begin{itemize}
\item[(i)] $L$ is a differential field extension of $E$ having the same field of
constants as  $E$;
\item[(ii)] every element in $L$ is liouvillian over $E$;
\item[(iii)] $L$ is maximal with respect to (i) and (ii).
\end{itemize}
Zorn's Lemma guarantees that such a field exists. We refer to~$L$ as
a {\it maximal liouvillian extension of $E$}.  One can show that any
differential liouvillian extension of $E$ having the same field of
constants as $E$ can be embedded into a maximal liouvillian
extension and that any two maximal liouvillian extensions of $E$ are
isomorphic over $E$ as differential fields. \\[0.1in]
Recall  that {for two sequences~$\bfa$ and~$\bfb$, and
for a nonnegative integer~$m$,~$\bfb$ is called the {\it i}th
$m$-interlacing (\cite[Definition 3.2]{hendrikssinger}) of $\bfa$
with zeroes if
$$
\bfb(mn+i) = \bfa(n) \quad\mbox{and}\quad \bfb(r) = 0\quad\mbox{for
any~$r\not \equiv i \mod m$}.
$$
A sequence~$\bfa$ is called the $i$th $m$-section of $\bfb$
if~$\bfa(mn+i)=\bfb(mn+i)$ and $\bfa(r)=0$ for~$r \not\equiv i \mod
m$.}\\[0.1in]
We now turn to a definition of liouvillian sequences.
\begin{define}\label{liouvseqdef}
Let $E$ be a differential field with  algebraically closed
constants, {%$k =E(x)$ and
$\sigma$ be an automorphism on~$E(x)$ satisfying~$\sigma(x)=x+1$
and~$\sigma|_E =\id$.} Let  $L$ be a maximal liouvillian extension
of $E$. The {\rm ring of liouvillian sequences} over $E(x)$ is the
smallest subring $\calL$ of $\calS_L$ such that
\begin{enumerate}
\item[(1)]  ${L(x)} \subset \calL$;
%{\color{red}( Is it $L(x)$ here?)}
\item[(2)]  For~$\bfa\in {\cal S}_L$, $\bfa \in \calL$ if and only if $\sigma(\bfa) \in
\calL$;
\item[(3)] Supposing~$\sigma(\bfb) =
\bfa\bfb$ with~$\bfa, \bfb\in \calS_L$, then $\bfa \in E(x)$ implies
$\bfb \in \calL$. {$\bfb$ is called a hypergeometric sequence over
$E(x)$;}
\item[(4)] {Supposing~$\sigma(\bfb) = \bfa
+\bfb$ with~$\bfa, \bfb\in \calS_L$, then~$\bfa \in \calL$ implies
$\bfb \in \calL$};
\item[(5)] For~$\bfa\in {\cal S}_L$, $\bfa \in \calL$ implies that $\bfb \in \calL$, where {$\bfb$ is the {\it i}th $m$-interlacing
of $\bfa$ with zeroes  for some  $m\in \bZp$ and $0\leq i \leq
m-1$.}
\end{enumerate}
\end{define}
Set~$\tilde{k}=E(x)$ and let~$\calL$ be the ring of liouvillian
sequences over $\tilde{k}$. From the remarks in~\cite[p.
243]{hendrikssinger}, if $\bfb\in \calS_L$ belongs to~$\calL$ then
the~$i$th~$m$-section of $\bfb$ also belongs to~$\calL$ for any $i$
and $m$. We claim that $\calL$ is a~\sd ring.
 Since~$\calL$ can be constructed inductively using (1) - (5) above, it is enough to show the following statements:
\begin{enumerate}
\item[(1')]  If $ f\in {L(x)}$ then $\delta(f) \in {L(x)}$;
%{\color{red}(Are they $L(x)$ here?)}
\item[(2')]  If~$\bfa\in {\cal S}_L$,  {$\bfa, \delta(\bfa) \in \calL$ then $\delta(\sigma(\bfa)) \in
\calL$};
\item[(3')]Supposing~$\sigma(\bfb) =
\bfa\bfb$ with~$\bfa, \bfb\in \calS_L$, then $\bfa, \delta(\bfa)\in
E(x)$ implies $\delta(\bfb) \in \calL$;
\item[(4')] {Supposing~$\sigma(\bfb) = \bfa
+\bfb$ with~$\bfa, \bfb\in \calS_L$, then~$\bfa, \delta(\bfa) \in \calL$ implies
$\delta(\bfb) \in \calL$};
\item[(5')] For~$\bfa\in \calS_L$, $\bfa, \delta(\bfa) \in \calL$ implies that $\delta(\bfb) \in \calL$, where $\bfb$ is the {\it i}th $m$-interlacing
of $\bfa$ with zeroes  for some  $m\in \bZp$ and $0\leq i \leq m-1$.
\end{enumerate}
Verifying $(1'), (2')$ an $(5')$ is straighfoward. To verify $(3')$, suppose
%To see this, we  show that~{\highlight $\bfa
%\in \calL$ implies $\delta(\bfa) \in \calL$ and  that,
%whenever~$\sigma(\bfb) = \bfa\bfb$ or~$\sigma(\bfb) = \bfa +\bfb$,
%$\bfb\in \calL$ implies~$\delta(\bfb) \in \calL$.
that~$\bfb\in\calL$ with~$\sigma(\bfb)=\bfa\bfb$. Set~$y =
\delta(\bfb)$. Then~$\sigma(y) - \bfa y = \delta(\bfa) \bfb$.
Since~$\bfa \in \tilde{k}$, $\bfb$ is invertible. Then~$\bfu :=
\delta(\bfb)/\bfb$ satisfies~$\sigma(\bfu) - \bfu =
\delta(\bfa)/\bfa$.  Therefore~$\bfu \in \calL$ and so $\delta(\bfb)
= \bfu \bfb \in \calL$.  To verify $(4')$, suppose that~$\bfb\in
\calL$ with~$\sigma(\bfb) = \bfa +\bfb$. Then~$\sigma(\delta(\bfb))
=
\delta(\bfb) + \delta(\bfa)$ which implies that $\delta(\bfb)\in \calL$.\\[0.1in]
A vector is said to be {\it hypergeometric} over $\tilde{k}$ if it
can be written as~$Wh$ where~$W \in \tilde{k}^n$ and $h$ is a
hypergeometric sequence over $\tilde{k}$.}
 For any positive integer $d$, we can construct a solution
of~$\{\sigma^d(Y)=A_dY,\, \delta(Y)=BY\}$  by interlacing as indicated below. Consider a set of  new
systems
\begin{equation*}
\label{interlacing}
  \mathbb{S}_j: \,\,\, \, \{\sigma(Z)=A_d(dx+j)Z,\quad
  \delta(Z)=B(dx+j)Z\}, \quad j=0,\cdots,d-1,
\end{equation*}
where %$j=0,\cdots,d-1$,
$A_d(dx+j)$ and $B(dx+j)$ mean replacing $x$ by $dx+j$ in each entry
of~$A_d$ and $B$, respectively. Clearly, these new systems are all
integrable. Moreover, we have the following
\begin{prop}
\label{prop-interlacing} Let $V_j$ be a solution of $\mathbb{S}_j$
in $\calS_K^n$ {where $\calS_K$ is as in Remark
\ref{remark2.13}} and~$W_j$ be the $j$th $d$-interlacing with zeros
of $V_j$ for $j=0,\cdots,d-1$. Then
$$
W=W_0+\cdots+W_{d-1}
$$ is a solution of
$\{\sigma^d(Y)=A_dY,\, \delta(Y)=BY\}.$
\end{prop}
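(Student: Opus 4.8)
The plan is to verify directly, at the level of sequences in $\calS_K$, that $W$ satisfies the two defining relations $\sigma^d(W)=A_d W$ and $\delta(W)=BW$. The only real content is bookkeeping with indices modulo $d$: interlacing with zeros interacts transparently with $\sigma^d$ (shifting the index of a $j$th $d$-interlacing by $d$ stays in the residue class $j$ and advances the underlying sequence by one), and $\delta$ commutes with interlacing since it acts coordinatewise and annihilates the inserted zeros.

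First I would make $W$ explicit. Writing an arbitrary index as $m=dn+j$ with $0\le j\le d-1$, we have $W_{j'}(dn+j)=0$ for $j'\neq j$ and $W_j(dn+j)=V_j(n)$, so $W(dn+j)=V_j(n)$. Next I would check the difference equation: $\sigma^d(W)(dn+j)=W(d(n+1)+j)=V_j(n+1)$, while, identifying the matrix $A_d$ over $E(x)$ with the sequence of its pointwise evaluations, $(A_dW)(dn+j)=A_d(dn+j)\,V_j(n)$, where $A_d(dn+j)$ denotes the evaluation of $A_d$ at $x=dn+j$. Since $V_j$ solves $\mathbb{S}_j$, the relation $\sigma(V_j)=A_d(dx+j)V_j$ gives $V_j(n+1)=A_d(dn+j)V_j(n)$ for all sufficiently large $n$; hence $\sigma^d(W)=A_dW$ in $\calS_K$. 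The differential equation runs through the identical bookkeeping: $\delta(W)(dn+j)=\delta(V_j(n))$ because $\delta$ kills the zero entries of the other $W_{j'}$, while $(BW)(dn+j)=B(dn+j)V_j(n)$, and $\delta(V_j)=B(dx+j)V_j$ yields $\delta(V_j(n))=B(dn+j)V_j(n)$ for large $n$, so $\delta(W)=BW$. Finally $W\in\calS_K^n$ because $\calS_K$ consists of all equivalence classes of $K$-valued sequences and is therefore closed under interlacing and finite sums; that such $V_j$ exist at all is exactly the content of Remark~\ref{remark2.13} applied to the (integrable) systems $\mathbb{S}_j$.

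I do not expect a genuine obstacle here; the proof is essentially a verification. The only thing requiring care is keeping the conventions consistent: that $\sigma$ shifts a sequence one step to the left, that a rational matrix over $E(x)$ is identified with the sequence of its evaluations at integers (so multiplication by $A_d$ is multiplication by $A_d(m)$ in coordinate $m$), and that substituting $x\mapsto dx+j$ into $A_d$ and $B$ and then evaluating at $x=n$ produces $A_d(dn+j)$ and $B(dn+j)$. Once these are pinned down, both identities are immediate from the defining relations of the $V_j$. (The integrability condition $\sigma(B)=\delta(A)A^{-1}+ABA^{-1}$ and its analogue for each $\mathbb{S}_j$ guarantee that the $\sigma$- and $\delta$-parts are compatible in the first place, but it is not needed to confirm that the already-constructed $W$ is a solution.)
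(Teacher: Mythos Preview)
Your proof is correct and follows essentially the same approach as the paper: a direct verification at the level of sequences that $\sigma^d(W)=A_dW$ and $\delta(W)=BW$, using that interlacing sends index $dn+j$ to $V_j(n)$. The paper phrases it as showing each $W_j$ is individually a solution and then summing, while you compute $W(dn+j)=V_j(n)$ and verify for $W$ directly, but the index bookkeeping is identical.
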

\begin{proof}
It suffices to show that~$W_j$ is a solution
of~$\{\sigma^{d}(Y){=}A_dY,\,\delta(Y){=}BY\}$ for each~$j$. Let
$V_j=(V_j(0),V_j(1),\cdots)$ and~$W_j=(W_j(0),W_j(1),\cdots)$ for
$j=0, \cdots, d-1$. Then
$$
V_j(i+1)=A_d(di+j)V_j(i), \quad
   \delta(V_j(i))=B(di+j)V_j(i)
$$
and therefore by definition of interlacing, for $i\geq 0$ and
$\ell=1,\cdots,d-1$,
\begin{align*}
  &W_j(di+j+d)=V_j(i+1)=A_d(di+j)V_j(i)=A_d(di+j)W_j(di+j),\\
  &W_j(di+j+d+\ell)=0=A_d(di+j)W_j(di+j+\ell).
\end{align*}
So $\sigma^d(W_j)=A_dW_j$. It is clear that $\delta(W_j)=BW_j$. So
the proposition holds.
\end{proof}
\begin{define}\label{DEF:solvable_sys}
Let~$E$ be an algebraically closed differential field with a
derivation~$\delta$. Suppose that~$\delta$ is extended to be a
derivation~$\delta$ on~$E(x)$ such that~$\delta(x) = 0$ and~$\sigma$
is an automorphism on~$E(x)$ such that~$\sigma|_E = \id$
and~$\sigma(x) = x+1$. A system
 \begin{equation*} %\label{PVseqeqn}
\sigma(Y) = AY, \quad\delta(Y) = BY
\end{equation*}
over~$E(x)$ is said to be {\rm solvable in  terms of liouvillian
sequences} if the \sd PV extension of this system embeds, over
$E(x)$, into~$\calL$,  the ring of liouvillian sequences
over~$E(x)$.
\end{define}
Clearly, Definition~\ref{DEF:solvable_sys} generalizes the notion of
{solvability} in liouvillian terms for linear differential equations
and that for linear difference equations. {In addition, we shall
show in Proposition~\ref{solvseq} that this property is also
equivalent to the identity component of the Galois group being
solvable. However this property is not equivalent to having a
solution in the ring $\calS_L$, as shown by the following
\begin{example} The Hermite polynomials \[H_n(t) = n!\sum_{m=0}^{[n/2]}
\frac{(-1)^m(2t)^{n-2m}}{m! (n-2m)!}\]  satisfy a linear differential
equation with respect to $t$ and a difference equation with respect to $n$
(cf., \cite{bateman}, Ch.~10.13).  In matrix terms, the vector $Y(n,t) = (H(n,t), H(n+1,t))^T$ satisfies
\begin{eqnarray*}\label{hermitesys}
Y(n+1,t)  &= &\left(\begin{array}{cc}0&1\\-2n& 2t\end{array}\right) Y(n,t), \\
\frac{\partial Y(n,t)}{\partial t} & = & \left(\begin{array}{cc}2t&-1\\2n& 0\end{array}\right) Y(n,t)
\end{eqnarray*}
so this system has a solution in $\calS_L$ and one can use the
techniques of Section~\ref{sec6} to show that this system have no
solutions in $\calL$.
\end{example} }

\noindent Concerning total rings of fractions, we have the following

\begin{lemma}
Let $k$ be a \sd-field  and $S$ a \sd-PV extension of $k$.  A
nonzero element $r \in S$ is a zero divisor in $S$ if and only if
there exists  $j\in \bZp$ such that $\prod_{i=0}^j \sigma^i(r) = 0$.
\end{lemma}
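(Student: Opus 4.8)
The plan is to use the structural decomposition of a \sd-PV extension $S$ afforded by Lemma~\ref{lem1}. Write $S = e_0S \oplus \cdots \oplus e_{s-1}S$ where the $e_i$ are the idempotents of that lemma, so that $\sigma$ cyclically permutes the summands (after reordering, $\sigma(e_i) = e_{i+1 \bmod s}$), $\sigma^s$ fixes each $e_iS$, and each $e_iS$ is an integral domain. The elementary observation is that a nonzero $r \in S$ is a zero divisor precisely when at least one of its components $e_ir$ vanishes: indeed $S$ being a finite direct sum of domains, $r$ is a non-zero-divisor iff every component $e_ir$ is nonzero, while if some $e_ir = 0$ then $r \cdot e_i = 0$ with $e_i \neq 0$, so $r$ is a zero divisor.

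First I would prove the ``if'' direction, which is immediate: if $\prod_{i=0}^{j}\sigma^i(r) = 0$ for some $j \in \bZp$ and $r \neq 0$, then setting $r' = \prod_{i=1}^{j}\sigma^i(r)$ we have $r \cdot r' = 0$; it remains to note $r' \neq 0$, and here one uses that $\sigma$ is an automorphism of $S$, so each $\sigma^i(r) \neq 0$, but a product of nonzero elements could still vanish — so instead argue by taking a minimal such $j$: if $j$ is least with $\prod_{i=0}^{j}\sigma^i(r)=0$, then $\prod_{i=1}^{j}\sigma^i(r) = \sigma\!\left(\prod_{i=0}^{j-1}\sigma^i(r)\right) \neq 0$ by minimality and the fact that $\sigma$ is injective, and $r$ times this element is zero, so $r$ is a zero divisor.

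For the ``only if'' direction, suppose $r \neq 0$ is a zero divisor. By the observation above, $e_{i_0} r = 0$ for some index $i_0$. Applying $\sigma^{-i_0}$ (legitimate since $\sigma$ is an automorphism) and using $\sigma^{-i_0}(e_{i_0}) = e_0$, we get $e_0 \sigma^{-i_0}(r) = 0$; replacing $r$ by $\sigma^{-i_0}(r)$ — which is harmless since the condition $\prod_{i=0}^j \sigma^i(r) = 0$ for the shifted element is equivalent to the analogous condition for the original, up to applying $\sigma^{i_0}$ — we may assume $e_0 r = 0$. Now applying powers of $\sigma$: for each $m$ with $0 \le m \le s-1$, $\sigma^m(e_0 r) = e_m \sigma^m(r) = 0$. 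I want to conclude that some single product $\prod_{i=0}^{j}\sigma^i(r)$ annihilates, i.e. kills every component $e_\ell$. The component of $\prod_{i=0}^{s-1}\sigma^i(r)$ in $e_\ell S$ is $\prod_{i=0}^{s-1} e_\ell \sigma^i(r) = \prod_{i=0}^{s-1}\sigma^i(\sigma^{-i}(e_\ell) r) = \prod_{i=0}^{s-1}\sigma^i(e_{\ell - i \bmod s}\, r)$; the factor with $i \equiv \ell \pmod s$ is $\sigma^\ell(e_0 r) = 0$, so this component vanishes for every $\ell$. Hence $\prod_{i=0}^{s-1}\sigma^i(r) = 0$, and $j = s-1 \in \bZp$ works (noting $s \ge 2$, since if $s = 1$ then $S$ is a domain and has no nonzero zero divisors, so this case is vacuous).

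The only mild subtlety — and the step I would be most careful about — is the bookkeeping with the cyclic indexing and the reduction ``assume $e_0 r = 0$'': one must check that producing a $j$ for $\sigma^{-i_0}(r)$ yields a $j$ (perhaps a different one) for $r$ itself. This is clear because if $\prod_{i=0}^{j}\sigma^i(\sigma^{-i_0}(r)) = 0$, applying the automorphism $\sigma^{i_0}$ gives $\prod_{i=0}^{j}\sigma^{i+i_0}(r) = \prod_{i=i_0}^{j+i_0}\sigma^i(r) = 0$, and multiplying by $\prod_{i=0}^{i_0 - 1}\sigma^i(r)$ on the left yields $\prod_{i=0}^{j+i_0}\sigma^i(r) = 0$; so $j + i_0 \in \bZp$ does the job. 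Everything else is formal manipulation of orthogonal idempotents, so no genuine obstacle arises.
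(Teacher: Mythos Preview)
Your proof is correct and follows essentially the same approach as the paper: both directions use the decomposition $S = \bigoplus_i e_iS$ into domains cyclically permuted by $\sigma$, the ``if'' direction via minimality of $j$, and the ``only if'' direction by reducing to $e_0r=0$ and observing that in the product $\prod_{i=0}^{s-1}\sigma^i(r)$ each component $e_\ell$ picks up the vanishing factor $\sigma^\ell(e_0 r)$. Your treatment is in fact more careful than the paper's, which simply writes ``Assume $r_0=0$'' without the justification you supply for the reduction step.
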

\begin{proof} Suppose that $\prod_{i=0}^j \sigma^i(r) = 0$ for
some $j\in\bZp$ and let $j$ be minimal with respect to this
assumption. If $0 = \prod_{i=1}^j \sigma^i(r) =
\sigma(\prod_{i=0}^{j-1} \sigma^i(r)) $, then we would also
have~$\prod_{i=0}^{j-1} \sigma^i(r) = 0$ contradicting the
minimality of $j$.  Therefore~$r$ is a zero divisor. Now suppose
that $r$ is a zero divisor. We  write~$S = \oplus_{i=0}^{j-1} S_i$
where the $S_i$ are domains and $\sigma(S_i )= S_{i+1\!\!\mod j}$.
Then~$r= \sum_{i=0}^{j-1}r_i$ with~$r_i\in S_i$. Writing~$r = (r_0,
\ldots , r_{j-1})$, one sees that $r$ is a zero divisor in $S$ if
and only if some $r_i$ is zero. Assume~$r_0 = 0$. Then
$$
\sigma(r) = (\sigma (r_{j-1}), 0, \ldots  ), \quad \sigma^2(r) =
(\sigma^2(r_{j-2}), \sigma^2(r_{j-1}), 0 , \ldots ), \quad  \ldots
$$
so $0 = \prod_{i=0}^{j-1}\sigma^i(r)$.
\end{proof}
Consequently, we have
\begin{cor} \label{embed} Let $k$ and $S$ be as above and let $\bar{S}$ be a $\sigma$-subring of $S$.
An element $r \in \bar{S}$ is a zero divisor in $\bar{S}$ if and
only if it is a zero divisor in $S$. Therefore the  total ring of
fractions of $\bar{S}$ embeds in the total ring of fractions of $S$.
\end{cor}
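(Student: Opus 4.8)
The plan is to read the statement off the preceding lemma together with an elementary localization argument. One direction is immediate: if a nonzero $r\in\bar S$ is a zero divisor in $\bar S$, then $rs=0$ for some nonzero $s\in\bar S\subseteq S$, so $r$ is a zero divisor in $S$.

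For the converse, suppose $r\in\bar S$ is nonzero and a zero divisor in $S$. By the previous lemma there is $j\in\bZp$ with $\prod_{i=0}^{j}\sigma^i(r)=0$; choose $j$ minimal with this property. Since $\bar S$ is a $\sigma$-subring of $S$, each $\sigma^i(r)$ lies in $\bar S$, so this identity already holds in $\bar S$. Minimality forces $j\ge 1$ (otherwise $r=0$) and guarantees that $s:=\prod_{i=1}^{j}\sigma^i(r)\in\bar S$ is nonzero, while $rs=\prod_{i=0}^{j}\sigma^i(r)=0$; hence $r$ is a zero divisor in $\bar S$. Thus the two notions of zero divisor agree on $\bar S$: writing $T$ (resp.\ $T'$) for the set of non-zero-divisors of $\bar S$ (resp.\ of $S$), we have $T=\bar S\cap T'$.

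For the final assertion, the composite $\bar S\hookrightarrow S\hookrightarrow S{T'}^{-1}$ is injective, because one localizes $S$ at its non-zero-divisors, and it carries every element of $T$ to a unit of $S{T'}^{-1}$. By the universal property of localization this extends to a ring homomorphism $\bar S\,T^{-1}\to S\,{T'}^{-1}$ between the total rings of fractions, and it is injective: a fraction $a/b$ with $a\in\bar S$ and $b\in T$ maps to $0$ exactly when $a=0$ in $S{T'}^{-1}$, hence when $a=0$ in $\bar S$, hence when $a/b=0$ in $\bar S\,T^{-1}$.

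The only point needing care — and what I would flag as the (mild) main obstacle — is verifying that the previous lemma genuinely applies, i.e.\ that $S$ does decompose as a finite direct sum of domains cyclically permuted by $\sigma$. This is Lemma~\ref{lem1} applied to the simple $\sigma\delta$-ring $S$, which is finitely generated over $k$ by the definition of a $\sigma\delta$-PV extension; beyond this, everything is routine localization bookkeeping.
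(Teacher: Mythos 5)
Your proof is correct and is exactly the argument the paper intends (the paper derives the corollary from the preceding lemma with no written proof): the characterization ``$r$ is a zero divisor iff $\prod_{i=0}^{j}\sigma^i(r)=0$'' is $\sigma$-stable, so with $j$ minimal the witness $\prod_{i=1}^{j}\sigma^i(r)$ already lies in $\bar S$, and the localization bookkeeping for the total rings of fractions is standard. Your closing worry is unnecessary — the lemma's hypothesis is just that $S$ is a \sd-PV extension of $k$, which is given, so nothing further needs checking.
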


\begin{prop}\label{solvseq}
%As supposed in Definition~\ref{DEF:solvable_sys}, if % the system~
Let $E(x)$ be as in Definition~\ref{DEF:solvable_sys}. If a system
\begin{eqnarray*}%\label{solvseqeqn}
\sigma(Y) = AY, \quad  \delta(Y) = BY
\end{eqnarray*}
over~$E(x)$ is solvable in terms of liouvillian sequences, then the
identity component of its Galois group is solvable.
\end{prop}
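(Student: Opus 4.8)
The plan is to reduce the statement to a combination of two facts already available in the excerpt: Theorem~\ref{thm1}, which says that $\{\sigma(Y)=AY,\,\delta(Y)=BY\}$ is solvable in liouvillian terms if and only if $\gal(R/k)^0$ is solvable, and the structural description of liouvillian sequences in Definition~\ref{liouvseqdef} together with the decomposition results (Lemma~\ref{lem1}, Lemma~\ref{lem11}, Corollary~\ref{cor11}). So the real content is: if the \sd~PV ring $R$ embeds over $E(x)$ into $\calL$, the ring of liouvillian sequences, then the original system is solvable in liouvillian terms in the sense of Definition~\ref{DEF:liouv}, i.e. the entries of $v_0=e_0v$ for a full set of solutions $v$ lie in a $\sigma^s\delta$-liouvillian extension of $E(x)$ containing $\mathcal F_0$. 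Once that is established, Theorem~\ref{thm1} finishes the argument.

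The first step is to set up the decomposition $R=e_0R\oplus\cdots\oplus e_{s-1}R$ from Lemma~\ref{lem1} and recall from Lemma~\ref{lem11} that $e_0R$ is a $\sigma^s\delta$-PV extension of $E(x)$ for $\{\sigma^s(Y)=A_sY,\,\delta(Y)=BY\}$, and is a domain with quotient field $\mathcal F_0$. Under the given embedding $R\hookrightarrow\calL\subset\calS_L$, the idempotents $e_i$ correspond to the sections/interlacing pattern: each $e_iR$ lands in the image of the $i$th $s$-section, so that $e_0R$ embeds into the subring of $\calS_L$ consisting of sequences supported on $\{n\equiv 0\bmod s\}$, which after re-indexing $n\mapsto sn$ is again a ring of the same type built over $L(x)$ (with $x$ now playing the role of $sx$, so $\sigma^s$ becomes the basic shift). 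The key point to extract is that this image sits inside a liouvillian tower: by the construction rules (2)--(5) of Definition~\ref{liouvseqdef}, every generator of $\calL$ is obtained from $L(x)$ by hypergeometric steps ($\sigma(\bfb)=\bfa\bfb$), indefinite-sum steps ($\sigma(\bfb)=\bfa+\bfb$), and interlacing; when restricted to a single section these translate, via Corollary~\ref{embed} and the total-ring-of-fractions embedding, into exactly the three types of \sd-liouvillian generators (algebraic, hyperexponential, indefinite integral/sum) over the $\sigma^s\delta$-field $\mathcal F_0$ — here one uses that $L$ itself is a differential liouvillian extension of $E$ with the same constants, so $L(x)$ contributes only liouvillian elements, and that the derivation-compatibility statements $(1')$--$(5')$ already verified in the excerpt guarantee the tower is closed under $\delta$ as well.

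Therefore the entries of $v_0$ for a fundamental solution lie in a $\sigma^s\delta$-liouvillian extension of $E(x)$ containing $\mathcal F_0$: one builds this extension explicitly by adjoining, in the order dictated by the inductive construction of the particular elements of $\calL$ that appear in $v_0$, the finitely many generators needed, checking at each stage that the constants are unchanged (this is where $L$ having the same constants as $E$, and the hypergeometric/indefinite-sum steps not introducing new constants, is used). By Definition~\ref{DEF:liouv} the original system is then solvable in liouvillian terms, and Theorem~\ref{thm1} yields that $\gal(R/k)^0$ is solvable, which is the assertion.

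The main obstacle I expect is the bookkeeping in the second step: translating the interlacing/sectioning operations on sequences into operations on the $\sigma^s\delta$-field $\mathcal F_0$ cleanly, and in particular making precise that the restriction of a liouvillian-sequence tower to a fixed residue class $n\equiv i\bmod m$ is again such a tower over $L(x)$ with the shift replaced by an iterate — this requires knowing that $\calL$ is closed under sections (which the excerpt cites from \cite[p.~243]{hendrikssinger}) and carefully matching the idempotent decomposition of $R$ with the section decomposition of $\calS_L$. A secondary but routine point is verifying that no new constants are introduced at any stage, so that the extension produced genuinely qualifies as liouvillian in the sense of the paper's definition.
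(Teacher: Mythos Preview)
Your approach is genuinely different from the paper's, and there is a real gap at the step you yourself flag as the ``main obstacle''.

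The paper does \emph{not} attempt to show that the system is solvable in liouvillian terms in the sense of Definition~\ref{DEF:liouv} and then invoke Theorem~\ref{thm1}. Instead it argues via a Galois-theoretic diagram. It forms $RL(x)$, uses Proposition~\ref{PVseq} to see that this is a \sd PV extension of $L(x)$, and then observes that $RL(x)$ is also a \emph{pure difference} PV extension for $\sigma(Y)=AY$ over $L(x)$; the Hendriks--Singer result \cite{hendrikssinger} then gives that the difference Galois group over $L(x)$ has solvable identity component, hence so does the \sd Galois group $H$ of $RL(x)$ over $L(x)$, which is a subgroup. The rest of the proof identifies $H$ with a closed normal subgroup of $G=\gal(R/E(x))$ (via restriction, using Corollary~\ref{embed}) and shows that $G/H$ has solvable identity component because its fixed field $(F_E\cap L)(x)$ lies in the liouvillian extension $L(x)$ of $E(x)$ --- here Theorem~\ref{thm1} is invoked, but only for a PV extension that is already a \emph{field}, where no interlacing issues arise.

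Your route tries to bypass the citation to \cite{hendrikssinger} and exhibit directly a $\sigma^s\delta$-liouvillian \emph{field} tower containing $\mathcal F_0$. The difficulty is that the operations defining $\calL$, in particular interlacing at arbitrary moduli $m$, do not straightforwardly translate into adjunctions of algebraic, hyperexponential, or primitive elements over a fixed $\sigma^s\delta$-field. Restricting to an $s$-section handles a single interlacing at step $s$, but elements of $\calL$ are in general built using nested interlacings at various moduli, and unwinding these into a chain of field extensions \emph{with no new constants} is precisely the substance of the argument in \cite{hendrikssinger}, not a bookkeeping detail. Your remark that ``the hypergeometric/indefinite-sum steps not introducing new constants'' is also not automatic: one must arrange the tower so that this holds, and this is again part of what Hendriks--Singer do. The paper's strategy sidesteps all of this by first enlarging the base to $L(x)$ (absorbing the differential liouvillian tower into the coefficients), then quoting \cite{hendrikssinger} for the difference part, and finally descending back to $E(x)$ by a normal-subgroup argument.
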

\begin{proof}
Let $R$,~$L$ and~$\calL$ be the \sd PV ring, a maximally liouvillian
extension of~$E$ and the ring of liouvillian sequences for the given
system, respectively. Then~$R\subset \calL$. Consider the following
diagram
 \begin{center}
$RL(x)$\\
$/ \hspace{.2in} \backslash$\\
$R \hspace{.3in}L(x)$\\
$\backslash \hspace{.2in} /$\\
$R\cap L(x)$\\
$\mid$\\
$E(x)$
\end{center}
where $RL(x)$ is the field generated by $R$ and $L(x)$. We will show
that
\begin{enumerate}
\item[(i)] $RL(x) \subset \calL$ is a \sd PV extension of $L(x)$
and its Galois group has a solvable identity component.
\item[(ii)] The \sd Galois group of $RL(x)$ over $L(x)$ is isomorphic to the
subgroup~$H$ of the Galois group $G$ of $R$ over {$E(x)$}
that leaves the quotient field of $R\cap L(x)$ fixed. Moreover,~$H$
is a closed normal subgroup of $G$.
\item[(iii)] $G/H$ is the Galois group of the quotient field  of $R\cap L(x)$ over $E(x)$ and
has solvable identity component.
\end{enumerate}
Once the above claims are proven, the group~$G$ has a solvable
identity component since both~$H$ and~$G/H$ have solvable
identity component. \\[0.1in]
To prove (i), we consider~$\{\sigma(Y) = AY, \, \delta(Y) = BY\}$ as
a system over~$L(x)$. Since~$R \subset \calL$, then for a
sufficiently large $N$ there is a fundamental matrix of $\delta(Y) =
B(N)Y$ with entries in $L$. Applying Proposition~\ref{PVseq}, we
conclude that $\calS_L$ contains a~\sd PV extension $T$ of~$L(x)$
for the system~$\{\sigma(Y) = AY, \, \delta(Y) = BY\}$ and that
$R\subset T$. This implies that~$RL(x)$ is a \sd PV extension of
$L(x)$. Proposition~4.1 of~\cite{putsinger1} implies that $RL(x)$ is
also a difference Picard-Vessiot extension for~$\sigma(Y) = AY$ and
the results of \cite{hendrikssinger} then imply that the {\it
difference} Galois group has solvable identity component. The \sd
Galois group is a subgroup of this latter group and its identity
component
is a subgroup of the identity component of the larger group.  Therefore it is also solvable.\\[0.1in]
To prove (ii), let $F_L$ be the total ring of fractions of $RL(x)$
and $F_E$  the total ring of fractions of $R$. Corollary~\ref{embed}
 implies that we can regard~$F_E$ as a subset of~$F_L$.  The
elements of the \sd Galois group $\bar{G}$ of $F_L$ over $L(x)$
restrict to automorphisms of $F_E$ over $E(x)$, and this gives a
homomorphism of this group into $H$.  Clearly the image is closed
and the set of elements  left fixed by this group is $L(x) \cap
F_E$, the quotient field of $L(x) \cap R$.
Therefore this image is~$H$.  \\[0.1in]
Before  proving that $H$ is normal in $G$, we first show that
$$
F_E\cap L(x) = (F_E\cap L)(x).
$$
Since $x \in F_E$ we have~$F_E\cap L(x) \supset (F_E\cap L)(x)$. To
get the reverse inclusion, let $f \in F_E \cap L(x)$ and write
\[f = \frac{a_rx^r + \ldots +a_0}{b_sx^s+ \ldots +b_0},\,\, a_i,b_i \in L\]
where the numerator and denominator are relatively prime.  We then
have that $\{x^sf, x^{s-1}f, \ldots , f, x^r, \ldots , 1\}$ are
linearly dependent over $L$.  Since $L$ is the set of
$\sigma$-invariant elements of $L(x)$,  the cassoratian of these
elements must vanish (\cite{cohn}, p.271). This  implies further
that these elements are linearly dependent over the
$\sigma$-invariant elements of the field~$F_E \cap L(x)$. Therefore
there exist $\sigma$-invariant elements $\tilde{a}_i$,~$\tilde{b}_j
\in F_E \cap L$ for~$i=0,1,\dots, r$ and~$j=0, 1,\dots, s$, not all
zero, such that
$$
\tilde{a}_rx^r + \cdots
+\tilde{a}_0-(\tilde{b}_sx^sf+ \cdots +\tilde{b}_0f)=0.
$$  Since $x$
is transcendental over $\sigma$-invariant elements, there exists at
least one $\tilde{b}_i$ which is not zero. Hence
\[f = \frac{\tilde{a}_rx^r + \ldots \tilde{a}_0}{\tilde{b}_sx^s+ \ldots +\tilde{b}_0} \in (F_E \cap L)(x) .\] \\[0.1in]
To show that $H$ is normal, it now suffices to prove that any
$\sigma\delta$-automorphism of $F_E$ over $k$ leaves the field $F_E
\cap L(x) = (F_E \cap L)(x) $ invariant. Note that $L$ is the set of
$\sigma$-invariant elements of $F_L$ and so $F_E \cap L $ is the set
of $\sigma$-invariant elements of $F_E$.
This set is clearly preserved by any \sd-automorphism.\\[0.1in]
To prove (iii), note that since $H$ is normal, the {\it field}
$(F_E\cap L)(x)$ is a \sd-PV extension of $E(x)$. Furthermore,
$(F_E\cap L)(x)$ lies in the liouvillian extension~$L(x)$ of $E(x)$.
Theorem~\ref{thm1} implies that its Galois group is
solvable.~\end{proof}
{Later in Proposition~\ref{solnliouvseq} we will show
that the converse is true as well.}

\subsection{From $e_0R$ to $R$}\label{sec2}
Throughout this section, {let~$E$ be  an algebraically closed
differential field with a derivation~$\delta$ whose extension
on~$E(x)$ satisfies~$\delta(x) = 0$, and let~$\sigma$ be an
automorphism on~$E(x)$ such that~$\sigma|_E =\id$ and~$\sigma(x) =
x+1$. Set~$\tilde{k} =E(x)$ and let~$\calL$ be the ring of
liouvillian sequences over~$\tilde{k}$.} In this section, we shall
show how to construct a solution in $\calL^n$ of $\{\sigma(Y)=AY,
\delta(Y)=BY\}$ from a solution in $\calL^n$ of its associated
system~$\{\sigma^d(Y)=A_dY, \delta(Y)=BY\}$. Moreover, we shall
prove that {if the associated system
%if~$\{\sigma^d(Y)=A_dY, \delta(Y)=BY\}$
is equivalent over $\tilde{k}$ to a diagonal form,  so is the
original
system %$\{\sigma(Y)=AY, \delta(Y)=BY\}$
% is also equivalent over $\tilde{k}$ to a diagonal form
under the assumption that $A$ is of particular form.}\\

\noindent Let $V \in \calL^n$ be a nonzero solution of
$\{\sigma^d(Y)=A_dY, \delta(Y)=BY\}$ and~$N\in \bZp$ be such that
$V(N)\neq 0$, $A(j)$ and $B(j)$ are well defined and~$\det(A(j))\neq
0$ for $j \geq N$. We define a vector $W$ in the following way:
$$
W(N)=V(N)\quad\mbox{and}\quad W(j+1)=A(j)W(j)\quad\mbox{ for $j\geq
N$}.
$$
Since $\delta(V(N))=B(N)V(N)$, the integrability condition on
$\sigma$ and $\delta$ implies that $W$ is a nonzero solution of
$\{\sigma(Y)=AY, \delta(Y)=BY\}$. The proposition below says that
$W$ is also in $\calL^n$.
\begin{prop}
\label{prop11}Let $W$ be as above. Then
  $W \in \calL^n$.
\end{prop}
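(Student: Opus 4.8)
The goal is to show that the vector $W \in \calS_K^n$ defined by extending a solution $V \in \calL^n$ of the associated system $\{\sigma^d(Y)=A_dY,\,\delta(Y)=BY\}$ via $W(j+1)=A(j)W(j)$ for $j\ge N$ actually has all its entries in $\calL$, the ring of liouvillian sequences. The natural strategy is to recognize $W$ as built from $V$ by the operations that $\calL$ is closed under — sectioning, interlacing, multiplication by elements of $\tilde k$, and shifting — together with the observation that the "gaps" $W(N+1),\dots,W(N+d-1)$ are obtained from $V(N)$ by successive multiplication by matrices with entries in $\tilde k$, hence land in the $\tilde k$-span of liouvillian sequences, which is inside $\calL$.

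**Key steps.** First I would break $V$ into its $d$ sections: for $\ell = 0,\dots,d-1$, let $V^{(\ell)}$ be the $\ell$th $d$-section of $\sigma^{?}V$ (suitably shifted so that the relevant residue class is picked out near $N$); by the remark after Definition~\ref{liouvseqdef} (citing \cite[p.~243]{hendrikssinger}), each section of an element of $\calL$ is again in $\calL$. Second, observe that the sequence $W$ restricted to the residue class $j \equiv N+\ell \pmod d$ is, up to a shift, of the form $M_\ell \cdot (\text{a section of }V)$ where $M_\ell$ is the product $A(\cdot+d-1)\cdots A(\cdot)$-type matrix of the appropriate block — more precisely $W(di+N+\ell) = \big(\sigma^{\ell}\text{-shift of }A_\ell\text{-like product}\big)(i)\,V^{(\ell')}(i)$ — so each such "strand" of $W$ is a $\tilde k$-linear combination of shifts of sections of $V$. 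Third, I would note that $\calL$ is closed under multiplication by $\tilde k = E(x)$ (since $L(x)\subset\calL$ and $\calL$ is a ring), under $\sigma$ and $\sigma^{-1}$ (condition (2) of Definition~\ref{liouvseqdef}), and under $d$-interlacing with zeroes (condition (5)); hence each strand of $W$, reinterpreted as an interlacing with zeroes, lies in $\calL$. Fourth, sum over $\ell = 0,\dots,d-1$: $W$ equals the sum of these $d$ interlaced strands (they are supported on disjoint residue classes), and $\calL$ is closed under addition, so $W \in \calL^n$.

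**The main obstacle.** The delicate point is bookkeeping the shifts and residue classes so that the "strand" of $W$ on the class $j\equiv N+\ell$ is correctly identified as an interlacing (with zeroes) of a sequence that is genuinely a section of $V$ times a rational matrix — in particular one must check that the matrix relating $W(di+N+\ell)$ to $V$ on that class really is built from $\sigma^m(A)$'s evaluated at the right arguments, using the defining recursion $A_d = \sigma^{d-1}(A)\cdots A$ and the factorization $W(j+1)=A(j)W(j)$. This is essentially the same computation as in the proof of Proposition~\ref{prop-interlacing}, just run in reverse (reconstructing $W$ from $V$ rather than producing the interlaced solution from sections). I would model the argument closely on that proof. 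A secondary point worth a sentence: one must ensure $N$ is chosen large enough that $\det(A(j))\neq 0$ for all $j\ge N$ so the recursion is reversible and the sections/shifts are well defined modulo the equivalence relation on $\calS_K$; this is already built into the hypothesis on $W$. Once the indexing is pinned down, closure of $\calL$ under the five operations finishes the proof with no further analysis.
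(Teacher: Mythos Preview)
Your proposal is correct and follows essentially the same approach as the paper. The paper's proof is slightly more economical: it uses only \emph{one} $d$-section of $V$, namely the section $V_0$ supported on the residue class of $N$ modulo $d$, and then defines $V_i(j) = A(j)^{-1}V_{i-1}(j+1)$ for $i=1,\dots,d-1$ (so each $V_i$ is a shift of $V_0$ multiplied by a matrix with entries in $\tilde k$, hence in $\calL^n$); it then verifies directly that $U = V_0 + \cdots + V_{d-1}$ satisfies the same recursion $U(j+1)=A(j)U(j)$ with the same initial value $U(N)=V(N)$, forcing $W=U$. Your version, which speaks of decomposing $V$ into all its $d$ sections, would work too, but note that only the section on the class of $N$ actually carries information about $W$ --- the other sections of $V$ are not related to $W$ by the recursion. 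Once you restrict attention to that single section, the bookkeeping you flag as ``the main obstacle'' collapses to exactly the paper's recursive definition of the $V_i$, and the closure properties you cite (shift, multiplication by $\tilde k$, sectioning, addition) finish the job.
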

\begin{proof}
Let $N=d\ell+m$ where $\ell, m\in\bZp, 0<m<\ell$ and  $V_0$ be
the~$m$th~$d$-section of~$V$. Then $V_0$ is a solution of
$\{\sigma^d(Y)=A_dY, \delta(Y)=BY\}$ and $V_0 \in \calL^n$. Let
$$
V_i(j)=A(j)^{-1}V_{i-1}(j+1)\quad\mbox{for $j \geq N$ and
$i=1,\cdots,d-1$},
$$
and $U=V_0+V_1+\cdots+V_{d-1}.$ Then $U\in \calL^n$. We shall prove
that $ W=U. $ Note that for $j>N$,
$$
V_i(j)=A(j)^{-1}A(j+1)^{-1}\cdots A(j+i-1)^{-1}V_0(j+i).
$$
In particular, $V_{d-1}(j+1)=A(j)V_0(j)$. Then~$V_i(N)=0$
for~$i=1,\cdots,d-1$. Therefore~$
W(N)=V_0(N)+V_1(N)+\cdots+V_{d-1}(N)=U(N)$ and
\begin{align*}
    U(j+1)&=V_0(j+1)+V_1(j+1)+\cdots+V_{d-1}(j+1)\\
          &=A(j)V_1(j)+A(j)V_2(j)+\cdots+A(j)V_0(j)=A(j)U(j)
\end{align*}
for $j\geq N$. Hence $W=U\in \calL^n $.
\end{proof}
{\begin{lemma} \label{prop5} Assume that $w\in \tilde{k}$ satisfies
\begin{equation*}
\label{eqn3} \sigma^s(w)=\sigma^{s-1}(b)\cdots\sigma(b)b w
\end{equation*}
where~$s\in \bZp$, $b\in \tilde{k} \setminus \{0\}$ and~$
b=x^\nu+b_1x^{\nu-1}+\cdots$ with~$\nu\in \mathbb{Z}$ and~$b_i \in
E$. Then $\sigma(w)=bw$.
\end{lemma}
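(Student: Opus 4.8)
The plan is to dispose of the trivial case $w=0$ separately (both sides of the asserted identity vanish) and, assuming $w\neq 0$, to study the element $v:=\sigma(w)/(bw)\in\tilde{k}^{\times}$. Abbreviating $b_s=\sigma^{s-1}(b)\cdots\sigma(b)b=\prod_{i=0}^{s-1}\sigma^{i}(b)$, I would telescope the quotient
\[
\frac{\sigma^{s}(w)}{w}=\prod_{i=0}^{s-1}\sigma^{i}\!\Bigl(\frac{\sigma(w)}{w}\Bigr),
\]
so that the hypothesis $\sigma^{s}(w)=b_s w$ becomes exactly $\prod_{i=0}^{s-1}\sigma^{i}(v)=1$. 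Everything then reduces to showing that this ``norm one'' condition forces $v=1$.

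First I would show $v$ is a nonzero constant. Since $E$ is algebraically closed and $\sigma$ acts on the finite places of $E(x)$ by $\ord_{\alpha}(\sigma f)=\ord_{\alpha+1}(f)$, the identity $\prod_i\sigma^{i}(v)=1$ yields $\sum_{i=0}^{s-1}\ord_{\alpha+i}(v)=0$ for every $\alpha\in E$. If $v$ were nonconstant it would have a zero or a pole at some point of $E$; within that $\mathbb{Z}$-coset, using that $v$ has only finitely many zeros and poles, pick $\alpha^{\ast}$ with $\ord_{\alpha^{\ast}}(v)\neq 0$ and $\ord_{\alpha^{\ast}+i}(v)=0$ for all $i\geq 1$. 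Then the relation at $\alpha^{\ast}$ collapses to $\ord_{\alpha^{\ast}}(v)=0$, a contradiction. Hence $v=c\in E^{\times}$, and since $\sigma$ fixes $E$ we get $c^{s}=1$, i.e.\ $\sigma(w)=cbw$ with $c$ an $s$th root of unity.

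Next I would exploit the normalization $b=x^{\nu}+b_1x^{\nu-1}+\cdots$ to force $c=1$. Let $\lambda:E(x)^{\times}\to E^{\times}\times\mathbb{Z}$ be the group homomorphism assigning to $f$ the pair (leading coefficient, exponent) of its Laurent expansion at $x=\infty$. Because $(x+1)^{d}=x^{d}(1+1/x)^{d}$ still has leading term $x^{d}$ at infinity for every $d\in\mathbb{Z}$, one has $\lambda(\sigma f)=\lambda(f)$ for all $f$. Applying $\lambda$ to $\sigma(w)=cbw$, with $\lambda(b)=(1,\nu)$ and $\lambda(c)=(c,0)$, gives $\lambda(w)=(c,\nu)\,\lambda(w)$ in $E^{\times}\times\mathbb{Z}$, hence $(c,\nu)=(1,0)$. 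In particular $c=1$, so $\sigma(w)=bw$, as desired. (The same homomorphism applied directly to $\sigma^{s}(w)=b_s w$ already forces $\nu=0$.)

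The main obstacle is the second paragraph: deducing constancy of $v$ from $\prod_{i=0}^{s-1}\sigma^{i}(v)=1$. The ``rightmost zero or pole in each $\mathbb{Z}$-coset'' device above settles it; equivalently one may write $v=c\prod_j(x-\alpha_j)^{m_j}$ and observe that the divisor identity $\sum_{j,i}m_j[\alpha_j-i]=0$ already fails at a coset-maximal $\alpha_j$ unless no $\alpha_j$ occur at all. The remaining ingredients --- the telescoping, the at-infinity bookkeeping via $\lambda$, and the degenerate case $w=0$ --- are routine.
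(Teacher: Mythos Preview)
Your proof is correct. The overall shape matches the paper's: both arguments reduce to showing that $v=\sigma(w)/(bw)$ lies in $E$, and then read off $v=1$ from the leading term of the Laurent expansion at $x=\infty$ using the normalization $b=x^{\nu}+\cdots$.

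The only genuine difference is in how constancy of $v$ is established. The paper observes directly that $\sigma(w)/b$ is a second solution in $\tilde{k}$ of the \emph{same} equation $\sigma^{s}(Y)=b_sY$, via the one-line computation
\[
\sigma^{s}\!\left(\frac{\sigma(w)}{b}\right)=\frac{1}{\sigma^{s}(b)}\,\sigma\!\bigl(b_s w\bigr)=b_s\cdot\frac{\sigma(w)}{b},
\]
whence the ratio $(\sigma(w)/b)/w$ is $\sigma^{s}$-invariant and therefore lies in $\tilde{k}^{\sigma^{s}}=\tilde{k}^{\sigma}=E$. You instead telescope to $\prod_{i=0}^{s-1}\sigma^{i}(v)=1$ and run a divisor argument (``rightmost zero or pole in a $\mathbb{Z}$-coset'') to force $v\in E$. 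Both are standard; the paper's route is shorter and avoids the case analysis on divisors, while yours is more self-contained in that it does not invoke the one-dimensionality of the solution space or the identification $\tilde{k}^{\sigma^{s}}=E$ as a black box. The final step---comparing leading coefficients at infinity---is identical in both proofs.
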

\begin{proof}
We have
\begin{align*}
  \sigma^s\left(\frac{\sigma(w)}{b}\right)
&=\frac{1}{\sigma^{s}(b)}\sigma\left(\sigma^{s-1}(b)\cdots\sigma(b)bw\right)
=\sigma^{s-1}(b)\cdots\sigma(b)b\frac{\sigma(w)}{b}.
\end{align*}
Since $\frac{\sigma(w)}{b}\in \tilde{k}$, $\frac{\sigma(w)}{b}=cw$
for some $c \in \tilde{k}^{\sigma^s}=\tilde{k}^{\sigma}$. Note that
$w$ and~$b$ are rational functions in $x$. Expanding $w$ and $b$ as
Laurent series at~$x=\infty$. By comparing the coefficients, we get
$c=1$, so~$\sigma(w)=bw.$
\end{proof}

\begin{theorem}\label{nprop1}
Let~$A=\diag(a_1,\cdots,a_n)$ where $a_i \in \tilde{k}\setminus
\{0\}$ and~$a_i=cx^{\nu_i}+a_{i1}x^{\nu_i-1}+\cdots$ with~$\nu_i \in
\mathbb{Z}$ and~$c,a_{ij}\in E$. Assume that the
system~$\{\sigma^d(Y)=A_dY,\delta(Y)=BY\}$ is equivalent over
$\tilde{k}$ to
$$
%\{
\sigma^d(Y)=A_dY, \quad\delta(Y)=\diag(b_1,\cdots,b_n)Y
%\}
$$
where $b_i \in \tilde{k}$ for~$i=1,\dots,n$. Then $\{\sigma(Y)=AY,
\delta(Y)=BY\}$ is equivalent over $\tilde{k}$ to
$$
%\{
\sigma(Y)=AY, \quad \delta(Y)=\diag(b_1,\cdots,b_n)Y.
%\}
$$
\end{theorem}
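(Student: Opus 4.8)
The plan is to show that the \emph{same} matrix that diagonalizes the $\delta$-part of the associated system works, unchanged, for the original system; the only thing to check is that it is compatible with $\sigma$ and not merely with $\sigma^d$. By hypothesis there is a matrix $U\in\mat_n(\tilde{k})$ for which the change of variables $Z=UY$ carries $\{\sigma^d(Y)=A_dY,\,\delta(Y)=BY\}$ to $\{\sigma^d(Y)=A_dY,\,\delta(Y)=\diag(b_1,\dots,b_n)Y\}$. Unwinding the definition of equivalence, this is exactly the pair of matrix identities
\[
\sigma^d(U)\,A_d=A_d\,U,\qquad \delta(U)+UB=\diag(b_1,\dots,b_n)\,U .
\]
First I would note that the second identity is already precisely the $\delta$-side requirement for the original system. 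Hence the theorem follows as soon as the first identity is upgraded to $\sigma(U)A=AU$: indeed $\sigma(U)A=AU$ together with $\delta(U)+UB=\diag(b_1,\dots,b_n)U$ says exactly that $Z=UY$ transforms $\{\sigma(Y)=AY,\,\delta(Y)=BY\}$ into $\{\sigma(Y)=AY,\,\delta(Y)=\diag(b_1,\dots,b_n)Y\}$, and since $U$ is invertible over $\tilde{k}$ this is an equivalence over $\tilde{k}$.

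The substantive step is therefore to deduce $\sigma(U)A=AU$ from $\sigma^d(U)A_d=A_dU$, and this is precisely the situation Lemma~\ref{prop5} is designed for, the common leading coefficient $c$ of the $a_i$ being what makes the lemma applicable. Since $A=\diag(a_1,\dots,a_n)$ is diagonal, so is $A_d=\diag((a_1)_d,\dots,(a_n)_d)$, and the $(i,j)$ entry of $\sigma^d(U)A_d=A_dU$ reads $\sigma^d(U_{ij})\,(a_j)_d=(a_i)_d\,U_{ij}$, that is
\[
\sigma^d(U_{ij})=\frac{(a_i)_d}{(a_j)_d}\,U_{ij}=(a_i/a_j)_d\,U_{ij},
\]
where I used that $\sigma$ is multiplicative to rewrite $(a_i)_d/(a_j)_d$ as $(a_i/a_j)_d$. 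For a pair $(i,j)$ with $U_{ij}\neq 0$ I would set $b=a_i/a_j$; since $a_i=cx^{\nu_i}+\cdots$ and $a_j=cx^{\nu_j}+\cdots$ have the \emph{same} leading coefficient $c\in E\setminus\{0\}$, the Laurent expansion of $b$ at $x=\infty$ is $x^{\nu_i-\nu_j}+b_1x^{\nu_i-\nu_j-1}+\cdots$ with every $b_\ell\in E$, which is exactly the form required in Lemma~\ref{prop5}. Applying that lemma with $s=d$ and $w=U_{ij}$ yields $\sigma(U_{ij})=(a_i/a_j)\,U_{ij}$, i.e.\ $\sigma(U_{ij})\,a_j=a_i\,U_{ij}$; and for the entries with $U_{ij}=0$ this relation holds trivially. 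Assembling these over all $i,j$, and using once more that $A$ is diagonal, gives $\sigma(U)A=AU$.

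Combining $\sigma(U)A=AU$ with the untouched identity $\delta(U)+UB=\diag(b_1,\dots,b_n)U$ completes the proof, as explained in the first paragraph. The one place I expect to need care is the verification that each quotient $a_i/a_j$ meets the hypotheses of Lemma~\ref{prop5}; this is where, and the only place where, the assumption that all the $a_i$ share the leading coefficient $c$ is used, while the translation back and forth between the matrix identities and their entrywise form is routine bookkeeping with diagonal matrices and the definition of equivalence. As a by-product one even sees that $\nu_i=\nu_j$ whenever $U_{ij}\neq 0$, though this fact is not needed.
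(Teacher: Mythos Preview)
Your proof is correct and follows essentially the same route as the paper's: both take the gauge matrix witnessing the equivalence of the $\sigma^d\delta$-system, observe that the $\delta$-identity is already what is needed, and then upgrade $\sigma^d(U)A_d=A_dU$ to $\sigma(U)A=AU$ by reading it entrywise and applying Lemma~\ref{prop5} to each $U_{ij}$ with $b=a_i/a_j$, the common leading coefficient $c$ ensuring that $b$ is monic at $x=\infty$. The only cosmetic difference is that the paper writes the gauge in the inverse direction (their $G$ is your $U^{-1}$), which swaps the form of the $\delta$-identity but leaves the argument unchanged.
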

\begin{proof}
From the assumption, there exists $G \in \mat_n(\tilde{k})$ such
that
$$
 \sigma^d(G)A_d=A_dG,\quad
 G^{-1}BG-G^{-1}\delta(G)=\diag(b_1,\cdots,b_n).
$$
It then suffices to prove that $\sigma(G)A=AG$. Let
$G=(g_{ij})_{n\times n}$. Then
  \[\begin{cases}
     \sigma^d(g_{ii})-g_{ii}=0,
     & i=1,\cdots,n;\medskip\\
     \sigma^d(g_{ij})=
     \sigma^{d-1}\left(\frac{a_i}{a_j}\right)\cdots
     \sigma\left(\frac{a_i}{a_j}\right)\frac{a_i}{a_j}g_{ij},
     & 1 \leq i \neq j \leq n.
  \end{cases}\]
By Lemma \ref{prop5}, $\sigma(g_{ij})=\frac{a_i}{a_j}g_{ij}$ for all
$i,j=1,\cdots,n$. This implies that $\sigma(G)A=AG$.
\end{proof}
}
\subsection{From $\mathbb{C}(x,t)$ to $\overline{\mathbb{C}(t)}(x)$}\label{sec3}
In the following sections, we always let $k_0$ be the \sd field
$\mathbb{C}(t,x)$ with an automorphism $\sigma:x\mapsto x+1$ and a
derivation $\delta=\frac{d}{dt}$ and let $k$ be its extension field
$\overline{\mathbb{C}(t)}(x)$. Consider a  system of
difference-differential equations
\begin{equation*}
  \sigma(Y)=AY, \quad \delta(Y)=BY
\end{equation*}
over $k_0$ where $A \in \mat_n(k_0)$ and~$ B \in \gl_n(k_0).$ We
shall analyze this system by focusing on its difference part
$\sigma(Y) = AY$ and use techniques from the theory of difference
equations. In this latter theory, one assumes that the fixed field
of $\sigma$, that is, the $\sigma$-constants, are algebraically
closed.  For this reason we will need to consider properties
of~$\sigma(Y)=AY$ over $k$ as well as over $k_0$.
We shall first show %in this section
that the \sd Galois group of this system over $k$ can  be identified
with  a normal subgroup of the Galois group of the same system over
$k_0$, and then conclude some results on orders of the factors.
 For example, if the above system is
irreducible over $k_0$, it is possible that the system is reducible
over $k$. In this case, we will prove that the factors of the above
system over $k$ have the same order. A similar result is well known
for differential equations-if one makes a normal algebraic extension
of the base field then the differential Galois group over this new
field is a normal subgroup of the differential Galois group over the
original field and an irreducible equation factors into factors of
equal order. In the mixed difference-differential case or even the
difference case, the fact that Picard-Vessiot extensions may
contain zero divisors introduces some small complication.\\[0.1in]
We start with some lemmas. Let $R$ and $R_0$ be the \sd PV
extensions of $k$ and~$k_0$ for the system~$ \{  \sigma(Y)=AY,
\delta(Y)=BY\}$ respectively.
\begin{lemma}
\label{lem6}
  \begin{enumerate}
     \item [$(i)$]
         There is a {$k_0$-monomorphism  of~$\sigma\delta$-rings from~$R_0$ to~$R$}.
     \item [$(ii)$]
         Identify $R_0$ with a subring of $R$ as in the first assertion. Suppose
         that~$
         R_0=f_0R_0\oplus %f_1R_0 \oplus
         \cdots \oplus f_{d-1}R_0
         $
         where~$f_iR_0$
         is a domain, $f_i^2=f_i$ and~$\sigma(f_i)=f_{i+1 \, {\rm mod }\, d}$, and that~$R=e_0R \oplus \cdots \oplus e_{s-1}R$ is
         a similar decomposition of $R$. Then $s=md$ for some~$m\in \bZp$. Moreover,  after a possible renumbering of the $f_i$,
         we have
         $$
           f_i=e_i+e_{i+d}+\cdots+e_{i+(m-1)d}\quad \mbox{for
           $i=0,\cdots,d-1$.}
         $$

    % \item [$(3)$]
%         The \sd-Galois group of $R_0/k_0$ and $R/k$ have the same
%         identity component.
  \end{enumerate}
\end{lemma}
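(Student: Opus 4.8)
The plan is to build the monomorphism in $(i)$ directly from the universal property of Picard--Vessiot extensions, and then to analyze the idempotent decomposition in $(ii)$ using the compatibility of the two $\sigma$-actions. For $(i)$, first I would take the \sd PV ring $R_0 = k_0[Z_0, 1/\det(Z_0)]/I_0$ for the system over $k_0$, where $Z_0$ is a fundamental matrix and $I_0$ is a maximal \sd ideal. Over $R$ there is a fundamental matrix $Z$ for the same system (by condition $(ii)$ in the definition of a \sd PV extension). The map $Z_0 \mapsto Z$ defines a $\sigma\delta$-$k_0$-homomorphism $\psi$ from $k_0[Z_0, 1/\det(Z_0)]$ into $R$; since $I_0$ is a maximal \sd ideal and $\psi$ has a $\sigma\delta$-stable kernel that must therefore be contained in $I_0$ (in fact the kernel of the induced map from $R_0$ is a proper \sd ideal of the simple \sd ring $R_0$, hence zero), $\psi$ descends to an injection $R_0 \hookrightarrow R$. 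This is the desired $k_0$-monomorphism of \sd rings.

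For $(ii)$, identify $R_0$ with its image in $R$. The key structural fact is Lemma \ref{lem1}: both $R_0$ and $R$ admit decompositions into a direct sum of domains indexed by a cyclic $\sigma$-orbit, with $\sigma$ acting as the cyclic shift. The idempotents $f_0,\dots,f_{d-1} \in R_0 \subset R$ are in particular idempotents of $R$, so each $f_i$ is a sum of a subset of the primitive idempotents $e_0,\dots,e_{s-1}$ of $R$ (here I use that the $e_j R$ are domains, so the only idempotents in $R$ are sums of distinct $e_j$'s, and that $R = e_0 R \oplus \cdots \oplus e_{s-1}R$ with $1 = e_0 + \cdots + e_{s-1}$). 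Write $f_i = \sum_{j \in S_i} e_j$ for disjoint subsets $S_i \subseteq \{0,\dots,s-1\}$ with $\bigcup_i S_i = \{0,\dots,s-1\}$, since $f_0 + \cdots + f_{d-1} = 1$ in $R_0$ and hence in $R$. Because $\sigma(f_i) = f_{i+1 \bmod d}$ and $\sigma$ permutes the $e_j R$, the shift $\sigma$ maps $S_i$ bijectively onto $S_{i+1 \bmod d}$; in particular all the $S_i$ have the same cardinality $m$, whence $s = md$. Now I would pick the numbering so that $e_0 \in S_0$. Since $\sigma$ is the cyclic shift $e_j R \mapsto e_{j+1 \bmod s} R$ on the $e$-side and the cyclic shift $f_i R_0 \mapsto f_{i+1 \bmod d} R_0$ on the $f$-side, applying $\sigma$ repeatedly to $e_0 \in S_0$ shows $e_j \in S_{j \bmod d}$ for every $j$. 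Thus $S_i = \{\, j : j \equiv i \pmod d\,\} = \{i, i+d, \dots, i+(m-1)d\}$ after this renumbering, which gives the stated formula $f_i = e_i + e_{i+d} + \cdots + e_{i+(m-1)d}$.

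The main obstacle I anticipate is in $(i)$: making sure that the homomorphism $R_0 \to R$ is well defined and injective, i.e., that $I_0$ contains the kernel of $Z_0 \mapsto Z$ and that the resulting map has no kernel. The clean way is to observe that the kernel of $k_0[Z_0,1/\det Z_0] \to R$ is a \sd ideal which, being the kernel of a ring map into $R$, is proper; since it contains a point cutting out a subvariety on which the fundamental matrix still exists, maximality of $I_0$ forces it to equal $I_0$ (one may need the hypothesis, recorded in the excerpt, that \sd PV extensions exist and are unique, to know $R_0$ is realized as this quotient and that $R_0$ is \sd simple so any proper \sd ideal is zero). In $(ii)$ the only subtlety is bookkeeping: one must check that the single renumbering chosen to put $e_0$ into $S_0$ simultaneously aligns all the $S_i$ correctly, which follows because the two cyclic $\sigma$-actions are \emph{compatible} — the $\sigma$-orbit structure on the $e_j$'s refines that on the $f_i$'s — so once one element is placed, equivariance under $\sigma$ pins down all the rest.
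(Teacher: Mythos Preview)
Your argument for part~$(ii)$ is correct and essentially identical to the paper's: write each $f_i$ as a sum of primitive idempotents $e_j$, use $\sum f_i = 1 = \sum e_j$ to see that the index sets partition $\{0,\dots,s-1\}$, and then use the two compatible cyclic $\sigma$-actions to conclude that all index sets have size $m=s/d$ and, after renumbering, coincide with the residue classes modulo $d$.

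Your argument for part~$(i)$, however, has a genuine gap. For the map $\psi\colon k_0[Z_0, 1/\det Z_0] \to R$ given by $Z_0 \mapsto Z$ to \emph{descend} to a map out of $R_0 = k_0[Z_0, 1/\det Z_0]/I_0$, you need $I_0 \subseteq \ker\psi$, not $\ker\psi \subseteq I_0$ as you write (both in the main argument and in your ``obstacle'' paragraph). Neither containment is automatic: an arbitrary fundamental matrix $Z \in \mat_n(R)$ has no reason to satisfy the particular relations cut out by $I_0$, and conversely the maximality of $I_0$ among \sd ideals says nothing about whether $\ker\psi$ sits inside it, since $k_0[Z_0,1/\det Z_0]$ may have many maximal \sd ideals. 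Your parenthetical remark that ``the kernel of the induced map from $R_0$ is a proper \sd ideal of the simple \sd ring $R_0$, hence zero'' would correctly establish injectivity, but only \emph{after} a well-defined map $R_0 \to R$ exists. One might try to rescue the approach by showing that $\ker\psi$ is itself maximal, i.e.\ that the subring $k_0[Z,1/\det Z]\subseteq R$ is \sd simple, but this does not follow immediately from the simplicity of $R$ over $k$.

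The paper sidesteps this by reversing the construction: it forms the \sd ring $R_0 \otimes_{k_0} k$, chooses a maximal \sd ideal $M$ in it, and observes that $(R_0 \otimes_{k_0} k)/M$ is a \sd PV extension of $k$ for the same system, hence isomorphic to $R$ by uniqueness. The composite $R_0 \to R_0 \otimes_{k_0} k \to (R_0 \otimes_{k_0} k)/M \cong R$ is then a \sd $k_0$-homomorphism whose kernel is a \sd ideal of the simple ring $R_0$ not containing $1$, hence zero. In other words, the paper builds a concrete model of $R$ that visibly contains $R_0$, rather than attempting to locate a copy of $R_0$ inside a pre-given $R$.
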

\begin{proof}
(i) Clearly, the ring $R_0\otimes_{k_0} k$ becomes
a~$\sigma\delta$-ring endowed with the  actions~$ {\delta}(r\otimes
h)={\delta}(r)\otimes h+r\otimes{\delta}(h)$ and~$\sigma(r\otimes
h)=\sigma(r)\otimes\sigma(h) $ for any~$r \in R_0$ and $h \in k$.
Since $k_0$ is a field, the two canonical embeddings
$$
R_0\rightarrow R_0\otimes_{k_0} 1 \subset R_0\otimes_{k_0}k
\quad\mbox{ and }\quad k \rightarrow 1\otimes_{k_0} k \subset
R_0\otimes_{k_0}k
$$
are both injective, and  clearly are homomorphisms
of~$\sigma\delta$-rings. Let $M$ be a maximal~$\sigma\delta$-ideal
of $R_0\otimes_{k_0}k$ and consider the ring
$(R_0\otimes_{k_0}k)/M$. Because~$R_0$ and~$k$ are both
simple~$\sigma\delta$-rings, the above embeddings factor through
to~$(R_0\otimes_{k_0}k)/M$ and are still injective. Note
that~$(R_0\otimes_{k_0}k)/M$ is a \sd-PV extension of $k$
for~$\{\sigma(Y)=AY,  \delta (Y)=BY\}$. So by uniqueness, we may
write~$R = (R_0\otimes_{k_0}k)/M$. Assume
that~$R_0=k_0[Z,\frac{1}{\det(Z)}]$ where $Z$ is a fundamental
matrix of the system. Let $\bar{Z}=Z \mod M$. One sees
that~$\bar{Z}$ is still a fundamental  matrix of the system and
that~$\det(\bar{Z})\neq 0$. Hence
$$
R=\left(k_0\left[Z,\frac{1}{\det(Z)}\right]\otimes _{k_0}k\right)/M
     =k\left[\bar{Z}, \frac{1}{\det(\bar{Z})}\right]
$$
is {a} $\sigma\delta$-PV ring for the given system over $k$ and $R_0$ can be embedded into~$ R$.\\[0.1in]
(ii) Write $f_0=\sum_{j=0}^{s-1}a_je_j$ with $a_j\in e_jR$. Squaring
both sides yields~$f_0=\sum_{j=0}^{s-1}a_j^2e_j$, thus
$a_j^2e_j=a_je_j$. Since $e_jR$ is a domain,~$a_j$ is either~$e_j$
or $0$ for each~$j$. The same %considerations
holds for other $f_i$'s. Then for any~$i=0,\cdots,d-1$, there is a
subset $T_i \subset \{0,\cdots,s-1\}$ such that $f_i=\sum_{j\in
T_i}e_j$. Assume that $T_{i_0}\cap T_{i_1}$ is not empty for two
different~$i_0$ and~$i_1$. Let~$l \in T_{i_0}\cap T_{i_1}$.
Since~$\sum_{i=0}^{d-1}f_i=\sum_{j=0}^{s-1}e_j=1$,
\begin{equation*} \label{eqn4}
0=\sum_{i=0}^{d-1}f_i-\sum_{j=0}^{s-1}e_j=\sum_{i=0}^{d-1}\sum_{j
\in T_i}e_j-\sum_{j=0}^{s-1}e_j=pe_l+H
\end{equation*}
where $p>0$ and $H$ is the sum of all the $e_q$'s with $q \neq l$.
Multiplying both sides of the above equality by $e_l$, we
get~$pe_l=0$, a contradiction. Hence the $T_i$'s form a partition of
$\{0,\cdots,s-1\}$. Since~$\sigma(f_i)=f_{i+1 \, {\rm mod}\,  d}$
and~$\sigma(e_j)=e_{j+1\,{\rm mod}\, s}$, one sees that the sets
$T_i$ have the same size and that a renumbering yields the
conclusion.
\end{proof}
%\begin{remark}
According to Lemma \ref{lem6}, we will always consider $R_0$ as a
subring of~$R$  and assume~$R=k[\bar{Z},\frac{1}{\det(\bar{Z})}]$ in
the sequel. In particular, we can view
$R_0=k_0[\bar{Z},\frac{1}{\det(\bar{Z})}]$.
%\end{remark}
\begin{lemma}\label{lem7}
Let $\gamma : \gal(R/k)\rightarrow \gal(R_0/k_0)$ be a map given
by~$ \gamma(\phi){=}\phi |_{R_0}$ for any~$\phi \in \gal(R/k)$. Then
$\gamma$ is {a monomorphism}.
% of $\gal(R/k)$ into~$\gal(R_0/k_0)$.
Moreover, we can view the identity component
of $\gal(R/k)$ as a subgroup of that of~$\gal(R_0/k_0)$.
\end{lemma}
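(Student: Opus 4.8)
The plan is to verify that $\gamma$ is a well-defined group homomorphism and then that it is injective; the "moreover" clause will follow by tracking identity components through $\gamma$. First I would check well-definedness: given $\phi\in\gal(R/k)$, the restriction $\phi|_{R_0}$ is a ring homomorphism $R_0\to R$, it commutes with $\sigma$ and $\delta$ since $\phi$ does, and it fixes $k_0\subset k$ pointwise. What remains is to see that $\phi(R_0)\subseteq R_0$. Using the identification $R=k[\bar Z,\tfrac1{\det\bar Z}]$ and $R_0=k_0[\bar Z,\tfrac1{\det\bar Z}]$ from Lemma~\ref{lem6}, $\phi$ sends the fundamental matrix $\bar Z$ to $\bar Z M$ for some $M\in\mat_n(k^{\sigma\delta})=\mat_n(\CX)$ (this is the standard fact that two fundamental matrices over a \sd~PV extension differ by a constant matrix, and the constants of $R$ are those of $k$). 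Since $\CX\subset k_0$, the entries of $\bar Z M$ and of $(\bar Z M)^{-1}$ lie in $R_0$, so $\phi(R_0)\subseteq R_0$. Hence $\gamma(\phi)\in\gal(R_0/k_0)$, and $\gamma$ is clearly multiplicative because restriction of composition is composition of restrictions.

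Next, injectivity. Suppose $\gamma(\phi)=\id$, i.e.\ $\phi$ fixes $R_0$ pointwise. In particular $\phi$ fixes $\bar Z$, so $\phi$ fixes every entry of $\bar Z$ and of $\tfrac1{\det\bar Z}$; since these entries together with $k$ generate $R$ as a ring, and $\phi$ fixes $k$, we get $\phi=\id$. So $\ker\gamma$ is trivial and $\gamma$ is a monomorphism of algebraic groups.

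For the last assertion, $\gamma$ realizes $\gal(R/k)$ as a closed subgroup of $\gal(R_0/k_0)$ (the image is closed because $\gamma$ is a morphism of algebraic groups with trivial kernel, hence a closed immersion onto its image). A closed subgroup $\gamma(\gal(R/k))$ contains the identity element, so its identity component is a closed connected subgroup of $\gal(R_0/k_0)$ contained in $\gal(R_0/k_0)^0$; and $\gamma$ being an isomorphism onto its image carries $\gal(R/k)^0$ onto $\gamma(\gal(R/k))^0$. Thus $\gal(R/k)^0$ is identified, via $\gamma$, with a (closed, connected) subgroup of $\gal(R_0/k_0)^0$.

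The only real point requiring care is the inclusion $\phi(R_0)\subseteq R_0$ in the well-definedness step: everything hinges on the observation that an automorphism acts on the chosen fundamental matrix by right multiplication by a matrix with entries in the constant field $\CX$, which already sits inside $k_0$. Once that is in hand the rest is bookkeeping. (Note that the stronger statement—that $\gamma$ has \emph{normal} image, matching the differential analogue mentioned in the text—presumably needs the normality of $k/k_0$ and is taken up in a later lemma rather than here.)
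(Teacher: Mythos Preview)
Your proposal is correct and follows essentially the same approach as the paper: both use that $\phi(\bar Z)=\bar Z M$ with $M\in\mat_n(\CX)\subset\mat_n(k_0)$ to get well-definedness, deduce injectivity from $M=I_n$, and then pass to identity components (the paper phrases this last step simply as ``$\gamma$ is continuous in the Zariski topology, so $\gamma(\gal(R/k)^0)\subseteq\gal(R_0/k_0)^0$''). Your parenthetical remark about normality being deferred to a later result is also accurate.
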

\begin{proof}
Assume that $R=k[\bar{Z}, \frac{1}{\det(\bar{Z})}]$. Let $\phi \in
\gal(R/k)$. If $\phi(\bar{Z})=\bar{Z}[\phi]_{\bar{Z}}$ for some
$[\phi]_{\bar{Z}} \in \mat_n(\mathbb{C})$, then
$\gamma(\phi)(\bar{Z})=\bar{Z}[\phi]_{\bar{Z}}$. Hence
$\gamma(\phi)$ is  an automorphism of $R_0$ over $k_0$, that is,
$\gamma(\phi)\in \gal(R_0/k_0)$. Note that~$\det(\bar{Z})\neq 0$. If
$\gamma(\phi)=\id$, then $[\phi]_{\bar{Z}}=I_n$, which implies that
$\phi =\id$. So $\gamma$ is an injective homomorphism.  Therefore,
we can view~$\gal(R/k)$ as a subgroup of~$\gal(R_0/k_0)$. Since
$\gamma$ is continuous in the Zariski topology,
$\gamma(\gal(R/k)^0)$ is in~$\gal(R_0/k_0)^0$. So the lemma holds.
\end{proof}
%
%{\color{blue} I do not think you need to give the definition or make the comment.
%If is enough to refer the reader to Kunz for the definition as you do in the lemma.}
%\begin{define}
%   (\cite{kunz},p.40, Definition 1.3)
%  {\it Krull dimension} of a ring $R \neq \{0\}$ is the supremum of
%  the lengths $n$ of all prime ideal chains
%  $$
%     \mathfrak{p}_0 \subset \mathfrak{p}_1\subset\cdots \subset
%     \mathfrak{p}_n  \,\,(\mathfrak{p}_i \neq \mathfrak{p}_{i+1}).
%  $$
%  {\it Krull dimension} of a nonempty variety $X$ is defined as the supremun of the lengths $n$
%  of all chains
% $$
%     X_0 \subset X_1 \subset \cdots X_n  \,\, (X_i \neq X_{i+1})
%  $$
%of nonempty subvarieties $X_i$ of $X$.
%\end{define}
%
%\begin{remark}
%    When a ring is a domain and finitely generated over a field,
%    Krull dimension is the same as the transcendental degree of its
%    quotient field. Krull dimension of a nonempty variety equals
%    the maximum of the Krull dimensions of its components.
%\end{remark}
\begin{lemma}\label{lem8}
  $\gal(R/k)^0=\gal(R_0/k_0)^0$.
\end{lemma}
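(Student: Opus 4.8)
The plan is to turn the lemma into a dimension count. By Lemma~\ref{lem7} the restriction map $\gamma$ identifies $\gal(R/k)$ with a subgroup of $\gal(R_0/k_0)$; since a monomorphism of algebraic groups has closed image, this is a \emph{closed} subgroup, and under the identification $\gal(R/k)^0\subseteq\gal(R_0/k_0)^0$. Both are connected linear algebraic groups over $\mathbb{C}$, hence irreducible varieties, so a closed connected subgroup of $\gal(R_0/k_0)^0$ of the same dimension must be all of $\gal(R_0/k_0)^0$. Recalling that a group and its identity component have equal dimension, it therefore suffices to prove $\dim\gal(R/k)=\dim\gal(R_0/k_0)$.

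I would compute these dimensions as Krull dimensions of the Picard--Vessiot rings. As recalled just before Lemma~\ref{lem21}, $R$ is the coordinate ring of a $\gal(R/k)$-torsor over $k$; a torsor over a field becomes isomorphic to its structure group after passing to the algebraic closure, so $\dim_k R=\dim\gal(R/k)$, and similarly $\dim_{k_0}R_0=\dim\gal(R_0/k_0)$. (Here $R$ and $R_0$ are finitely generated algebras, hence of finite Krull dimension; zero divisors cause no trouble, since one may argue on each domain component $e_iR$, resp.\ $f_jR_0$, of Lemma~\ref{lem6}, and the $\sigma$-action permutes these components, so they share one common Krull dimension.) Thus the lemma reduces to the equality $\dim_k R=\dim_{k_0}R_0$.

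The essential observation is that $k$ is \emph{algebraic} over $k_0$: any element of $k=\overline{\mathbb{C}(t)}(x)$ is a rational function in $x$ whose finitely many coefficients generate a finite extension of $\mathbb{C}(t)$, hence lies in a finite extension of $k_0=\mathbb{C}(t,x)$. By Lemma~\ref{lem6}(i) and its proof, after identifying $R_0$ with a subring of $R$ we have $R_0=k_0[\bar Z,\tfrac1{\det\bar Z}]$ and $R=k[\bar Z,\tfrac1{\det\bar Z}]$ for a common fundamental matrix $\bar Z$, so $R$ is generated as an $R_0$-algebra by the elements of $k$, each of which is integral over $k_0\subseteq R_0$. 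Hence $R$ is integral over $R_0$ (every element of $R$ lies in some $R_0[\alpha_1,\dots,\alpha_r]$ with $\alpha_i\in k$, a finite $R_0$-module), and the going-up theorem gives $\dim R=\dim R_0$. Combined with the previous paragraph this yields $\gal(R/k)^0=\gal(R_0/k_0)^0$. The only point requiring a little care is the bookkeeping with zero divisors in the step $\dim_kR=\dim\gal(R/k)$; the substance of the argument is simply that adjoining the algebraic extension $k/k_0$ does not change Krull dimension.
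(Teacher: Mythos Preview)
Your proof is correct and follows essentially the same route as the paper's: both arguments reduce the question to equality of Krull dimensions via the torsor interpretation of the Picard--Vessiot rings, then use that $R$ is integral over $R_0$ (since $k$ is algebraic over $k_0$ and $R$ is generated over $R_0$ by elements of $k$) to conclude $\dim R=\dim R_0$, and finally invoke the fact that a connected closed subgroup of a connected group of the same dimension must be the whole group. Your treatment of the zero-divisor issue and of why $k/k_0$ is algebraic is slightly more explicit than the paper's, but the substance is identical.
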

\begin{proof}
Let $G=\gal(R/k)$ and $G_0=\gal(R_0/k_0)$. From
\cite{hardouin-singer}, the $\sigma\delta$-PV ring $R$ (resp.\
$R_0$) is the coordinate ring of a  $G$-torsor (resp.\
$G_0$-torsor). From~\cite[p.40, (3)]{kunz}, the Krull dimension of
$R$ (resp.\ $R_0$) equals the Krull dimension of $G$ (resp.\ $G_0$).
Since all the components of a linear algebraic group are isomorphic
as varieties, one sees that the Krull dimension of $G$ (resp.\
$G_0$) equals the Krull dimension of $G^0$ (resp.\ $G_0^0$). Since
$R$ is generated over $R_0$ by the elements of $k$, by Proposition
2.2 and Corollary 2.3 in~\cite[p. 44]{kunz}, $R$ is an integral ring
extension of $R_0$. By \cite[Corollary 2.13]{kunz}, the Krull
dimension of $R$ equals that of $R_0$. Hence    the Krull dimension
of~$G^0$ equals that of $G_0^0$. Since both~$G^0$ and $G_0^0$ are
connected and $G^0\subset G_0^0$, by the proposition in \cite[p.
25]{humphreys} we have $G^0=G_0^0$.
\end{proof}
From Lemma \ref{lem7}, $\gal(R/k)$ can be viewed as a subgroup of
$\gal(R_0/k_0)$. In the following, we prove that $\gal(R/k)$ is a
normal subgroup of $\gal(R_0/k_0)$.
Let $\mathcal{F}_0$ and $\mathcal{F}$ be the total ring of
fractions of $R_0$ and $R$ respectively. Note that
$\gal(R/k)=\gal(\F/k)$ and $\gal(R_0/k_0)=\gal(\F_0/k_0).$
Corollary~\ref{embed} allows us to assume that $\mathcal{F}_0
\subset \mathcal{F}$.
\begin{lemma}\label{lem32}
Let $u \in k$ be of degree $m$ over $k_0$ and let $u = u_1, u_2,
\ldots , u_m$ be its conjugates. Then there exist $M \in
\mat_m(k_0)$ and $N \in \gl_m(k_0)$ such~that
\[Z = \left(\begin{array}{cccc}1& 1& \ldots & 1\\ u_1 & u_2 & \ldots &u_m\\
\vdots & \vdots &\vdots &\vdots \\ u_1^{m-1} & u_2^{m-1}& \ldots
&u_m^{m-1}\end{array}\right)  \]  satisfies
\[\sigma(Z) = MZ \quad\mbox{and}\quad    \delta(Z) = NZ.
\]
\end{lemma}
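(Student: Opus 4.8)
The plan is to realize $Z$ over a single finite Galois extension $L$ of $k_0$ carrying compatible actions of $\sigma$ and $\delta$, and then to check that $M:=\sigma(Z)Z^{-1}$ and $N:=\delta(Z)Z^{-1}$ are invariant under $\gal(L/k_0)$, so that their entries lie in $k_0$.

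First I would build $L$. Write $u$ as a rational function of $x$ with coefficients in $\overline{\mathbb{C}(t)}$, and let $F\subset\overline{\mathbb{C}(t)}$ be a finite Galois extension of $\mathbb{C}(t)$ containing all these coefficients (e.g.\ the Galois closure of the field they generate). Put $L=F(x)\subset k$. Since $x$ is transcendental over $F$, the extension $L/k_0$ is finite Galois, and restriction to $F$ identifies $\Gamma:=\gal(L/k_0)$ with $\gal(F/\mathbb{C}(t))$, acting on $F$ and fixing $x$. Recall that $\sigma$ fixes $\overline{\mathbb{C}(t)}$ with $\sigma(x)=x+1$, and $\delta$ is the canonical extension of $d/dt$ with $\delta(x)=0$; in particular $\sigma$ and $\delta$ map $L$ into itself. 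The key structural point is that every $\gamma\in\Gamma$ commutes with both $\sigma$ and $\delta$ on $L$: for $\sigma$ this is immediate because $\sigma|_F=\id$ while $\gamma(F)=F$ and $\gamma(x)=x$; for $\delta$ it follows from the standard fact that Galois automorphisms of the algebraic extension $F/\mathbb{C}(t)$ commute with the canonical derivation (differentiate a minimal polynomial) together with $\gamma(\delta(x))=\delta(\gamma(x))=0$, extended to all of $L=k_0(F)$ by the sum, product and quotient rules. Since $u\in L$ and $L/k_0$ is normal, all $m$ conjugates $u_1,\dots,u_m$ of $u$ over $k_0$ lie in $L$; they are distinct (characteristic zero), and $\Gamma$ permutes them transitively (they are the roots of the minimal polynomial of $u$ over $k_0$, which $\Gamma$ fixes coefficientwise).

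Next, $\det Z=\prod_{i<j}(u_j-u_i)\neq 0$, so $Z\in\mat_m(L)$; likewise $\sigma(Z)$, whose columns are the power vectors of the distinct elements $\sigma(u_1),\dots,\sigma(u_m)$, is invertible. Hence $M:=\sigma(Z)Z^{-1}\in\mat_m(L)$ and $N:=\delta(Z)Z^{-1}\in\gl_m(L)$ are well defined and satisfy $\sigma(Z)=MZ$, $\delta(Z)=NZ$ by construction; it remains only to show their entries lie in $k_0$. Fix $\gamma\in\Gamma$. Since $\gamma$ permutes $u_1,\dots,u_m$, it permutes the columns of $Z$, so $\gamma(Z)=ZP_\gamma$ for a permutation matrix $P_\gamma$ with entries in $\{0,1\}$; in particular $\sigma(P_\gamma)=P_\gamma$ and $\delta(P_\gamma)=0$. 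Using $\gamma\sigma=\sigma\gamma$,
$$\gamma(M)=\sigma(\gamma(Z))\,\gamma(Z)^{-1}=\sigma(ZP_\gamma)(ZP_\gamma)^{-1}=\sigma(Z)\,\sigma(P_\gamma)\,P_\gamma^{-1}Z^{-1}=\sigma(Z)Z^{-1}=M,$$
and using $\gamma\delta=\delta\gamma$ together with the Leibniz rule,
$$\gamma(N)=\delta(\gamma(Z))\,\gamma(Z)^{-1}=\bigl(\delta(Z)P_\gamma+Z\,\delta(P_\gamma)\bigr)P_\gamma^{-1}Z^{-1}=\delta(Z)Z^{-1}=N.$$
As this holds for every $\gamma\in\Gamma$ and $L^\Gamma=k_0$, we get $M\in\mat_m(k_0)$ (it is invertible, being a product of invertible matrices) and $N\in\gl_m(k_0)$, which proves the lemma.

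The \emph{only} genuinely non-formal step is arranging that the conjugates $u_i$, the matrix $\sigma(Z)$, and the field-Galois action all live inside one $\sigma\delta$-field $L$ that is Galois over $k_0$ and on which $\sigma$ and $\delta$ commute with $\gal(L/k_0)$; once that is in place, the identities $\sigma(Z)=MZ$ and $\delta(Z)=NZ$ with $M,N$ over $k_0$ fall out of the short invariance computation above. A more hands-on alternative avoids $L$: because $u$ is a rational (not algebraic) function of $x$, one checks directly that $\sigma(u)$ and $\delta(u)$ lie in $k_0(u)$, and then reads $M$ and $N$ off from the matrices of $\sigma$ and $\delta$ in the basis $1,u,\dots,u^{m-1}$ of $k_0(u)$, transported to each conjugate; but the Galois-invariance argument seems cleaner.
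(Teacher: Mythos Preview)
Your proof is correct and follows essentially the same approach as the paper: both arguments observe that any $k_0$-automorphism $\gamma$ of the ambient field commutes with $\sigma$ and $\delta$, hence permutes the columns of $Z$ via a permutation matrix, so $M=\sigma(Z)Z^{-1}$ and $N=\delta(Z)Z^{-1}$ are Galois-invariant and lie over $k_0$. The only difference is that you carefully build a finite Galois extension $L=F(x)$ of $k_0$ and verify the commutativity of $\gamma$ with $\sigma,\delta$ there, whereas the paper works directly with the automorphism group of $k$ over $k_0$ and is terser about why $\tau\sigma=\sigma\tau$ and $\tau\delta=\delta\tau$; your version is simply a more explicit rendering of the same idea.
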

\begin{proof}
We claim that any field automorphism $\tau$ of $k$ over $k_0$ is a
\sd-field automorphism. Indeed, since $k$ is an algebraic extension
of~$k_0$ so any automorphism~$\tau$ is automatically
a~$\delta$-field automorphism. One sees that~$\tau$ is
a~$\sigma$-field automorphism by noting that for any~$f \in
\overline{\mathbb{C}(t)}(x)$,~$\tau$ acts on the coefficients of
powers of $x$ while $\sigma$ acts only on $x$. For any $g$ in the
automorphism group of $k$ over $k_0$, we have $g(Z) = Z [g]$ where
$[g]$ is a permutation matrix.  Since $g$ is also an automorphism
of~\sd-fields, both~$M= \sigma(Z) Z^{-1}$ and $N=\delta(Z)Z^{-1}$
are left invariant by $g$ and therefore must have entries in $k_0$.
\end{proof}
We now proceed to prove the main result of this section.
\begin{prop}
\label{prop10}
 $\gal(R/k)$ is a normal subgroup of $\gal(R_0/k_0)$.
\end{prop}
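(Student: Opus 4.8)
The plan is to show that $\gal(R/k)$ is stable under conjugation by any element of $\gal(R_0/k_0)$. Recall that $R_0 \subset R$ with $R = k[\bar Z, 1/\det(\bar Z)]$ and $R_0 = k_0[\bar Z, 1/\det(\bar Z)]$, and that an element of $\gal(R/k)$ is precisely an element of $\gal(R_0/k_0)$ (acting on $\bar Z$ via a matrix in $\mat_n(\CX)$) which in addition fixes the subfield $k = \overline{\CX(t)}(x)$ pointwise. So the statement reduces to: if $\phi \in \gal(R_0/k_0)$ fixes $k$ and $\psi \in \gal(R_0/k_0)$ is arbitrary, then $\psi \phi \psi^{-1}$ fixes $k$. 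Since $\delta = d/dt$ and $\sigma$ fix $\CX(t)$ and $x$ automatically, the only thing to control is the action on the algebraic closure $\overline{\CX(t)}$ over $\CX(t)$; equivalently, on the finitely many algebraic elements of $k$ that actually occur in the entries of $A$, $B$ and $\bar Z$.

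First I would reduce to a single primitive element: pick $u \in k$, algebraic of degree $m$ over $k_0$, such that $k_0(u)$ contains all coefficients appearing in $A$, $B$, and hence (one checks) in all entries of $\bar Z$ — so that $R_0 = k_0(u)[\bar Z, 1/\det \bar Z]$ and fixing $k$ on $R_0$ is the same as fixing $u$. Then Lemma~\ref{lem32} applies: the Vandermonde matrix $Z$ built from the conjugates $u = u_1, \dots, u_m$ satisfies $\sigma(Z) = MZ$, $\delta(Z) = NZ$ with $M \in \mat_m(k_0)$, $N \in \gl_m(k_0)$, so $Z$ is a fundamental matrix for a \sd-system over $k_0$. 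The key point is that $Z$ lives in $R_0$: its entries are polynomials in $u$, hence in $k \subset R_0$, and $\det Z = \prod_{i<j}(u_i - u_j) \neq 0$ is a nonzero element of $k$, hence a unit in $R$, so $Z \in \mat_m(R_0)$ and $Z^{-1} \in \mat_m(R_0)$. Consequently, for any $g \in \gal(R_0/k_0)$, since $g$ is in particular a \sd-$k_0$-automorphism, $g(Z)$ is again a fundamental matrix of the same system over $k_0$, so $g(Z) = Z C_g$ for some constant matrix $C_g \in \gl_m(\CX)$ (the constants $\CX$ being algebraically closed). Moreover, because the $u_i$ are the roots of an irreducible polynomial over $k_0$ and $g$ permutes these roots, $g$ restricted to the first row of $Z$ is a permutation, which forces $C_g$ to be a permutation matrix; in particular $g$ fixes $u$ (equivalently fixes all of $k$) if and only if $C_g = I_m$.

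Now the normality is immediate from the cocycle behavior of $g \mapsto C_g$. For $\phi$ fixing $k$ we have $C_\phi = I_m$. For arbitrary $\psi$, writing $\psi(Z) = Z C_\psi$ and applying $\psi^{-1}$ gives $\psi^{-1}(Z) = Z C_\psi^{-1}$ (using that $\psi$, being a ring homomorphism fixing $k_0$, sends the constant matrix $C_\psi$ to itself). Then
\[
(\psi \phi \psi^{-1})(Z) = \psi\phi(Z C_\psi^{-1}) = \psi(Z C_\psi^{-1}) = Z C_\psi C_\psi^{-1} = Z,
\]
so $\psi\phi\psi^{-1}$ fixes $Z$, hence fixes $u$, hence fixes $k$; therefore $\psi\phi\psi^{-1} \in \gal(R/k)$. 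This proves $\gal(R/k) \trianglelefteq \gal(R_0/k_0)$.

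The step I expect to be the main obstacle — or at least the one requiring the most care — is the reduction to a primitive element $u$ whose field $k_0(u)$ captures \emph{all} the algebraic data needed, namely not only the coefficients of $A$ and $B$ but also the entries of the chosen fundamental matrix $\bar Z$, so that "$\phi$ fixes $k$ on $R_0$" is faithfully detected by "$\phi$ fixes $u$". One has to argue that $\bar Z$ can be taken with entries in $k_0(u)[\bar Z, 1/\det\bar Z]$ consistently, e.g. by enlarging $u$ to also generate the (finitely many) algebraic-over-$k_0$ coefficients arising in the torsor point from Proposition~\ref{lem3}. An alternative, possibly cleaner, route that avoids singling out $\bar Z$ is to observe directly that $k \cap \mathcal{F}_0$ (the algebraic closure of $k_0$ inside the total ring of fractions of $R_0$) is exactly $k$, that $\gal(R/k) = \gal(\mathcal F / k)$ consists of the elements of $\gal(\mathcal F_0/k_0)$ fixing this subfield pointwise, and that the subfield $k$ — being the relative algebraic closure of $k_0$ in $\mathcal F_0$, hence a characteristic substructure — is automatically preserved by every \sd-$k_0$-automorphism; the stabilizer of a setwise-invariant subfield on which the action factors through a finite (here abelian, being a subquotient of a Galois group of $\overline{\CX(t)}/\CX(t)$) quotient is normal. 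I would present the Vandermonde/cocycle argument as the main proof and mention this conceptual variant as a remark.
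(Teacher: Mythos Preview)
Your argument has a genuine gap in the setup. You want elements of $\gal(R_0/k_0)$ to act on the Vandermonde matrix $Z$ built from the conjugates $u_1,\dots,u_m$, and you justify this by writing ``$Z$ lives in $R_0$: its entries are polynomials in $u$, hence in $k \subset R_0$''. Neither half is correct: the entries $u_i^{\,j}$ involve \emph{all} conjugates of $u$, not just $u=u_1$; and, more seriously, $k=\overline{\CX(t)}(x)$ is not in general contained in $R_0$ (nor in its total ring of fractions $\F_0$). So $\psi\in\gal(R_0/k_0)$ cannot be applied to $Z$ as written, and your characterisation of $\gal(R/k)$ inside $\gal(R_0/k_0)$ as ``those automorphisms fixing $k$ pointwise'' is not well-posed. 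Your alternative route stumbles on the same point: you claim $k\cap\F_0=k$, i.e.\ $k\subset\F_0$, which need not hold. (Also, the clause about choosing $u$ so that $k_0(u)$ contains ``all entries of $\bar Z$'' cannot be right: those entries are typically transcendental over $k_0$.)

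This is exactly the subtlety the paper isolates. It first shows, via the Galois correspondence, that the image of $\gal(R/k)$ in $\gal(R_0/k_0)$ is $\gal(\F_0/\F_0\cap k)$; normality then reduces to proving that the intermediate field $\F_0\cap k$ --- which \emph{does} sit inside $\F_0$ --- is stable under every $g\in\gal(\F_0/k_0)$. For that, one takes $u\in\F_0\cap k$, so that $U=(1,u,\dots,u^{m-1})^T$ has entries in $\F_0$ and $g(U)$ makes sense; $g(U)$ solves the same $k_0$-system furnished by Lemma~\ref{lem32}, hence is a $\CX$-linear combination of the columns of the Vandermonde $Z$ (whose entries lie in $k\subset\F$), forcing $g(u)\in k$. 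Your cocycle computation is fine and is in the same spirit as the paper's argument; the fix is simply to replace $k$ by $\F_0\cap k$ throughout and argue invariance of that subfield rather than of $k$ itself.
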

\begin{proof}
Consider the following diagram
\begin{center}
$\mathcal{F}$\\
$/ \hspace{.2in} \backslash$\\
$\mathcal{F}_0 \hspace{.3in} k$\\
$\backslash \hspace{.2in} /$\\
$\mathcal{F}_0\cap k$\\
$\mid$\\
$k_0$
\end{center}
First, we claim that the map~$\gal(\F/k)\rightarrow
\gal(\F_0/\F_0\cap k)$ that sends~$h \in \gal(\mathcal{F}/k)$ to its
restriction~$h|_{\F_0}$ on $\F_0$ is an isomorphism. Any
automorphism of $\F$ over $k$ is determined by its action on a
fundamental  matrix of~$\{\sigma(Y)=AY, \delta(Y)=BY\}$ and its
restriction on $\F_0$ is determined in the same way.  This implies
that the {restricted} map is injective.  To see that it is
surjective, note that its image is closed and has~$\F_0\cap k$ as a
fixed field. Therefore, the Galois theory
implies that the restricted map  must be~$\gal(\F_0/\F_0\cap
k)$.\\[0.1in]
We now show that~$gh g^{-1}$ is in~$\gal(\F_0/\F_0\cap k)$ for
any~$h\in \gal(\F_0/\F_0\cap k)$ and~$g \in \gal(\F_0/k_0)$. {It
suffices to show that $g$ leaves $\F_0\cap k$ invariant,  which will
imply that~$gh g^{-1}(u) = u$ for any~$u \in \F_0\cap k$ and
so~$ghg^{-1} \in \gal(\F_0/\F_0\cap k)$.
%To see that $g \in \gal(\mathcal{F}_0/k_0)$ leaves $\F_0\cap k$ invariant,
Now let $u \in \F_0\cap k$ be of degree $m$ over $k_0$. From Lemma
\ref{lem32},~$U=(1, u, u^2 , \ldots , u^{m-1})^T$ satisfies some
difference-differential system over $k_0$. Therefore the
vector~$g(U)$  satisfies the same system and so must be a
$\mathbb{C}$-linear combination of the columns of $Z$. In
particular, we have  $g(u) \in k.$  Therefore $g$ leaves $\F_0\cap
k$ invariant. This completes the proof.}
\end{proof}
\begin{theorem}\label{thm6}
If~$\{\sigma(Y)=AY, \delta(Y)=BY\}$ is irreducible over~$k_0$, then
it is equivalent over $\hat{k}_0:=F_0\cap k$ to the system
\begin{equation*}%\label{norm0}
\sigma(Y)=\diag(A_1,A_2,\cdots,A_d)Y,\quad
\delta(Y)=\diag(B_1,B_2,\cdots,B_d)Y
\end{equation*}
where  $A_i \in \mat_{\ell}(\hat{k}_0), B_i\in
\gl_{\ell}(\hat{k}_0)$ and $\ell=\frac{n}{d}$. Moreover,  the
system~$\{\sigma(Y)=A_iY,\delta(Y)=B_iY\}$ is irreducible over $k$
for~$i=1,\dots, d$, and there exists~$g_i \in \gal(R_0/k_0)$ such
that $g_i(A_1)=A_i$ and $g_i(B_1)=B_i$.
\end{theorem}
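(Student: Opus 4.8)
The plan is a Clifford-type argument on the solution space, using that $\gal(R/k)$ is normal in $\gal(R_{0}/k_{0})$ (Proposition~\ref{prop10}) together with the identification, obtained in the proof of that proposition, of $G:=\gal(R/k)=\gal(\mathcal{F}/k)$ with $\gal(\mathcal{F}_{0}/\hat{k}_{0})$ via restriction, where $\hat{k}_{0}=\mathcal{F}_{0}\cap k$. Let $V\subset R^{n}$ be the solution space, an $n$-dimensional $\mathbb{C}$-vector space on which $G_{0}:=\gal(R_{0}/k_{0})$ acts $\mathbb{C}$-linearly; irreducibility of the system over $k_{0}$ means $V$ is $G_{0}$-irreducible, and $G$ is a normal subgroup of $G_{0}$. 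Choose a nonzero $G$-invariant subspace $U_{1}\subseteq V$ of least dimension $\ell$, so $U_{1}$ is $G$-irreducible. For $g\in G_{0}$, normality makes $g\colon U_{1}\to g(U_{1})$ an intertwiner for the twisted action $h\mapsto g^{-1}hg$ of $G$, so $g(U_{1})$ is again $G$-invariant and $G$-irreducible; hence $\sum_{g\in G_{0}}g(U_{1})$ is a nonzero $G_{0}$-invariant subspace and therefore equals $V$. Since any two $G$-irreducible subspaces of $V$ either coincide or meet in $0$, one extracts greedily a direct sum $V=U_{1}\oplus g_{2}(U_{1})\oplus\cdots\oplus g_{d}(U_{1})$ with $g_{i}\in G_{0}$ (and $g_{1}=\id$), each summand being $G$-irreducible of dimension $\ell$; in particular $n=d\ell$. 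Set $U_{i}:=g_{i}(U_{1})$.

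Next I would read this decomposition on the level of systems. Exactly as the characterization of reducibility by block-triangular form recalled above converts a $\gal(R/k)$-invariant subspace into a block-triangular form over $k$, a direct-sum decomposition of $V$ into $\gal(\mathcal{F}_{0}/\hat{k}_{0})$-invariant subspaces converts the system into a block-diagonal one over $\hat{k}_{0}$: there is $P\in\mat_{n}(\hat{k}_{0})$ carrying it to $\{\sigma(Y)=\diag(A_{1},\dots,A_{d})Y,\ \delta(Y)=\diag(B_{1},\dots,B_{d})Y\}$ with $A_{i}\in\mat_{\ell}(\hat{k}_{0})$, $B_{i}\in\gl_{\ell}(\hat{k}_{0})$, the $i$th block having solution space $U_{i}$. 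Because $U_{i}$ is invariant under $G=\gal(R/k)$, the $\sigma\delta$-PV ring of the $i$th block over $k$ is a $\gal(R/k)$-stable subring of $R$, so its Galois group over $k$ is the image of $G$ acting on $U_{i}$, which is irreducible; hence each block system is irreducible over $k$. Finally, since $\hat{k}_{0}/k_{0}$ is Galois (with Galois group $G_{0}/G$, by normality of $G$) and $g_{i}$ fixes $A,B\in\mat_{n}(k_{0})$, the matrix $g_{i}(P)\in\mat_{n}(\hat{k}_{0})$ block-diagonalizes the original system to $\diag(g_{i}(A_{1}),\dots,g_{i}(A_{d}))$ and $\diag(g_{i}(B_{1}),\dots,g_{i}(B_{d}))$, the first block now having solution space $g_{i}(U_{1})=U_{i}$; matching this with the $i$th block above, after a compatible normalization of the choices of $P$ and of the $g_{i}$, yields $g_{i}(A_{1})=A_{i}$ and $g_{i}(B_{1})=B_{i}$.

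The main obstacle is the descent step: arguing that a $G$-invariant direct-sum decomposition of $V$ produces a block-diagonal form over $\hat{k}_{0}$ itself and not merely over $k$. This rests on the Galois correspondence for the given system viewed over $\hat{k}_{0}$—legitimate since $\hat{k}_{0}$ is a $\sigma\delta$-subfield of $k$, algebraic over $k_{0}$, with algebraically closed field of constants and with associated $\sigma\delta$-Galois group $G$ by the identification from the proof of Proposition~\ref{prop10}—together with the fact that $\mathcal{F}_{0}$ is only a product of fields, so that one must use the total-ring-of-fractions version of the correspondence and, in particular, verify that the fixed ring $\mathcal{F}_{0}^{\,G}$ is exactly $\hat{k}_{0}$. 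Once this is in place, the remaining bookkeeping is routine.
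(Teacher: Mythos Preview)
Your overall architecture---Clifford decomposition of $V$ under the normal subgroup $G=\gal(\mathcal{F}_0/\hat{k}_0)\trianglelefteq G_0=\gal(R_0/k_0)$, followed by descent to a block-diagonal form over $\hat{k}_0$---is exactly the paper's. The difference lies in how the descent and the conjugacy $g_i(A_1)=A_i$ are executed, and this is where your proposal has a genuine gap that the paper closes by an explicit construction.

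The paper does not invoke an abstract Galois correspondence over $\hat{k}_0$. Instead, having fixed a fundamental matrix $Z=(Z_1,\dots,Z_d)$ with the columns of $Z_i$ spanning $V_i=g_i(V_1)$, it cites a lemma of Grigoriev to produce an $\ell\times n$ matrix $P_1$ of rank $\ell$ with entries in $\hat{k}_0$ such that $P_1V_i=0$ for $i\geq 2$, and then \emph{defines} $P_i:=g_i(P_1)$. Because $G_0$ permutes the $V_j$, one gets $P_iV_j=0$ for $j\neq i$, so $P=(P_1^T,\dots,P_d^T)^T$ satisfies $PZ=\diag(U_1,\dots,U_d)$ with $U_i=g_i(U_1)$. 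A short rank argument shows $\det(P)\neq 0$, and the block-diagonal form over $\hat{k}_0$ follows. Since $U_i=g_i(U_1)$ as fundamental matrices, the equalities $A_i=g_i(A_1)$ and $B_i=g_i(B_1)$ are immediate---no normalization step is needed.

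Your argument, by contrast, first produces \emph{some} $P\in\mat_n(\hat{k}_0)$ (granting the descent), and then observes that $g_i(P)$ also block-diagonalizes; but this only shows that $\{\sigma(Y)=A_iY,\delta(Y)=B_iY\}$ and $\{\sigma(Y)=g_i(A_1)Y,\delta(Y)=g_i(B_1)Y\}$ have the same solution space $U_i$, hence are \emph{equivalent} over $\hat{k}_0$, not equal. The ``compatible normalization'' you allude to is precisely the choice $P_i=g_i(P_1)$ that the paper makes from the outset. Likewise, your descent step needs $\mathcal{F}_0^{\,G}=\hat{k}_0$ and a working module--solution-space correspondence over $\hat{k}_0$ in the total-ring-of-fractions setting; the paper sidesteps this entirely by constructing $P$ directly. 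I would recommend replacing your abstract descent paragraph with the Grigoriev-lemma construction: it is short, lives entirely in $\hat{k}_0$, and delivers the exact conjugacy for free.
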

\begin{proof}
By Proposition \ref{prop10}, $\gal(R/k)$ is isomorphic to
$\gal(\F_0/\hat{k}_0)$ and then $\gal(\F_0/\hat{k}_0)$ is a normal
subgroup of $\gal(R_0/k_0)$. Let $V$ be the solution space  in
$R_0^n$ of~$\{\sigma(Y)=AY, \delta(Y)=BY\}$. Then Clifford's Theorem
\cite[p.25, Theorem 2.2]{dixon} tells us that $V$ can be decomposed
into $
   V=V_1\oplus V_2\oplus \cdots {\oplus} V_{d}
$ where the $V_i$ are minimal $\gal(\F_0/\hat{k}_0)$-invariant
subspaces of $V$ and, for each~$i$, there exists $g_i\in
\gal(R_0/k_0)$ such that $V_i=g_i(V_1)$. Furthermore,~$g \in
\gal(R_0/k_0)$ permutes the $V_i$. Let $Z_1$ be an $n \times \ell$
matrix whose columns are the solutions  in $V_1$ of the original
system. Then $Z_1$ has the full rank. Then for each $i$, the columns
of $Z_i=g_i(Z_1)$ are the solutions in $V_i$ of the original system
and $Z_i$ has the full rank too. Let $Z=(Z_1,\cdots,Z_d)$. Then~$Z$
is a fundamental matrix of the original system. By Lemma 1
in~\cite{grigoriev}, there exists~$\ell \times n$ matrix $P_1$ of
the rank $\ell$ with entries from $\hat{k}_0$ such that~$P_1V_i=0$
for $i=2,\cdots,d$. Since $g \in \gal(R_0/k_0)$ permutes the $V_i$,
$$
 \{g_i(V_2),\cdots,g_i(V_d)\}=\{V_1,\cdots,V_{i-1},V_{i+1},\cdots,V_d\}.
$$
Let $P_i=g_i(P_1)$. Then $P_iV_j=0$ for $j \neq i$.
Therefore,~$P:=(P_1^T,\cdots,P_d^T)^T\in  \gl_n(\hat{k}_0)$
satisfies that
\begin{equation}\label{eqn31}
PZ=\diag(U_1,U_2,\cdots,U_d)
\end{equation}
with~$U_i \in \gl_\ell(\F_0)$. Moreover  we have $U_i=g_i(U_1)$ for
each $i$. We now prove that~$\det(P) \neq 0.$ Assume the contrary
that~$\det(P)=0$. Then~$w^TP=0$ for some nonzero~$w \in
\hat{k}_0^n$. Therefore there exists $w_i \in \hat{k}_0^\ell$
for~$1\le i\le d$ such that~$w_1^TP_1+\cdots+w_d^TP_d=0.$ Since the
$P_i$ have full rank, there exists at least one~$i$ such
that~$w_i^TP_i\neq 0$. Without loss of generality, assume
that~$w_1^TP_1 \neq 0.$ From~$w_1^TP_1=-(w_2^TP_2+\cdots+w_d^TP_d)$
and~$P_iZ_1=0$ for $i=2,\cdots,d$, we have~$w_1^TP_1Z=0.$ Since
$\det(Z)\neq 0$, $w_1^TP_1=0$, a contradiction. Therefore~$\det(P)
\neq 0$. Let $\ell=\frac{n}{d}$. From
(\ref{eqn31}),~$\{\sigma(Y)=AY, \delta(Y)=BY\}$ is equivalent over
$\hat{k}_0$ to
$$
   \sigma(Y)=\diag(A_1,\cdots,A_d)Y, \quad
   \delta(Y)=\diag(B_1,\cdots,B_d)Y
$$
where $A_i \in\mat_\ell(\hat{k}_0)$ and $B_i \in
\gl_{\ell}(\hat{k}_0).$ Furthermore, $U_i$ is a fundamental matrix
of the system $\{\sigma(Z)=A_iZ,\delta(Z)=B_iZ\}$. Since
$U_i=g_i(U_1)$, we have that $A_i=g_i(A_1)$ and~$B_i=g_i(B_1)$ for
each~$i$. From the minimality of~$V_i$, the
system~$\{\sigma(Z)=A_iZ,\delta(Z)=B_iZ\}$ is irreducible over $k$.
\end{proof}

\begin{cor}
Let~$A_i$ and~$B_i$ be as in Theorem~\ref{thm6} for  $i=1,\dots, d$.
Then for each~$i$,  the Galois group
of~$\{\sigma(Z)=A_iZ,\delta(Z)=B_iZ\}$ over $k$ is solvable by
finite if and only if the Galois group of~$\{\sigma(Y)=AY,
\delta(Y)=BY\}$ over~$k_0$ is  solvable by finite.
\end{cor}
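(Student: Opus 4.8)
The plan is to compare the identity components of three Galois groups and to exploit that the diagonal blocks $\{\sigma(Z)=A_iZ,\delta(Z)=B_iZ\}$ are pairwise conjugate. Write $G_0=\gal(R_0/k_0)=\gal(\F_0/k_0)$ for the Galois group of the full system over $k_0$. Since that system is, by Theorem~\ref{thm6}, equivalent over $\hat{k}_0=\F_0\cap k\subseteq k$ to the block--diagonal system $\{\sigma(Y)=\diag(A_1,\dots,A_d)Y,\ \delta(Y)=\diag(B_1,\dots,B_d)Y\}$, it is equivalent to it over $k$ as well, so a \sd PV extension $R$ of the full system over $k$ is simultaneously a \sd PV extension of that block--diagonal system over $k$; put $\tilde{G}=\gal(R/k)$. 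Let $V=V_1\oplus\cdots\oplus V_d$ be the solution space, with the $V_i$ as in Theorem~\ref{thm6}, spanned by the columns of a fundamental matrix lying in $R_0\subseteq R$. Each $V_i$ is defined over $\hat{k}_0\subseteq k$, so $\tilde{G}$ stabilizes each $V_i$, and the restriction $\rho_i\colon\tilde{G}\to\mat(V_i)$ has image a copy of $G_i$, the Galois group of $\{\sigma(Z)=A_iZ,\delta(Z)=B_iZ\}$ over $k$. Recalling that a linear algebraic group has a solvable subgroup of finite index precisely when its identity component is solvable, the corollary asserts that $G_0^0$ is solvable if and only if $G_i^0$ is solvable.

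First I would note that $(\rho_1,\dots,\rho_d)\colon\tilde{G}\hookrightarrow G_1\times\cdots\times G_d$ is injective, because an element of $\tilde{G}$ is determined by its action on the fundamental matrix, whose columns span $\bigoplus_i V_i$; and each projection $\rho_i$ is onto $G_i$ by construction. Passing to identity components, $\tilde{G}^0\hookrightarrow G_1^0\times\cdots\times G_d^0$, and $\rho_i(\tilde{G}^0)$ is a connected closed subgroup of finite index in $G_i$, hence equals $G_i^0$. It follows that $\tilde{G}^0$ is solvable if and only if every $G_i^0$ is solvable: a homomorphic image of a solvable group is solvable, and a subgroup of the solvable group $\prod_i G_i^0$ is solvable.

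Next I would show $G_1\cong G_2\cong\cdots\cong G_d$ as linear algebraic groups, so that the condition ``every $G_i^0$ solvable'' is equivalent to ``$G_i^0$ solvable'' for a single $i$. By Proposition~\ref{prop10} and its proof, restriction to $\F_0$ identifies $\tilde{G}=\gal(R/k)$ with the subgroup $H:=\gal(\F_0/\hat{k}_0)$ of $G_0=\gal(\F_0/k_0)$, this identification respects the action on $V=\bigoplus_i V_i$ (both an element of $\tilde{G}$ and its restriction to $\F_0$ act the same way on the fundamental matrix in $R_0$), and $H$ is normal in $G_0$. By Theorem~\ref{thm6} there are $g_i\in G_0$ with $g_i(V_1)=V_i$ and $g_i(A_1)=A_i$, $g_i(B_1)=B_i$. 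For $h\in H$ the map $h\mapsto g_i^{-1}hg_i$ is an automorphism of $H$ by normality, and under the isomorphism $g_i\colon V_1\xrightarrow{\sim}V_i$ the action of $g_i^{-1}hg_i$ on $V_1$ corresponds to the action of $h$ on $V_i$. Hence the representations $h\mapsto\rho_1(h)$ and $h\mapsto\rho_i(h)$ of $H$ are isomorphic, so $G_1=\rho_1(H)\cong\rho_i(H)=G_i$.

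Finally, Lemma~\ref{lem8} applied to the full system gives $\tilde{G}^0=\gal(R/k)^0=\gal(R_0/k_0)^0=G_0^0$. Chaining the three steps, for each fixed $i$: $G_0^0$ is solvable if and only if $\tilde{G}^0$ is solvable, if and only if every $G_j^0$ is solvable, if and only if $G_i^0$ is solvable; translating back through ``solvable by finite $\Longleftrightarrow$ identity component solvable'' yields the corollary. I expect the main obstacle to be the third step: one must keep straight that $G_i$, a priori the Galois group \emph{over} $k$, is realized both as $\rho_i(\tilde{G})$ and, via the isomorphism of Proposition~\ref{prop10}, as $\rho_i(H)$ with $H$ a group of automorphisms \emph{over} $\hat{k}_0\subseteq k$, and must verify that conjugation by $g_i$ really intertwines these two representations. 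A secondary technical nuisance is that \sd PV rings can contain zero divisors, so the passage between $R$ and the objects attached to the subspaces $V_i$ is cleanest when carried out in terms of solution spaces (or of total rings of fractions via Corollary~\ref{embed}) rather than of the rings themselves.
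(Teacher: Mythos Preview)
The paper does not supply an explicit proof of this corollary; it is stated as an immediate consequence of the surrounding results. Your argument is correct and is precisely the proof the authors have in mind: you invoke Lemma~\ref{lem8} to identify $\gal(R/k)^0$ with $\gal(R_0/k_0)^0$, you use the block decomposition and the conjugating elements $g_i\in\gal(R_0/k_0)$ from Theorem~\ref{thm6} together with the normality established in Proposition~\ref{prop10} to show the $G_i$ are pairwise isomorphic, and you relate $\gal(R/k)^0$ to the $G_i^0$ via the diagonal embedding. Nothing is missing; the points you flag as potential obstacles (identifying $\gal(R/k)$ with $\gal(\F_0/\hat{k}_0)$ acting on the same solution space, and working at the level of solution spaces to sidestep zero divisors) are exactly the care the paper's setup is designed to support.
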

\section{Systems} \label{sec4}
%Equationswith Liouvillian Sequences as Solutions}\label{liouvillian_section}
Throughout this section, let $k_0$ be the  field $\mathbb{C}(t,x)$
with an automorphism~$\sigma:x\mapsto x+1$ and  a
derivation~$\delta=\frac{d}{dt}$ and let $k$  the extension field
$\overline{\mathbb{C}(t)}(x)$. In this section, we will first prove
that if a system~$\{ \sigma(Y)=AY, \delta(Y)=BY\}$ with $A \in
\mat_n(k_0)$ and~$ B \in \gl_n(k_0)$ is irreducible over $k_0$ and
its Galois group over $k_0$ is solvable by finite then there
exists~$\ell\in \bZp$ with~$\ell | n$ such that the solution space
of~$\{\sigma^\ell(Y)=A_\ell Y, \delta(Y)=BY\}$ has a basis each of
whose members is the interlacing of hypergeometric solutions over
$k$. We will then refine this result to show
that~$\{\sigma^\ell(Y)=A_\ell Y, \delta(Y)=BY\}$ is equivalent
over~$k$ to a special form. Based on this special form, we will
describe a decision procedure to find its solutions.

\subsection{Systems %Equations
with Liouvillian Sequences as Solutions} By Theorem \ref{thm6}, if a
system $\{ \sigma(Y)=AY,  \delta(Y)=BY\} $ with $A \in \mat_n(k_0)$
and~$ B \in \gl_n(k_0)$ is irreducible over $k_0$, then it  can be
decomposed into factors that  are irreducible over $k$ and if the
Galois group of the system over~$k_0$ is solvable by finite then the
Galois groups of these factors over $k$ are also solvable by finite.
Hence it is enough to consider factors of the original system over
$k$.
\begin{prop}\label{prop2}
Suppose that~$\{\sigma(Y)=\mathcal{A}Y,\delta(Y)=\mathcal{B}Y\}$
with $\mathcal{A}\in \mat_\ell(k)$ and~$\mathcal{B} \in \gl_\ell(k)$
is an  irreducible system over $k$ and that its Galois group
over~$k$ is solvable by finite. Then the system is equivalent over
$k$ to
\begin{equation*}%\label{eqn22}
\sigma (Y) =\bar{\mathcal{A}}Y,\quad \delta (Y) = \bar{\mathcal{B}}
Y
\end{equation*}
where~$\bar{\mathcal{B}} \in \gl_\ell(k)$ and
\[\begin{array}{cccc}
    \bar{\mathcal{A}}=&\begin{pmatrix}
         0 & 1 & 0 & \cdots & 0 \\
         0 & 0 & 1 & \cdots & 0 \\
         \vdots & \vdots & \vdots & \vdots & \vdots \\
         0 & 0 & 0 & \cdots & 1 \\
         a & 0 & 0 & \cdots & 0
      \end{pmatrix}
      \end{array}\in \mat_\ell(k)
\]
with~$a \in k$.
\end{prop}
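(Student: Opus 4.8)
The plan is to exploit the fact that an irreducible system over $k$ with solvable-by-finite Galois group has a solution that is an interlacing of hypergeometric solutions — a consequence of Theorem~\ref{thm1} (solvability in liouvillian terms iff $\gal(R/k)^0$ solvable) together with the structure results on $e_0R$ versus $R$ from Section~\ref{sec2}. Concretely, since the identity component $\gal(R/k)^0$ is solvable, the associated system $\{\sigma^s(Y)=\mathcal A_sY,\delta(Y)=\mathcal BY\}$ over the domain $\F_0$ is solvable in liouvillian terms in the classical (connected) sense; because its Galois group is connected and solvable, there is a common eigenvector, i.e. a hyperexponential solution $u$ of the associated system over $k$. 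Interlacing $u$ (via Proposition~\ref{prop-interlacing}, or rather its inverse — building a solution of $\{\sigma(Y)=\mathcal AY,\delta(Y)=\mathcal BY\}$ out of a hyperexponential solution of $\{\sigma^\ell(Y)=\mathcal A_\ell Y,\delta(Y)=\mathcal BY\}$, which is exactly what the "From $e_0R$ to $R$" construction preceding Proposition~\ref{prop11} does) yields a nonzero solution $V=(v_1,\dots,v_\ell)^T\in\calS_K^\ell$ of the given system whose entries are, up to the rational prefactor, interlacings of a single hypergeometric sequence.

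Next I would use irreducibility to turn this one solution into a cyclic-vector normal form. The key observation is that if $V=(v_1,\dots,v_\ell)^T$ is a solution and the system is irreducible, then the $\gal$-orbit of $V$ spans the full solution space, so in particular $V,\ \sigma\text{-shifts are not the right notion here}$; instead one argues that $V$ together with its "derivatives/shifts" — more precisely, the $k$-span of the images of $v_1$ under the action producing successive coordinates — has dimension $\ell$. The cleanest route: pick the first nonzero coordinate, call it $h$ (a hypergeometric-type entry, so $\sigma(h)/h=:a\in k$ after clearing the $k$-rational prefactor), and build the change of basis matrix $U\in\mat_\ell(k)$ whose rows express the standard basis in terms of $h,\sigma(h),\dots,\sigma^{\ell-1}(h)$ times rational functions. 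The point is that because $\sigma(h)=a h$ with $a\in k$, the vector $(h,\sigma(h),\dots,\sigma^{\ell-1}(h))^T = h\,(1,a,a\sigma(a),\dots)^T$ satisfies $\sigma(Z)=\bar{\mathcal A}Z$ with $\bar{\mathcal A}$ precisely the companion-type matrix in the statement, where the single free entry $a$ sits in the lower-left corner (this is the standard "cyclic vector" / companion form, but for a rank-one $\sigma$-module generated inside an irreducible module). One must check $U\in\mat_\ell(k)$, i.e. that these $\ell$ vectors are $k$-linearly independent — and this is where irreducibility is used: a nontrivial $k$-linear dependence among $h,\sigma(h),\dots,\sigma^{\ell-1}(h)$ would force $h$ to generate a proper $\sigma\delta$-submodule of the solution space, contradicting irreducibility of $\{\sigma(Y)=\mathcal AY,\delta(Y)=\mathcal BY\}$ over $k$.

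Finally, having $U\in\mat_\ell(k)$ with $\sigma(U)\mathcal A=\bar{\mathcal A}U$ for the displayed $\bar{\mathcal A}$, set $\bar{\mathcal B}=U\mathcal B U^{-1}-\sigma\text{— no}$: the correct gauge transformation for the $\delta$-part under $Z=UY$ is $\bar{\mathcal B}=(\delta(U)+U\mathcal B)U^{-1}\in\gl_\ell(k)$, and integrability of the original system plus $\sigma(U)\mathcal A=\bar{\mathcal A}U$ automatically gives integrability of $\{\sigma(Y)=\bar{\mathcal A}Y,\delta(Y)=\bar{\mathcal B}Y\}$ (no constraint is imposed on the shape of $\bar{\mathcal B}$, which is why the statement only claims $\bar{\mathcal B}\in\gl_\ell(k)$). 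Since $U$ has entries in $k$ and is invertible over $k$, the two systems are equivalent over $k$, as required.

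I expect the main obstacle to be the passage from "$\gal$ solvable-by-finite, system irreducible" to the existence of a hyperexponential solution with the \emph{ratio $a$ lying in $k$ itself} (rather than in the PV extension or in an algebraic extension of $k$). This requires combining the Clifford-type descent already used in Theorem~\ref{thm6}, the fact that over $k=\overline{\CX(t)}(x)$ the $\sigma$-constants are algebraically closed so a common eigenvector of the connected solvable Galois group descends to $k$, and Lemma~\ref{prop5}-style arguments to pin down that the eigenvalue is a genuine rational function; verifying that the resulting matrix is exactly the companion shape displayed (as opposed to merely conjugate to a block form) is the bookkeeping core of the argument.
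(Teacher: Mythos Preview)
Your overall strategy---produce a special solution from solvability, then use irreducibility to force the companion shape---is in the right spirit, but the central computation does not go through as written.

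The gap is here: you build $V$ by interlacing a hyperexponential solution $u$ of the associated $\sigma^s$-system, then pick a coordinate $h$ of $V$ and assert $\sigma(h)/h=:a\in k$. But an interlaced entry does \emph{not} satisfy $\sigma(h)/h\in k$; what one gets from the construction is only $\sigma^\ell(h)/h\in k$ (for the appropriate $\ell$). If instead you had $\sigma(h)=ah$ with $a\in k$, then $h,\sigma(h),\dots,\sigma^{\ell-1}(h)$ would all be $k$-multiples of $h$ and hence $k$-linearly \emph{dependent}, so your independence claim would fail for the opposite reason. Either way, the passage from ``one special scalar entry $h$'' to ``a change-of-basis matrix $U\in\mat_\ell(k)$'' is not justified: the entries $\sigma^i(h)$ are scalars in the PV ring, not $k$-linear functionals on the solution module, so they do not by themselves define rows of a $U$ over $k$.

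The paper's route (deferring to Lemma~4.1 and Theorem~5.1 of \cite{hendrikssinger}) avoids this by arguing at the level of the Galois representation rather than an explicit solution. Two structural facts do the work: (i) by Proposition~\ref{lem3}, $G/G^0$ is \emph{cyclic}; (ii) since $G^0$ is connected solvable (indeed diagonalizable, cf.\ Remark~\ref{rem4}), the solution space restricted to $G^0$ splits into one-dimensional eigenspaces. Irreducibility then forces a generator of $G/G^0$ to permute these lines transitively in a single $\ell$-cycle, and choosing a basis adapted to this cycle yields exactly the matrix $\bar{\mathcal A}$. The $\delta$-part comes along for free via the gauge formula, as you noted. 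If you want to salvage your approach, the object to track is not a scalar $h$ but a $G^0$-eigen\emph{line} $W_1$ in the solution space, together with its images under the cyclic generator; that is what produces $\ell$ independent directions and a transformation defined over $k$.
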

\begin{proof}
The proof is similar to those of Lemma 4.1 and Theorem 5.1
in~\cite{hendrikssinger}.~\end{proof}
\begin{remark}\label{rem4} From the proof of Lemma 4.1 in
\cite{hendrikssinger}, we know that~$\ell$
divides~$|\gal(\R/k)/\gal(\R/k)^0|$
because~$\{\sigma(Y)=\mathcal{A}Y,\delta(Y)=\mathcal{B}Y\}$ is
irreducible over $k$. From the proof of Theorem 5.1 in
\cite{hendrikssinger},~$\gal(\R/k)^0$ is diagonalizable.
\end{remark}
As a consequence of Proposition~\ref{prop2},  we have the following
\begin{cor}\label{cor12}
If a system~$\{\sigma(Y)=\mathcal{A}Y,\delta(Y)=\mathcal{B}Y\}$ with
$\mathcal{A}\in \mat_\ell(k)$ and~$\mathcal{B} \in \gl_\ell(k)$ is
irreducible over $k$ and its Galois group over $k$ is solvable by
finite, then $\{\sigma^\ell(Y)=\mathcal{A}_\ell Y,
\delta(Y)=\mathcal{B}Y\}$ is equivalent over $k$ to
\begin{equation*}%\label{eqn20}
   \sigma^\ell(Y)=\mathcal{D}Y,
   \quad\delta(Y)=\bar{\mathcal{B}}Y
\end{equation*}
where~$\bar{\mathcal B}\in \gl_\ell(k)$
and~$\mathcal{D}=\diag(a,\sigma(a),\cdots,\sigma^{\ell-1}(a))$ with
$a$ as indicated  in Proposition~\ref{prop2}.
\end{cor}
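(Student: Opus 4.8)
The plan is to transport the equivalence supplied by Proposition~\ref{prop2} from the difference operator $\sigma$ to its iterate $\sigma^{\ell}$, and then to compute the resulting $\sigma^{\ell}$-matrix explicitly. By Proposition~\ref{prop2} there is a $T\in\mat_\ell(k)$ such that the substitution $Z=TY$ carries $\{\sigma(Y)=\mathcal{A}Y,\delta(Y)=\mathcal{B}Y\}$ to $\{\sigma(Y)=\bar{\mathcal{A}}Y,\delta(Y)=\bar{\mathcal{B}}Y\}$; concretely $\bar{\mathcal{A}}=\sigma(T)\mathcal{A}T^{-1}$ and $\bar{\mathcal{B}}=\delta(T)T^{-1}+T\mathcal{B}T^{-1}$. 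First I would observe that the very same $T$ yields an equivalence of the iterated systems. From $\mathcal{A}=\sigma(T)^{-1}\bar{\mathcal{A}}T$ one gets $\sigma^{j}(\mathcal{A})=\sigma^{j+1}(T)^{-1}\sigma^{j}(\bar{\mathcal{A}})\sigma^{j}(T)$ for every $j$, and in forming the product $\mathcal{A}_\ell=\sigma^{\ell-1}(\mathcal{A})\cdots\sigma(\mathcal{A})\mathcal{A}$ the inner factors $\sigma^{j}(T)$ telescope, yielding $\mathcal{A}_\ell=\sigma^{\ell}(T)^{-1}\bar{\mathcal{A}}_\ell\,T$, i.e.\ $\bar{\mathcal{A}}_\ell=\sigma^{\ell}(T)\mathcal{A}_\ell T^{-1}$. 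Since the derivation part is untouched, $Z=TY$ transforms $\{\sigma^{\ell}(Y)=\mathcal{A}_\ell Y,\delta(Y)=\mathcal{B}Y\}$ into $\{\sigma^{\ell}(Y)=\bar{\mathcal{A}}_\ell Y,\delta(Y)=\bar{\mathcal{B}}Y\}$.

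It then remains to identify $\bar{\mathcal{A}}_\ell$ with $\mathcal{D}$. I would argue directly on solutions: if $\sigma(Y)=\bar{\mathcal{A}}Y$ with $Y=(y_1,\dots,y_\ell)^T$, then $\sigma(y_j)=y_{j+1}$ for $1\le j\le\ell-1$ and $\sigma(y_\ell)=a\,y_1$, whence $\sigma^{\ell}(y_j)=\sigma^{j-1}(a)\,y_j$ for each $j$; equivalently one may simply multiply out $\sigma^{\ell-1}(\bar{\mathcal{A}})\cdots\sigma(\bar{\mathcal{A}})\bar{\mathcal{A}}$, noting that $\bar{\mathcal{A}}$ sends $e_1\mapsto a\,e_\ell$ and $e_j\mapsto e_{j-1}$ for $j\ge 2$, and tracking the cyclic orbit of each basis vector $e_j$ through the $\ell$ factors back to $\sigma^{j-1}(a)\,e_j$. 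Either way $\bar{\mathcal{A}}_\ell=\diag(a,\sigma(a),\dots,\sigma^{\ell-1}(a))=\mathcal{D}$, which together with the previous paragraph gives the asserted equivalence of $\{\sigma^{\ell}(Y)=\mathcal{A}_\ell Y,\delta(Y)=\mathcal{B}Y\}$ with $\{\sigma^{\ell}(Y)=\mathcal{D}Y,\delta(Y)=\bar{\mathcal{B}}Y\}$.

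I do not expect a genuine obstacle here: once Proposition~\ref{prop2} is in hand, the corollary is essentially bookkeeping. The only point deserving an explicit line is the telescoping identity $\mathcal{A}_\ell=\sigma^{\ell}(T)^{-1}\bar{\mathcal{A}}_\ell\,T$, i.e.\ the fact that the equivalence relation between difference-differential systems used in Proposition~\ref{prop2} is stable under replacing $\sigma$ by $\sigma^{\ell}$; this is the standard cocycle computation for the operator $A\mapsto A_m=\sigma^{m-1}(A)\cdots A$ and causes no difficulty.
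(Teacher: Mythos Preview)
Your argument is correct and is exactly the intended elaboration: the paper states this corollary without proof, as an immediate consequence of Proposition~\ref{prop2}, and your telescoping identity $\bar{\mathcal{A}}_\ell=\sigma^{\ell}(T)\mathcal{A}_\ell T^{-1}$ together with the direct computation $\bar{\mathcal{A}}_\ell=\diag(a,\sigma(a),\dots,\sigma^{\ell-1}(a))$ is precisely what is being left to the reader.
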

Next, we shall prove further that $\{\sigma^\ell(Y)=\mathcal{A}_\ell
Y, \delta(Y)=\mathcal{B}Y\}$ is equivalent over $k$ to a system of
diagonal form. Note that  equivalent systems have the same
Picard-Vessiot extension.
%Suppose that
%$$
%\R=\bar{e}_0\R\oplus \bar{e}_1\R\oplus \cdots \oplus
%\bar{e}_{\upsilon}\R
%$$
%is the decomposition of $\R$. By Remark \ref{rem4} and Corollary
%\ref{cor11},
%$$\bar{\R}=\bar{e}_0\R\oplus \bar{e}_\ell \R \oplus \cdots \oplus \bar{e}_{\upsilon-\ell}\R$$ is
%a $\sigma^\ell \delta$-PV extension of $k$ for the
%system~$\{\sigma^\ell(Y)=\mathcal{D}Y,\,\delta(Y)=\bar{\mathcal{B}}Y\}$
%in Corollary~\ref{cor12}.

%{\highlight \bf (Move a paragraph here into  the proof of
%Proposition~\ref{prop41}.)}
\begin{prop}\label{prop41}
If a system~$\{\sigma(Y)=\mathcal{A}Y,\delta(Y)=\mathcal{B}Y\}$ with
$\mathcal{A}\in \mat_\ell(k)$ and~$\mathcal{B} \in \gl_\ell(k)$ is
irreducible over $k$ and its Galois group over $k$ is solvable by
finite, then~$\{\sigma^\ell(Y)=\mathcal{A}_\ell Y,
\delta(Y)=\mathcal{B}Y\}$ is equivalent over $k$ to
$$
%\{
\sigma^\ell(Y)=\D Y, \quad \delta(Y)=\diag(b_1,\cdots,b_\ell)Y
%\}
$$
with $\D$  as in Corollary~\ref{cor12} and $b_i \in k$
for~$i=1,\dots, \ell$.
\end{prop}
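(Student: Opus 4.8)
The plan is to reduce $\delta$ to diagonal form by exploiting the structure already obtained in Corollary~\ref{cor12} together with the fact, recorded in Remark~\ref{rem4}, that $\gal(\R/k)^0$ is diagonalizable. By Corollary~\ref{cor12} we may assume the associated system has the shape $\sigma^\ell(Y)=\D Y$, $\delta(Y)=\bar{\mathcal B}Y$ with $\D=\diag(a,\sigma(a),\cdots,\sigma^{\ell-1}(a))$, and this system has the same \sd~PV extension $\R$ as the original. First I would pass to the decomposition $\R=e_0\R\oplus\cdots\oplus e_{s-1}\R$ and to the domain $e_0\R$; on the corresponding connected group $\gal(e_0\R/k)=\gal(\R/k)^0$, which is diagonalizable, there is a fundamental matrix with respect to which the action of the Galois group is by diagonal matrices.

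The key step is then a Galois-descent argument: since every element of $\gal(e_0\R/k)$ acts diagonally on this fundamental matrix $U$, each column $U_j$ spans a one-dimensional Galois-invariant subspace, so the logarithmic derivative $\delta(U_j)/U_j$ is coordinatewise fixed by $\gal(e_0\R/k)$ and hence lies in $k$ (after taking quotients, in $\mathcal F_0$, but the entries that matter are in $k$ since the system is defined over $k$). This produces, for the system $\{\sigma^s(Y)=\D_s Y,\delta(Y)=\bar{\mathcal B}Y\}$ obtained by going up to $\sigma^s$, a gauge transformation over $k$ putting $\delta$ into diagonal form $\diag(\tilde b_1,\cdots,\tilde b_\ell)$ while, because $\D$ is already diagonal, leaving the difference part diagonal as well. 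The transition matrix $G$ satisfies $\sigma^s(G)\D_s=\D_s G$ and $G^{-1}\bar{\mathcal B}G-G^{-1}\delta(G)=\diag(\cdots)$; here the diagonal entries of $\D$ are monic-in-leading-term rational functions in $x$ (up to the common scalar), exactly the hypothesis needed to invoke the descent Lemma~\ref{prop5}, which lets one replace the relation at level $\sigma^s$ by one at level $\sigma$ (or $\sigma^\ell$). Applying Lemma~\ref{prop5} entrywise to $G$ shows $\sigma^\ell(G)\mathcal D = \mathcal D\,\sigma^{\ell-1}\!\big(\tfrac{\cdot}{\cdot}\big)\cdots$ collapses to $\sigma^\ell$ level, so $G$ is already a gauge equivalence over $k$ for $\{\sigma^\ell(Y)=\D Y,\delta(Y)=\bar{\mathcal B}Y\}$ itself, yielding the claimed diagonal $\delta$-part $\diag(b_1,\cdots,b_\ell)$.

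The main obstacle I anticipate is bookkeeping between the three levels $\sigma$, $\sigma^\ell$, and $\sigma^s$ (where $s$ comes from the idempotent decomposition of $\R$, and $\ell\mid s$ by Corollary~\ref{cor11} and Remark~\ref{rem4}): one must check that the diagonalization produced on $e_0\R$ for the $\sigma^s$-system descends correctly and that the leading-coefficient hypothesis of Lemma~\ref{prop5} genuinely applies to the relevant entries of the transition matrix — i.e.\ that the ratios $\sigma^i(a)/\sigma^j(a)$ have the normalized Laurent-at-infinity form required. A secondary point is that $\gal(\R/k)^0$ being diagonalizable gives diagonal action only after choosing the right fundamental matrix over $e_0\R$; one must ensure this choice is compatible with the already-fixed shape $\sigma^\ell(Y)=\D Y$, which should follow because $\D$ is diagonal and the eigenlines of a diagonalizable group action are permuted, hence (on the connected component) fixed. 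Once these compatibilities are in place, the rest is the routine entrywise application of Lemma~\ref{prop5} exactly as in the proof of Theorem~\ref{nprop1}.
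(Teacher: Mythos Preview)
Your approach is essentially the paper's: pass to the domain $e_0\R$, use diagonalizability of $\gal(\R/k)^0$ to diagonalize the associated $\sigma^\upsilon\delta$-system over $k$, then descend from level $\sigma^\upsilon$ to level $\sigma^\ell$ via Theorem~\ref{nprop1} (whose proof is exactly the entrywise application of Lemma~\ref{prop5} you describe). The leading-coefficient check you flag is also correct: all diagonal entries of $\D$ are $\sigma$-shifts of $a$, so they share a common leading coefficient at $x=\infty$, which is precisely the hypothesis of Theorem~\ref{nprop1}.

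There is, however, one genuine gap in your sketch. When you diagonalize using the diagonalizable Galois group on $e_0\R$, what you obtain is an equivalence of $\{\sigma^\upsilon(Y)=\tilde\D Y,\ \delta(Y)=\bar{\mathcal B}Y\}$ with some system
\[
\sigma^\upsilon(Y)=\diag(\bar a_1,\dots,\bar a_\ell)Y,\qquad \delta(Y)=\diag(\bar b_1,\dots,\bar b_\ell)Y,
\]
and there is no reason the difference part should come out as $\tilde\D$ itself; you only know it is \emph{some} diagonal matrix. Your proposed fix (``the eigenlines of a diagonalizable group action are permuted, hence on the connected component fixed'') does not address this: the issue is not about the Galois action on eigenlines but about comparing two diagonal difference matrices. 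To apply Theorem~\ref{nprop1} you need the gauge $G$ to satisfy $\sigma^\upsilon(G)\tilde\D=\tilde\D G$, not merely $\sigma^\upsilon(G)\diag(\bar a_j)=\tilde\D G$.

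The paper closes this gap explicitly. From $\sigma^\upsilon(G)\diag(\bar a_j)=\tilde\D G$ one reads off $\sigma^\upsilon(g_{ij})\bar a_j=\bar d_i\,g_{ij}$; since $\det G\neq 0$ there is a permutation $i_1,\dots,i_\ell$ with all $g_{j i_j}\neq 0$, giving $\bar d_j=\frac{\sigma^\upsilon(g_{ji_j})}{g_{ji_j}}\bar a_{i_j}$. Then setting $T=P\cdot\diag(1/g_{1i_1},\dots,1/g_{\ell i_\ell})$ for the corresponding permutation matrix $P$ gives a new gauge that simultaneously keeps $\delta$ diagonal and forces the $\sigma^\upsilon$-part to be exactly $\tilde\D$. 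Only after this adjustment does Theorem~\ref{nprop1} apply to descend to $\sigma^\ell$. You should insert this permutation-times-diagonal correction step; with it, your argument coincides with the paper's.
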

\begin{proof}
By Corollary \ref{cor12}, $\{\sigma^\ell(Y)=\mathcal{A}_\ell Y,
\delta(Y)=\mathcal{B}Y\}$ is equivalent over~$k$
to~$\{\sigma^\ell(Y)=\mathcal{D}Y,
   \delta(Y)=\bar{\mathcal{B}}Y\}$
where~$\bar{\mathcal B}\in \gl_\ell(k)$
and
$$
\mathcal{D}=\diag(a,\sigma(a),\cdots,\sigma^{\ell-1}(a))
$$
with~$a$ as  in Proposition~\ref{prop2}. Let $ \R=\bar{e}_0\R\oplus
\bar{e}_1\R\oplus \cdots \oplus \bar{e}_{\upsilon}\R $ be the
decomposition of~$\R$.
Then~$\gal(\R/k)^0=\gal(\bar{e}_0\R/k)$ by Lemma \ref{lem2}
and~$\gal(\bar{e}_0\R/k)$ is diagonalizable  by Remark~\ref{rem4}.
From Lemma \ref{lem11}, it follows that~$\bar{e}_0\R$ is
a~$\sigma^{\upsilon}\delta$-PV extension of $k$ for the system
\begin{equation}\label{eqn24}
\sigma^{\upsilon}(Y)=\sigma^{\frac{\upsilon}{\ell}-1}(\D)\cdots\sigma(\D)\D
Y,\quad\delta (Y)=\bar{\mathcal{B}}Y.
\end{equation}
Let~$
\tilde{\D}=\sigma^{\frac{\upsilon}{\ell}-1}(\D)\cdots\sigma(\D)\D=\diag(\bar{d}_1,\cdots,\bar{d}_\ell).
$ By Lemma \ref{lem1}, $\bar{e}_0\R$ is a domain. As in the
differential case, we can show that (\ref{eqn24}) is equivalent
over~$k$ to the system
 \begin{equation}\label{eqn67}
\sigma^\upsilon(Y)=\diag(\bar{a}_1,\cdots,\bar{a}_\ell)Y,\quad\delta(Y)=\diag(\bar{b}_1,\cdots,\bar{b}_\ell)Y
\end{equation}
where $\bar{a}_i,\bar{b}_i \in k$. Then there exists $G=(g_{ij}) \in
\mat_\ell(k)$ such that
$$
   \sigma^\upsilon(G)\diag(\bar{a}_1,\cdots,\bar{a}_\ell)=
   \tilde{\D}G,
$$
which implies  $\sigma^\ell(g_{ij})\bar{a}_j=g_{ij}\bar{d}_j.$ Since
$\det(G)\neq 0$, there is a permutation~$i_1,\cdots, i_\ell$ of
$\{1,2,\cdots,\ell\}$ such that $g_{1i_1}g_{2i_2}\cdots g_{\ell
i_\ell}\neq 0$. Hence
$$
\bar{d}_j=\frac{\sigma^\ell(g_{ji_j})}{g_{ji_j}}\bar{a}_{i_j}\quad\mbox{
for $j=1,\cdots,\ell$}.
$$
Let $P$ be a multiplication of some permutation matrices such that
$$P^{-1}\diag(\bar{a}_1,\cdots,\bar{a}_\ell)P=\diag(\bar{a}_{i_1},\cdots,\bar{a}_{i_\ell})$$
and let $T=P\diag(1/g_{1i_1},\cdots, 1/g_{\ell i_\ell}).$ Under the
transformation $Y \rightarrow TY$, the system (\ref{eqn67}), and
therefore (\ref{eqn24}), is equivalent over $k$ to
\begin{equation*}
  \sigma^{\upsilon}(Y)=\tilde{\D}Y,\quad
   \delta(Y)=\diag(b_1,\cdots,b_\ell)Y
\end{equation*}
where $b_i\in k$ for~$i=1,\dots, \ell$. Proposition \ref{nprop1}
implies that the system~$ \{ \sigma^\ell(Y)=\mathcal{D}Y,
   \delta(Y)=\bar{\mathcal{B}}Y
\}$
is equivalent over $k$ to
$$
%\{
\sigma^\ell(Y)=\D Y, \quad \delta(Y)=\diag(b_1,\cdots,b_\ell)Y.
%\}
$$
This concludes the proposition.
\end{proof}
Theorem \ref{thm6} together with Proposition \ref{prop41} leads to
the following
\begin{prop}\label{thm7}
If the system~$\{\sigma(Y)=AY, \delta(Y)=BY\}$ with~$A \in
\mat_n(k_0)$ and~$B \in \gl_n(k_0)$ is irreducible over $k_0$ and
its Galois group over~$k_0$ is solvable by finite, then there
exists~$\ell\in \bZp$ with~$\ell | n$ such that the solution space
of $\{\sigma^\ell(Y)=A_\ell Y, \delta(Y)=BY\}$ has a basis
consisting of the interlacing of hypergeometric solutions over $k$.
\end{prop}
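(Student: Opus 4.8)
The plan is to deduce this by combining Theorem~\ref{thm6} with Proposition~\ref{prop41}. Theorem~\ref{thm6} breaks the system into factors that are irreducible over $k$; Proposition~\ref{prop41} puts the $\ell$-th $\sigma$-iterate of each such factor into fully diagonal form over $k$; a fully diagonal system obviously has a solution basis built by interlacing hypergeometric solutions; and a short bookkeeping argument carries this basis back along the diagonalizing transformation.

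Concretely, since $\{\sigma(Y)=AY,\delta(Y)=BY\}$ is irreducible over $k_0$ with Galois group over $k_0$ solvable by finite, Theorem~\ref{thm6} and the corollary following it give an equivalence over $\hat{k}_0\subseteq k$ to $\{\sigma(Y)=\diag(A_1,\dots,A_d)Y,\ \delta(Y)=\diag(B_1,\dots,B_d)Y\}$, each factor $\{\sigma(Y)=A_iY,\delta(Y)=B_iY\}$ having order $\ell=n/d$, being irreducible over $k$, and having Galois group over $k$ solvable by finite. If $Z=PY$ with $P\in\mat_n(\hat{k}_0)$ realizes this, then $\tilde A_\ell=\sigma^\ell(P)\,A_\ell\,P^{-1}$ whenever $\tilde A=\sigma(P)AP^{-1}$, so the very same $P$ makes $\{\sigma^\ell(Y)=A_\ell Y,\delta(Y)=BY\}$ equivalent over $k$ to $\{\sigma^\ell(Y)=\diag((A_1)_\ell,\dots,(A_d)_\ell)Y,\ \delta(Y)=\diag(B_1,\dots,B_d)Y\}$. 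Now Proposition~\ref{prop41}, applicable to each factor, replaces $\{\sigma^\ell(Y)=(A_i)_\ell Y,\delta(Y)=B_iY\}$ by an equivalent $\{\sigma^\ell(Y)=\D_iY,\ \delta(Y)=\diag(b^{(i)}_1,\dots,b^{(i)}_\ell)Y\}$ with each $\D_i$ diagonal over $k$; reassembling, $\{\sigma^\ell(Y)=A_\ell Y,\delta(Y)=BY\}$ becomes equivalent over $k$, via some $Q\in\mat_n(k)$, to a fully diagonal system $\{\sigma^\ell(Y)=\diag(c_1,\dots,c_n)Y,\ \delta(Y)=\diag(b_1,\dots,b_n)Y\}$ with $c_i,b_i\in k$.

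For this diagonal system I would argue by hand. A \sd PV ring for it embeds in the sequence ring $\calS_K$ of Remark~\ref{remark2.13}, giving a solution basis $e_1h_1,\dots,e_nh_n$ there with $\sigma^\ell(h_i)=c_ih_i$, $\delta(h_i)=b_ih_i$. For each $i$ and $j\in\{0,\dots,\ell-1\}$, the sequence $g_{i,j}$ given by $g_{i,j}(m)=h_i(\ell m+j)$ satisfies $\sigma(g_{i,j})=c_i(\ell x+j)\,g_{i,j}$ and $\delta(g_{i,j})=b_i(\ell x+j)\,g_{i,j}$, hence is a hypergeometric sequence over $k$ since $c_i(\ell x+j)\in k$; moreover $h_i$ is the interlacing of $g_{i,0},\dots,g_{i,\ell-1}$, so that $e_ih_i$ is an interlacing of the hypergeometric solutions $e_ig_{i,j}$ over $k$ (the same computation as in Proposition~\ref{prop-interlacing}). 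Thus the diagonal $\ell$-fold system has a solution basis consisting of interlacings of hypergeometric solutions over $k$.

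It remains to transfer this basis back along $Q$, which is the only step requiring real care. The fact to establish is that left multiplication by $Q^{-1}\in\mat_n(k)$ sends an interlacing of hypergeometric solutions over $k$ to another such: on the residue class $j$ modulo $\ell$ it multiplies a hypergeometric section $Ug$, $U\in k^n$, by the value of $Q^{-1}$ at $\ell m+j$, i.e.\ by the matrix $Q^{-1}(\ell x+j)\in\mat_n(k)$, so $U$ is replaced by $Q^{-1}(\ell x+j)U\in k^n$ and the section stays hypergeometric over $k$. Then $Q^{-1}(e_1h_1),\dots,Q^{-1}(e_nh_n)$ is a solution basis of $\{\sigma^\ell(Y)=A_\ell Y,\delta(Y)=BY\}$ made of interlacings of hypergeometric solutions over $k$, with $\ell=n/d$ a divisor of $n$. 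The only genuine subtlety anywhere is this interaction between the substitution $x\mapsto\ell x+j$ and the interlacing decomposition; all the rest is a direct appeal to Theorem~\ref{thm6}, Proposition~\ref{prop41}, Proposition~\ref{prop-interlacing}, and Remark~\ref{remark2.13}.
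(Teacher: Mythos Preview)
Your proposal is correct and follows essentially the same route as the paper's proof: invoke Theorem~\ref{thm6} and Proposition~\ref{prop41} to reduce $\{\sigma^\ell(Y)=A_\ell Y,\ \delta(Y)=BY\}$ to a diagonal system over~$k$, observe that a diagonal system has a basis of interlacings of hypergeometric solutions, and pull this basis back through the equivalence. The paper states these steps in two sentences; you have unpacked them---in particular the verification that the gauge matrix~$Q^{-1}$ preserves the ``interlacing of hypergeometric vectors'' property via the substitution $x\mapsto\ell x+j$ on each residue class---which the paper leaves implicit.
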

\begin{proof}
By Theorem \ref{thm6} and Proposition \ref{prop41}, there
is~$\ell\in \bZp$ with~$\ell | n$ such that $\{\sigma^\ell(Y)=A_\ell
Y, \delta(Y)=BY\}$ is equivalent over $k$ to a system of diagonal
form. Since the solution space of the latter system has a basis
consisting of the interlacing of hypergeometric solutions over $k$,
so does the solution space of $\{\sigma^\ell(Y)=A_\ell Y,
\delta(Y)=BY\}$.
\end{proof}
\begin{cor}\label{cor3.6}
Let $\calL$ be the ring of liouvillian sequences over $k$.
%(Definition~\ref{liouvseqdef}).
Assume that~$\{\sigma(Y)=AY, \delta(Y)=BY\}$ with $A \in
\mat_n(k_0)$ and~$B \in \gl_n(k_0)$ is irreducible over $k_0$ and
its Galois group over $k_0$ is solvable by finite. Then the solution
space of the system has a basis with entries in~$\calL$.
\end{cor}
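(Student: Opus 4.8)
The plan is to combine Proposition~\ref{thm7}, which supplies for a suitable associated system a basis of solutions that are interlacings of hypergeometric sequences, with Proposition~\ref{prop11}, which lifts solutions in $\calL^n$ of an associated system to solutions in $\calL^n$ of the original one. First I would apply Proposition~\ref{thm7} to $\{\sigma(Y)=AY,\delta(Y)=BY\}$ and obtain $\ell\in\bZp$ with $\ell\mid n$ together with a basis $V_1,\dots,V_m$ of the solution space of $\{\sigma^{\ell}(Y)=A_{\ell}Y,\delta(Y)=BY\}$, each $V_j$ being an interlacing of hypergeometric solutions over $k$. Then I would observe that $V_j\in\calL^n$ for every $j$: a hypergeometric sequence over $k$ lies in $\calL$ by part~(3) of Definition~\ref{liouvseqdef}; since $k\subset L(x)\subset\calL$ and $\calL$ is a ring, any hypergeometric vector $Wh$ with $W\in k^n$ and $h$ hypergeometric over $k$ has entries in $\calL$; an $i$th $m$-interlacing with zeroes of such a vector still has entries in $\calL$ by part~(5); and a finite sum of these stays in $\calL^n$ since $\calL$ is a ring. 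Hence $V_1,\dots,V_m\in\calL^n$.

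Next I would transfer these solutions to $\{\sigma(Y)=AY,\delta(Y)=BY\}$ using the construction of Proposition~\ref{prop11}. Fix $N\in\bZp$ large enough that $A(j),B(j)$ are defined with $\det A(j)\neq0$ for $j\geq N$ and that each $V_j(N)$ is defined, and for any solution $V$ of the associated system let $\Psi(V)$ be the sequence with $\Psi(V)(N)=V(N)$ and $\Psi(V)(j+1)=A(j)\Psi(V)(j)$ for $j\geq N$. By construction $\sigma(\Psi(V))=A\,\Psi(V)$, and because $\delta(V(N))=B(N)V(N)$ the integrability condition relating $A$ and $B$ (as in the discussion preceding Proposition~\ref{prop11}) forces $\delta(\Psi(V))=B\,\Psi(V)$; so $\Psi$ is a $\mathbb{C}$-linear map from the solution space of the associated system into the solution space $\mathcal{V}$ of $\{\sigma(Y)=AY,\delta(Y)=BY\}$. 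Every solution $W$ of the original system is also a solution of the associated one ($\sigma(W)=AW$ implies $\sigma^{\ell}(W)=A_{\ell}W$), so $\mathcal{V}$ is contained in $\operatorname{span}_{\mathbb{C}}\{V_1,\dots,V_m\}$, and $\Psi(W)=W$ for such $W$; hence $\mathcal{V}=\Psi(\operatorname{span}_{\mathbb{C}}\{V_1,\dots,V_m\})=\operatorname{span}_{\mathbb{C}}\{\Psi(V_1),\dots,\Psi(V_m)\}$. Each $\Psi(V_j)$ lies in $\calL^n$ by Proposition~\ref{prop11} (and trivially if $\Psi(V_j)=0$), so $\mathcal{V}$ is spanned by vectors with entries in $\calL$ and therefore has a basis with entries in $\calL$.

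I expect the delicate point to be the passage back from the associated system to the original one --- making sure the lifted vectors really exhaust the full $n$-dimensional solution space $\mathcal{V}$ rather than producing only part of it. This is exactly what the identity $\Psi(W)=W$ for $W\in\mathcal{V}$, combined with $\{V_1,\dots,V_m\}$ being a basis of the (larger) solution space of the associated system, is used for; it also lets one bypass the hypothesis $V_j(N)\neq0$ of Proposition~\ref{prop11}, since a vanishing $\Psi(V_j)$ simply contributes nothing while surjectivity of $\Psi$ still yields all of $\mathcal{V}$. One should also make sure that the various ``solution space'' statements are read inside one common ring of sequences, such as $\calS_K$ of Remark~\ref{remark2.13}, so that the inclusion $\mathcal{V}\subseteq\operatorname{span}\{V_1,\dots,V_m\}$ is meaningful.
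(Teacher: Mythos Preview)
Your proposal is correct and follows the same approach as the paper: invoke Proposition~\ref{thm7} to get a basis of the associated system with entries in~$\calL$, then use Proposition~\ref{prop11} to lift to the original system. The paper's proof is just these two sentences, whereas you have carefully filled in what the paper leaves implicit --- in particular, your surjectivity argument via $\Psi(W)=W$ for $W\in\mathcal{V}$ is exactly the detail needed to justify that the lifted vectors span the full solution space, and your handling of the case $V_j(N)=0$ is a nice touch the paper does not mention.
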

\begin{proof} Proposition~\ref{thm7} implies that there is~$\ell\in \bZp$ such that the solution space
of~$\{\sigma^\ell(Y)=A_\ell Y, \delta(Y)=BY\}$ has a basis with
entries in~$\calL$. The corollary then follows from Proposition
\ref{prop11}.
\end{proof}
Let us turn to a general case {where a
difference-differential system
may be reducible over the base field. %~$k_0$.
If the Galois group over %$k_0$
the base field of the given system is solvable by finite, then the
Galois group over %$k_0$
the base field of each factor is of the same type.} The method in
\cite{hendrikssinger} together with the results in \cite{blw}
implies the following

\begin{prop}\label{solnliouvseq}
If the Galois group  for~$\{\sigma(Y)=AY, \delta(Y)=BY\}$ with~$A
\in \mat_n(k_0)$ and~$B \in \gl_n(k_0)$
%over $k_0$
is solvable by finite, then the solution space of the system has a
basis with entries in $\calL$.
\end{prop}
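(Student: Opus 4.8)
The plan is to induct on the order $n$. The base case, and more generally the case where $\{\sigma(Y)=AY,\delta(Y)=BY\}$ is irreducible over $k_0$, is already Corollary~\ref{cor3.6}, so I would assume the system is reducible over $k_0$. Then, as recalled after the definition of irreducibility, there is $U\in\mat_n(k_0)$ for which $Z=UY$ gives an equivalent system in block lower triangular form
\[
\sigma(Z)=\begin{pmatrix}A_1&0\\A_2&A_3\end{pmatrix}Z,\qquad
\delta(Z)=\begin{pmatrix}B_1&0\\B_2&B_3\end{pmatrix}Z,
\]
with $A_1\in\mat_m(k_0)$, $A_3\in\mat_{n-m}(k_0)$, $B_1\in\gl_m(k_0)$, $B_3\in\gl_{n-m}(k_0)$ and $0<m<n$. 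Since equivalent systems share PV extension and Galois group and $U^{\pm1}$ has entries in $k_0\subseteq k\subseteq\calL$, it suffices to find a basis in $\calL^n$ of the solution space of the triangular system.

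Next I would invoke the induction hypothesis on the two diagonal blocks $\{\sigma(Y_1)=A_1Y_1,\ \delta(Y_1)=B_1Y_1\}$ and $\{\sigma(Y_2)=A_3Y_2,\ \delta(Y_2)=B_3Y_2\}$: both are integrable of order $<n$, and their solution spaces are, respectively, a quotient module and a submodule of the solution space of the triangular system under the Galois action, so their Galois groups are quotients of $\gal(R_0/k_0)$ (using Lemma~\ref{lem8} to move between $k$ and $k_0$ if desired), hence still solvable by finite. This gives fundamental matrices $\Phi_1$ (size $m$) and $\Phi_3$ (size $n-m$) with entries in $\calL$; moreover $\sigma(\det\Phi_3)=\det(A_3)\det\Phi_3$ with $\det A_3\in k_0$ shows $\det\Phi_3,1/\det\Phi_3$ are hypergeometric over $k$, hence in $\calL$ by rule $(3)$ of Definition~\ref{liouvseqdef}, so $\Phi_3^{-1}\in\mat_{n-m}(\calL)$. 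The $n-m$ vectors $\binom{0}{c}$, $c$ a column of $\Phi_3$, are then independent solutions of the triangular system in $\calL^n$.

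The remaining task, and the crux, is to lift each column $w$ of $\Phi_1$ to a solution $\binom{w}{z}$ with $z\in\calL^{n-m}$, i.e.\ to solve $\sigma(z)=A_2w+A_3z$, $\delta(z)=B_2w+B_3z$. Setting $z=\Phi_3u$ turns this into
\[
\sigma(u)-u=f:=\Phi_3^{-1}A_3^{-1}A_2w\in\calL^{n-m},\qquad
\delta(u)=g:=\Phi_3^{-1}B_2w\in\calL^{n-m}.
\]
I would first note, from Proposition~\ref{PVseq} and Remark~\ref{remark2.13}, that the triangular system has a PV extension inside some $\calS_K$, and that projecting a fundamental matrix to its first $m$ rows realizes the quotient onto the top block's solutions; hence $w$ lifts to a solution $\binom{w}{z_0}$ of the triangular system with $z_0$ having entries in that extension, and $u_0:=\Phi_3^{-1}z_0$ solves the two displayed equations. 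Then, coordinatewise, rule $(4)$ of Definition~\ref{liouvseqdef} produces $\tilde u_j\in\calL$ with $\sigma(\tilde u_j)-\tilde u_j=f_j$; since $(u_0)_j$ satisfies the same equation, $(u_0)_j-\tilde u_j$ is $\sigma$-invariant, hence lies in $L\subseteq L(x)\subseteq\calL$, forcing $(u_0)_j\in\calL$. Thus $u_0\in\calL^{n-m}$, $z_0=\Phi_3u_0\in\calL^{n-m}$, and $\binom{w}{z_0}$ is the desired lift; the $m$ lifts together with the $n-m$ vectors $\binom{0}{c}$ form a basis in $\calL^n$, and applying $U^{-1}$ finishes the induction.

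I expect the main obstacle to be precisely this lifting: the summation step $\sigma(u)-u=f$ is absorbed by $\calL$ via rule $(4)$, but one must simultaneously satisfy $\delta(u)=g$, and my way around it is to compare with a genuine solution $u_0$ over a sequence ring $\calS_K$ — whose existence rests on the embedding result Proposition~\ref{PVseq} and thus ultimately on the difference-differential Picard-Vessiot theory of \cite{blw,hardouin-singer} — together with the inclusion $L\subseteq\calL$. The subsidiary points (integrability of the blocks, solvability by finite of their Galois groups as quotients, invertibility of the determinants in $\calL$) are routine.
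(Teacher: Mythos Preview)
Your approach is essentially the paper's: induct on the order, peel off an irreducible block (handled by Corollary~\ref{cor3.6}), and use the summation closure rule~(4) of $\calL$ to handle the off-diagonal piece. The paper does this in one step (two irreducible factors) and applies rule~(4) directly to the entries of $W=U_2^{-1}V$, without introducing your auxiliary $\tilde u_j$; your detour is harmless but unnecessary, since once you know $(u_0)_j\in\calS_L$, rule~(4) already yields $(u_0)_j\in\calL$.

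There is, however, a genuine gap at ``$(u_0)_j-\tilde u_j$ is $\sigma$-invariant, hence lies in $L$''. You embedded the PV extension in $\calS_K$ via Remark~\ref{remark2.13}, where $K$ is a maximal Picard--Vessiot extension of $E$; the $\sigma$-invariants of $\calS_K$ are all of $K$, not $L$, so the conclusion does not follow. (The paper's own proof glosses over the same point when it writes ``the entries of $W$ are in $\calL$'': rule~(4) is stated only for $\mathbf b\in\calS_L$.)

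The fix is to show that the PV extension of the triangular system already embeds in $\calS_L$, so that rule~(4) applies directly. By induction you have $\Phi_1(N),\Phi_3(N)\in\mat(L)$ for large $N$; variation of parameters then gives a fundamental matrix of the single differential equation $\delta(Y)=B(N)Y$ with entries in $L(\int h)$ for some $h\in L^{m(n-m)}$, and since $L$ is a \emph{maximal} liouvillian extension of $E$ with the same constants as $K$, one has $L(\int h)=L$. Now Proposition~\ref{PVseq} with $K$ replaced by $L$ lands the PV extension inside $\calS_L$, so $u_0\in\calS_L^{\,n-m}$ and rule~(4) gives $u_0\in\calL^{\,n-m}$ at once. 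Alternatively, keep your $\tilde u_j$ and use the $\delta$-equation: $\delta\big((u_0)_j-\tilde u_j\big)=g_j-\delta(\tilde u_j)\in\calL$ is $\sigma$-invariant by integrability, hence in $L$; then $(u_0)_j-\tilde u_j\in K$ has derivative in $L$, so by maximality of $L$ it lies in $L$.
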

\begin{proof}
By induction, we only need to prove the proposition for the case
where the given  system has two irreducible factors over $k_0$. In
this case, the given system is equivalent over $k_0$ to
\[
 \sigma(Y) = \left(\begin{array}{cc} A_1 & 0\\ A_3 & A_2\end{array} \right) Y,\hspace{.4in}
\delta(Y) = \left(\begin{array}{cc} B_1 & 0\\ B_3 & B_2\end{array}
\right)Y
\]
where the systems $\{\sigma(Y)=A_iY, \delta(Y)=B_iY\}$ for $i=1,2$
are both irreducible over $k_0$. Let $d_i$ be the order of $A_i$ for
$i=1,2$. By Corollary~\ref{cor3.6}, each system~$\{\sigma(Y)=A_iY,
\delta(Y)=B_iY\}$ has a fundamental matrix~$U_i \in
\mat_{d_i}(\calL)$. From the proof of Theorem 3 in \cite{blw},
Proposition \ref{PVseq} and Remark \ref{remark2.13}, it follows that
the original system has a \sd PV extension $\mathcal{R}$ of $k_0$
which contains entries of the $U_i$'s and, moreover,  has a
fundamental matrix over $\calS_K$  of the form
\[
\begin{array}{cc}
\begin{pmatrix}
     U_1 & 0 \\
     V   & U_2
\end{pmatrix}.
\end{array}
\]
So $\sigma(V)=A_1U_1+A_2V.$ Let $V=U_2W$. Then
$\sigma(W)=W+\sigma(U_2)^{-1}A_1U_1.$ Since $U_i \in
\mat_{d_i}(\calL)$, the entries of $W$ are in $\calL$ and so are the
entries of~$V$.~%Hence the proposition holds.
\end{proof}
\subsection{Normal Forms}\label{sec5}
%Let $k_0$ be the \sd field $\mathbb{C}(t,x)$ with  an
%automorphism~$\sigma:x\mapsto x+1$ and  a
%derivation~$\delta=\frac{d}{dt}$ and let $k$ be its extension field
%$\overline{\mathbb{C}(t)}(x)$.
Assume that a system $ \{ \sigma(Y)=AY, \ \delta(Y)=BY\} $ where $A
\in \mat_n(k_0)$ and~$B \in \gl_n(k_0)$  is irreducible over $k_0$
and that its Galois group over $k_0$ is solvable by finite. Theorem
\ref{thm6} and Proposition \ref{prop41} imply that there
exists~$\ell\in \bZp$
with~$\ell |n$ such that~$%\begin{equation*}\label{mixed-eqn-ell}
\{\sigma^{\ell}(Y)=A_\ell Y,  \delta(Y)=BY\}$ is equivalent over $k$
to a system  of diagonal form. In this section, we will show further
that the above system is equivalent over $k$ to a more special form
{when the order~$n$ of the original system is prime}.

\subsubsection{Normal Forms for General {Systems} %Equations
}
Let us first review some notions and properties concerning rational
solutions of difference equations.
\begin{define}
(cf. \cite[{Definition 6.1}]{hardouin-singer}) Let
{$f=\frac{P}{Q}$ with~$P,Q \in
\overline{\mathbb{C}(t)}[x]$ and~$\gcd(P,Q)=1$}.
%View $f, P,Q$ as the rational functions in $x$.
\begin{enumerate}
     \item [$(1)$]
      The {\rm dispersion} of $Q$, denoted by $\disp(Q)$ is
        $$
          \max\{j \in \bZp |
           Q(\alpha)=Q(\alpha+j)=0\,\, \mbox{for some $\alpha \in
           \overline{\mathbb{C}(t)}$}\}.
        $$
      \item [$(2)$]
        The {\rm polar dispersion} of $f$ is the dispersion of
        $Q$ and denoted  $\pdisp(f)$.
      \item [$(3)$] $f$ is said to be {\rm
standard} with respect to $\sigma^m$, with~$m\in \bZp$, if $\disp(P
\cdot Q)<m$.
   \end{enumerate}
\end{define}
As  in \cite{hardouin-singer}, we have the following
\begin{lemma}
\label{lem10} Assume that $f \in k\setminus \{0\}, a \in
\overline{\mathbb{C}(t)} \setminus \{0\}$ and $m\in \bZp$.
 \begin{enumerate}
   \item [$(1)$]
   There exist $\tilde{f}, \tilde{g} \in k\setminus \{0\}$ such
    that $f=\frac{\sigma^m(\tilde{g})}{\tilde{g}}\tilde{f}$ where
       $\tilde{f}$ is standard with respect to $\sigma^m$.
   \item [$(2)$]
      If $f$ has a pole, then $\pdisp(\sigma^m(f)-af)\geq m.$
  \end{enumerate}
\end{lemma}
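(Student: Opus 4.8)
The two assertions are close cousins of the standard "dispersion" lemmas in the theory of difference equations over $\overline{\mathbb{C}(t)}(x)$; I would mimic the treatment in \cite{hardouin-singer} (from which the statement is advertised to come), adapting bookkeeping from $\overline{\mathbb{C}}$ to $\overline{\mathbb{C}(t)}$. For $(1)$, write $f = P/Q$ with $P,Q \in \overline{\mathbb{C}(t)}[x]$, $\gcd(P,Q)=1$, and set $h = P\cdot Q$. If $\disp(h) < m$ there is nothing to do (take $\tilde g = 1$). Otherwise let $j = \disp(h) \ge m$: there is a root $\alpha$ of $h$ with $\alpha + j$ also a root, and since $h$ splits over $\overline{\mathbb{C}(t)}$ I can collect the residues along the $\sigma$-orbit $\{\alpha, \alpha+1, \dots\}$. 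The idea is to peel off a factor of the form $\sigma^m(\tilde g)/\tilde g$: choosing $\tilde g$ to be a suitable product of shifted linear factors $\prod (x-\beta)$, multiplying $f$ by $\tilde g/\sigma^m(\tilde g)$ strictly decreases the dispersion of the numerator-times-denominator product of the resulting rational function, because it shortens the spans of root-orbits by steps of size $m$. Iterating (the dispersion is a nonnegative integer, so this terminates) yields $\tilde f$ with $\disp(\tilde f_{\mathrm{num}}\cdot \tilde f_{\mathrm{den}}) < m$, and accumulating the factors gives the single $\tilde g$ with $f = \tfrac{\sigma^m(\tilde g)}{\tilde g}\tilde f$. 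The careful point is to check that each peeling step does not create \emph{new} large gaps between poles and zeros — this is the one place the argument needs to be done with care, tracking how $\gcd$-cancellation interacts with the shifts.

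\emph{For $(2)$:} suppose $f$ has a pole, and consider $g := \sigma^m(f) - af$ with $a \in \overline{\mathbb{C}(t)}\setminus\{0\}$ a \emph{constant} with respect to $x$ (i.e.\ $\sigma(a)=a$). Let $\alpha$ be a pole of $f$ of maximal "height" in its $\sigma$-orbit, meaning $\alpha + 1, \alpha + 2, \dots$ are not poles of $f$; such an $\alpha$ exists because $f$ has only finitely many poles. Then $\sigma^m(f)$ has a pole at $\alpha - m$, and $af$ has no pole at $\alpha - m$ (since $a$ is a unit in the local ring at $x = \alpha - m$ and $\alpha - m + m = \alpha$ being... wait — rather: $\alpha$ a pole of $f$ of maximal height means $\alpha+m$ is a pole of $\sigma^{-m}$-shift; let me instead pick $\alpha$ minimal in its orbit). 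Concretely: let $\alpha$ be a pole of $f$ such that $\alpha - 1, \alpha - 2, \dots$ are \emph{not} poles of $f$ (minimal in orbit), and let $\beta$ be a pole of $f$ such that $\beta + 1, \beta + 2, \dots$ are not poles of $f$ (maximal in orbit), in the same orbit. Then $\sigma^m(f)$ has a pole at $\alpha - m$ which is not cancelled by $af$ (as $af$ has no pole there), so $\alpha - m$ is a pole of $g$; and $af$ has a pole at $\beta$ which is not cancelled by $\sigma^m(f)$ (which has no pole at $\beta$ since $\beta + m$ is not a pole of $f$), so $\beta$ is a pole of $g$. Since $\beta - (\alpha - m) = (\beta - \alpha) + m \ge m$ and both $\alpha - m$ and $\beta$ lie in one $\sigma$-orbit of poles of the denominator of $g$, we get $\pdisp(g) \ge m$.

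\emph{Main obstacle.} The genuinely delicate step is the termination/correctness of the reduction in $(1)$: one must verify that the factor $\sigma^m(\tilde g)/\tilde g$ can always be chosen so that after multiplication and after re-reducing the fraction to lowest terms, the relevant dispersion strictly drops, rather than merely shifting the offending orbit around. I would handle this by working orbit-by-orbit: within a fixed $\sigma$-orbit of $\overline{\mathbb{C}(t)}$, the zeros and poles of $f$ are recorded by an integer-indexed multiplicity function of finite support, and multiplication by $\sigma^m(\tilde g)/\tilde g$ shifts a chosen block of this support by $-m$; choosing the block at the top of the support strictly shrinks the support diameter, and since $\disp$ is the maximum of these diameters over finitely many orbits, it drops after finitely many steps. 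Part $(2)$ is essentially a finite "which pole survives the cancellation" argument and should be routine once the maximal/minimal-in-orbit poles are isolated; the only thing to keep in mind is that $a$ being an $x$-constant (an element of $\overline{\mathbb{C}(t)}$) is exactly what guarantees it neither introduces nor cancels poles in $x$.
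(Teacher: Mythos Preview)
Your proposal is correct and follows essentially the same route as the paper, which simply defers to Lemma~6.2 of \cite{hardouin-singer}; your sketch \emph{is} that standard dispersion argument (orbit-by-orbit reduction for~(1), extremal-pole-in-orbit survival for~(2)) carried out over $\overline{\mathbb{C}(t)}$ rather than $\overline{\mathbb{C}}$. The self-correction in~(2) lands on the right formulation: with $\alpha$ minimal and $\beta$ maximal in the same $\mathbb{Z}$-orbit of poles of $f$, the poles of $\sigma^m(f)-af$ at $\alpha-m$ and at $\beta$ give $\pdisp\ge(\beta-\alpha)+m\ge m$, and the hypothesis $a\in\overline{\mathbb{C}(t)}\setminus\{0\}$ is used exactly where you say.
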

\begin{proof} The proof is similar to that {of Lemma 6.2} in
\cite{hardouin-singer}.
\end{proof}

\begin{prop}
\label{prop6} Let $0\neq a,b \in k$  satisfy
$\sigma^m(b)-b=\frac{\delta(a)}{a}$ where $m\in \bZp$. Then
$$
a=\frac{\sigma^m(f)}{f}\alpha(x)\beta(t)\quad\mbox{and}\quad
b=\frac{\delta(f)}{f}+\frac{\delta(\beta(t))}{m \beta(t)}x+c
$$
where $f \in k, c,\beta(t) \in \overline{\mathbb{C}(t)}$, and
$\alpha(x) \in \mathbb{C}(x)$ is standard with respect to
$\sigma^m$.
\end{prop}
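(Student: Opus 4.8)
\emph{Reduction to a standard $a$.} By Lemma~\ref{lem10}(1) I would first write $a=\frac{\sigma^m(g)}{g}\,\tilde a$ with $g,\tilde a\in k\setminus\{0\}$ and $\tilde a$ standard with respect to $\sigma^m$. Since $\sigma$ and $\delta$ commute, $\frac{\delta(\sigma^m(g))}{\sigma^m(g)}=\sigma^m\!\bigl(\frac{\delta(g)}{g}\bigr)$, so
$$\frac{\delta(a)}{a}=\sigma^m\!\Bigl(\frac{\delta(g)}{g}\Bigr)-\frac{\delta(g)}{g}+\frac{\delta(\tilde a)}{\tilde a}.$$
Setting $b_1:=b-\frac{\delta(g)}{g}$, the hypothesis becomes $\sigma^m(b_1)-b_1=\frac{\delta(\tilde a)}{\tilde a}$. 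It then suffices to determine the shape of $\tilde a$ and of $b_1$, because at the end $f:=g$ will give $a=\frac{\sigma^m(f)}{f}\tilde a$ and $b=\frac{\delta(f)}{f}+b_1$.

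\emph{The key step: a residue count along $\sigma^m$-orbits.} Write $\tilde a=P/Q$ with $P,Q\in\overline{\mathbb{C}(t)}[x]$ coprime, and split $P,Q$ into linear factors over $\overline{\mathbb{C}(t)}$. Then $h:=\frac{\delta(\tilde a)}{\tilde a}=\frac{\delta(P)}{P}-\frac{\delta(Q)}{Q}$, viewed as a rational function of $x$, has only simple poles, located exactly at the roots of $PQ$, and the residue at a root $\rho$ is a nonzero integer multiple of $\delta(\rho)$; in particular it is nonzero precisely when $\delta(\rho)\neq 0$, i.e.\ when $\rho\notin\mathbb{C}$ (for $\rho\in\overline{\mathbb{C}(t)}$ one has $\delta(\rho)=0$ iff $\rho$ is algebraic over $\mathbb{C}$, hence iff $\rho\in\mathbb{C}$). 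Now for any coset $C=\tau_0+m\mathbb{Z}\subset\overline{\mathbb{C}(t)}$, the sum of the residues of an element $r\in k$ over its poles lying in $C$ is unchanged when $r$ is replaced by $\sigma^m(r)$: the map $x\mapsto x+m$ permutes the poles of $r$ within $C$ and fixes each residue (residues lie in $\overline{\mathbb{C}(t)}$, on which $\sigma^m$ is the identity). Applying this to $b_1$ and using $h=\sigma^m(b_1)-b_1$ shows that the residue sum of $h$ over every coset $C$ is $0$. Since $\tilde a$ is standard, $\disp(PQ)<m$, so each coset contains at most one root of $PQ$; hence the single residue coming from that root vanishes, i.e.\ every root of $PQ$ lies in $\mathbb{C}$. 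Therefore $P=c_P(t)\hat P(x)$, $Q=c_Q(t)\hat Q(x)$ with $\hat P,\hat Q\in\mathbb{C}[x]$ coprime, so $\tilde a=\beta(t)\alpha(x)$ with $\beta:=c_P/c_Q\in\overline{\mathbb{C}(t)}$ and $\alpha:=\hat P/\hat Q\in\mathbb{C}(x)$; as $\hat P\hat Q$ and $PQ$ have the same roots, $\alpha$ is again standard with respect to $\sigma^m$.

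\emph{Finishing.} Since $\delta(\alpha)=0$, we get $h=\frac{\delta(\beta)}{\beta}\in\overline{\mathbb{C}(t)}$, so $\sigma^m(b_1)-b_1=\frac{\delta(\beta)}{\beta}$. As $\sigma^m$ fixes $\frac{\delta(\beta)}{\beta}$, the element $b_1-\frac{\delta(\beta)}{m\beta}x$ is fixed by $\sigma^m$, and the $\sigma^m$-invariants of $k=\overline{\mathbb{C}(t)}(x)$ are exactly $\overline{\mathbb{C}(t)}$ (a nonconstant rational function of $x$ cannot be periodic). Hence $b_1=\frac{\delta(\beta)}{m\beta}x+c$ for some $c\in\overline{\mathbb{C}(t)}$. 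Taking $f:=g$ and unwinding the reduction gives $a=\frac{\sigma^m(f)}{f}\,\alpha(x)\beta(t)$ and $b=\frac{\delta(f)}{f}+\frac{\delta(\beta)}{m\beta}x+c$, which is the asserted form.

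\emph{Where the difficulty lies.} The crux — and the only place the standardness hypothesis enters — is the orbit-wise residue argument: a priori $\frac{\delta(\tilde a)}{\tilde a}$ may have poles at roots of $PQ$ that vary with $t$, and vanishing of residue sums over $\sigma^m$-orbits only forces cancellation \emph{within} each orbit; standardness forbids this by putting at most one pole of $PQ$ in each orbit, which is exactly what pins down $\tilde a=\beta(t)\alpha(x)$. The remaining ingredients — the normalization from Lemma~\ref{lem10}, separating the $\mathbb{C}(x)$- and $\overline{\mathbb{C}(t)}$-parts, and solving the scalar equation $\sigma^m(b_1)-b_1=\frac{\delta(\beta)}{\beta}$ — are routine.
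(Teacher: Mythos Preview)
Your proof is correct, and it follows the same overall skeleton as the paper: reduce via Lemma~\ref{lem10}(1) to a standard $\tilde a$, show that $\delta(\tilde a)/\tilde a$ forces $\tilde a=\alpha(x)\beta(t)$, and then read off $b_1$. The key step, however, is handled differently.

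The paper argues by polar dispersion: since $\tilde a$ is standard, $\pdisp\!\bigl(\delta(\tilde a)/\tilde a\bigr)<m$; if $\delta(\tilde a)/\tilde a\notin\overline{\mathbb{C}(t)}$ then $b_1$ has a pole, and Lemma~\ref{lem10}(2) gives $\pdisp(\sigma^m(b_1)-b_1)\ge m$, a contradiction. From $\delta(\tilde a)/\tilde a=w(t)$ one then integrates to get $\tilde a=\alpha(x)\beta(t)$, and the form of $b_1$ is obtained by writing it as a polynomial in $x$ and comparing leading coefficients. Your route is a residue count along $\sigma^m$-orbits: invariance of the residue sum under $\sigma^m$ forces the orbit-wise residue sum of $\delta(\tilde a)/\tilde a$ to vanish, and standardness isolates a single pole per orbit, so each residue $-n_\rho\,\delta(\rho)$ is zero and every root of $PQ$ already lies in $\mathbb{C}$. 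This gives $\tilde a=\alpha(x)\beta(t)$ directly, without the ODE step, and your finishing observation that $b_1-\tfrac{\delta(\beta)}{m\beta}x$ is $\sigma^m$-invariant (hence in $\overline{\mathbb{C}(t)}$) replaces the coefficient comparison. Both arguments exploit standardness in the same place, but yours is self-contained and bypasses Lemma~\ref{lem10}(2), while the paper's version plugs into the dispersion machinery already set up there.
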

\begin{proof}
Let $a=\frac{\sigma^m(f)}{f}\hat{a}$ with $\hat{a}$ standard with
respect to $\sigma^m$ and $\hat{b}=b-\frac{\delta{(f)}}{f}$. Then
$\sigma^m(\hat{b})-\hat{b}=\frac{\delta(\hat{a})}{\hat{a}}$. View
$\hat{a}$ and~$\hat{b}$ as rational functions in~$x$. Then
$\mbox{pdisp}(\frac{\delta(\hat{a})}{\hat{a}})<m$. If
$\frac{\delta(\hat{a})}{\hat{a}} \notin \overline{\mathbb{C}(t)}$,
then $\frac{\delta(\hat{a})}{\hat{a}}$ has a pole and so
does~$\hat{b}$. By Lemma \ref{lem10},
$\mbox{pdisp}(\sigma^m(\hat{b})-\hat{b})\geq m$, a contradiction.
Hence~$\frac{\delta(\hat{a})}{\hat{a}}=w(t) \in
\overline{\mathbb{C}(t)}$, which means
that~$\hat{a}=\alpha(x)e^{\int w(t)dt}$. Since $\hat{a} \in k$,
$\hat{a}$ is of the form~$\alpha(x)\beta(t)$ where $\alpha(x) \in
\mathbb{C}(x)$ and $\beta(t) \in \overline{\mathbb{C}(t)}$. Then
$\hat{b} \in \overline{\mathbb{C}(t)}[x]$. Suppose that
$\hat{b}=c_nx^n+c_{n-1}x^{n-1}+\cdots+c_0$ where $c_i
\in\overline{\mathbb{C}(t)}$ and $c_n \neq 0$. Then
$$
\sigma^m(\hat{b})-\hat{b}=nm c_nx^{n-1}+\cdots=\frac{\delta(\hat{a})}{\hat{a}}=\frac{\delta(\beta(t))}{\beta(t)}.
$$
So $n=1$ and $\hat{b}=\frac{\delta(\beta(t))}{m \beta(t)}x+c_0$.
\end{proof}

\begin{theorem}\label{thm4}
If~$\{\sigma(Y)=AY, \delta(Y)=BY\}$ with $A \in \mat_n(k_0)$ and~$ B
\in \gl_n(k_0)$  is irreducible over~$k_0$ and its Galois group
over~$k_0$ is solvable by finite, then there exists~$\ell\in \bZp$
with $\ell |n$ such that the system
$$%\{
\sigma^{\ell}(Y)=A_\ell Y,
\quad \delta(Y)=BY%\}
$$
is equivalent over $k$ to
\begin{equation}
   \begin{cases}
    \sigma^\ell(Y)=\diag(\Lambda(x)\beta_1(t),\Lambda(x)\beta_2(t),\cdots,\Lambda(x)\beta_m(t))Y,\medskip\\
    \delta(Y)=\diag\left(\frac{\delta(\beta_1(t))}{\ell\beta_1(t)}xI_\ell
    +C_1,\cdots,\frac{\delta(\beta_m(t))}{\ell\beta_m(t)}xI_\ell
    +C_m)\right)Y
   \end{cases}\label{normaleqns}
  \end{equation}
where $\Lambda(x)=\diag(\alpha(x),\cdots,\alpha(x+\ell-1)),
C_1=\diag(c_1,\cdots,c_\ell)$ and $m\ell{=}n$. Moreover,
$\alpha(x)\in \mathbb{C}(x)$ is standard with respect to
$\sigma^\ell$, $\beta_i(t),c_i \in {\overline{\mathbb{C}(t)}}$,
{and there exists~$g_i$ in the Galois group of the
original system %$ \{ \sigma(Y)=AY, \delta(Y)=BY\} $
over $k_0$ such
that~$\beta_i(t)=g_i(\beta_1(t))$ and~$C_i=g_i(C_1)$.}
\end{theorem}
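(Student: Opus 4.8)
The plan is to reduce the system by Theorem~\ref{thm6}, normalise one block with Propositions~\ref{prop41} and~\ref{prop6} plus a diagonal gauge transformation, and then transport that normal form to the remaining blocks by Galois conjugation.

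First I would apply Theorem~\ref{thm6}: since $\{\sigma(Y)=AY,\delta(Y)=BY\}$ is irreducible over $k_0$ with solvable-by-finite Galois group, it is equivalent over $\hat{k}_0=F_0\cap k$ to $\{\sigma(Y)=\diag(A_1,\dots,A_m)Y,\ \delta(Y)=\diag(B_1,\dots,B_m)Y\}$, where each block $\{\sigma(Y)=A_iY,\delta(Y)=B_iY\}$ is irreducible over $k$ of order $\ell=n/m$, has solvable-by-finite Galois group over $k$ (by the corollary following Theorem~\ref{thm6}), and satisfies $A_i=g_i(A_1)$, $B_i=g_i(B_1)$ for suitable $g_i\in\gal(R_0/k_0)$. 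A gauge matrix $G$ with $\sigma(G)A=\tilde A G$ also satisfies $\sigma^\ell(G)A_\ell=\tilde A_\ell G$, so the same change of variables carries $\{\sigma^\ell(Y)=A_\ell Y,\delta(Y)=BY\}$ over $\hat{k}_0$ to the block-diagonal system with blocks $\{\sigma^\ell(Y)=(A_i)_\ell Y,\delta(Y)=B_iY\}$. Thus it suffices to bring the first block into the shape of one block of \eqref{normaleqns} and then apply $g_i$.

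For the first block, Proposition~\ref{prop41} gives a gauge transformation making it $\{\sigma^\ell(Y)=\D Y,\ \delta(Y)=\diag(b_1,\dots,b_\ell)Y\}$ with $\D=\diag(a,\sigma(a),\dots,\sigma^{\ell-1}(a))$ and $a\neq0$. Integrability forces $\sigma^\ell(b_j)-b_j=\sigma^{j-1}(\delta(a)/a)$ for every $j$; taking $j=1$ and invoking Proposition~\ref{prop6} with $m=\ell$ gives $a=\tfrac{\sigma^\ell(f)}{f}\alpha(x)\beta_1(t)$ and $b_1=\tfrac{\delta(f)}{f}+\tfrac{\delta(\beta_1(t))}{\ell\beta_1(t)}x+c_1$, with $\alpha(x)\in\CX(x)$ standard for $\sigma^\ell$ and $\beta_1(t),c_1\in\overline{\CX(t)}$. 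Moreover $b_j-\sigma^{j-1}(b_1)$ is $\sigma^\ell$-invariant, hence in $\overline{\CX(t)}$, say $b_j=\sigma^{j-1}(b_1)+\gamma_j$. Applying the gauge transformation $Y=FZ$ with $F=\diag(f,\sigma(f),\dots,\sigma^{\ell-1}(f))$, and using $\sigma^{j-1}(\alpha(x))=\alpha(x+j-1)$ together with the $\sigma$-invariance of $\beta_1(t)$, the new $\sigma^\ell$-matrix is $\Lambda(x)\beta_1(t)$ with $\Lambda(x)=\diag(\alpha(x),\dots,\alpha(x+\ell-1))$, while the new $\delta$-matrix is $\tfrac{\delta(\beta_1(t))}{\ell\beta_1(t)}\diag(x,x+1,\dots,x+\ell-1)+\diag(c_1+\gamma_1,\dots,c_1+\gamma_\ell)$; absorbing $\tfrac{\delta(\beta_1(t))}{\ell\beta_1(t)}\diag(0,1,\dots,\ell-1)$ into the constant term yields $\tfrac{\delta(\beta_1(t))}{\ell\beta_1(t)}xI_\ell+C_1$ with $C_1\in\gl_\ell(\overline{\CX(t)})$ diagonal — the first block of \eqref{normaleqns}. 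Finally I would apply $g_i$ to the composite gauge matrix: $g_i$ commutes with $\sigma$ and $\delta$ and fixes $\CX(x)\subset k_0$ (hence $\alpha$, $\Lambda(x)$ and the integers), while $g_i((A_1)_\ell)=(A_i)_\ell$ and $g_i(B_1)=B_i$; so it carries the $i$-th block to the system with $\sigma^\ell$-matrix $\Lambda(x)\beta_i(t)$ and $\delta$-matrix $\tfrac{\delta(\beta_i(t))}{\ell\beta_i(t)}xI_\ell+C_i$, where $\beta_i=g_i(\beta_1)$ and $C_i=g_i(C_1)$. Reassembling the blocks gives \eqref{normaleqns}.

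The main obstacle I expect is the legitimacy of the final conjugation: the normal-form data $a,f,\beta_1(t),C_1$ and the gauge matrices must lie in a ring on which the $g_i\in\gal(R_0/k_0)$ act. I would handle this by performing the reductions of Propositions~\ref{prop2}, \ref{prop41} and~\ref{prop6} for the first block over $\hat{k}_0=F_0\cap k$ rather than over $k$ — using that irreducibility over $k$ implies irreducibility over $\hat{k}_0$, so that a cyclic vector, and hence the companion and diagonal forms, can be taken over $\hat{k}_0$ — whence $a,f\in\hat{k}_0$ and $\beta_1(t),c_j,\gamma_j\in\hat{k}_0\cap\overline{\CX(t)}$, a subfield on which $g_i|_{\hat{k}_0}$ is defined (see the proof of Proposition~\ref{prop10}); some care is also needed because the $\sigma$-constants of $\hat{k}_0$ are not algebraically closed, so the difference-equation results underlying Propositions~\ref{prop2} and~\ref{prop41} must be used in a form that does not require this.
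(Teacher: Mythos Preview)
Your approach coincides with the paper's: invoke Theorem~\ref{thm6}, normalise one $\ell\times\ell$ block via Propositions~\ref{prop41} and~\ref{prop6} and the gauge $F=\diag(f,\sigma(f),\dots,\sigma^{\ell-1}(f))$, then carry the result to the remaining blocks by the $g_i$. The paper is terser about the $b_j$ (it writes the final formula $b_i=\tfrac{\delta(\sigma^{i-1}f)}{\sigma^{i-1}f}+\tfrac{\delta(\beta_1)}{\ell\beta_1}x+c_i$ directly, without isolating your $\gamma_j$), but the content is identical.

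Your flagged obstacle has a simpler fix than redoing Propositions~\ref{prop2}--\ref{prop41} over $\hat{k}_0$. The proof of Lemma~\ref{lem32} observes that every field automorphism of $k=\overline{\CX(t)}(x)$ over $k_0$ is already a $\sigma\delta$-automorphism, and $k/k_0$ is normal (it is the compositum of $k_0$ with the normal extension $\overline{\CX(t)}/\CX(t)$). Hence $g_i|_{\hat{k}_0}$ extends to a $\sigma\delta$-automorphism $\bar g_i$ of $k$, and you may apply $\bar g_i$ directly to the gauge matrix and to the data $a,f,\beta_1,C_1$ obtained over $k$; since $\bar g_i$ fixes $\CX(x)\subset k_0$, the factor $\alpha(x)$ and hence $\Lambda(x)$ are preserved, and the $i$-th block acquires the normal form with $\beta_i=\bar g_i(\beta_1)$, $C_i=\bar g_i(C_1)$. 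This avoids any appeal to difference-equation arguments over a field whose $\sigma$-constants are not algebraically closed.
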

\begin{proof}
By Theorem \ref{thm6}, it suffices to  prove the theorem for a
factor of the given system. Let $\{\sigma(Y)=\mathcal{A}Y,
\delta(Y)=\mathcal{B}Y\}$ be such a factor  with~$\mathcal{A}\in
\mat_\ell(k)$ and~$\mathcal{B}\in \gl_\ell(k)$. By Proposition
\ref{prop41},
 $\{\sigma^\ell(Y)=\mathcal{A}_\ell Y, \delta(Y)=\mathcal{B}Y\}$ is equivalent over
$k$ to
$$%\begin{equation}\label{eqn68}
 %\{
  \sigma^{\ell}(Y)=\D Y,\quad
   \delta(Y)=\diag(b_1,\cdots,b_\ell)Y
   %\}
%\end{equation}
$$
where $\D$ is as in Corollary~\ref{cor12} and $b_i \in k$
for~$i=1,\dots, \ell$. Since $\sigma^\ell$ and $\delta$ commute, we
have~$ \sigma^\ell(b_1)-b_1=\frac{\delta(a)}{a}.$
 By
Proposition \ref{prop6}, we have
$$
a=\frac{\sigma^\ell(f)}{f}\alpha(x)\beta_1(t)\quad\mbox{and}\quad
b_1=\frac{\delta(f)}{f}
+\frac{\delta(\beta_1(t))}{\ell\beta_1(t)}x+c_1
$$
where $\alpha(x) \in \mathbb{C}(x)$ is standard with respect to
$\sigma^\ell$, $c_1,\beta_1(t) \in \overline{\mathbb{C}(t)}$ and $f
\in k$. Then for $i=1,\cdots,\ell$,
$$
\sigma^{i-1}(a)=\frac{\sigma(\sigma^{i-1}(f))}{\sigma^{i-1}(f)}\alpha(x+i-1)\beta_1(t)
\,\,\mbox{ and }\, \,
b_i=\frac{\delta(\sigma_{i-1}(f))}{\sigma^{i-1}(f)}+\frac{\delta(\beta_1(t))}{\ell\beta_1(t)}x+c_i.
$$
 Let
$F=\diag(f,\sigma(f),\cdots,\sigma^{\ell-1}(f))$. Then the system
$$
%\{
\sigma^{\ell}(Y)=\D Y,\quad
   \delta(Y)=\diag(b_1,\cdots,b_\ell)Y
%\}
$$
 is equivalent over $k$ to
$$
   \sigma^\ell(Y)=\Lambda(x)\beta_1(t)Y, \quad
   \delta(Y)=\left(\frac{\delta(\beta_1(t))}{\ell\beta_1(t)}xI_\ell+C_1\right)Y
$$
under the transformation $Y\rightarrow FY$ where
$\Lambda(x)=\diag(\alpha(x),\cdots,\alpha(x+\ell-1))$ and
$C_1=\diag(c_1,\cdots,c_\ell).$ %So the theorem holds.
\end{proof}

\subsubsection{Normal Forms for {Systems} %Equations
 of Prime Order}
{If a difference-differential system is of prime order~$n$, then the
integer~$\ell$ in Theorem \ref{thm4} equals either~$1$ or~$n$. For
the case where the system is reducible over~$k$, we can refine
Theorem~\ref{thm4} further in the following}
%If $
% \{  \sigma(Y)=AY, \delta(Y)=BY\}$
%with~$A \in \mat_n(k_0)$ and~$ B \in \gl_n(k_0)$
% is reducible over $k$, then $\ell=1$ and we
%have the following refinement.
\begin{prop}\label{cor4}
{Assume that~$n$ is a prime number. Suppose that the
system~$
 \{  \sigma(Y)=AY, \delta(Y)=BY\}$   with~$A \in \mat_n(k_0)$ and~$
B \in \gl_n(k_0)$  is irreducible over $k_0$ and reducible over~$k$
and that its Galois group of is solvable by finite.  Then the system
is equivalent over~$k$ to}
\begin{equation}\label{norm1}
\begin{cases}
    \sigma(Y)=\alpha(x)\diag(\beta_1(t),\beta_2(t),\cdots,\beta_n(t))Y,\medskip \\
    \delta(Y)=\diag\left(\frac{\delta(\beta_1(t))}{\beta_1(t)}x
    +c_1,\cdots,\frac{\delta(\beta_n(t))}{\beta_n(t)}x
    +c_n)\right)Y
   \end{cases}
  \end{equation}
where $\alpha(x) \in \mathbb{C}(x)$ is standard with respect to
$\sigma$,
%$\beta_i(t),c_i \in {\overline{\mathbb{C}(t)}}$, and
$ \beta_i(t)=g_i(\beta_1(t))\in {\overline{\mathbb{C}(t)}} $ and~$
c_i=g_i(c_1)\in {\overline{\mathbb{C}(t)}} $ for some~$g_i$ in the
Galois group of the original system over $k_0$.
\end{prop}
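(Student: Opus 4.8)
The plan is to obtain the claimed normal form \eqref{norm1} as the degenerate case $\ell=1$ of Theorem~\ref{thm4}; the only substantive point is to check that, under the present hypotheses, one may take $\ell=1$.

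First I would apply Theorem~\ref{thm6}. Since $\{\sigma(Y)=AY,\delta(Y)=BY\}$ is irreducible over $k_0$ and its Galois group over $k_0$ is solvable by finite, the system is equivalent over $\hat k_0 = F_0\cap k$ to a block-diagonal system $\sigma(Y)=\diag(A_1,\dots,A_d)Y$, $\delta(Y)=\diag(B_1,\dots,B_d)Y$ whose $d$ blocks each have order $\ell=n/d$ and are irreducible over $k$. A block-diagonal system with $d\geq 2$ blocks is reducible over $k$, while a single block ($d=1$) is irreducible over $k$; hence the assumption that the original system is reducible over $k$ forces $d\geq 2$. As $d$ divides the prime $n$, we get $d=n$ and $\ell=1$, i.e. each factor over $k$ is a scalar system $\sigma(y)=ay$, $\delta(y)=by$ with $a,b\in k$.

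Next I would invoke Theorem~\ref{thm4} with this value $\ell=1$. Since $A_1=A$ and $\sigma^1=\sigma$, the auxiliary system $\{\sigma^{\ell}(Y)=A_\ell Y,\delta(Y)=BY\}$ coincides with the original system, and Theorem~\ref{thm4} asserts it is equivalent over $k$ to \eqref{normaleqns}. Specializing \eqref{normaleqns} to $\ell=1$: $\Lambda(x)=\diag(\alpha(x))=\alpha(x)$, the constraint $m\ell=n$ becomes $m=n$, each $C_i=\diag(c_i)$ is the scalar $c_i$, and the $i$-th diagonal block $\frac{\delta(\beta_i(t))}{\ell\beta_i(t)}xI_\ell+C_i$ of the $\delta$-matrix reduces to the scalar $\frac{\delta(\beta_i(t))}{\beta_i(t)}x+c_i$. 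This is exactly \eqref{norm1}. The remaining assertions — that $\alpha(x)\in\mathbb{C}(x)$ is standard with respect to $\sigma^\ell=\sigma$, that $\beta_i(t),c_i\in\overline{\mathbb{C}(t)}$, and that $\beta_i(t)=g_i(\beta_1(t))$ and $c_i=g_i(c_1)$ for suitable $g_i$ in the Galois group of the original system over $k_0$ — are precisely the corresponding conclusions of Theorem~\ref{thm4}, read off in the case $\ell=1$ (where $C_i=c_i$).

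The only place requiring attention, and it is minor, is the step that forces $\ell=1$: one must be sure that ``reducible over $k$'' is equivalent to having $d\geq 2$ factors in the decomposition of Theorem~\ref{thm6}, which is immediate since those factors are each irreducible over $k$. Everything else is a direct transcription of Theorem~\ref{thm4}.
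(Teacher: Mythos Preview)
Your proposal is correct and follows exactly the paper's approach: the paper presents Proposition~\ref{cor4} without a separate proof, having just remarked that for prime $n$ the integer $\ell$ in Theorem~\ref{thm4} is either $1$ or $n$, and that the reducible-over-$k$ case corresponds to $\ell=1$. Your write-up simply makes this explicit, using Theorem~\ref{thm6} to pin down $\ell=1$ and then specializing Theorem~\ref{thm4}.
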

{Before discussing the other case where a
difference-differential system is irreducible over $k$, let us look
at the following}
\begin{lemma}\label{lem9}
Assume that $\sigma(Y)=AY$ with $A \in \mat_n(k_0)$ is equivalent
over $k$ to $\sigma(Y)=\bar{A}Y$ where
\[\begin{array}{cccc}
\bar{A} =&
\begin{pmatrix}
  0 & 1 & 0 & \cdots & 0 \\
  0 & 0 & 1 & \cdots & 0 \\
  \vdots & \vdots & \vdots & \vdots & \vdots \\
  0 & 0 & 0 & \cdots & 1 \\
  \beta(t)\alpha(x) & 0 & 0 & \cdots & 0
\end{pmatrix},
\end{array}\]
with~$\alpha(x) \in \mathbb{C}(x)$ and $\beta(t) \in
\overline{\mathbb{C}(t)}$. Then $\beta(t) \in \mathbb{C}(t).$
\end{lemma}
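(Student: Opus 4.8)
The claim is that if $\sigma(Y)=AY$ with $A$ over the small field $k_0=\CX(t,x)$ is equivalent \emph{over the large field} $k=\overline{\CX(t)}(x)$ to the companion form $\sigma(Y)=\bar A Y$ with bottom-left entry $\beta(t)\alpha(x)$, then in fact $\beta(t)\in\CX(t)$. The natural approach is a descent/conjugation argument: since $A$ has entries in $k_0$, the Galois group $\gamma=\gal(k/k_0)$ acts on everything in sight, and the companion form is rigid enough that applying an element of $\gamma$ can only change $\beta(t)\alpha(x)$ in a very controlled way. The plan is to show that $\gamma$ fixes $\beta(t)$, whence $\beta(t)$ lies in the fixed field $\CX(t)$ of $\overline{\CX(t)}$ over $\CX(t)$.

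First I would fix $P\in\mat_n(k)$ with $\sigma(P)A = \bar A P$, i.e. $\bar A = \sigma(P)AP^{-1}$. For $\tau\in\gamma=\gal(\overline{\CX(t)}/\CX(t))$ — extended to act on $k$ coefficientwise in $x$ and commuting with $\sigma$, exactly as in the proof of Lemma~\ref{lem32} — apply $\tau$ to the relation: $\sigma(\tau(P))A = \tau(\bar A)\tau(P)$, since $\tau(A)=A$. Thus $\sigma(Y)=AY$ is also equivalent over $k$ to $\sigma(Y)=\tau(\bar A)Y$, and chaining the two equivalences, $\sigma(Y)=\bar A Y$ and $\sigma(Y)=\tau(\bar A)Y$ are equivalent over $k$ via $Q:=\tau(P)P^{-1}\in\mat_n(k)$: that is, $\sigma(Q)\bar A = \tau(\bar A)Q$. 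Now $\tau(\bar A)$ is again a companion matrix, with the same superdiagonal of $1$'s and bottom-left entry $\tau(\beta(t))\alpha(x)$ (note $\tau$ fixes $\alpha(x)\in\CX(x)$). So the heart of the matter is: two scalar difference equations in companion form of order $n$, with bottom entries $\beta(t)\alpha(x)$ and $\tau(\beta(t))\alpha(x)$, are gauge-equivalent over $k$; I must conclude $\tau(\beta(t))=\beta(t)$.

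To extract that, I would translate the companion equivalence back into the language of the associated scalar equation $\sigma^n(y) = \prod_{i=0}^{n-1}\sigma^i(\beta(t)\alpha(x))\, y$ — more precisely, use Corollary~\ref{cor12}: the system $\sigma(Y)=\bar A Y$ has $\{\sigma^n(Y)=\D Y\}$ with $\D=\diag(a,\sigma(a),\dots,\sigma^{n-1}(a))$, $a=\beta(t)\alpha(x)$, as an iterate, and similarly for $\tau(\bar A)$ with $a$ replaced by $\tau(a)=\tau(\beta(t))\alpha(x)$; note $\sigma(\tau(\beta(t)))=\tau(\beta(t))$ since $\tau(\beta(t))\in\overline{\CX(t)}$. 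Gauge equivalence of the companion systems forces a scalar gauge equivalence between the two hypergeometric-type equations $\sigma(y)=\beta(t)\alpha(x)y$ and $\sigma(y)=\tau(\beta(t))\alpha(x)y$ over $k$ (this is the standard fact that companion systems are equivalent iff the underlying scalar operators are similar, plus the observation that a scalar gauge $\sigma(g)/g$ must be a ratio whose behavior at $x=\infty$ can be compared — much as in Lemma~\ref{prop5}). Hence there is $g\in k\setminus\{0\}$ with $\sigma(g)/g = \tau(\beta(t))/\beta(t)\in\overline{\CX(t)}$, a constant with respect to $\sigma$; expanding $g$ as a Laurent series at $x=\infty$ and comparing leading coefficients (again the Lemma~\ref{prop5} trick) gives $\sigma(g)/g=1$, i.e. $\tau(\beta(t))=\beta(t)$.

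Since $\tau\in\gal(\overline{\CX(t)}/\CX(t))$ was arbitrary and $\overline{\CX(t)}/\CX(t)$ is a normal (algebraic) extension, $\beta(t)$ is fixed by the whole Galois group and therefore $\beta(t)\in\CX(t)$, as claimed. The step I expect to be the main obstacle is the middle one: cleanly passing from "the two companion systems are gauge-equivalent over $k$" to "$\tau(\beta(t))/\beta(t)$ is of the form $\sigma(g)/g$ for $g\in k$" — one must be careful that the gauge matrix $Q=\tau(P)P^{-1}$ intertwining two companion matrices is itself forced into a nearly-diagonal shape (its first row determines it, by the companion structure), so that the scalar relation really does drop out rather than merely an $n$-dimensional one; this is exactly the kind of bookkeeping done in Lemma~\ref{lem9}'s neighbors and in \cite{hendrikssinger}, so I would model the argument on those. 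Once that reduction is in hand, the Laurent-series comparison at $x=\infty$ is routine and identical in spirit to Lemma~\ref{prop5}.
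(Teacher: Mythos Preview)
Your Galois-descent approach is correct in outline and can be completed, but it is considerably more elaborate than the paper's argument, which is a one-line determinant computation. The paper takes determinants directly in the gauge relation $\sigma(G)\bar A G^{-1}=A$: since $\det\bar A=(-1)^{n-1}\beta(t)\alpha(x)$ and $\det A\in k_0=\CX(t,x)$, one gets
\[
(-1)^{n-1}\beta(t)\alpha(x)\cdot\frac{\sigma(\det G)}{\det G}=\det A\in\CX(t,x).
\]
Expanding at $x=\infty$, the factor $\sigma(\det G)/\det G$ has the form $1+O(1/x)$, and $\alpha(x)\in\CX(x)$, so the leading coefficient in $x$ forces $\beta(t)\in\CX(t)$ immediately. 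No conjugation by $\gal(\overline{\CX(t)}/\CX(t))$ is needed.

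The step you flag as the obstacle --- passing from the gauge equivalence $\sigma(Q)\bar A=\tau(\bar A)Q$ of the two companion systems to a scalar relation $\tau(\beta(t))/\beta(t)=\sigma(g)/g$ --- is in fact filled by exactly the same determinant trick: take $g=\det Q$. So your argument works, but the determinant step that closes your gap is already the \emph{entire} proof in the paper; the Galois-descent scaffolding around it is unnecessary. What your approach buys is a more conceptual explanation (invariance under the absolute Galois group), at the cost of length; the paper's approach buys a three-line proof.
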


\begin{proof}
There exists $G \in GL_n(k)$ such that $\sigma(G)\bar{A}G^{-1}=A$.
Then
$$
  \det(\sigma(G))\det(\bar{A})\det(G^{-1})=\det(A).
$$
Since $\det(\sigma(G))=\sigma(\det(G))$ and
$\det(G^{-1})=\frac{1}{\det(G)}$, we have
$$(-1)^{n-1}\beta(t)\alpha(x)\frac{\sigma(\det(G))}{\det(G)}=\det(A).$$
Expand the rational functions in $x$ in the above equation as series
at $x=\infty$. Since
$\frac{\sigma(\det(G))}{\det(G)}=1+\frac{1}{x}Q$ where $Q \in
\overline{\mathbb{C}(t)}[[\frac{1}{x}]]$, one sees that $\beta(t)
\in \mathbb{C}(t)$.
\end{proof}

\begin{prop}  \label{nprop3}
{Let~$A,\bar{A} \in \mat_n(k_0)$. If $\sigma(Y)=AY$ and
$\sigma(Y)=\bar{A}Y$ are equivalent over $k$ then they are
equivalent over $k_0$}.
\end{prop}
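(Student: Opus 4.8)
The plan is to exploit the Galois-theoretic machinery already set up in Section~\ref{sec3}, in particular Proposition~\ref{prop10}, which tells us that $\gal(R/k)$ is a normal subgroup of $\gal(R_0/k_0)$, and Lemma~\ref{lem8}, which says the identity components coincide. Suppose $G \in \mat_n(k)$ realizes the equivalence over $k$, i.e.\ $\sigma(G)\bar A = AG$ (after absorbing the $\delta$-part, or just working with the difference equations, since equivalence of the $\sigma$-parts over $k_0$ is the issue — note $A,\bar A \in \mat_n(k_0)$). The key observation is that the conjugates of $G$ under the finite group $\gal(R_0/k_0)/\gal(R/k)$ (equivalently, under the action of $\mathrm{Aut}(k/k_0)$, which by the argument in Lemma~\ref{lem32} consists of $\sigma\delta$-automorphisms) all satisfy the \emph{same} intertwining relation, because $A$ and $\bar A$ have entries in $k_0$ and are therefore fixed. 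So for each $\tau \in \mathrm{Aut}(k/k_0)$ we get $\sigma(\tau(G))\bar A = A\,\tau(G)$, hence $\sigma\big(G^{-1}\tau(G)\big)\bar A = \bar A\, G^{-1}\tau(G)$, which means $G^{-1}\tau(G)$ lies in the difference Galois group of $\sigma(Y)=\bar A Y$ — a constant matrix in $\mat_n(\CX)$ — wait, more precisely it is a $\bar A$-automorphism matrix, an element of $\mat_n(k^{\sigma})$; since $k^\sigma = \overline{\mathbb C(t)}$, these are matrices over $\overline{\mathbb C(t)}$.

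The heart of the argument is then an averaging/descent step: one wants to modify $G$ to a $\gal(R_0/k_0)$-invariant (equivalently $\mathrm{Aut}(k/k_0)$-invariant) intertwiner, whose entries will then lie in $k_0$. First I would check that $\tau \mapsto G^{-1}\tau(G)$ is a cocycle for the finite group $\Theta := \mathrm{Aut}(k/k_0)$ acting on $\mat_n$ over the fixed field $k^\sigma = \overline{\mathbb C(t)}$ (or over $\mathbb C(t,x)$ via the $\sigma$-structure). Then I would invoke the vanishing of $H^1$ of a finite group with values in $\mat_n$ of a field — essentially a Speiser/Hilbert~90-type statement, or the normal-basis argument used elsewhere in the paper — to write $G^{-1}\tau(G) = P^{-1}\tau(P)$ for some $P \in \mat_n(k)$ with the $\tau$-action, so that $H := GP^{-1}$ is $\Theta$-invariant and hence has entries in $k_0$. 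One must check that $P$ can be chosen to be a \emph{$\sigma$-automorphism matrix} of $\sigma(Y)=\bar A Y$ (entries in $\overline{\mathbb C(t)}$, or at least that conjugating $\bar A$ by $P$ changes nothing relevant), so that $H$ still intertwines $A$ and $\bar A$: indeed $\sigma(H)\bar A = \sigma(G)\sigma(P)^{-1}\bar A = \sigma(G)\bar A P^{-1} = AGP^{-1} = AH$ provided $\sigma(P^{-1})\bar A = \bar A P^{-1}$, i.e.\ $P$ commutes appropriately with $\bar A$ up to the $\sigma$-twist — this is exactly the condition that $P$ is in the difference Galois group, which is where the cocycle should be taken to have values.

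The main obstacle I expect is precisely this last point: making the cohomological descent respect the $\sigma$-structure, not merely the algebraic field extension $k/k_0$. A cleaner route, and the one I would actually pursue, is to avoid general cohomology and instead use the concrete realization of Picard--Vessiot extensions via sequences (Proposition~\ref{PVseq}, Remarks~\ref{remark2.13}) together with Lemma~\ref{lem32}: the matrix $G^{-1}\tau(G) \in \gl_n(\overline{\mathbb C(t)})$ defines, as $\tau$ ranges over $\Theta$, a Galois-equivariant family, and by Lemma~\ref{lem32} applied to a primitive element of $k$ over $k_0$ one can build a single difference-differential system over $k_0$ whose solution matrix is exactly this Vandermonde-type $Z$; solving a linear system over $k_0$ then produces the descent matrix with entries in $k_0$ directly. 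Either way, once $H \in \mat_n(k_0)$ with $\sigma(H)\bar A = AH$ is constructed, and after the symmetric argument on the $\delta$-part (using the integrability condition, as in the proof of Theorem~\ref{nprop1}), we conclude that $A$ and $\bar A$ are equivalent over $k_0$, which is the assertion.
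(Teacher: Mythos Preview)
Your approach is genuinely different from the paper's and considerably more elaborate than necessary. The paper gives a completely elementary argument: since the entries of $G$ lie in some finite extension $k_0(\gamma(t))$ with $\gamma(t)\in\overline{\mathbb C(t)}$, one expands $G=\sum_{i=0}^{m-1}G_i\,\gamma(t)^i$ with $G_i\in\gl_n(k_0)$; because $\sigma$ fixes $\gamma(t)$ and $A,\bar A$ have entries in $k_0$, each $G_i$ separately satisfies $\sigma(G_i)A=\bar A G_i$. Then $H(\lambda)=\sum_i\lambda^iG_i$ has $\det H(\lambda)\in k_0[\lambda]$ not identically zero (since $\det H(\gamma(t))=\det G\ne0$), so one simply specializes $\lambda$ to some $c\in\mathbb C(t)$ with $\det H(c)\ne0$. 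No Galois groups, no cohomology, no Picard--Vessiot rings are invoked.

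Your cocycle route has a concrete error that you should be aware of: the matrices $c_\tau:=G^{-1}\tau(G)$ do \emph{not} lie in $\mat_n(k^\sigma)=\mat_n(\overline{\mathbb C(t)})$ in general. What they satisfy is $\sigma(c_\tau)\bar A=\bar A\,c_\tau$, i.e.\ they lie in the automorphism group of the $\sigma$-module defined by $\bar A$, which is an algebraic group over $\overline{\mathbb C(t)}$ but typically not $\mat_n$. So the Hilbert~90/Speiser step you propose does not apply directly, and this is exactly the obstacle you yourself flag but do not resolve. The ``cleaner route'' via Lemma~\ref{lem32} that you sketch is too vague to count as a proof. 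The underlying descent idea \emph{is} sound, but the correct formulation is simply that the space of intertwiners $\{M:\sigma(M)A=\bar A M\}$ is a $\overline{\mathbb C(t)}$-vector space which is the base change of its $\mathbb C(t)$-form, and then one uses Zariski density of the $\mathbb C(t)$-points to find an invertible element there---which is precisely the paper's linearity-plus-specialization argument.
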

\begin{proof}
Suppose that there exists $G \in \mat_n(k)$ such that
$\sigma(G)A=\bar{A}G.$ Then there exists $\gamma(t) \in
\overline{\mathbb{C}(t)}$ such that $G \in \mat_n(k_0(\gamma(t)))$.
Let $m=[k_0(\gamma(t)):k_0]$. Since
$1,\gamma(t),\cdots,\gamma(t)^{m-1}$ is a basis of $k_0(\gamma(t))$
over $k_0$,  we can  write
$$G=G_0+G_1\gamma(t)+\cdots+G_{m-1}\gamma(t)^{m-1}$$
where $G_i \in \gl_n(k_0)$. From $\sigma(G)A=\bar{A}G$, it follows
that~$\sigma(G_i)A=\bar{A}G_i$ for~$i=0,\cdots,m-1$. Let $\lambda$
be a parameter satisfying~$\sigma(\lambda)=\lambda$ and let $
H(\lambda)=\sum_{i=0}^{m-1}\lambda^iG_i. $ Therefore,
$\sigma(H(\lambda))A=\bar{A}H(\lambda).$ Since
$\det(G)=\det(H(\gamma(t)))\neq 0$, $\det(H(\lambda))$ is a nonzero
polynomial with coefficients in $k_0$. Hence there exists~$c \in
\mathbb{C}(t)$ such that $\det(H(c))\neq 0$.
So~$\sigma(H(c))A=\bar{A}H(c)$ and $H(c) \in \mat_n(k_0)$.
%The lemma holds.
\end{proof}
{We now turn to the case where a difference-differential
system over~$k_0$ is irreducible over~$k$.}
%Then $\ell=n$ and we can give an equivalent form of~$\{\sigma^n(Y)=A_nY, \delta(Y)=BY\}$ over
%$k_0$ in the following

\begin{prop}\label{prop51}
Suppose that~$ \{  \sigma(Y)=AY,  \delta(Y)=BY\}$ with $A \in
\mat_n(k_0)$ and~$ B \in \gl_n(k_0)$ is irreducible over $k$ and
that its Galois group over $k_0$ is solvable by finite. Then  the
system is equivalent over $k_0$ to
\begin{equation*}%\label{eqn51}
\sigma(Y)=\bar AY, \quad \delta(Y)=\bar{B}Y
\end{equation*}
where~$\bar{B} \in \gl_n(k_0)$ and
$$
\bar A=\begin{pmatrix}
  0 & 1 & 0 & \cdots & 0 \\
  0 & 0 & 1 & \cdots & 0 \\
  \vdots & \vdots & \vdots & \vdots & \vdots \\
  0 & 0 & 0 & \cdots & 1 \\
  \beta(t)\alpha(x) & 0 & 0 & \cdots & 0
\end{pmatrix} \in \mat_n(k_0)
$$
with $\alpha(x) \in \mathbb{C}(x)$  standard with respect to
$\sigma^n$ and $ \beta(t) \in \mathbb{C}(t)$.
Moreover,~$\frac{\alpha(x+1)}{\alpha(x)}\neq \frac{\sigma^n(b)}{b}$
for any $b \in \CX(x).$
\end{prop}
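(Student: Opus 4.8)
The plan is to transport the normal form of Theorem~\ref{thm4} from $k$ down to $k_0$, using that $A$ already lies over $k_0$ and has companion shape, and then to read off the non-degeneracy clause from irreducibility over $k$.

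First note that irreducibility over $k$ implies irreducibility over $k_0$ (a reduction over $k_0\subset k$ would be one over $k$), and that the Galois group over $k$ is again solvable by finite: by Lemma~\ref{lem8} the identity components of the two Galois groups coincide, so one is solvable precisely when the other is. Hence Proposition~\ref{prop2} applies over $k$, and the system is equivalent over $k$ to $\{\sigma(Y)=\bar{\mathcal A}Y,\ \delta(Y)=\bar{\mathcal B}Y\}$ with $\bar{\mathcal A}$ the $n\times n$ companion matrix whose bottom-left entry is some $a\in k$; by Corollary~\ref{cor12} the associated system $\{\sigma^n(Y)=\mathcal A_nY,\ \delta(Y)=\mathcal B Y\}$ is then equivalent over $k$ to $\{\sigma^n(Y)=\diag(a,\sigma(a),\dots,\sigma^{n-1}(a))Y,\ \delta(Y)=\bar{\mathcal B}Y\}$. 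On the other hand, Theorem~\ref{thm4} applies to the original system; since it is irreducible over $k$ the integer $\ell$ there equals $n$ (so $m=1$), and $\{\sigma^n(Y)=A_nY,\ \delta(Y)=BY\}$ is equivalent over $k$ to a system whose $\sigma^n$-part is $\beta_1(t)\,\diag(\alpha(x),\dots,\alpha(x+n-1))$, with $\alpha(x)\in\mathbb{C}(x)$ standard with respect to $\sigma^n$ and $\beta_1(t)\in\overline{\mathbb{C}(t)}$. The two displayed diagonal $\sigma^n$-systems are equivalent over $k$, and equivalence of diagonal first-order $\sigma^n$-systems forces some diagonal entry of the one to equal a diagonal entry of the other times a factor $\sigma^n(g)/g$ with $g\in k\setminus\{0\}$ (seen by choosing a nonvanishing term in the determinant of the transforming matrix); comparing the entry $a$ we obtain $a=\dfrac{\sigma^n(g)}{g}\,\beta_1(t)\,\alpha(x+j)$ for some $0\le j\le n-1$. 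Since shifting preserves the dispersion, $\alpha(x+j)\in\mathbb{C}(x)$ is again standard with respect to $\sigma^n$, so replacing $\alpha$ by it we may assume $a=\dfrac{\sigma^n(g)}{g}\,\beta_1(t)\,\alpha(x)$.

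Next I would strip off the coboundary and descend. Conjugating the companion system over $k$ by $\diag(1/g,1/\sigma(g),\dots,1/\sigma^{n-1}(g))$ preserves the companion shape and changes the bottom-left entry to $\beta_1(t)\alpha(x)$; hence $\sigma(Y)=AY$ is equivalent over $k$ to $\sigma(Y)=\bar A Y$ with $\bar A$ the companion matrix with bottom-left entry $\beta_1(t)\alpha(x)$, $\alpha(x)\in\mathbb{C}(x)$ standard with respect to $\sigma^n$, $\beta_1(t)\in\overline{\mathbb{C}(t)}$. As $A\in\mat_n(k_0)$, Lemma~\ref{lem9} now forces $\beta_1(t)\in\mathbb{C}(t)$, so $\bar A\in\mat_n(k_0)$, and Proposition~\ref{nprop3} yields $P\in\mat_n(k_0)$ with $\det P\neq0$ and $\sigma(P)A=\bar AP$. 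The substitution $Z=PY$ then transforms the full system over $k_0$ into $\{\sigma(Z)=\bar AZ,\ \delta(Z)=\bar BZ\}$ with $\bar B:=\delta(P)P^{-1}+PBP^{-1}\in\gl_n(k_0)$. Taking $\beta(t)=\beta_1(t)$ gives the asserted form.

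Finally, the non-degeneracy clause, which I expect to be the delicate point and the only place where irreducibility over $k$ is used essentially. Suppose $\alpha(x+1)/\alpha(x)=\sigma^n(b)/b$ for some $b\in\mathbb{C}(x)\setminus\{0\}$. Then $\phi:=\alpha(x)/\big(b\,\sigma(b)\cdots\sigma^{n-1}(b)\big)\in\mathbb{C}(x)$ satisfies $\sigma(\phi)=\phi$, hence $\phi\in\mathbb{C}$; picking $c\in\overline{\mathbb{C}(t)}$ with $c^n=\phi\,\beta(t)$ (possible since $\overline{\mathbb{C}(t)}$ is algebraically closed) and setting $r=cb\in k$, one computes $r\,\sigma(r)\cdots\sigma^{n-1}(r)=c^n\,b\,\sigma(b)\cdots\sigma^{n-1}(b)=\beta(t)\alpha(x)$, which is exactly the bottom-left entry of $\bar A$. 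Thus the scalar equation $\sigma^n(y)=\beta(t)\alpha(x)\,y$ associated with the companion system $\sigma(Y)=\bar AY$ has the nonzero hypergeometric solution with $\sigma(y)=ry$, i.e. $\sigma^n-\beta(t)\alpha(x)$ admits the right factor $\sigma-r$; hence $\sigma(Y)=\bar AY$, and so the original system, is reducible over $k$, contradicting the hypothesis. The remaining verifications — that the displayed diagonal conjugation preserves the companion shape and the bookkeeping of the $\delta$-parts — are routine.
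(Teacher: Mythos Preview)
Your route to the companion form with corner $\beta(t)\alpha(x)$ is correct but more circuitous than the paper's. The paper stays inside the companion system $\{\sigma(Y)=\bar{\mathcal A}Y,\ \delta(Y)=\bar{\mathcal B}Y\}$ and reads off from integrability that $\sigma^n(\bar b_{11})-\bar b_{11}=\delta(a)/a$; Proposition~\ref{prop6} then immediately gives $a=\frac{\sigma^n(f)}{f}\alpha(x)\beta(t)$ with $\alpha(x)\in\mathbb{C}(x)$ standard w.r.t.\ $\sigma^n$ and $\beta(t)\in\overline{\mathbb{C}(t)}$. You instead invoke Theorem~\ref{thm4} and compare two diagonal $\sigma^n$-systems to extract the same factorization of $a$. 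Both are valid; your way reuses heavier machinery, the paper's is self-contained. (A small slip: to turn the corner $a=\frac{\sigma^n(g)}{g}\beta_1(t)\alpha(x)$ into $\beta_1(t)\alpha(x)$ you should conjugate by $\diag(g,\sigma(g),\dots,\sigma^{n-1}(g))$, not its reciprocal.) The descent via Lemma~\ref{lem9} and Proposition~\ref{nprop3} is exactly as in the paper.

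The non-degeneracy clause, however, has a real gap. You produce $r\in k$ with $r\sigma(r)\cdots\sigma^{n-1}(r)=\beta(t)\alpha(x)$ and conclude that $\sigma^n-\beta(t)\alpha(x)$ has the right factor $\sigma-r$, hence that the \emph{difference} system $\sigma(Y)=\bar AY$ is reducible over $k$. But the hypothesis is irreducibility of the \emph{$\sigma\delta$-system}: one needs a $k$-subspace stable under both actions, equivalently a gauge $P\in\mat_n(k)$ bringing $\bar A$ and $\bar B$ simultaneously to block-triangular form. A $\sigma$-invariant line need not be $\delta$-invariant, so from $\sigma$-reducibility alone no contradiction follows. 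The paper completes the argument by going one step further: writing $\alpha(x)=c\,b\sigma(b)\cdots\sigma^{n-1}(b)$ with $c\in\mathbb{C}$, it conjugates by $F=\diag(1,b,\sigma(b)b,\dots,\sigma^{n-2}(b)\cdots b)\,P$, where $P\in\mat_n(\overline{\mathbb{C}(t)})$ diagonalizes the constant companion matrix with corner $c\beta(t)$, to obtain $\sigma(Y)=b\,\diag(\tilde\beta_1,\dots,\tilde\beta_n)Y$ with the $\tilde\beta_i$ the distinct $n$-th roots of $c\beta(t)$. Now integrability forces the transformed $\tilde B$ to satisfy $\sigma(\tilde b_{ij})=(\tilde\beta_i/\tilde\beta_j)\tilde b_{ij}$, and for $i\neq j$ this has no nonzero rational solution, so $\tilde B$ is diagonal too. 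Thus the \emph{full} $\sigma\delta$-system is diagonal over $k$, hence reducible, and only then does the contradiction with irreducibility over $k$ arise. Your argument stops just before this use of integrability; that is the missing idea.
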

\begin{proof}
By Proposition \ref{prop2}, the given system is equivalent over $k$
to the system~$\{\sigma (Y) =\bar{\mathcal{A}}Y, \delta (Y) =
\bar{\mathcal{B}} Y\}$ where~$\bar{\mathcal{B}} \in \gl_n(k)$ and
\[\begin{array}{cccc}
    \bar{\mathcal{A}}=&\begin{pmatrix}
         0 & 1 & 0 & \cdots & 0 \\
         0 & 0 & 1 & \cdots & 0 \\
         \vdots & \vdots & \vdots & \vdots & \vdots \\
         0 & 0 & 0 & \cdots & 1 \\
         a & 0 & 0 & \cdots & 0
      \end{pmatrix}
      \end{array}\in \mat_n(k)
\]
for some~$a \in k$. Since $\sigma$ and $\delta$ commute, we have~$
\sigma(\bar{\mathcal{B}})\bar{\mathcal{A}}=\delta(\bar{\mathcal{A}})+\bar{\mathcal{A}}\bar{\mathcal{B}}.
$
%where $\bar{\mathcal{A}}$ and $\bar{\mathcal{B}}$ are as in
%(\ref{eqn22}) with $\ell=n$.
Let~$\bar{\mathcal{B}}=(\bar{b}_{ij})_{n\times n}$ where
$\bar{b}_{i,j} \in k.$ Then
$$
   \sigma(\bar{b}_{nn})-\bar{b}_{11}=\frac{\delta(a)}{a}, \quad
   \bar{b}_{nn}=\sigma(\bar{b}_{n-1,n-1}),\quad \cdots,\quad
   \bar{b}_{22}=\sigma(\bar{b}_{11}).
$$
Hence $\sigma^n(\bar{b}_{11})-\bar{b}_{11}=\frac{\delta(a)}{a}.$ By
Proposition \ref{prop6}, we
have~$a=\frac{\sigma^n(f)}{f}\alpha(x)\beta(t)$ with~$f \in k$,
$\alpha(x)\in \mathbb{C}(x)$  standard with respect to $\sigma^n$
and $\beta(t) \in \overline{\mathbb{C}(t)}.$ Then %the system~
$\{\sigma (Y) =\bar{\mathcal{A}}Y, \delta (Y) =
\bar{\mathcal{B}} Y\}$ is equivalent over $k$ to %the system~
$\{\sigma(Y)=\bar AY, \delta(Y)=\bar{B}Y\}$ under the transformation
$ Y\rightarrow \diag(f,\sigma(f),\cdots,\sigma^{n-1}(f))Y, $ where
$\bar{B} \in \gl_n(k_0)$ and
$$
\bar A=\begin{pmatrix}
  0 & 1 & 0 & \cdots & 0 \\
  0 & 0 & 1 & \cdots & 0 \\
  \vdots & \vdots & \vdots & \vdots & \vdots \\
  0 & 0 & 0 & \cdots & 1 \\
  \beta(t)\alpha(x) & 0 & 0 & \cdots & 0
\end{pmatrix}
$$
with $\alpha(x) \in \mathbb{C}(x)$ and~$\beta(t) \in
\overline{\mathbb{C}(t)}$. By Lemma \ref{lem9} and Proposition
\ref{nprop3}, the original system is equivalent over $k_0$
to~$\{\sigma(Y)=\bar AY, \delta(Y)=\bar{B}Y\}$ with~$\beta(t)\in
\CX(t)$. Assume
that~$\frac{\alpha(x+1)}{\alpha(x)}=\frac{\sigma^n(b)}{b}$ for some
$b \in \CX(x)$ and let $u = \sigma^{n-1}(b) \cdots \sigma(b) b$. We
have~$\frac{\alpha(x+1)}{\alpha(x)} = \frac{u(x+1)}{u(x)}$ thus
$\alpha(x) = cu(x)$ for some constant $c$ with respect to~$\sigma$.
Therefore~$c \in \CX$ since~$\alpha(x)$ and~$u(x)$ are both in~$
\CX(x)$. Let $P\in \mat_n(\overline{\CX(t)})$  be such that
\[\begin{array}{ccc}
    P^{-1}\begin{pmatrix}
  0 & 1 & 0 & \cdots & 0 \\
  0 & 0 & 1 & \cdots & 0 \\
  \vdots & \vdots & \vdots & \vdots & \vdots \\
  0 & 0 & 0 & \cdots & 1 \\
  c\beta(t) & 0 & 0 & \cdots & 0
\end{pmatrix}P=
   \diag(\tilde{\beta_1}(t),\tilde{\beta_2}(t),\cdots,\tilde{\beta_n}(t))
   \end{array}
\]
where the $\tilde{\beta_i}(t)$'s are the roots of $Y^n-c\beta(t)$.
Let
$$
F=\diag(1,b,\sigma(b)b,\cdots,\sigma^{n-2}(b)\cdots\sigma(b)b)P
$$
and~$\tilde{B}=F^{-1}\bar{B}F-F^{-1}\delta(F)$.
Then~$\{\sigma(Y)=\bar AY,  \delta(Y)=\bar{B}Y\}$
 is equivalent
over $k$ to
$$
%\{
\sigma(Y)=b\cdot
\diag(\tilde{\beta_1}(t),\tilde{\beta_2}(t),\cdots,\tilde{\beta_n}(t))Y,
   \quad
   \delta(Y)=\tilde{B}Y
%\}
$$
under the transformation $Y\rightarrow FY.$
Assume~$\tilde{B}=(\tilde{b}_{ij})_{n\times n}.$ Since $\sigma$ and
$\delta$ commute,
$$
\sigma(\tilde{b}_{ij})-\frac{\tilde{\beta_i}(t)}{\tilde{\beta_j}(t)}\tilde{b}_{ij}=0,
$$
for all~$i, j$ with~$1 \leq i \neq j \leq n$. Hence
$\tilde{b}_{ij}=0$ if $i \neq j$. In  other word, $\tilde{B}$ is of
diagonal form. This contradicts to the irreducibility over $k$ of
the original system. %This concludes the proposition.
\end{proof}

\begin{lemma}\label{lem13}
Let $a \in k_0 \setminus \{0\}$,  $n$ be a positive integer
and~$m>0$ be the least integer such that
$\frac{\sigma^m(a)}{a}=\frac{\sigma^n(b)}{b}$ for some $b \in k_0$.
Then~$m|n$.
\end{lemma}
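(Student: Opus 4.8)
The plan is to exploit the division-with-remainder structure of the exponents together with the defining property of $m$ as a \emph{least} integer. Write $n = qm + r$ with $0 \le r < m$; the goal is to show $r = 0$. The key observation is that the relation $\frac{\sigma^m(a)}{a} = \frac{\sigma^n(b)}{b}$ can be iterated: applying $\sigma^m$ repeatedly and telescoping, one gets that $\frac{\sigma^{jm}(a)}{a}$ is of the form $\frac{\sigma^n(b_j)}{b_j}$ for a suitable $b_j \in k_0$ built from $b$ and its $\sigma^m$-shifts (namely $b_j = \sigma^{(j-1)m}(b)\cdots\sigma^m(b)\,b$, using that $\sigma$ commutes with everything and $k_0$ is a $\sigma$-field). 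In particular, taking $j = q$ gives $\frac{\sigma^{qm}(a)}{a} = \frac{\sigma^n(b_q)}{b_q}$.

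Next I would combine this with the trivial identity $\frac{\sigma^n(a)}{a} = \frac{\sigma^{qm+r}(a)}{a} = \sigma^{qm}\!\left(\frac{\sigma^r(a)}{a}\right)\cdot\frac{\sigma^{qm}(a)}{a}$. Hmm — more directly: since $n = qm+r$, I have $\frac{\sigma^r(a)}{a} = \frac{\sigma^n(a)}{a} \cdot \left(\frac{\sigma^{qm}(\sigma^{r}(\cdot))}{\cdot}\right)^{-1}$, which is getting awkward. The cleaner route: first note that the hypothesis with $m$ in place of $n$ is available too, i.e. $\frac{\sigma^m(a)}{a} = \frac{\sigma^m(c)}{c}$ trivially with $c = a$; but what I actually want is to produce, from the two relations at exponents $m$ and $n$, a relation at exponent $r$, contradicting minimality of $m$ unless $r = 0$. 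So from $\frac{\sigma^{qm}(a)}{a} = \frac{\sigma^n(b_q)}{b_q}$ and $\frac{\sigma^n(a)}{a}$ being expressible via the original data, I would divide: $\frac{\sigma^n(a)}{a} \Big/ \frac{\sigma^{qm}(a)}{a} = \frac{\sigma^{qm}(\sigma^r(a)/a)}{1}$, wait — this equals $\frac{\sigma^n(a)}{\sigma^{qm}(a)} = \sigma^{qm}\!\big(\tfrac{\sigma^r(a)}{a}\big)$. On the other hand, again by the hypothesis applied to $a$ directly, $\frac{\sigma^n(a)}{a} = \frac{\sigma^m(a)}{a}$? No — that is false in general. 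Let me reconsider.

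The correct leverage is: the hypothesis gives $\sigma^m(a)b = a\,\sigma^n(b)$, i.e. $\frac{\sigma^m(a)}{a} = \frac{\sigma^n(b)}{b}$; iterating $q$ times yields $\frac{\sigma^{qm}(a)}{a} = \frac{\sigma^n(B)}{B}$ with $B = \sigma^{(q-1)m}(b)\cdots b$. Now observe $\sigma^{qm} = \sigma^{n-r}$, so $\frac{\sigma^{n-r}(a)}{a} = \frac{\sigma^n(B)}{B}$, whence $\frac{\sigma^n(a)}{\sigma^{r}(a)} \cdot \frac{\sigma^r(a)}{a}\cdot\frac{1}{\text{(shift)}}$… The clean manipulation: $\frac{\sigma^{n-r}(a)}{a} = \frac{\sigma^n(B)}{B}$ means $\frac{\sigma^{r}\!\big(\sigma^{n-r}(a)/a\big)}{1}$ relates things — apply $\sigma^{r}$? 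Instead, rewrite as $\frac{\sigma^n(a)}{a} = \frac{\sigma^r(a)}{a}\cdot\sigma^r\!\Big(\frac{\sigma^{n-r}(a)}{a}\Big) = \frac{\sigma^r(a)}{a}\cdot\sigma^r\!\Big(\frac{\sigma^n(B)}{B}\Big) = \frac{\sigma^r(a)}{a}\cdot\frac{\sigma^n(\sigma^r(B))}{\sigma^r(B)}$. But also, directly from the hypothesis, is $\frac{\sigma^n(a)}{a}$ expressible as $\frac{\sigma^n(\text{something})}{\text{something}}$? Trivially yes: $\frac{\sigma^n(a)}{a} = \frac{\sigma^n(a)}{a}$. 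Combining, $\frac{\sigma^n(a)}{a} \cdot \frac{\sigma^r(B)}{\sigma^n(\sigma^r(B))} = \frac{\sigma^r(a)}{a}$, i.e. $\frac{\sigma^r(a)}{a} = \frac{\sigma^n\!\big(a/\sigma^r(B)\big)}{a/\sigma^r(B)} \cdot \frac{\sigma^n(a)}{a}\Big/\frac{\sigma^n(a)}{a}$ — so $\frac{\sigma^r(a)}{a} = \frac{\sigma^n(C)}{C}$ with $C = a/\sigma^r(B) \in k_0$. Since $0 \le r < m$ and $m$ is the \emph{least} positive integer with this property, and clearly $r$ cannot be a positive such integer, we conclude $r = 0$, i.e. $m \mid n$.

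The main obstacle is purely bookkeeping: carrying the $\sigma$-shifts through the telescoping correctly so that the auxiliary element $C$ genuinely lies in $k_0$ and the exponent on the "numerator" shift stays exactly $n$. Once the iteration identity $\frac{\sigma^{qm}(a)}{a} = \frac{\sigma^n(B)}{B}$ is established by induction on $q$, the rest is formal. I would present it as: (1) reduce to showing the remainder $r$ in $n = qm + r$ is $0$; (2) prove the telescoped relation by induction; (3) manipulate to exhibit $\frac{\sigma^r(a)}{a} = \frac{\sigma^n(C)}{C}$ with $C \in k_0$; (4) invoke minimality of $m$ to force $r = 0$.
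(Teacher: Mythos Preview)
Your approach is correct and is essentially identical to the paper's: both write $n = qm + r$, telescope the hypothesis to obtain $\frac{\sigma^{qm}(a)}{a} = \frac{\sigma^n(B)}{B}$ for some $B \in k_0$, and then combine with the trivial identity $\frac{\sigma^n(a)}{a} = \frac{\sigma^n(a)}{a}$ to exhibit $\frac{\sigma^r(a)}{a} = \frac{\sigma^n(C)}{C}$ with $C = a/\sigma^r(B)$, forcing $r=0$ by minimality. Your final element $C$ even coincides with the paper's (their $c$ is your $\sigma^r(B)$); the only difference is that the paper presents the manipulation in two clean lines, whereas your write-up meanders through several false starts before landing on it---tighten the exposition to steps (1)--(4) and you have exactly the paper's proof.
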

\begin{proof}
Suppose  that $\frac{\sigma^m(a)}{a}=\frac{\sigma^n(b)}{b}$ with $b
\in k_0$. Then for each $\ell >0$,
$$
\frac{\sigma^{\ell m}(a)}{a}=\frac{\sigma^n(c_\ell)}{c_\ell}
\quad\mbox{with $c_\ell \in k_0.$}
$$
 Let $n=\ell_1 m +\ell_2$
where $0 \leq \ell_2 \leq m-1$. Then
$$
     \frac{\sigma^{\ell_2}(a)}{a}=\frac{\sigma^{\ell_1
     m+\ell_2}(a)}{a}\frac{\sigma^{\ell_2}(a)}{\sigma^{\ell_1
     m+\ell_2}(a)}=\frac{\sigma^n(a)}{a}\frac{c}{\sigma^n(c)}
$$
  for some $c \in k_0$. Hence $\ell_2=0$ and so $m |n$.
\end{proof}

\begin{prop} \label{thm52}
Assume that $n$ is a prime number, the system
$$
%\{
 \sigma(Y)=AY, \quad \delta(Y)=BY
%\}
$$ with $A \in \mat_n(k_0)$ and~$ B
\in \gl_n(k_0)$ is irreducible over $k$ and its Galois group over
$k_0$ is solvable by finite. Then
$\{\sigma^n(Y)=A_nY,\delta(Y)=BY\}$ is equivalent over $k_0$ to
\[\begin{cases}
        \sigma^n(Y)=\beta(t)\diag(\alpha(x),\cdots,\alpha(x+n-1))Y\medskip \\
        \delta(Y)=\left(\frac{\delta(\beta(t))}{n\beta(t)}xI_n+\diag({\hat b}_1,\cdots,\hat{b}_n)\right)Y
      \end{cases}
\]
where $\alpha(x)$ and $\beta(t)$ are as in Proposition~\ref{prop51}
and $\hat{b}_i \in
      \mathbb{C}(t)$ for~$i=1, \dots, n$.
\end{prop}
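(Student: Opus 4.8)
The plan is to reduce the system to a companion normal form over $k_0$ via Proposition~\ref{prop51}, pass to $\sigma^n$ where this companion matrix becomes diagonal, and then extract $\bar B$ from the integrability conditions. Concretely, Proposition~\ref{prop51} produces $P\in\mat_n(k_0)$ conjugating $\{\sigma(Y)=AY,\,\delta(Y)=BY\}$ over $k_0$ to $\{\sigma(Y)=\bar AY,\,\delta(Y)=\bar BY\}$ with $\bar B\in\gl_n(k_0)$, $\bar A$ the companion-type matrix whose lower-left entry is $\beta(t)\alpha(x)$, $\alpha(x)\in\CX(x)$ standard with respect to $\sigma^n$, $\beta(t)\in\CX(t)$, and crucially $\alpha(x+1)/\alpha(x)\neq\sigma^n(b)/b$ for every $b\in\CX(x)$. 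Since $\sigma(P)AP^{-1}=\bar A$ telescopes to $\sigma^n(P)A_nP^{-1}=\bar A_n$, the system $\{\sigma^n(Y)=A_nY,\,\delta(Y)=BY\}$ is equivalent over $k_0$ to $\{\sigma^n(Y)=\bar A_nY,\,\delta(Y)=\bar BY\}$, and a direct computation with the cyclic shape of $\bar A$ (the $\D$ of Corollary~\ref{cor12} with $\ell=n$ and corner $a=\beta(t)\alpha(x)$) gives $\bar A_n=\beta(t)\diag(\alpha(x),\alpha(x+1),\dots,\alpha(x+n-1))$.

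Next I would write the integrability condition $\sigma^n(\bar B)=\delta(\bar A_n)\bar A_n^{-1}+\bar A_n\bar B\bar A_n^{-1}$ componentwise. Since $\bar A_n$ is diagonal and $\delta=d/dt$ annihilates the $\alpha(x+i-1)$, this reads $\sigma^n(\bar b_{ii})-\bar b_{ii}=\delta(\beta)/\beta$ on the diagonal and $\sigma^n(\bar b_{ij})=\frac{\alpha(x+i-1)}{\alpha(x+j-1)}\bar b_{ij}$ for $i\neq j$. For the diagonal part, a dispersion argument in the spirit of Lemma~\ref{lem10} shows $\bar b_{ii}$ has no pole in $x$: a forward $n$-chain of poles of $\bar b_{ii}$ must terminate, so some pole of $\bar b_{ii}$ fails to be a pole of $\sigma^n(\bar b_{ii})$ and survives in the difference, contradicting that $\delta(\beta)/\beta$ is pole-free in $x$. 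Hence $\bar b_{ii}\in\CX(t)[x]$, and comparing degrees in $x$ forces $\deg_x\bar b_{ii}\le 1$ with $\bar b_{ii}=\frac{\delta(\beta)}{n\beta}x+\hat b_i$, $\hat b_i\in\CX(t)$, which is exactly the asserted diagonal of $\bar B$. Thus everything reduces to showing $\bar b_{ij}=0$ for $i\neq j$, and I expect this to be the only non-routine point.

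To kill the off-diagonal entries, suppose $\bar b_{ij}\neq 0$ for some $i\neq j$. Regarding $\bar b_{ij}$ as a rational function of $x$ over $\overline{\CX(t)}$, the relation $\sigma^n(\bar b_{ij})/\bar b_{ij}=\alpha(x+i-1)/\alpha(x+j-1)\in\CX(x)$ forces, on comparing divisors in $x$, the zero/pole multiplicities of $\bar b_{ij}$ to be constant along each nonconstant $n$-chain; since there are only finitely many such zeros and poles, these multiplicities vanish, so every zero and pole of $\bar b_{ij}$ in $x$ lies in $\CX$ and $\bar b_{ij}=\gamma(t)\phi(x)$ with $\gamma(t)\in\CX(t)$ and $\phi(x)\in\CX(x)$. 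Then $\sigma^n(\phi)/\phi=\alpha(x+i-1)/\alpha(x+j-1)$, and putting $p=|i-j|$ this right-hand side equals $\sigma^p(\alpha)/\alpha$ up to a $\sigma$-shift and possibly an inverse, so $\sigma^p(\alpha)/\alpha=\sigma^n(\psi)/\psi$ for some $\psi\in\CX(x)$. Here primality of $n$ enters: $1\le p\le n-1$ gives $\gcd(p,n)=1$, so choosing $u,v\in\Z$ with $up+vn=1$ and using the identity $\frac{\sigma(\alpha)}{\alpha}=\frac{\sigma^{up}(\alpha)}{\alpha}\cdot\sigma^{up}\!\big(\frac{\sigma^{vn}(\alpha)}{\alpha}\big)$ — in which $\frac{\sigma^{vn}(\alpha)}{\alpha}$ is plainly a $\sigma^n$-coboundary and $\frac{\sigma^{up}(\alpha)}{\alpha}$ is a product of $\sigma$-translates of $\frac{\sigma^p(\alpha)}{\alpha}=\frac{\sigma^n(\psi)}{\psi}$, hence also one — we conclude $\alpha(x+1)/\alpha(x)=\sigma^n(c)/c$ for some $c\in\CX(x)$, contradicting Proposition~\ref{prop51}. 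Therefore $\bar B$ is diagonal of the required shape, and since the equivalence established in the first step is already over $k_0$, this proves the proposition.

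The main obstacle is the off-diagonal vanishing in the last paragraph: descending $\bar b_{ij}$ to a product of a function of $t$ and a function of $x$, and then converting the resulting relation $\sigma^p(\alpha)/\alpha=\sigma^n(\psi)/\psi$ into $\sigma(\alpha)/\alpha$ being a $\sigma^n$-coboundary; everything else (the reduction via Proposition~\ref{prop51} and the diagonal analysis by dispersion and degree count) is routine.
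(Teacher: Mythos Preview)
Your argument is correct and follows the same route as the paper: reduce via Proposition~\ref{prop51}, compute $\bar A_n$, and analyze $\bar B$ entrywise from the integrability condition. For the off-diagonal vanishing the paper packages your B\'ezout step as Lemma~\ref{lem13} (applied with $n$ prime), and your descent $\bar b_{ij}=\gamma(t)\phi(x)$ makes explicit a passage from $k_0$ to $\CX(x)$ that the paper leaves implicit.
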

\begin{proof}
By Proposition \ref{prop51}, $\{\sigma^n(Y)=A_nY, \delta(Y)=BY\}$ is
equivalent over~$k_0$ to the system
\begin{equation*}%\label{eqn52}
\sigma^n(Y)=\beta(t)\cdot \diag(\alpha(x),\cdots,\alpha(x+n-1))\, Y,
\quad \delta(Y)=\bar{B}Y
\end{equation*}
with $\alpha(x)$ and $\beta(t)$ as in Proposition~\ref{prop51} and
$\bar{B} \in \gl_n(k_0).$ Let~$\bar{B}{=}(\bar{b}_{ij})_{n\times
n}$. From $\sigma^n\delta=\delta\sigma^n$, we have
\[\begin{cases}
     \sigma^n(\bar{b}_{ii})-\bar{b}_{ii}=\frac{\delta(\beta(t))}{\beta(t)},
     & i=1,\cdots,n, \medskip \\
     \sigma^n(\bar{b}_{ij})-\frac{\alpha(x+i)}{\alpha(x+j)}\bar{b}_{ij}=0,
     & 1 \leq i \neq j \leq n.
\end{cases}\]
Hence $\bar{b}_{ii}=\frac{\delta(\beta(t))}{n\beta(t)}x+\hat{b}_i$
with $\hat{b}_i \in \mathbb{C}(t)$. Note that~$n$ is prime and
$\frac{\alpha(x+1)}{\alpha(x)} {\neq} \frac{\sigma^n(b)}{b}$ for any
$b \in \mathbb{C}(x)$. Then by Lemma \ref{lem13},
$\frac{\alpha(x+i)}{\alpha(x)} \neq \frac{\sigma^n(b)}{b}$ for
any~$1 \leq i \leq n-1$ and  $b \in \mathbb{C}(x).$ Hence
$\bar{b}_{ij}=0$ for $i \neq j$. This concludes the {proposition}.%theorem.
 \end{proof}

\subsection{{A} Decision Procedure for {Systems} of Prime Order}\label{sec6}
Consider a system~$ \{ \sigma(Y)=AY,\delta(Y)=BY\}$ over~$k_0$.
 %where $A \in \mat_n(k_0)$ and~$B \in \gl_n(k_0)$.
 Assume that the
order~$n$ is prime, the system is irreducible over~$k_0$ and its
Galois group is solvable by finite (or, equivalently, the system has
liouvillian solutions). By Proposition~\ref{cor4} and
Proposition~\ref{thm52}, either the original system has
hypergeometric solutions over $k$ or the system~$\{\sigma^n(Y)=AY,
\delta(Y)=BY\}$ has solutions which are the interlacing of
hypergeometric solutions over $k_0$. In this section, we will give a
decision procedure to find solutions of systems of both forms when
the order~$n$ is prime.
%Throughout the section we let $k_0$ be the \sd field
%$\mathbb{C}(t,x)$ with  an automorphism~$\sigma:x\mapsto x+1$ and  a
%derivation~$\delta=\frac{d}{dt}$ and let $k$ be its extension
%field~$\overline{\mathbb{C}(t)}(x)$.
%Consider a system~$ \{ \sigma(Y)=AY, \delta(Y)=BY\} $ over $k_0$
%where~$A \in \mat_n(k_0)$ and~$B\in \gl_n(k_0)$. Assume that $n$ is
%a prime, the system is irreducible and its Galois group over $k_0$
%is solvable by finite.
%
Our procedure relies on the following three facts {in the ordinary
cases}:
\begin{itemize}
  \item [$(A_1)$]
    we can compute all rational solutions in $k^n$ of an ordinary difference equation~$\sigma(Y)=AY$
    where $A \in \mat_n(k)$; (\cite{abramov-barkatou, abramov2,abramov3,hoeij1});
  \item [$(A_2)$]
    we can compute all hypergeometric solutions over $\mathbb{C}(x)$
    of an ordinary  difference equation~$\sigma(Y)=\hat{A}Y$
    where $\hat{A} \in \mat_n(\mathbb{C}(x))$ (\cite{hendrikssinger,labahn-li,ziming-etal,
     petkovsek,petkovsek-salvy,hoeij2});
  \item [$(A_3)$]
    we can compute all hyperexponential solutions over
    $\overline{\mathbb{C}(t)}$ of an ordinary  differential equation~$\delta(Y)=\hat{B}Y$ where $\hat{B} \in
    \mat_n(\mathbb{C}(t))$ (\cite{kovacic,labahn-li, ziming-etal,singer1,hoeij-etal}).
\end{itemize}
In the following subsections, we will reduce the problem of finding
solutions of $\{ \sigma(Y)=AY, \delta(Y)=BY\}$ or of
$\{\sigma^n(Y)=AY, \delta(Y)=BY\}$ to that {in the
ordinary cases as indicated above.
%From the results in Section \ref{sec5},
We have two case distinctions  according to the reducibility of~$\{
\sigma(Y)=AY, \delta(Y)=BY\}$ over  $k$}.
\subsubsection{The Decision Procedure for the Reducible Case}\label{sec6.1}
Assume  that~$\{ \sigma(Y)=AY, \delta(Y)=BY\}$  is reducible over
$k$. Proposition~\ref{cor4} implies that this system has
hypergeometric solutions of the form $W_ih_i$ for~$i=1,\dots, n$,
where~$W_i \in k^n$ and~$h_i$ satisfies
$$\sigma(h_i)=\alpha(x)\beta_i(t)h_i \quad\mbox{and}\quad \delta(h_i)
=\left(\frac{\delta(\beta_i(t))}{\beta_i(t)}x+c_i\right)h_i
$$
{with $\alpha(x) \in \mathbb{C}(x)$  standard with
respect to $\sigma$,
%$\beta_i(t),c_i \in {\overline{\mathbb{C}(t)}}$,
$ \beta_i(t)=g_i(\beta_1(t))\in {\overline{\mathbb{C}(t)}}$
and~$c_i=g_i(c_1)\in {\overline{\mathbb{C}(t)}} $ for some~$g_i$ in
the Galois group of the original system over $k_0$.}
%Moreover,
%$\alpha(x)$ is standard with respect to $\sigma$.
Substituting each $W_ih$ into the original
system, %~$\{ \sigma(Y)=AY, \delta(Y)=BY\}$,
we get
\begin{equation}\label{eqn61}
  \sigma(W_i)=\frac{A}{\alpha(x)\beta_i(t)}W_i \quad\mbox{and}\quad
  \delta(W_i)=\left(B-\frac{\delta(\beta_i(t))}{\beta_i(t)}x-c_i\right)W_i.
\end{equation}
So,  to compute hypergeometric solutions of~$\{ \sigma(Y)=AY,
\delta(Y)=BY\}$ it suffices to find~$\alpha(x), \beta_i(t), c_i$ and
$W_i$ satisfying (\ref{eqn61}).

\begin{remark}\label{rem6.1}
{The equalities~\eqref{eqn61} still hold when  replacing
$\alpha(x)$ by $\frac{\sigma(g)\alpha(x)}{g}$ and~$W_i$ by
$\frac{W_i}{g}$ for $g \in \mathbb{C}(x)$.}  So in the sequel, we
will compute a suitable~$\frac{\sigma(g)\alpha(x)}{g}$  instead of
$\alpha(x)$.
\end{remark}
\vspace{.2in}{\bf \underline{Computing $\alpha(x)$:}}
 By Proposition \ref{cor4}, there exists $G \in
\mat_n(k)$ such that
$$
\frac{\sigma(\det(G))}{\det(G)}\alpha(x)^n\prod_{i=1}^n\beta_i(t)=\det(A).
$$
Without loss of generality, we  assume that the numerator and
denominator of $\alpha(x)$ are monic. Expanding the functions in the
above equality as series at $x=\infty$, one can compute
$\prod_{i=1}^n\beta_i(t)$ from the series expansion of $\det(A)$ at
$x=\infty$. Let $\tilde{a}=\frac{\det(A)}{\prod_{i=1}^n\beta_i(t)}.$
Rewrite%$\tilde{a}$ in the form
~$\tilde{a}=\frac{\sigma(b)}{b}\bar{a}$ where $b,\bar{a} \in k_0$
and $\bar{a}$ is standard with respect to $\sigma$. Then
\begin{equation}\label{eqn64}
\bar{a}=\frac{\sigma(g)}{g}\alpha(x)^n \quad\mbox{for some $g \in
k_0.$}
\end{equation}
 From Proposition \ref{cor4}, $\alpha(x)$ is
standard with respect to~$\sigma$ and so is $\alpha(x)^n$.
Proposition~\ref{prop:1} below shows that $\frac{\sigma(g)}{g} \in
\mathbb{C}(x)$ and thus~$\bar{a} \in \mathbb{C}(x)$.
Moreover,~$\bar{a}$ has the form
$\left(\frac{\sigma(\bar{g})}{\bar{g}}\alpha(x)\right)^n$ for some
$\bar{g}\in \mathbb{C}(x)$. {To prove
Proposition~\ref{prop:1}, let us introduce a  notation used in
\cite[Section 2.1]{putsinger1}.}
\begin{define}\label{def3}
A divisor $D$ on $\mathbb{P}^1(\overline{\mathbb{C}(t)})$ is defined
to be  a finite formal expression $\sum n_p[p]$ {with~$p \in
\mathbb{P}^1(\overline{\mathbb{C}(t)})$ and  $n_p \in \mathbb{Z}$.}
The support of a divisor~$D$,  denoted  $\supp(D)$, is the finite
set of all $p$ with $n_p \neq 0$. Let~$p \in \supp(D)$. The
$\mathbb{Z}$-orbit $E$ of $p$ in $\supp(D)$ is defined to be
$$
    E(p, \supp(D))=\{p+i | i \in \mathbb{Z}\,\,\mbox{and}\,\,p+i \in \supp(D)\}.
$$
\end{define}
As usual, the divisor $\divs(f)$ of a rational function $f\in
k\setminus\{0\}$ is given by~$\divs(f)=\sum \ord_p(f)[p]$, where
$\ord_p(f)$ denotes the order of $f$ at the point~$p$. It is clear
that $\divs(fg)=\divs(f)+\divs(g)$. Moreover, if~$p$ is
in~$\supp(\divs(f))$ but not in $\supp(\divs(fg))$, then $p \in
\supp(\divs(f))\cap \supp(\divs(g))$. By Definition~\ref{def3}, if
$f \in k\setminus \{0\}$ is standard with respect to $\sigma$, then
$E(p,\supp(\divs(f)))=\{p\}$ for each $p \in \supp(\divs(f))$.
\begin{prop}\label{prop:1}
Assume that $f,g\in k\setminus\{0\}$ and $f$ is standard with
respect to $\sigma$. If $\sigma(g)g^{-1}f$ is standard with respect
to $\sigma$, then
$$
   \frac{\sigma(g)}{g}=\prod_i \frac{(x+k_i-c_i)^{m_i}}{(x-c_i)^{m_i}}
$$
with~$k_i\in \mathbb{Z}$,~$m_i\in \bZp$,~$ c_i \in
\overline{\mathbb{C}(t)}$ and $\disp(\prod_i(x-c_i))=0.$ Moreover,
for each $i$, either~$\ord_{c_i}(f)=m_i$ or
$\ord_{c_i-k_i}(f)=-m_i$.
\end{prop}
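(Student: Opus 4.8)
The plan is to work entirely with divisors on $\mathbb{P}^1(\overline{\mathbb{C}(t)})$ and track how the hypothesis that both $f$ and $\sigma(g)g^{-1}f$ are standard with respect to $\sigma$ constrains the support of $\divs(\sigma(g)/g)$. First I would write $h = \sigma(g)/g$, so that $\divs(h) = \sigma^{-1}(\divs(g)) - \divs(g)$; in particular every $\mathbb{Z}$-orbit in $\supp(\divs(h))$ is ``balanced'' in the sense that the sum of the coefficients over each orbit is $0$ (this is the standard telescoping identity for $\sigma$-quotients, and it forces $h$ to be a product of factors of the shape $\prod_j \frac{(x+k_j-c)^{m}}{(x-c)^{m}}$ once we know the structure of the orbits). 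The real content is to show that each $\mathbb{Z}$-orbit appearing in $\divs(h)$ contains at most two points, one with positive coefficient and one with the negative of it, and that the positive/negative pair matches a zero/pole of $f$ as claimed.

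The key step uses the hypothesis on $f$ twice. Since $f$ is standard with respect to $\sigma$, each $\mathbb{Z}$-orbit in $\supp(\divs(f))$ is a singleton; similarly each $\mathbb{Z}$-orbit in $\supp(\divs(hf))$ is a singleton. Now fix a $\mathbb{Z}$-orbit $E = \{p, p+i_1, p+i_2, \ldots\}$ in $\supp(\divs(h))$ with at least two points, say $p$ and $p+k$ with $k \ne 0$. Using $\divs(hf) = \divs(h) + \divs(f)$ together with the remark in the paper that if $q \in \supp(\divs(h))$ but $q \notin \supp(\divs(hf))$ then $q \in \supp(\divs(h)) \cap \supp(\divs(f))$: since $\supp(\divs(hf))$ has singleton orbits, at most one point of $E$ can survive into $\supp(\divs(hf))$, so at least $\#E - 1$ points of $E$ must be cancelled by $\divs(f)$ and hence lie in $\supp(\divs(f))$. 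But $\supp(\divs(f))$ has singleton orbits, so $E \cap \supp(\divs(f))$ can contain at most one point. Therefore $\#E - 1 \le 1$, i.e. $\#E \le 2$. When $\#E = 2$, write $E = \{c, c-k\}$ (relabelling), let $m = \ord_c(h)$; the balance condition gives $\ord_{c-k}(h) = -m$, and this is exactly a factor $\frac{(x+k-c)^{m}}{(x-c)^{m}}$ after sign bookkeeping. The orbit that survives into $\divs(hf)$ is whichever of $c$, $c-k$ is \emph{not} in $\supp(\divs(f))$; the other one must be cancelled, which forces either $\ord_c(f) = m$ or $\ord_{c-k}(f) = -m$ (matching $\ord_c(h) = m$ or $\ord_{c-k}(h) = -m$). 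Renaming $c \mapsto c_i$, $k \mapsto k_i$, $m \mapsto m_i$ gives the stated form, and the condition $\disp(\prod_i(x-c_i))=0$ is precisely the statement that the poles $c_i$ of $h$ lie in distinct $\mathbb{Z}$-orbits, which follows since they come from distinct orbits $E$.

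The step I expect to be the main obstacle is the careful sign/normalization bookkeeping: matching $\ord_{c_i}(h) = m_i > 0$ (a pole of $\sigma(g)/g$ corresponds to a point where $\ord$ of the numerator $\sigma(g)$ exceeds that of $g$) with the factorization written in the statement, and checking that the ``surviving'' point of each orbit is consistently the one that is a zero resp.\ pole of $f$ with the right multiplicity — i.e.\ that one cannot have partial cancellation leaving a point in $\supp(\divs(hf))$ with a changed order that still lies in the same orbit as another surviving point. This is handled by the singleton-orbit property of $\supp(\divs(hf))$ applied orbit-by-orbit, but it requires being precise about the case $\#E = 1$ (where $h$ has a zero and pole at the same point, impossible, so actually every orbit in $\supp(\divs(h))$ that is a singleton must be empty — meaning $\divs(h)$ has no singleton orbits, consistent with the balance condition). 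I would also verify at the outset that $\divs(h)$ indeed has zero-sum on each $\mathbb{Z}$-orbit, which is immediate from $h = \sigma(g)/g$ and the fact that $\sigma$ permutes the points of each orbit by translation.
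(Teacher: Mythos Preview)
Your proposal is correct and follows essentially the same route as the paper's proof. Both arguments work with divisors, use the zero-sum property of $\divs(\sigma(g)/g)$ on each $\mathbb{Z}$-orbit (the paper cites this as Lemma~2.1 of \cite{putsinger1}) to get $|E|\ge 2$, then combine the singleton-orbit property of $\supp(\divs(f))$ and $\supp(\divs(hf))$ with the containment $\supp(\divs(h))\subseteq \supp(\divs(f))\cup\supp(\divs(hf))$ to force $|E|\le 2$, and finally read off the cancellation condition $\ord_p(h)=-\ord_p(f)$ at the point of $E$ lying in $\supp(\divs(f))$; your treatment of the sign bookkeeping, the exclusion of partial cancellation, and the dispersion clause is in fact more explicit than the paper's, which concludes immediately after the orbit counting.
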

\begin{proof}
Let $H=\sigma(g)g^{-1}f$, $S_1=\supp(\divs(f))$,
$S_2{=}\supp(\divs(\sigma(g)g^{-1}))$ and~$S_3=\supp(\divs(H))$. By
Lemma 2.1 in \cite{putsinger1},
$$
\sum_{q \in E(p,S_2)}\ord_q\left(\frac{\sigma(g)}{g}\right)=0
\quad\mbox{for each $p \in S_2$.}
$$
Then $|E(p,S_2)| \geq 2$ for each $p \in \supp(S_2)$. Since $H$ and
$f$ are standard,
$$
|E(p,S_2)\cap S_3|\leq 1\,\, \mbox{and} \,\, |E(p,S_2)\cap S_1|
\leq 1
$$
thus $|E(p,S_2)\cap (S_1\cup S_3)|\leq 2.$ From~$S_2 \subseteq
S_1\cup S_3,$ we have~$|E(p,S_2)|\leq 2.$ Hence  for each $p \in
S_2,$
$$|E(p,S_2)|=2, \,\,|E(p,S_2)\cap S_1|=1
\,\,\mbox{and}\,\,|E(p,S_2)\cap S_3|=1.
$$
From $|E(p,S_2)|=2$ and $|E(p,S_2)\cap S_3|=1$,
either~$\ord_p(\sigma(g)g^{-1})=-\ord_p(f)$
or~$\ord_{p+j_0}(\sigma(g)g^{-1})=-\ord_{p+j_0}(f)$ with $p+j_0 \in
E(p,S_2).$ The proposition holds.
\end{proof}
Let $g$ be as in (\ref{eqn64}). Since $\alpha(x) \in \mathbb{C}(x)$,
 $g$ can be chosen  in $\mathbb{C}(x)$ according to Proposition~\ref{prop:1}.
Then $\bar{a} \in \mathbb{C}(x)$. Moreover,
$$\frac{\sigma(g)}{g}=\prod_i
\frac{(x+k_i-c_i)^{m_i}}{(x-c_i)^{m_i}}
$$
where $m_i$ has the form~$\bar{m}_i n$ for some~$\bar{m}_i\in \bZp$
since~$m_i$ is either~$\ord_{c_i}(\alpha(x)^n)$ or
$-\ord_{c_i-k_i}(\alpha(x)^n)$. Let $
\bar{g}=\prod_i\prod_{j=0}^{k_i-1}(x+j-c_i)^{\bar{m}_i}. $ Then
$$
\left(\frac{\sigma(\bar{g})}{\bar{g}}\right)^n=\frac{\sigma(g)}{g}\quad
\mbox{and} \quad
\bar{a}=\left(\frac{\sigma(\bar{g})}{\bar{g}}\alpha(x)\right)^n.
$$
Note that the numerator and the denominator of $\alpha(x)$ are
monic, so we can compute
$\frac{\sigma(\bar{g})}{\bar{g}}\alpha(x)$ from $\bar{a}$.
\begin{example} \label{example1}Consider the
integrable system
$$%\{
\sigma(Y)=AY, \quad\delta(Y)=BY
%\}
$$
where
\begin{align*}
   &\begin{array}{ccc}
      A=
         \begin{pmatrix}
           -\frac{t(x^2+1)(t^2+1-x)}{t^2-x-1}&
           -\frac{x^2+1}{t^2-x-1}\\[0.1in]
          \frac{(x^2+1)(t^4+t^2-x^2-x)}{t^2-x-1}&
          \frac{t(x^2+1)(t^2-x)}{t^2-x-1}
         \end{pmatrix},
    \end{array} \\[0.1in]
   &\begin{array}{cc}
      B=\begin{pmatrix}
           -\frac{-2xt^3-t^4-t^2+t^5+3t^3+2t+x^2t+t^2x+x}{(t^2-x)(t^2+1)}
           &-\frac{t^2}{(t^2-x)(t^2+1)}\\[0.1in]
           \frac{-t^2x^2+t^6+2t^4+t^2-x^2+2t^2x+x}{(t^2-x)(t^2+1)}&
        \frac{-x^2t-t^2x-x+t^5+2t^3-xt+t^4+t^2}{(t^2-x)(t^2+1)}
      \end{pmatrix}.
   \end{array}
\end{align*}
We have
$$
\det(A)=-\frac{(x^2+1)^2(t^4-t^2x+t^2-x)}{t^2-x-1}=-(t^2+1)x^4+(t^2+1)x^3+\cdots.
$$
Thus $\beta_1(t)\beta_2(t)=-(t^2+1).$ Let
$\tilde{a}=-\frac{\det(A)}{t^2+1}$ and  write
$$
   \tilde{a}=\frac{t^2-x}{t^2-(x+1)}(x^2+1)^2.
$$
Then $\alpha(x)=x^2+1.$
\end{example}

\vspace{.2in}\noindent{\bf \underline{Computing $\beta_i(t)$:}} We
first prove the following
\begin{lemma}\label{lem61}
%Let $\beta_1(t),\cdots,\beta_n(t)$ be as in Proposition~\ref{cor4}.
Either~$ \beta_1(t)=\cdots=\beta_n(t) \in \mathbb{C}(t) $ or
$\beta_1(t),\cdots,\beta_n(t)$ are the conjugate roots of an
irreducible polynomial of degree $n$ with coefficients
in~$\mathbb{C}(t)$.
\end{lemma}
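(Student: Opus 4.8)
The plan is to deduce the dichotomy from the fact, already recorded in Proposition~\ref{cor4}, that $\beta_i(t)=g_i(\beta_1(t))$ for suitable $g_i$ in $G_0:=\gal(R_0/k_0)$, the Galois group of the original system over $k_0$. First I would recall, from the proof of Theorem~\ref{thm6} (an application of Clifford's theorem, valid because the given system is irreducible over $k_0$), that the solution space $V\subseteq R_0^n$ decomposes over $k$ as $V=V_1\oplus\cdots\oplus V_n$ into one--dimensional $\gal(R/k)$--invariant subspaces which $G_0$ permutes transitively, with $V_i=g_i(V_1)$. Each $\beta_i(t)$ is, through the normalization of Proposition~\ref{prop6}, attached to the line $V_i$: it is the $\overline{\mathbb{C}(t)}$--part of the $\sigma$--eigenvalue governing $V_i$ once the rational factor $\alpha(x)\in\mathbb{C}(x)$ has been split off, and this $\alpha(x)$ is fixed by every element of $G_0$, since $G_0$ fixes $k_0\supseteq\mathbb{C}(x)$. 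Hence the assignment $V_i\mapsto\beta_i(t)$ is $G_0$--equivariant: $g(V_i)=V_j$ forces $g(\beta_i(t))=\beta_j(t)$. Supplying a careful justification of this equivariance --- keeping track of the idempotents of $R_0$ and of the fact that the factorization $a_i=\tfrac{\sigma(f_i)}{f_i}\alpha(x)\beta_i(t)$ of Proposition~\ref{prop6} is only unique up to a factor $\sigma(h)/h$ --- is the step I expect to be the main obstacle; everything afterward is bookkeeping.

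Granting the equivariance, I would set $m:=[\mathbb{C}(t)(\beta_1(t)):\mathbb{C}(t)]$ and let $O$ be the $G_0$--orbit of $\beta_1(t)$. Since $\{1,\dots,n\}$ is a transitive $G_0$--set and $i\mapsto\beta_i(t)$ is an equivariant surjection onto $O$, all of its fibres have the same cardinality; therefore $|O|$ divides $n$ and each element of $O$ occurs exactly $n/|O|$ times among $\beta_1(t),\dots,\beta_n(t)$. Next I would show $|O|=m$. On the one hand every $g(\beta_1(t))$ with $g\in G_0$ is a root of the minimal polynomial $P\in\mathbb{C}(t)[Y]$ of $\beta_1(t)$ over $\mathbb{C}(t)$ (each $g$ fixes $\mathbb{C}(t)$), so $|O|\le\deg P=m$. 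On the other hand, because the $g_i$ of Proposition~\ref{cor4} act on $\beta_1(t)$ we have $\beta_1(t)\in\F_0\cap k=\hat k_0$, hence $O\subseteq\hat k_0$; moreover (Proposition~\ref{prop10}) $G_0$ acts on $\hat k_0$ through the quotient $G_0/\gal(R/k)\cong\gal(\hat k_0/k_0)$, which is finite by Lemma~\ref{lem8} and has fixed field $k_0$. Thus $q(Y):=\prod_{\gamma\in O}(Y-\gamma)$ is monic with $G_0$--invariant coefficients lying in $\hat k_0\cap\overline{\mathbb{C}(t)}$, hence in $k_0\cap\overline{\mathbb{C}(t)}=\mathbb{C}(t)$; since $q(\beta_1(t))=0$ we get $P\mid q$ in $\mathbb{C}(t)[Y]$, so $m\le|O|$. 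Consequently $m=|O|$, and together with the fibre count, $m$ divides $n$.

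Finally, since $n$ is prime, $m=1$ or $m=n$. If $m=1$ then $\beta_1(t)\in\mathbb{C}(t)$, so every $g_i$ fixes it and $\beta_1(t)=\cdots=\beta_n(t)\in\mathbb{C}(t)$, giving the first alternative. If $m=n$, then each element of $O$ occurs $n/|O|=1$ time, so $\beta_1(t),\dots,\beta_n(t)$ are pairwise distinct; being $n$ distinct roots of the degree--$n$ irreducible polynomial $P\in\mathbb{C}(t)[Y]$, they are precisely its conjugate roots, giving the second alternative. The only genuinely nontrivial ingredient is the $G_0$--equivariance asserted in the first paragraph; the rest is the orbit/fibre counting and the elementary observation that $\overline{\mathbb{C}(t)}$ is algebraically disjoint from the purely transcendental extension $\mathbb{C}(t)(x)/\mathbb{C}(t)$.
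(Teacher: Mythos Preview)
Your approach is essentially the same as the paper's: both use that $G_0=\gal(R_0/k_0)$ permutes the $\beta_i(t)$ with orbits of equal size (via Theorem~\ref{thm6}), combined with the primality of $n$, to force the dichotomy. The paper's version is more compressed --- it works directly with the polynomial $P=\prod_{i=1}^n(X-\beta_i(t))$, observes its coefficients lie in $\mathbb{C}(t)$, and notes that equal-size orbits plus $n$ prime force $P$ to be irreducible or split into linear factors --- whereas you unpack the same content through the orbit $O$ and the degree $m=[\mathbb{C}(t)(\beta_1):\mathbb{C}(t)]$; your explicit verification that $m=1$ forces all $\beta_i$ equal (since each $g_i$ fixes $\mathbb{C}(t)$) fills in a step the paper leaves implicit.
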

\begin{proof}
{Let  $R_0$  be a \sd PV extension of~$k_0$ for~$ \{ \sigma(Y)=AY,
\delta(Y){=}BY\}$ and~$P=\prod_{i=1}^n(X-\beta_i(t))$. From the
proof of Theorem \ref{thm6}, one sees that~$\gal(R_0/k_0)$ permutes
the $\beta_i(t)$.
% where $\gal(R_0/k_0)$ is as in Theorem \ref{thm6}.
Furthermore, the orbits of the $\beta_i(t)$ under this group action
all have the same size. Therefore, $P$ is a polynomial with
coefficients in $\mathbb{C}(t)$. Since $n$ is prime, either $P$ is
irreducible or all the factors of $P$ in $\mathbb{C}(t)[X]$ are of
degree one. This concludes the   lemma.}
\end{proof}
 The following two notions can be found in
\cite{barkatou-chen,barkatou}.
\begin{define}
Let $H=(h_{ij})_{n\times n} \in \mat_n(k_0)$. The {\rm order} of $H$
at $\infty$ is defined as
$$
\ord_{\infty}(H)=\min\{\ord_{\infty}(h_{ij})\}
$$
where $\ord_{\infty}(h_{ij})$ is the order of   $h_{ij}$ at
$\infty$.
\end{define}
We rewrite $H$ into the form
$$
   H=\left(\frac{1}{x}\right)^{\ord_{\infty}(H)}\left(H_0+H_1\frac{1}{x}+\cdots\right)
$$
where $H_i\in \gl_n(\mathbb{C}(t))$ and $H_0 \neq 0.$
\begin{define}
The rational number
$$
     m(H)=-\ord_{\infty}(H)+\frac{\rank(H_0)}{n}
$$
is called  {\rm the first Moser order} of $H$. And
$$
    \mu(H)=\min\{m(\sigma(G)HG^{-1})| G \in
    \mat_n(k)\}
$$
is called the {\rm Moser invariant} of $H$. A matrix $H$ is said to
be  irreducible if~$m(H)=\mu(H)$, otherwise it is called reducible.
\end{define}
Given %a matrix~
$H \in \mat_n(k_0)$,  one can use the algorithm in
\cite{barkatou-chen,barkatou} to compute~$G \in \mat_n(k_0)$ such
that~$\tilde{H}:=\sigma(G)HG^{-1}$ is irreducible. So we can assume
that~$\frac{A}{\alpha(x)}$ is irreducible where $A$ and $\alpha(x)$
are as in (\ref{eqn61}). Under this assumption, we will show that
\begin{equation*}
 \frac{A}{\alpha(x)}
= \tilde{A}_0 + \tilde{A}_1 \frac{1}{x} + \ldots
\end{equation*}
with $\tilde{A}_i \in \gl_n(\mathbb{C}(t))$ for each~$i$ and that
all the $\beta_i(t)$'s are eigenvalues of $\tilde{A}_0$. The
following lemma can be deduced from the results of Barkatou in
\cite{barkatou}. We will present a self contained proof due to
Reinhart Shaefke.

\begin{lemma}\label{lem62}
Let $G \in \mat_n(k)$ and assume that
$\ord_{\infty}(\sigma(G^{-1})G)=0$. Then all the eigenvalues of
$\sigma(G^{-1})G|_{x=\infty}$ are 1.
\end{lemma}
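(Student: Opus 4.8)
The plan is to exploit the fact that $G^{-1}\sigma(G)$ acts like a perturbation of the identity near $x=\infty$, so that $\sigma(G^{-1})G$ and $G^{-1}\sigma(G)$ differ only by a similarity that is trivial at $x=\infty$. Write $M = \sigma(G^{-1})G$ and $M_\infty = M|_{x=\infty} \in \mat_n(\CX(t))$; the hypothesis $\ord_\infty(M)=0$ guarantees $M_\infty$ is well-defined. First I would observe that $\sigma$ is the shift $x\mapsto x+1$, so for any rational matrix $H$ over $k=\overline{\CX(t)}(x)$ one has $\sigma(H)|_{x=\infty} = H|_{x=\infty}$: shifting the argument does not change the leading Laurent coefficient at infinity. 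Applying this to $H = G^{-1}$ gives $\sigma(G^{-1})|_{x=\infty} = (G^{-1})|_{x=\infty}$, provided $G$ itself has finite order at infinity; one reduces to this case by clearing an overall power of $x$, which changes $\sigma(G^{-1})G$ by the scalar $\sigma(x^r)/x^r = ((x+1)/x)^r \to 1$, hence does not affect the value at $x=\infty$ nor the eigenvalues there.

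Next I would compare $M = \sigma(G^{-1})G$ with its "reverse" $N = G^{-1}\sigma(G)$. From $\sigma(G^{-1})\sigma(G) = \sigma(G^{-1}G) = I$ we get $M = \sigma(G^{-1})G = \sigma(G^{-1})\,(\sigma(G)\sigma(G^{-1}))\,G$, which is conjugate via $\sigma(G^{-1})$ — no, more directly: $N = G^{-1}\sigma(G) = G^{-1}\bigl(\sigma(G^{-1})\bigr)^{-1}$, so $N^{-1} = \sigma(G^{-1})G = M$. Hence $M$ and $N^{-1}$ coincide, and moreover $N|_{x=\infty} = \bigl(G^{-1}\bigr)|_{x=\infty}\cdot \sigma(G)|_{x=\infty} = \bigl(G^{-1}\bigr)|_{x=\infty}\cdot G|_{x=\infty} = I$ by the shift-invariance of the value at infinity. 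Therefore $M_\infty = (N^{-1})|_{x=\infty} = (N|_{x=\infty})^{-1} = I$, so all eigenvalues of $\sigma(G^{-1})G|_{x=\infty}$ equal $1$.

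The one point that needs care — and which I expect to be the main obstacle — is the interchange of "evaluation at $x=\infty$" with matrix inversion and with the shift $\sigma$. For the shift this is the elementary Laurent-series fact above; for inversion one must know that $M$ (equivalently $N$) is invertible as a matrix over $k$ with $\ord_\infty(M)=0$ and $\ord_\infty(M^{-1})=0$, so that $M_\infty$ is itself invertible and $(M^{-1})|_{x=\infty} = (M_\infty)^{-1}$. This holds here because $N = G^{-1}\sigma(G)$ has $N|_{x=\infty}=I$, which is invertible, forcing $\ord_\infty(N)=\ord_\infty(N^{-1})=0$ and pinning down $M_\infty = I$. Once these bookkeeping facts about orders at infinity are in place the conclusion is immediate; no spectral computation is actually required, since we identify the limiting matrix as the identity outright.
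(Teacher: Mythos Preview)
Your argument has a genuine gap: the conclusion $M_\infty = I$ that you are aiming for is strictly stronger than the lemma's assertion that the eigenvalues are $1$, and in fact it is false in general. Take $G = \begin{pmatrix} x & 1 \\ 1 & 0\end{pmatrix}$. Then $\sigma(G^{-1})G = \begin{pmatrix} 1 & 0 \\ -1 & 1\end{pmatrix}$, so $\ord_\infty(\sigma(G^{-1})G)=0$ and $M_\infty = \begin{pmatrix} 1 & 0 \\ -1 & 1\end{pmatrix}\neq I$. The eigenvalues are $1$, as the lemma promises, but $M_\infty$ is only unipotent.

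The place where your reasoning breaks is the step $N|_{x=\infty} = (G^{-1})|_{x=\infty}\cdot \sigma(G)|_{x=\infty} = (G^{-1})|_{x=\infty}\cdot G|_{x=\infty} = I$. Clearing a single scalar power of $x$ from $G$ only arranges $\ord_\infty(G)=0$; it does \emph{not} force the leading coefficient $G_0$ to be invertible. When $G_0$ is singular (as in the example above, where $G_0=\begin{pmatrix}1&0\\0&0\end{pmatrix}$), $G^{-1}$ has strictly negative order at $\infty$, the evaluation $(G^{-1})|_{x=\infty}$ in your sense does not exist, and the identity $(G^{-1})|_{x=\infty}\cdot G|_{x=\infty}=I$ fails. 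In short, ``evaluate at $\infty$'' does not commute with inversion unless the leading matrix is nonsingular, and nothing in the hypotheses guarantees that.

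The paper's proof, due to Shaefke, avoids this by never evaluating $G$ or $G^{-1}$ at infinity separately. Instead it shows $H_0-I_n$ is nilpotent: one checks that the iterated operator $P_m = L_m\circ\cdots\circ L_0(I_n)$, with $L_m(U)=\sigma(U)-\sigma^m(H)U$, satisfies both $P_m|_{x=\infty}=(I_n-H_0)^{m+1}$ and $P_m=\sigma^{m+1}(G^{-1})\Delta^{m+1}(G)$ where $\Delta=\sigma-\id$. Since $\ord_\infty(\Delta^{m+1}(G))\to+\infty$ while $\ord_\infty(\sigma^{m+1}(G^{-1}))$ is constant, $P_m|_{x=\infty}=0$ for large $m$, giving $(I_n-H_0)^{m+1}=0$. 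This yields exactly ``eigenvalues equal $1$'' without the overreach to $H_0=I_n$.
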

\begin{proof}
Let $H=\sigma(G^{-1})G$.  Then~$H=H_0+H_1\frac{1}{x}+\cdots$
with~$H_i \in \gl_n(\overline{\mathbb{C}(t)})$  and $H_0 \neq 0$. We
now show that $H_0-I_n$ is nilpotent. For a positive integer~$m$,
consider a map~$L_m: \gl_n(k)\rightarrow \gl_n(k)$ given
by~$U\mapsto \sigma(U){-}\sigma^m(H)U$ for any $U\in \gl_n(k)$.
%\begin{eqnarray*}
%&L_m: \gl_n(k)\rightarrow \gl_n(k) \medskip \\
%&\quad \qquad \qquad\qquad \qquad U\mapsto \sigma(U)-\sigma^m(H)U
%\end{eqnarray*}
%for $U\in \gl_n(k)$.
Set~$P_m=L_m\circ L_{m-1} \circ \cdots \circ L_0(I_n)$ where $\circ$
denotes the composition of  maps. Then~$
P_m|_{x=\infty}=(I_n-H_0)^{m+1}. $ On the other
hand,~$L_m(\sigma^m(G^{-1})V)=\sigma^{m+1}(G^{-1})\Delta(V)$ where
$\Delta=\sigma-\id$ is a difference operator and $V \in \gl_n(k)$.
Hence $P_m=\sigma^{m+1}(G^{-1})\Delta^{m+1}(G).$ Note that when $m$
increases,~$\ord_{\infty}(\Delta^{m+1}(G))$ increases but
$\ord_{\infty}(\sigma^{m+1}(G^{-1}))$ is invariant. Then for a
sufficiently large $m$, $P_m|_{x=\infty}=0$. This concludes the
lemma.
\end{proof}
Now we can prove the following
\begin{prop}\label{prop63}

% Let~$\beta_1(t), \dots, \beta_n(t)$ be as in Proposition~\ref{cor4}.
$\ord_{\infty}\left(\frac{A}{\alpha(x)}\right)=0$ and~$\beta_1(t),
\dots, \beta_n(t)$ are eigenvalues
of~$\frac{A}{\alpha(x)}\mid_{x=\infty}$.
\end{prop}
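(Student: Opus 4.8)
The plan is to read off everything from the $k$-equivalence of the difference part $\sigma(Y)=\tfrac{A}{\alpha(x)}Y$ with a diagonal system. Write $M=\tfrac{A}{\alpha(x)}\in\mat_n(k_0)$. By Proposition~\ref{cor4} (together with Remark~\ref{rem6.1}) there is $G\in\mat_n(k)$ with $\sigma(G)\,M\,G^{-1}=D$, where $D=\diag(\beta_1(t),\dots,\beta_n(t))$; equivalently $M=\sigma(G)^{-1}D\,G$, so that $\det M=\dfrac{\det G}{\sigma(\det G)}\prod_{i=1}^{n}\beta_i(t)$. First I would note that since $\det G\in k\setminus\{0\}$ and $\sigma$ only shifts $x\mapsto x+1$, the rational functions $\det G$ and $\sigma(\det G)$ have the same leading coefficient at $x=\infty$; hence $\det G/\sigma(\det G)$ has order $0$ at $\infty$ and value $1$ there. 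Therefore $\ord_\infty(\det M)=0$ and $\det(M)\mid_{x=\infty}=\prod_i\beta_i(t)\neq 0$. In particular $M$ cannot vanish at $\infty$, i.e.\ $\ord_\infty(M)\le 0$.

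Next I would pin down $\ord_\infty(M)=0$ using Moser irreducibility. The matrix $D$ has $\ord_\infty(D)=0$ and leading matrix $D$ of full rank $n$, so $m(D)=1$; since $D=\sigma(G)MG^{-1}$ with $G\in\mat_n(k)$, the Moser invariant satisfies $\mu(M)\le m(D)=1$. By the standing assumption that $\tfrac{A}{\alpha(x)}$ has already been put in Moser irreducible form, $m(M)=\mu(M)\le 1$. Writing $M=(1/x)^{\ord_\infty(M)}(M_0+M_1/x+\cdots)$ with $M_0\neq 0$ gives $m(M)=-\ord_\infty(M)+\rank(M_0)/n$; together with $\ord_\infty(M)\le 0$ and $m(M)\le 1$ this forces $\ord_\infty(M)=0$. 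Consequently $\tfrac{A}{\alpha(x)}=M_0+M_1\tfrac1x+\cdots$ with $M_i\in\gl_n(\mathbb{C}(t))$, and $\det M_0=\det(M)\mid_{x=\infty}=\prod_i\beta_i(t)\neq0$, so $M_0=\tfrac{A}{\alpha(x)}\mid_{x=\infty}$ is invertible; this is the first assertion.

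For the eigenvalue statement I would compare leading coefficients at $x=\infty$ in the identity $\sigma(G)M=D\,G$. The shift $\sigma(G)$ has the same order at $\infty$ and the same leading matrix $G_0\ (\neq 0)$ as $G$; since $M_0$ and $D$ are invertible, both sides have order $\ord_\infty(G)$ at $\infty$, with leading matrices $G_0M_0$ and $DG_0$ respectively, so $G_0M_0=DG_0$. If $G_0$ is invertible this already gives $M_0=G_0^{-1}DG_0$, whence the characteristic polynomial of $M_0$ is $\prod_i(X-\beta_i(t))$ and $\beta_1(t),\dots,\beta_n(t)$ are precisely its eigenvalues. In general $G_0$ may be singular, and then $G_0M_0=DG_0$ only forces $M_0$ and $D$ to share a factor of degree $\rank(G_0)$ in their characteristic polynomials; to close the gap I would invoke that the leading matrix of a system Moser irreducible at $\infty$ is, up to conjugacy over $\overline{\mathbb{C}(t)}$, an invariant of its equivalence class over $k$. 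Both $M$ and $D$ are Moser irreducible at $\infty$ (Moser invariant $1$, invertible leading matrix) and are $k$-equivalent, so $M_0$ is conjugate to $\diag(\beta_i(t))$, giving the assertion; this invariance is what is extracted from Barkatou's analysis in \cite{barkatou}, and is in the spirit of Lemma~\ref{lem62}, which records the case where the transformation has the special shape $\sigma(G^{-1})G$ and the common eigenvalue is $1$.

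The routine parts are the determinant computation and the Moser inequality giving $\ord_\infty(M)=0$. I expect the main obstacle to be the eigenvalue statement in the degenerate case where $G_0$ is singular: the elementary leading-term comparison is then insufficient, and one must use the stability of the leading-matrix conjugacy class of a Moser irreducible form under $k$-equivalence --- the point where \cite{barkatou} and the argument underlying Lemma~\ref{lem62} do the real work.
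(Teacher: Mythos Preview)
Your argument for $\ord_\infty\!\left(\tfrac{A}{\alpha(x)}\right)=0$ is correct and essentially the same as the paper's: both combine the determinant computation with Moser irreducibility to pin down the order.

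The eigenvalue part, however, has a genuine gap. Your leading-term comparison $G_0M_0=DG_0$ is fine, but when $G_0$ is singular you appeal to an invariance principle---``the leading matrix of a Moser irreducible system is, up to conjugacy, an invariant of its $k$-equivalence class''---that you do not prove and that is not stated in the paper. This is strictly stronger than Lemma~\ref{lem62} (which treats only the case $D=I_n$), and it is not clear that \cite{barkatou} supplies it in this form for difference systems. You have correctly identified the obstacle but not surmounted it.

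The paper takes a different and more self-contained route. Rather than comparing leading matrices of $G$, it uses the rational column vectors $W_i\in k^n$ from~(\ref{eqn61}), which satisfy $\sigma(W_i)=\tfrac{A}{\alpha(x)\beta_i(t)}W_i$. Expanding each $W_i$ at $x=\infty$ with leading vector $W_{i0}\neq 0$ gives $\tilde A_0 W_{i0}=\beta_i(t)W_{i0}$, so every $\beta_i(t)$ is an eigenvalue of $\tilde A_0$. When the $\beta_i$ are the $n$ distinct conjugate roots of an irreducible polynomial (one branch of the dichotomy in Lemma~\ref{lem61}), this already accounts for all eigenvalues of an $n\times n$ matrix. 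In the remaining case $\beta_1=\cdots=\beta_n\in\mathbb{C}(t)$ one has $\tfrac{A}{\alpha(x)}=\beta_1(t)\,\sigma(G^{-1})G$, and Lemma~\ref{lem62} applies directly to $\sigma(G^{-1})G$ to force all eigenvalues of $\tilde A_0$ to equal $\beta_1(t)$. Thus Lemma~\ref{lem62} is invoked only in the scalar situation where it fits exactly, and the dichotomy of Lemma~\ref{lem61} removes the need for any general invariance principle.
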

\begin{proof}
 By Proposition \ref{cor4}, there exists $G \in \mat_n(k)$ such that
$$
\sigma(G)\frac{A}{\alpha(x)}G^{-1}=\diag(\beta_1(t),\cdots,\beta_n(t)).
$$
This implies that
$\ord_{\infty}\left(\det\left(\frac{A}{\alpha(x)}\right)\right)=0$
and
$m\left(\frac{A}{\alpha(x)}\right)=\mu\left(\frac{A}{\alpha(x)}\right)\leq
1.$ By  the property of orders,
$$
\ord_{\infty}\left(\frac{A}{\alpha(x)}\right)\leq \frac{1}{n}
\ord_{\infty}\left(\det\left(\frac{A}{\alpha(x)}\right)\right)=0.
$$
Since $m\left(\frac{A}{\alpha(x)}\right)\leq 1$,
$\ord_{\infty}\left(\frac{A}{\alpha(x)}\right)=0$ by the definition
of the first Moser orders. Therefore,
$$
\frac{A}{\alpha(x)} = \tilde{A}_0 + \tilde{A}_1 \frac{1}{x} + \ldots
$$
where $\tilde{A}_i \in\gl_n(\mathbb{C}(t))$ and $\tilde{A}_0 \neq
0$.
 From (\ref{eqn61}), $\sigma(Y)=\frac{A}{\alpha(x)\beta_i(t)}Y$
has a rational solution $W_i$ in~$ k^n$. Suppose that
$$
W_i=\left(\frac{1}{x}\right)^{\ord_{\infty}(W_i)}\left(W_{i0}+\frac{1}{x}W_{i1}+\cdots
\right)
$$
where $W_{ij} \in \overline{\mathbb{C}(t)}^n$ and $W_{i0}\neq 0$.
Then~$
    W_{i0}=\frac{\tilde{A}_0}{\beta_i(t)}W_{i0}.
$ Since~$W_{i0}\neq 0$, we
have~$\det\left(I_n-\frac{\tilde{A}_0}{\beta_i(t)}\right)=0$. Hence
all the $\beta_i(t)$ are the eigenvalues of $\tilde{A}_0$. If
the~$\beta_i(t)$ are the conjugate roots of some irreducible
polynomial with degree~$n$, then   they  are clearly eigenvalues of
$\tilde{A}_0$. Thus  by Lemma \ref{lem61} we only need to consider
the case $\beta_1(t)=\cdots=\beta_n(t) \in \mathbb{C}(x)$. In this
case,~$\frac{A}{\alpha(x)}=\beta_1(t)\sigma(G^{-1})G$. Since
$\ord_{\infty}\left(\frac{A}{\alpha(x)}\right)=0$, we
have~$\ord_{\infty}(\sigma(G^{-1})G)=0.$ By Lemma \ref{lem62}, all
the eigenvalues of~$\sigma(G^{-1})G|_{x=\infty}$ equal 1. Hence all
the eigenvalues of $\tilde{A}_0$ equal $\beta_1(t)$.
\end{proof}
\begin{example}{\bf (Continued)} Let
$\bar{A}=\frac{A}{x^2+1}$. From the process in \cite{barkatou}, we
can find an irreducible matrix $\tilde{A}$ which is equivalent to
$\bar{A}$ where
\[
\begin{array}{cc}
   \tilde{A}=
    \begin{pmatrix}
        -\frac{(x+1)t(t^2+1-x)}{(t^2-x-1)x} & -\frac{x+1}{t^2-x-1}\medskip\\
        \frac{t^4+t^2-x^2-x}{(t^2-x-1)x}& \frac{t(t^2-x)}{t^2-x-1}
    \end{pmatrix}.
   \end{array}
\]
Write $\tilde{A}=\tilde{A}_0+\tilde{A}_1\frac{1}{x}+\cdots$ where
$$
    \tilde{A}_0=
    \begin{pmatrix}
       -t & 1 \\
       1 & t
    \end{pmatrix}\quad \mbox{and}\quad
    \tilde{A}_1=
    \begin{pmatrix}
       t & t^2 \\
       t^2 & -t
    \end{pmatrix}.
$$
The eigenvalues of $\tilde{A}_0$ are $\pm \sqrt{t^2+1}.$ So
$\beta_1(t){=}\sqrt{t^2+1}$ and~$\beta_2(t){=}-\sqrt{t^2+1}.$
\end{example}
\vspace{.1in} \noindent{\bf \underline{Computing $c_i$ and $W_i$:}}
\label{sec6.1.3} Let $\Lambda(t) =\diag(\beta_1(t), \ldots,
\beta_n(t))$. From $(A_1)$ we can find a matrix $G \in \mat_n(k)$
such that $\sigma(G)\alpha(x)\Lambda(t) = AG$. Let
$\bar{B}=G^{-1}BG-G^{-1}\delta(G).$ Then $\bar{B} \in \gl_n(k)$ and
the system~$  \{  \sigma(Y){=}AY,  \delta(Y){=}BY\}$ is equivalent
over $k$ to
\begin{equation}\label{eqn62}
  \sigma(Y)=\alpha(x)\Lambda(t)Y, \quad \delta(Y)=\bar{B}Y.
\end{equation}
Note that $G$ may not be the required transformation matrix in
Proposition~\ref{cor4}, so $\bar{B}$ may not be of diagonal form.
Since $\sigma\delta=\delta\sigma$, the same argument as in the proof
of Proposition \ref{prop51} implies the following conclusions:
\begin{itemize}
 \item [$(i)$]
    If  $\beta_i(t)\neq \beta_j(t)$ for all~$i,j$ with $1\leq i \neq j\leq n$,
    then
$$
\bar{B}=\diag\left(\frac{\delta(\beta_1(t))}{\beta_1(t)}x+c_1,\cdots,\frac{\delta(\beta_n(t))}{\beta_n(t)}x+c_n\right)
$$
with $c_i \in \overline{\mathbb{C}(t)}$;
  \item [$(ii)$]
     If $\beta_1(t)=\cdots=\beta_n(t) \in \mathbb{C}(t)$, then by
     Proposition \ref{nprop3}, $G$ can be chosen in $\mat_n(k_0)$.
     Thus
     $$
           \bar{B}=\hat{B}+\frac{\delta(\beta_1(t))}{\beta_1(t)}xI_n
     $$
     with $\hat{B} \in \gl_n(\mathbb{C}(t))$.
\end{itemize}
In the case (i), we obtain the~$c_i$'s, and the $W_i$'s are just the
columns of~$G$. For the case (ii), since (\ref{eqn62}) is equivalent
over $k$ to (\ref{norm1}), there exists~$\hat{G} \in \mat_n(k)$ such
that
\[\begin{cases}
 \sigma(\hat{G})=\hat{G},\\
\delta(\hat{G})+\hat{G}\left(\frac{\delta(\beta_1(t))}{\beta_1(t)}xI_n
+\diag(c_1,\cdots,c_n)\right)=\left(\hat{B}+\frac{\delta(\beta_1(t))}{\beta_1(t)}xI_n\right)\hat{G}.
\end{cases}\]
Hence $\hat{G} \in \mat_n(\overline{\mathbb{C}(t)})$ and
$\delta(Y)=\hat{B}Y$ is equivalent over $\overline{\mathbb{C}(t)}$
to
$$
\delta(Y)=\diag(c_1,\cdots,c_n)Y.
$$
Solving the system $\delta(Y)=\hat{B}Y$ by ($A_3$), we get the
$c_i$'s and the $W_i$ are just the columns of $\hat{G}G.$
\begin{example}{\bf
(Continued)}
   Let $\Lambda(t)=\diag(\sqrt{t^2+1},-\sqrt{t^2+1}).$ From~($A_1$), we can
   obtain
   $G \in \mat_2(k)$ such that $\sigma(G)(x^2+1)\Lambda(t)=AG$ where
   \[ \begin{array}{cc}
        G=
        \begin{pmatrix}
            \frac{t-\sqrt{t^2+1}}{2(t^2-x)}&
            \frac{t+\sqrt{t^2+1}}{2(t^2-x)}\medskip\\
            \frac{-x+t\sqrt{t^2+1}}{2(t^2-x)}&
            -\frac{x+t\sqrt{t^2+1}}{2(t^2-x)}
        \end{pmatrix}.
     \end{array}\]
Then
\begin{align*}
   \bar{B}&=G^{-1}BG-G^{-1}\delta(G)\medskip\\
   &=\begin{pmatrix}
      \frac{xt}{t^2+1}+\sqrt{t^2+1}+1 &
      0\\
      0 & \frac{xt}{t^2+1}-\sqrt{t^2+1}+1
   \end{pmatrix}.
\end{align*}
Hence  $c_1=\sqrt{t^2+1}+1$, $c_2=-\sqrt{t^2+1}+1$ and $W_i$ is the
$i$-th column of $G$ for~$i=1,2$. Furthermore, a basis of the
solution space is
\[ \begin{array}{cccc}
  h(\sqrt{t^2+1})^xe^{t+\int{\sqrt{t^2+1}}dt}
  \begin{pmatrix}
    \frac{t-\sqrt{t^2+1}}{2(t^2-x)}\\[0.1in]
    \frac{t\sqrt{t^2+1}-x}{2(t^2-x)}
  \end{pmatrix},
   h(-\sqrt{t^2+1})^xe^{t-\int{\sqrt{t^2+1}}dt}
  \begin{pmatrix}
    \frac{t+\sqrt{t^2+1}}{2(t^2-x)}\\[0.1in]
    \frac{x+t\sqrt{t^2+1}}{2(x-t^2)}
  \end{pmatrix}
\end{array}\]
where $h$ satisfies that $\sigma(h)=(x^2+1)h$ and $\delta(h)=0.$
\end{example}
\subsubsection{The Decision Procedure for the Irreducible Case}\label{sec3.2.2}
Assume that~$ \{  \sigma(Y)=AY,\delta(Y)=BY\}$ with $A \in
\mat_n(k_0)$ and~$ B \in \gl_n(k_0)$ is an irreducible system over
$k$ and its Galois group over $k_0$ is solvable by finite. By
Proposition \ref{thm52}, the system $\{\sigma^n(Y)=A_nY,
\delta(Y)=BY\}$ has solutions of the form $W_ih_i$ for~$
i=1,\cdots,n,$ where $W_i \in k_0^n$ and  $h_i$ satisfies
$$
\sigma^n(h_i)=\alpha(x+i-1)\beta(t)h_i,\quad \delta(h_i)
=\left(\frac{\delta(\beta(t))}{n\beta(t)}x+\hat{b}_i\right)h_i
$$
with $\alpha(x),\beta(t)$ and $\hat{b}_i$ as in
Proposition~\ref{thm52}. Substituting $Y=W_ih_i$
into~$\{\sigma^n(Y)=A_nY, \delta(Y)=BY\}$, we have
\begin{equation}\label{eqn3.2.2.2}
  \sigma^n(W_i)=\frac{A_n}{\beta(t)\alpha(x+i-1)}W_i\quad\mbox{and}\quad
  \delta(W_i)=\left(B-\frac{\delta(\beta(t))}{n\beta(t)}x-\hat{b}_i\right)W_i.
\end{equation}
To compute $W_ih_i$, it suffices to compute $\alpha(x), \beta(t),
W_i$ and~$\hat{b}_i$ which satisfy~(\ref{eqn3.2.2.2}).
 Without loss of generality, we assume that the numerator and
denominator of $\alpha(x)$ are monic. By Proposition \ref{prop51},
there exists $G \in \mat_n(k_0)$ such that
$$
    \frac{\sigma(\det(G))}{\det(G)}(-1)^{n-1}\alpha(x)\beta(t)=\det(A).
$$
Expanding $\det(A)$ as a series in $\frac{1}{x}$, we get that
$(-1)^{n-1}\beta(t)$ is the leading coefficient of the series. Hence
we can obtain $\beta(t)$ from $\det(A)$. In this case, we can not
find $\alpha(x)$ by the method used in Section~\ref{sec6.1}. However
we can reduce this problem to working with difference equations over
$\mathbb{C}(x)$. By Proposition \ref{thm52}, there exists~$G \in
\mat_n(k_0)$ (the same as that in Proposition~\ref{prop51}) such
that
$$
\sigma^n(G)\cdot\diag(\alpha(x),\cdots,\alpha(x+n-1))=\frac{A_n}{\beta(t)}G.
$$
Assume that $t=p$ is not a pole of the entries of
$\frac{A_n}{\beta(t)}$ and such that
$\det\left(\frac{A_n}{\beta(t)}|_{t=p}\right)\neq 0$. Let
$\frac{A_n}{\beta(t)}=\bar{A}_0+(t-p)\bar{A}_1+\cdots$ where
$\bar{A}_i \in \gl_n(\mathbb{C}(x))$.  We will show that
$\alpha(x)$ can be found by examining the hypergeometric solutions
of~$\sigma^n(Y)=\bar{A}_0Y$.  This will follow from the next
proposition.
\begin{prop}\label{prop3.2.2.1} Some factor of $\sigma^n(Y)=\bar{A}_0Y$ is
equivalent over $\CX(x)$ to some factor of
$\sigma^n(Y)=\diag(\alpha(x),\alpha(x+1),\cdots,\alpha(x+n-1))Y.$
\end{prop}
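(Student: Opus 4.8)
The plan is to specialize the $k_0$-gauge equivalence furnished by Proposition~\ref{thm52} at the generic value $t=p$ fixed above, and to extract from the leading term of that specialization a nonzero intertwiner of the two \emph{difference} systems over $\CX(x)$. Recall that $G\in\mat_n(k_0)$ satisfies
\[
\sigma^n(G)\,\diag\!\bigl(\alpha(x),\dots,\alpha(x+n-1)\bigr)=\frac{A_n}{\beta(t)}\,G ,
\]
and that $A_n/\beta(t)=\bar A_0+(t-p)\bar A_1+\cdots$ with $\bar A_i\in\gl_n(\CX(x))$ and, by the choice of $p$, $\det(\bar A_0)\neq0$. First I would also expand $G$ in powers of $t-p$, writing $G=(t-p)^m\bigl(G_0+(t-p)G_1+\cdots\bigr)$ with $m\in\mathbb{Z}$, $G_i\in\mat_n(\CX(x))$ and $G_0\neq0$; this is possible since the entries of $G$ lie in $\CX(x)(t)$. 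Because the diagonal matrix $D:=\diag(\alpha(x),\dots,\alpha(x+n-1))\in\mat_n(\CX(x))$ does not involve $t$ and is invertible (as $\alpha(x)\neq0$), comparing the coefficients of the lowest power of $t-p$ on the two sides of the gauge relation yields
\[
\sigma^n(G_0)\,D=\bar A_0\,G_0,\qquad G_0\neq0 .
\]

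Next I would interpret this identity in terms of difference modules over $\CX(x)$ (with respect to $\sigma^n$). Let $\mathcal{M}_{1}$ and $\mathcal{M}_{2}$ denote the difference modules attached to $\sigma^n(Y)=\bar A_0Y$ and $\sigma^n(Y)=DY$; these are genuine difference modules since $\bar A_0$ and $D$ are invertible. The relation $\sigma^n(G_0)D=\bar A_0G_0$ says precisely that multiplication by $G_0$ is a morphism $\mathcal{M}_{2}\to\mathcal{M}_{1}$ of difference modules — equivalently, that $G_0$ sends every solution of $\sigma^n(Y)=DY$ to a solution of $\sigma^n(Y)=\bar A_0Y$ — and it is nonzero because $G_0\neq0$. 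Using the invertibility of $\bar A_0$ and $D$ one checks that $\sigma^n(\mathrm{im}\,G_0)=\bar A_0\,\mathrm{im}\,G_0$ and $\sigma^n(\ker G_0)=D\,\ker G_0$, so $\mathcal{N}:=\mathrm{im}\,G_0$ is a nonzero difference submodule of $\mathcal{M}_{1}$, $\ker G_0$ is a difference submodule of $\mathcal{M}_{2}$, and $G_0$ induces an isomorphism $\mathcal{M}_{2}/\ker G_0\xrightarrow{\ \sim\ }\mathcal{N}$ of difference modules.

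Finally, through the standard dictionary between difference submodules or quotients and block-triangular matrix systems, the submodule $\mathcal{N}\subseteq\mathcal{M}_{1}$ corresponds to a factor of $\sigma^n(Y)=\bar A_0Y$ and the quotient $\mathcal{M}_{2}/\ker G_0$ to a factor of $\sigma^n(Y)=DY$; since these two factors are isomorphic as difference modules they are equivalent over $\CX(x)$, which is the assertion. The point requiring the most care — the main obstacle — is that $G_0$ need not be invertible even though $G$ is, since distinct columns of $G$ may vanish or acquire poles at $t=p$ to different orders; hence one should not expect a full equivalence of the two systems over $\CX(x)$, only a common subquotient, which is exactly why the conclusion is stated in terms of factors.
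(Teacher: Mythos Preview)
Your argument is correct and follows the same strategy as the paper: expand $G$ in powers of $t-p$, take the leading nonzero coefficient $G_0\in\gl_n(\CX(x))$, observe that $\sigma^n(G_0)D=\bar A_0G_0$, and then extract from this nonzero intertwiner a common factor of the two difference systems. The only difference is presentational: where you invoke the image and kernel of the induced morphism of difference modules, the paper carries this out concretely by choosing $P\in\mat_n(\CX(x))$ and a permutation matrix $Q$ so that $P\bar G_0Q=\begin{pmatrix}0&0\\ \tilde G_{21}&\tilde G_{22}\end{pmatrix}$ with $\tilde G_{22}$ invertible of size $r=\mathrm{rank}(\bar G_0)$, and then reads off directly that $\sigma^n(P)\bar A_0P^{-1}$ is block lower triangular with bottom-right block equivalent via $\tilde G_{22}$ to an $r\times r$ diagonal subblock of $D$.
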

\begin{proof}
Let $G$ be as above and let
$\Psi(x)=\diag(\alpha(x),\cdots,\alpha(x+n-1))$. We may multiply $G$
by a power of $t-p$ and assume that
$G=\bar{G}_0+(t-p)\bar{G}_1+\cdots$ where $\bar{G}_0\neq 0$
and~$\bar{G}_i \in \gl_n(\mathbb{C}(x))$. Then
\begin{align*}
   \sigma^n(\bar{G}_0+(t-p)\bar{G}_1+\cdots)\Psi(x)
   =(\bar{A}_0+\cdots)
   (\bar{G}_0+(t-p)\bar{G}_1+\cdots).
\end{align*}
Therefore $\sigma^n(\bar{G}_0)\Psi(x)=\bar{A}_0\bar{G}_0$. Let
$r=\rank(\bar{G}_0).$ Then $r>0$ because~$\bar{G}_0 {\neq} 0.$ There
exist~$P \in \mat_n(\CX(x))$ and $Q$ which is  a product of some
permutation matrices such that
\[
  \begin{array}{ccc}
   \tilde{G}=P\bar{G}_0Q=\begin{pmatrix}
          0 & 0 \\
          \tilde{G}_{21} & \tilde{G}_{22}
    \end{pmatrix}
  \end{array}
\]
where $\tilde{G}_{22} \in \mat_r(\CX(x)).$ Then
\begin{equation}
\label{eqn3.2.2.1}
    \sigma^n(\tilde{G})\diag(\alpha(x+k_1),\cdots,\alpha(x+k_n))=\sigma^n(P)\bar{A}_0P^{-1}\tilde{G}
\end{equation}
where $k_1,\cdots,k_n$ are a permutation of $\{0,1,\cdots,n-1\}.$
Now let
\[
  \begin{array}{ccc}
   \tilde{A}=\sigma^n(P)\bar{A}_0P^{-1}=\begin{pmatrix}
          \tilde{A}_{11} & \tilde{A}_{12} \\
          \tilde{A}_{21} & \tilde{A}_{22}
    \end{pmatrix} \quad\mbox{  where
$\tilde{A}_{22}\in \gl_r(\CX(x))$,}
  \end{array}
\]
and $D_2=\diag(\alpha(x+k_{n-r+1}),\cdots,\alpha(x+k_n))$.
From~(\ref{eqn3.2.2.1}), we have~$\tilde{A}_{12}\tilde{G}_{22}{=}0$
and~$\sigma^n(\tilde{G}_{22})D_2=\tilde{A}_{22}\tilde{G}_{22}. $
Since $\tilde{G}_{22}\in \mat_r(\CX(x))$, we have
$\tilde{A}_{12}=0$. Therefore~$\sigma^n(Z)=\tilde{A}_{22}Z$ is a
factor of~$\sigma^n(Y)=\bar{A}_0Y$, which is equivalent
over~$\CX(x)$ to $\sigma^n(Z)=D_2Z.$
\end{proof}
\begin{remark}
For almost all of $p\in \CX$,
$\sigma^n(Y)=\frac{A_n}{\beta(t)}|_{t=p}Y$ is equivalent over
$\CX(x)$ to
$$
     \sigma^n(Y)=\diag(\alpha(x),\alpha(x+1),\cdots,\alpha(x+n-1))Y.
$$
since $G|_{t=p}$ is invertible.
\end{remark}
The same argument as in Remark \ref{rem6.1} implies that it is
enough to compute~$\frac{\sigma^n(g)\alpha(x)}{g}$ for some suitable
$g \in \mathbb{C}(x)$ instead of $\alpha(x)$. We can use
Proposition~\ref{prop3.2.2.1} to find
$\frac{\sigma^n(g)\alpha(x+k)}{g}$ with~$k\in \bZ$ and $g\in \CX(x)$
as follows. From Theorem 3 in \cite{blw}, if $(z_1,\cdots,z_r)^T$ is
a solution of $\sigma^n(Z)=\tilde{A}_{22}Z$,
then~$(0,\cdots,0,z_1,\cdots,z_r)^T$ is a solution of
$\sigma^n(Y)=\tilde{A}Y$. So $\sigma^n(Y)=\bar{A}_0Y$ has at least
$r$ solutions~$\overline{W}_1\bar{h}_1,
\dots,\overline{W}_r\bar{h}_r $, where~$\overline{W}_i \in \CX(x)^n$
and $\bar{h}_i$
satisfies~$\sigma^n(\bar{h}_i)=\alpha(x+k_{n-r+i})\bar{h}_i$. By
($A_2$), we can find all hypergeometric solutions of
$\sigma(Z)=\bar{A}_0(nx)Z$ where $\bar{A}_0(nx)$ means replacing $x$
by~$nx$ in $\bar{A}_0$. Then by interlacing, we can find all
solutions of $\sigma^n(Y)=\bar{A}_0Y$ of the
form~$\tilde{W}_j\tilde{h}_j$ where $\tilde{W}_j \in
\mathbb{C}(x)^n$ and $\tilde{h}_j$ satisfies
$\sigma^n(\tilde{h}_j)=\tilde{a}_j\tilde{h}_j$ for some~$\tilde{a}_j
\in \CX(x)$. Then there exists $\tilde{h}_{j_0}$ such that
$\tilde{h}_{j_0}=g\bar{h}_1$ for some~$g\in \CX(x)$ and
$$
\hat{\alpha}(x+k_{n-r+1})=\frac{\sigma^n(\tilde{h}_{j_0})}{\tilde{h}_{j_0}}=\frac{\sigma^n(g)}{g}\alpha(x+k_{n-r+1}).
$$
  After finding $\hat{\alpha}(x+k_{n-r+1})$, we can
compute a matrix $\hat{G} \in \mat_n(k_0)$ in a  finite number of steps by
($A_1$), such that
$$
\sigma^n(\hat{G}^{-1})A_n\hat{G}=\beta(t)\diag(\hat{\alpha}(x),\cdots,\hat{\alpha}(x+n-1)).
$$
Let $\bar{B}=\hat{G}^{-1}B\hat{G}-\hat{G}^{-1}\delta(\hat{G})$. Then
we get a new system
\begin{equation*}
%\label{newsystem}
   \sigma^n(Y)=\beta(t)\diag(\hat{\alpha}(x),\cdots,\hat{\alpha}(x+n-1))Y,
   \quad
   \delta(Y)=\bar{B}Y
\end{equation*}
which is equivalent to the original one under the transformation
$Y\rightarrow \hat{G}^{-1}Y.$ Since $\sigma^n$ and $\delta$ commute
and $\frac{\alpha(x+1)}{\alpha(x)}\neq \frac{\sigma^n(b)}{b}$ for
any $b \in \CX(x)$, the same argument as in the proof of Proposition
\ref{thm52} implies that $\bar{B}$ is of diagonal form, that is
$$
\bar{B}=
\diag\left(\frac{\delta(\beta(t))}{n\beta(t)}x+\hat{b}_1,\cdots,\frac{\delta(\beta(t))}{n\beta(t)}x+\hat{b}_n\right).
$$
 We then get the $\hat{b}_i$, and the $W_i$ are just the $i$-th columns of $\hat{G}$.
\begin{example} \label{example2}
Consider an
integrable system:
$$
\sigma(Y)=AY, \quad \delta(Y)=BY
$$
where
\begin{align*}
   &\begin{array}{ccc}
      A=
         \begin{pmatrix}
           \frac{x^3t^4+2x^2t^4+xt^4-x-1}{t^2+x+1}&
           \frac{t^2(tx^4+2tx^3+tx^2+1)}{t^2+x+1}&\frac{t(t-x-1)}{t^2+x+1}\\[0.1in]
         -\frac{t(x^2t^4+xt^4-1)}{t^2+x+1}&-\frac{t(t^3x^3+t^3x^2-1)}{t^2+x+1}&\frac{t(1+t)}{t^2+x+1}\\[0.1in]
         \frac{t^6x^2+t^6x+x+1}{t(t^2+x+1)}&\frac{t(t^3x^3+t^3x^2-1)}{t^2+x+1}&
         -\frac{t-x-1}{t^2+x+1}
         \end{pmatrix},
    \end{array} \\[0.1in]
   &\begin{array}{cc}
      B=\begin{pmatrix}
           \frac{t^4+t^2x+x^2+t^4x-t^2}{t(t^2+x)}
           &-\frac{x(-t^2+t^3-1)}{t^2+x}&\frac{xt^3(-1+t)}{t^2+x}\\[0.1in]
           -\frac{-t^2+t^4+1}{t^2+x}&
           \frac{2t^2x+x^2+t^5-t^2}{t(t^2+x)}&-\frac{t^4(-1+t)}{t^2+x}\\[0.1in]
        \frac{-t^2+t^4+1}{t^2+x}&\frac{x(-t^2+t^3-1)}{t(t^2+x)}&
        \frac{x^2+xt^3+t^2x+t^6-x-t^2}{t(t^2+x)}
      \end{pmatrix}.
   \end{array}
\end{align*}
We have
$$
  \det(A)=\frac{xt^3(t^2x+t^2+x^2+x)}{x+1+t^2}=\frac{(x+1)(t^2+x)}{x(t^2+x+1)}x^2t^3.
$$
By (\ref{eqn64}), if the Galois group over $k_0$ of the given system
is solvable by finite, then this system  have no hypergeometric
solutions over $k$. Therefore we consider the system
$$
   \sigma^3(Y)=A_3Y, \quad \delta(Y)=BY
$$
where
\[
\begin{array}{ccc}
      A_3=
         \begin{pmatrix}
           \frac{t^3(t^2x^2+t^2x+21x+x^3+8x^2+18)}{t^2+x+3}&
           -\frac{t^4(x+1)(5x+6)}{t^2+x+3}&\frac{2t^4(x+2)(x+3)}{t^2+x+3}\\[0.1in]
          -\frac{2t^4(2x+3)}{t^2+x+3}&\frac{(x+1)t^3(x^2+t^2x+2t^2)}{t^2+x+3}&-\frac{2t^5(x+2)}{t^2+x+3}\\[0.1in]
          \frac{2t^4(2x+3)}{t^2+x+3}&\frac{(x+1)t^3(5x+6)}{t^2+x+3}&
         \frac{(x+2)(x+3)t^3(x+1+t^2)}{t^2+x+3}
         \end{pmatrix}.
    \end{array} \]
We can compute $\beta(t)=t^3$ from $\det(A)$. Let
$\tilde{A}=\frac{A_3}{t^3}$. Then
\[
\begin{array}{ccc}
      \tilde{A}|_{t=0}=
         \begin{pmatrix}
           (x+2)(x+3)&0&0 \\
           0 & \frac{(x+1)x^2}{x+3}&0 \\
           0 & \frac{(x+1)(5x+6)}{x+3} & (x+1)(x+2)
         \end{pmatrix}.
    \end{array} \]
By $(A_2)$, all hypergeometric solutions of
$\sigma^3(Y)=\tilde{A}|_{t=0}Y$ are
\[
   \begin{array}{ccccc}
       9^{\frac{x}{3}}\Gamma\left(\frac{x+2}{3}\right)\Gamma\left(\frac{x+3}{3}\right)
         \begin{pmatrix}
           1 \\ 0\\0
         \end{pmatrix}
         ,& 9^{\frac{x}{3}}\Gamma\left(\frac{x+1}{3}\right)\Gamma\left(\frac{x+2}{3}\right)
          \begin{pmatrix}
             0\\0\\1
          \end{pmatrix},&
          9^{\frac{x}{3}}\Gamma\left(\frac{x}{3}\right)\Gamma\left(\frac{x+1}{3}\right)
          \begin{pmatrix}
             0 \\ -\frac{3}{x} \\ \frac{3}{x}
          \end{pmatrix},
   \end{array}
\]
where $\Gamma(x)$ satisfies $\Gamma(x+1)=x\Gamma(x).$ By $(A_1)$, we
can compute a rational solution of
$\sigma^3(Y)=\frac{A_3}{x(x+1)t^3}Y$. Moreover, we can compute a
matrix~$G \in \mat_3(\CX(x,t))$ such that
$$
   \sigma^3(G)\diag(x(x+1)t^3,(x+1)(x+2)t^3,(x+2)(x+3)t^3)=A_3G
$$
where
\[\begin{array}{ccc}
   G=
   \begin{pmatrix}
       \frac{t}{t^2+x} &  -\frac{x}{t^2+x} &  \frac{x}{t^2+x}\\
        \frac{1}{t^2+x} &  \frac{t}{t^2+x} &  -\frac{t}{t^2+x} \\
        -\frac{1}{t^2+x} &  \frac{x}{t(t^2+x)} &  \frac{t}{t^2+x}
   \end{pmatrix}.
\end{array}
\]
Let $\bar{B}=G^{-1}BG-G^{-1}\delta(G)$. Then $
   \bar{B}=\diag\left(\frac{x}{t}+t,\frac{x}{t}+t^2,\frac{x}{t}+t^3\right).
$ Hence a basis of solution space of
$\{\sigma^3(Y)=A_3Y,\delta(Y)=BY\}$ is
\[
   \begin{array}{ccc}
       V_1(x) := 9^{\frac{x}{3}}\Gamma\left(\frac{x}{3}\right)\Gamma\left(\frac{x+1}{3}\right)t^{x}e^{\frac{t^2}{2}}
         \begin{pmatrix}
           \frac{t}{t^2+x} \\ \frac{1}{t^2+x}\\-\frac{1}{t^2+x}
         \end{pmatrix},
  \end{array}
\]
   \[
   \begin{array}{ccc}
          V_2(x) := 9^{\frac{x}{3}}\Gamma\left(\frac{x+1}{3}\right)\Gamma\left(\frac{x+2}{3}\right)t^{x}e^{\frac{t^3}{3}}
          \begin{pmatrix}
             -\frac{x}{t^2+x} \\\frac{t}{t^2+x} \\ \frac{x}{t(t^2+x)}
          \end{pmatrix},
   \end{array}\]
\[
   \begin{array}{ccc}
   V_3(x):=9^{\frac{x}{3}}\Gamma\left(\frac{x+2}{3}\right)\Gamma\left(\frac{x+3}{3}\right)t^{x}e^{\frac{t^4}{4}}
          \begin{pmatrix}
             \frac{x}{t^2+x}\\-\frac{t}{t^2+x}\\ \frac{t}{t^2+x}
          \end{pmatrix}.
   \end{array}
\]
Clearly, $V_i(1) \neq 0$ for $i=1,2,3$, and~$A(j)$ and $B(j)$ are
well defined and~$\det(A(j))\neq 0$ for $j\geq 1$.
% Here~$A(j)$ and $B(j)$ mean replacing $x$ by~$j$ in~$A$ and~$B$ respectively.
By the results in Section \ref{sec2}, we get a basis of the solution
space of the original system:
\[
   \begin{array}{ccc}
   W_1=9^{\frac{1}{3}}\Gamma(\frac{1}{3})\Gamma(\frac{2}{3})te^{\frac{t^2}{2}}
          \begin{pmatrix}
             (0,\frac{t}{t^2+1},\frac{4t^3}{t^2+2},-\frac{6t^3}{t^2+3},\cdots)\\[0.1in]
             (0,\frac{1}{t^2+1},-\frac{2t^4}{t^2+2},\frac{2t^4}{t^2+3},\cdots)\\[0.1in]
             (0,-\frac{1}{t^2+1},\frac{2t^4}{t^2+2},\frac{6t^2}{t^2+3},\cdots)
          \end{pmatrix},
   \end{array}
\]
\[
   \begin{array}{ccc}
   W_2=9^{\frac{1}{3}}\Gamma(\frac{2}{3})\Gamma(1)te^{\frac{t^3}{3}}
          \begin{pmatrix}
             (0,-\frac{1}{t^2+1},\frac{t}{t^2+2},\frac{18t^3}{t^2+3},\cdots)\\[0.1in]
             (0,\frac{t}{t^2+1},\frac{1}{t^2+2},-\frac{6t^4}{t^2+3},\cdots)\\[0.1in]
             (0,\frac{1}{t(t^2+1)},-\frac{1}{t^2+2},\frac{6t^4}{t^2+3},\cdots)
          \end{pmatrix}
   \end{array}
\]
and
\[
   \begin{array}{ccc}
   W_3=9^{\frac{1}{3}}\Gamma(1)\Gamma(\frac{4}{3})te^{\frac{t^4}{4}}
          \begin{pmatrix}
             (0,\frac{1}{t^2+1},-\frac{2}{t^2+2},\frac{t}{t^2+3},\cdots)\\[0.1in]
             (0,-\frac{t}{t^2+1},\frac{t}{t^2+2},\frac{1}{t^2+3},\cdots)\\[0.1in]
             (0,\frac{t}{t^2+1},\frac{2}{t(t^2+2)},-\frac{1}{t^2+3},\cdots)
          \end{pmatrix}.
   \end{array}
\]
{Note that all the~$W_i$ are  liouvillian. }
\end{example}

\subsection{Summary}
 Consider  two systems
\begin{equation}
   \sigma(Y)=AY, \quad\delta(Y)=BY \label{eqn65}
\end{equation}
and
\begin{equation}
   \sigma^n(Y)=A_nY, \quad \delta(Y)=BY \label{eqn66}
\end{equation}
where $A \in \mat_n(k_0)$, $B \in \gl_n(k_0)$ and $n$ is a prime
number. Assume that~(\ref{eqn65}) is irreducible over $k_0$. From
the results in Sections~\ref{sec6.1} and~\ref{sec3.2.2}, if
(\ref{eqn65}) has a liouvillian solution over $k$, then either the
solution space of (\ref{eqn65}) has a basis consisting of
hypergeometric solutions over $k$ or the solution space
of~(\ref{eqn66}) has a basis each of whose members is the
interlacing of hypergeometric vectors over $k_0$. Let us summarize
the previous decision procedure as follows.\\[0.2in]
 {\bf \underline{Decision Procedure 1}}
   Compute a fundamental matrix of (\ref{eqn65}) whose entries are
   hypergeometric over $k$ if it exists.
\begin{enumerate}
      \item [$(a)$]
      Write $\det(A)=\frac{\sigma(g)}{g}a$ where $g,a\in k_0$ and
$a$ is standard with respect to $\sigma$. If $a\neq
\alpha(x)^n\beta(t)$ for any $\alpha(x) \in \mathbb{C}(x)$ and~$
\beta(t)\in
       \mathbb{C}(t)$, then by the results in Section 6.1, {\bf exit} [(\ref{eqn65}) has no required fundamental
       matrix].
       %goto Step (e).
      \item [$(b)$]
        Assume that $a=\alpha(x)^n\beta(t)$ for some $\alpha(x) \in \mathbb{C}(x)$ and~$ \beta(t)\in
       \mathbb{C}(t)$.
        By the algorithms in \cite{barkatou-chen,barkatou}, compute
        an irreducible matrix $\tilde{A}$ such that
        $\tilde{A}=\sigma(\tilde{G})\frac{A}{\alpha(x)}\tilde{G}^{-1}$ for some $\tilde{G} \in
        \mat_n(k_0)$. If $\ord_{\infty}(\tilde{A})\neq
        0$, then by Proposition \ref{prop63}, % goto Step (e).
        {\bf exit} [(\ref{eqn65}) has no required fundamental
       matrix].
         Otherwise, let $\tilde{A}_0=\tilde{A}|_{x=\infty}$ and~$\beta_1(t),\cdots,\beta_n(t)$ be the eigenvalues of~$\tilde{A}_0$.
       \item [$(c)$]
        {Goto
        Step ($d_1$) if the $\beta_i(t)$  are conjugate and goto Step
        ($d_2$) if $\beta_1(t)=\cdots =\beta_n(t) \in \mathbb{C}(t)$. }In other cases, by Lemma \ref{lem61}
        and Proposition~\ref{prop63},
        %goto Step (e).
        {\bf exit} [(\ref{eqn65}) has no required fundamental
       matrix].
       \item [$(d_1)$]
          If~$(A_1)$ yields no rational solutions, then  {\bf exit} [(\ref{eqn65}) has no required fundamental
       matrix]. Otherwise, suppose that we find~$G \in \mat_n(k)$ such that
         $$
         \sigma(G)\alpha(x)\diag(\beta_1(t),\cdots,\beta_n(t))=AG.
         $$
        Then $\bar{B}:=G^{-1}BG-G^{-1}\delta(G)$ %, else goto Step (e).
       is of diagonal form. Compute a fundamental matrix~$H$
of
         $$
         \sigma(Y)=\alpha(x)\diag(\beta_1(t),\cdots,\beta_n(t))Y, \quad\delta(Y)=\bar{B}Y.
         $$
      {\bf Return} [$GH$ is a required fundamental
       matrix of~(\ref{eqn65})].
%$GH$ is a required fundamental matrix of (\ref{eqn65}) and the
%procedure terminates.
\item [$(d_2)$]
         If we can compute a matrix~$G \in \mat_n(k_0)$ such that
         $\sigma(G)\alpha(x)\beta_1(t)=AG$ then
         let
         $$
         \hat{B}=G^{-1}BG-G^{-1}\delta(G)-\frac{\delta(\beta_1(t))}{\beta_1(t)}xI_n \in \gl_n(\mathbb{C}(t)),
         $$
         else  {\bf exit} [(\ref{eqn65}) has no required fundamental
       matrix]. % goto Step (e).
         If we can find a
         fundamental matrix $H$ of $\delta(Y)=\hat{B}Y$ whose
         entries are hyperexponential over $\overline{\mathbb{C}(t)}$,
then {\bf reurn} [$GHh\beta_1(t)^x$  is a required fundamental
matrix of (\ref{eqn65})] where $h$ satisfies $\sigma(h)=\alpha(x)h$
and $\delta(h)=0$. %, and the procedure terminates.
Otherwise, % goto Step~(e).
 {\bf exit} [(\ref{eqn65}) has no required fundamental
       matrix].
  \end{enumerate}
  {\bf \underline{Decision Procedure 2}}
   Compute a fundamental matrix of (\ref{eqn66}) whose entries are
   the interlacing of hypergeometric vectors over $k_0$ if it exists.
  \begin{enumerate}
     \item [$(a)$]
        If $\det(A)\neq (-1)^{n-1}\frac{\sigma(g)}{g}\alpha(x)\beta(t)$ holds for any $g \in
        k$,~$ \beta(t)\in
       \mathbb{C}(t)$ and~$\alpha(x) \in \mathbb{C}(x)$ that is standard with respect to
       $\sigma^n$,
       then {\bf exit} [(\ref{eqn66}) has no required fundamental
          matrix].
     \item [$(b)$]
         Expand $\det(A)$ as a series at $x=\infty:$
         $$
         \det(A)=(-1)^{n-1}\beta(t)x^m+\beta_1(t)x^{m-1}+\cdots
         $$
         where $\beta(t),\beta_i(t) \in \CX(t)$ and $m\in \bZ$.
         Suppose that $x=p$ is not a pole of the entries of
         $\frac{A}{\beta(t)}$ and that $\det(\tilde{A}_0) \neq 0$ where $\tilde{A}_0=\frac{A}{\beta(t)}|_{x=p}$.
         Use~($A_2$) to find all hypergeometric solutions of $\sigma(Z)=\tilde{A}_0(nx)Z$.
         By interlacing,
         we get all solutions of $\sigma^n(Y)=\tilde{A}_0Y$
        of the form $W_ih_i$. Denote these soluions by
         $W_1h_1,\cdots,W_dh_d$ where  $W_i \in \mathbb{C}(x)^n$ and  $h_i$
         satisfies $\sigma^n(h_i)=\tilde{a}_ih_i$ for some $\tilde{a}_i \in
         \CX(x)$. If there is~$i_0\in\{1, \dots,  d\}$ such
         that $\sigma^n(Y)=\frac{h_{i_0}A}{\sigma^n(h_{i_0})\beta(t)}Y$ has a
         rational solution in $k_0^n$, then let~$\lambda(x)=\frac{\sigma^n(h_{i_0})}{h_{i_0}}$, else {\bf exit} [(\ref{eqn66})
has no required fundamental
          matrix]. Let $j_0$
         be the least integer such that
         $\sigma^n(Y)=\frac{A}{\lambda(x+j_0)\beta(t)}Y$ has a
         rational solution in $k_0^n$. If we can compute  $G \in
         \mat_n(k_0)$ such that
         $$
         \sigma(Y)\beta(t)\diag(\lambda(x+j_0),\cdots,\lambda(x+j_0+n-1))=AG,
         $$
         then let $\bar{B}=G^{-1}BG-G^{-1}\delta(G)$. So $\bar{B}$ is of diagonal form  and
         by the same process as in Step ($d_1$) of Decision Procedure 1, we can
         compute a required fundamental matrix of (\ref{eqn66}).
         % and the procedure terminates.
         Otherwise, by the results in Section \ref{sec3.2.2},
         {\bf exit} [(\ref{eqn66}) has no required fundamental
          matrix].
\end{enumerate}
We can decide whether (\ref{eqn65}) has  liouvillian solutions or
not as follows. If we can compute hypergeometric solutions over $k$
of (\ref{eqn65}) by Decision Procedure~1, then we are done.
Otherwise, consider the system (\ref{eqn66}). If we can compute
liouvillian solutions over $k_0$ of (\ref{eqn66}) by Decision
Procedure 2, then by the results in Section \ref{sec2} we can
compute liouvillian solutions over $k_0$ of~(\ref{eqn65}) and we are
done. Otherwise~(\ref{eqn65}) has no liouvillian solutions.
\\[0.1in]
%\section{Some Examples}
%{\marginpar{\scriptsize\color{red} I will add some examples here.}
%
%We would
%like to thank Reinhart Shaefke for supplying a simple proof of
%Lemma~\ref{lem62}.

\end{document}